\documentclass[11pt]{article}

\usepackage{amsmath,amsthm,amsfonts,amssymb}
\usepackage{mathtools}
\usepackage{bbm}
\usepackage{graphicx}
\usepackage{booktabs}
\usepackage{array}
\usepackage{algorithm}
\usepackage{algorithmic}
\usepackage[dvipsnames]{xcolor}
\usepackage{pgfplots}
\usepackage[margin=1in]{geometry}
\usepackage{authblk}
\usepackage{natbib}
\usepackage{enumitem}
\usepackage{setspace}
\usepackage{hyperref}

\pgfplotsset{compat=1.18}
\usepgfplotslibrary{groupplots,fillbetween}
\usetikzlibrary{arrows.meta,positioning,calc}

\hypersetup{
  colorlinks=true,
  linkcolor=blue,
  citecolor=blue,
  urlcolor=blue
}
\renewcommand{\d}{\,\mathrm{d}}
\newcommand{\e}{\mathrm{e}}
\newcommand{\dsquare}{\mathop{\square}\displaylimits}
\newcommand{\dboxplus}{\mathop{\boxplus}\displaylimits}
\def\P{\mathbb{P}}

\newcommand{\E}{\mathbb{E}}
\newcommand{\R}{\mathbb{R}}
\newcommand{\N}{\mathbb{N}}
\newcommand{\Sh}{\phi}
\newcommand{\id}{\mathbbm{1}}
\newcommand{\bm}[1]{\boldsymbol{#1}}
\newcommand{\Q}{\mathbb{Q}}

\newcommand{\cx}{\le_{\mathrm{cx}}}
\newcommand{\X}{\mathcal{X}}
\newcommand{\A}{\mathbb{A}}
\newcommand{\IR}{\mathrm{IR}}
\newcommand{\AF}{\mathrm{AF}}
\newcommand{\Opt}{\mathrm{Opt}}
\newcommand{\Core}{\mathrm{Core}}
\newcommand{\Weber}{\mathcal{W}}

\newcommand{\Nuc}{\mathrm{Nuc}}

\DeclareMathOperator*{\argmin}{arg\,min}
\DeclareMathOperator*{\argmax}{arg\,max}
\DeclareMathOperator{\VaR}{VaR}
\DeclareMathOperator{\ES}{ES}
\DeclareMathOperator{\Cov}{Cov}
\DeclareMathOperator{\Var}{Var}
\DeclareMathOperator{\Cor}{Cor}
\DeclareMathOperator{\dom}{dom}

\theoremstyle{plain}
\newtheorem{theorem}{Theorem}
\newtheorem{proposition}{Proposition}
\newtheorem{lemma}{Lemma}
\newtheorem{corollary}{Corollary}

\theoremstyle{definition}
\newtheorem{definition}{Definition}
\newtheorem{example}{Example}

\theoremstyle{definition}
\newtheorem{assumption}{Assumption}

\theoremstyle{remark}

\title{Designing entry-monotone risk-sharing pools}

\author[1]{Christopher Blier-Wong}
\author[2]{Jean-Gabriel Lauzier}

\affil[1]{Department of Statistical Sciences, University of Toronto, Canada}
\affil[2]{Department of Economics, Memorial University of Newfoundland, Canada}

\date{\today}

\begin{document}
\maketitle

\begin{abstract}
While risk pooling lowers the total cost of risk, efficiency alone does not make a pool viable. Participants need terms that ensure their participation, that are immune to subgroups breaking away, and that allow new members to join. Under cash-additive risk measures, the minimum cost of a coalition's risk determines the value created by that coalition, and deterministic side payments redistribute that value among participants. Institutional risk sharing is thus a transferable-utility cooperative game. We prove that the game is totally balanced whenever the risk measures are convex (agents are risk averse), so every coalition has a nonempty core and stable allocations always exist. We then analyze entry monotonicity through Population-Monotonic Allocation Schemes \citep{sprumont1990population}, a strong requirement that is notoriously difficult to construct and has received limited attention in risk sharing. We find several structural conditions that ensure that either the Arrow--Debreu pricing surplus allocation rule or the proportional-cost surplus allocation rule satisfies this entry-monotonicity property, the latter being a novel cooperative notion we propose. These verifiable structural conditions naturally arise in pooled (re)insurance and credit portfolios, providing pool designers with a practical toolkit for building risk pools that remain stable and attractive as they expand. 
\end{abstract}

\section{Introduction}\label{sec:introduction}

Risk pooling is a central mechanism for managing large losses that are difficult for any single organization to absorb. Countries and regional authorities pool catastrophe and climate-related disaster risks \citep{worldbank2017sovereign,ciullo2023increasing}; firms use insurance and contractual arrangements to share supply-chain disruption risks \citep{Tomlin2006,dong2012managing}, 
and insurers form pools to manage losses whose scale or volatility exceeds the capacity of a single balance sheet \citep{embrechts2018quantilebaseda,feng2024expansion}. In all of these settings, pooling can create value by reducing the total cost of risk \citep{Borch1962,Wilson1968,Pratt2000}. 

However, the value created by risk pooling may be unequally distributed among participants depending on the details of the risk-sharing arrangement. Risk pooling, therefore, involves two decisions. The first is the (pure) risk allocation decision: who bears which random losses after the pool is formed. The second is the implementation decision: which premiums, prices, or side payments among participants make that allocation acceptable. In a risk exchange economy, the two are jointly determined by equilibrium prices.
The resulting exchange is efficient in the sense of Pareto optimality, as no participant can be made better off without making someone else worse off. In equilibrium, the market also determines how the surplus is divided, and the outcome might not be ``fair'' in any intuitive sense. Can risk pools sometimes do ``better'' than market mechanisms? If not, should risk pools emulate risk markets and use standard pricing mechanisms?

This paper analyzes the implementation problem for institutional risk sharing from the perspective of cooperative game theory. The intended applications are financial institutions, insurers, reinsurers, firms, governments, regional pools, and other organizations that evaluate risk in monetary terms. We work with cash-additive monetary risk measures, with particular attention to distortion risk measures \citep{Artzner1999,Wang2000,FollmerSchied2002,Acerbi2002}. These evaluations are natural in institutional settings because they encode tail-sensitive capital charges, premium principles, and risk-management objectives in monetary units \citep{follmer2016stochastic}. They are less intended as a model of household-level informal risk sharing, where expected utility, consumption smoothing, or behavioural models are more appropriate.

The pivotal insight of our approach is that when deterministic cash transfers between agents are feasible, and agents evaluate positions in a common monetary numeraire using cash-additive criteria, then the induced risk-sharing game is a transferable-utility (TU) game. Cash additivity gives the implementation problem an exact monetary structure: under the standard loss convention, adding a deterministic cost $c$ to an agent shifts that agent's evaluation by exactly $c$. Balanced deterministic transfers thus do not change either the aggregate random allocation or the planner's aggregate objective, but they redistribute welfare gains across participants. Once an efficient random allocation has been found, the remaining question is how to redistribute the surplus generated by pooling using cash payments. This redistribution problem is the central question that we tackle. 

While an allocation can be efficient in the sense of minimizing the aggregate evaluated cost of risk, it can nevertheless fail as a business or policy mechanism if one participant is made worse off, if a subgroup can object to the proposed terms, or if incumbents lose when a new participant is admitted.  We show that when all agents use convex risk measures (so they are risk averse), the risk-sharing game is always totally balanced, and each subcoalition has a nonempty core. This result, paralleling the Shapley-Shubik Theorem \citep{ShapleyShubik1969}, sheds light on the cooperative nature of risk-sharing games: when agents are risk averse, one can always find allocations that are stable in the sense that no coalition can profitably block them.

Having established the general stability of risk pools when agents are risk averse, we then ask whether they are entry monotone, meaning that incumbents never lose when a new participant is admitted. We surmise that entry monotonicity is the most natural notion of fairness that one can desire besides coalitional stability. Answering this question amounts to asking whether risk-sharing games admit a \textit{Population Monotonic Allocation Scheme} (PMAS) \citep{sprumont1990population}. PMAS is a refinement of the notion of core allocation that is notoriously difficult to construct because it requires the underlying TU game to be totally balanced. 

We focus on two surplus allocation rules. The first is a novel allocation rule called the proportional-cost rule, which assigns each agent a share of the surplus proportional to the risk they contribute to the pool. We interpret the proportional-cost rule as a purely cooperative solution, since it is not generally guaranteed to yield a surplus allocation consistent with equilibrium. The second is the welfare allocation generated by the Arrow--Debreu pricing measure in a fictitious economy consisting only of the agents of a subcoalition; clearly, the Arrow--Debreu pricing rule is consistent with equilibrium behaviour by construction.

Both rules have a key operational advantage compared to other PMAS: they have the property that they allow for ``myopic extensions of the pool" in the sense that they do not require the knowledge of the full coalition composition beforehand. They can thus be written into a contract or the pool's constitution at the time of formation, without requiring any knowledge of future potential participants. To the best of our knowledge, this paper is the first to propose such a notion of PMAS that satisfies this notion of myopic extensions.\footnote{The term myopic extension explicitly evokes both myopic policies in one-step-ahead decision processes, like in dynamic programming, and the notion of extendability in probability theory.}

We provide many structural conditions that make these surplus allocation rules PMAS. For example, we show that when normalized coalition risks are ordered by convex order, then the proportional-cost rule allocation is a PMAS. Similarly, the Arrow--Debreu allocation rule yields a PMAS when the expected cost of agents' risk contributions decreases as the coalition extends.
These conditions cover many cases used in practice, such as a broad family of elliptical risks used in quantitative risk management \citep{mcneil2015quantitative}, and comonotonic risks, the worst-case dependence structure in a given Fr\'echet class.

A practical takeaway from our approach is that the surplus allocation rules we analyze need not be PMAS under the same circumstances. One can easily find examples in which the proportional-cost allocation rule is a PMAS, but the Arrow--Debreu allocation rule is not, and vice versa. That is, there are circumstances in which a purely cooperative approach is justified because the underlying coalition-wise pricing measure does not guarantee entry monotonicity, and other circumstances in which it does, allowing the pool designer to simply emulate a market for risk. We note with interest that this observation opens the possibility of studying assortative matching, risk, and optimal pool design in a market for risk-sharing pools. 


\subsection{Contributions and literature}\label{ss:contributions}

Our first contribution is in the definition of risk-sharing games. Let $\zeta$ denote idiosyncratic risks and $\rho$ denote monetary risk measures, such as the Expected Shortfall in regulatory capital reserve \citep{Artzner1999, Basel2019,wangAxiomaticFoundationExpected2021}. For a coalition $A\subseteq[n]$ of financial institutions, the infimal cost achievable by pooling and redistributing risks is
$$C(A):=\inf_{(X_i)_{i\in A}}\left \{\sum_{i\in A}\rho_i(X_i) \text{  subject to  } \sum_{i\in A}X_i= \sum_{i\in A}\zeta_i\right\}.$$
Finding solutions to the problem above is the central problem of the standard risk-sharing literature \citep{barrieu2005infconvolutiona,acciaio2007optimal,jouini2008optimal,filipovic2008optimal,embrechts2018quantilebaseda,Liu2024LambdaVaR}, and we say that the risk allocation $(X_i)_{i\in A}$ is efficient if it solves the problem. 

By construction, the aggregate cost $C(A)$ of such a coalition is less than the sum of the individual costs. We thus define its welfare gain as
$$\nu(A):= \sum_{i\in A}\rho_i(\zeta_i)-C(A).$$
By cash additivity, balanced side payments do not affect the value of a coalition and $\nu(A)$ is uniquely determined. The function $\nu$ is therefore the \textit{(scalar) characteristic function of a TU game}. We show in Lemma \ref{lemma:gain-monotone} that $\nu$ is a zero-normalized \citep{sprumont1990population} monotone and bounded set function \citep{MarinacciMontrucchio2004Ambiguity}.

Three well-known consequences immediately stem from the TU structure of risk-sharing games. First, the utility possibility set of individually rational Pareto optimal allocations is a convex polyhedron; in fact, it is a simplex in our case (Theorem \ref{thm:transfer-IR-geometry}). Second, any such allocation can be implemented through budget-balanced side payments. The third consequence is operational: one can often separate the problem of efficiency (solving for an efficient \textit{risk allocation} to obtain $C(A)$) and the problem of redistributing the welfare surplus (finding adequate side payments). Operationally, this means that a coalition of financial institutions can rely on their quantitative analyst division to efficiently allocate risk and leave the executive members to bargain over the price of participating in the pool.\footnote{Interesting exceptions arise when one uses a ``mechanical'' pricing rule like actuarial fairness, which is not based on game theoretical foundations. In this case, it is possible to have two different efficient allocations that differ in their welfare split, and thus in their coalitional stability; see Appendix~\ref{ss:actuarial-fairness} for a longer discussion.}

The next natural question is how the core of risk-sharing games behaves, where the core is defined as the subset of welfare allocations that no coalition can block. We show in Theorem \ref{thm:convex-implies-core} that the risk-sharing game is totally balanced when risk measures are convex, so each coalition has a nonempty core. This intuitive result establishes that one can always find a stable, mutually beneficial risk-sharing arrangement when agents are risk averse, paralleling the Shapley-Shubik Theorem \citep{ShapleyShubik1969} in market games without uncertainty but with concave utility functions. Section~\ref{ss:dual-infconv-pricing} gives the general pricing measure construction using the usual dual approach, and Appendix~\ref{app:distortion-background} records its lower-envelope formulation when specializing to distortion risk measures.

The distinction between efficiently allocating risk and fairly allocating the surplus of risk pooling is especially important because pools are not static. Catastrophe pools admit new countries, financial syndicates add counterparties, reinsurance arrangements expand capacity, and institutional risk-sharing mechanisms might evolve as new exposures appear. For example, a catastrophe pool may lower the risk-adjusted total cost by admitting a new country, while still requiring premium terms that prevent incumbents from bearing an unattractive share of the entrant's hurricane or earthquake exposure. The \textit{Caribbean Catastrophe Risk Insurance Facility}'s staggered, and legally segmented expansion illustrates the operational complexity of admitting heterogeneous members under a common catastrophe-risk-pooling platform \citep{worldbank2017sovereign,CCRIFAbout}. Similar issues arise in operations research models of supply-chain disruption, profit-sharing agreements, central counterparty risk allocation, and risk-pooling expansion \citep{dong2012managing,FuSimTeo2018,ghamami2019submodular,feng2024expansion}. 

New participants can increase aggregate welfare through two channels: diversification and risk-bearing capacity. Diversification arises when the entrant's loss is imperfectly correlated with the existing aggregate loss, so that pooling can reduce risk-adjusted cost, capital intensity, or variability relative to stand-alone coverage. Risk-bearing capacity arises when the entrant is willing to absorb some of the aggregate risk at a lower evaluated cost than the incumbents. An example of risk-bearing capacity is reinsurance, where an external risk bearer supplies capacity for a premium and absorbs part of the aggregate exposure without adding new underlying exposure. But aggregate value creation does not imply that every incumbent benefits. A transfer rule (i.e., a price or side payments) must therefore address not only implementability for a fixed pool but also guarantee that expansion is beneficial to everyone. Despite their practical importance, the cooperative-game properties of canonical risk-sharing transfer rules, particularly their behaviour under coalition expansion, have not been systematically studied in the existing literature.

This motivates our extensive analysis of stable expansion and PMAS \citep{sprumont1990population} in Section~\ref{sec:stable-expansion-dual}. A PMAS requires a coherent family of core allocations for all coalitions, chosen so that incumbents weakly benefit whenever a coalition expands, a stronger requirement than merely requiring the core property. Most risk/surplus allocation rules do not generally satisfy this property. This includes not only the proportional-cost and the coalition-wise pricing rule analyzed in the main text, but also the actuarially fair rule, the coalition-wise Shapley rule, and the coalition-wise nucleolus; see Appendix~\ref{ss:canonical-not-PMAS} for more details.

We identify conditions that generate PMAS surplus allocation rules. Convex-order consistency gives a PMAS for the proportional-cost rule: when each incumbent's coalitional average improves in convex order as the coalition expands, convexity of the risk measures ensures that no incumbent is made worse off (Theorem~\ref{thm:PC-PMAS}). For dual pricing, we find a cross-coalition compatibility criterion: prices selected by larger coalitions must support the larger-coalition allocation relative to the allocations available in smaller coalitions (Theorem~\ref{thm:cmrs-pricing-PMAS}). 

We collect some special cases that satisfy the conditions under which proportional-cost or dual pricing rules are PMAS, focusing on the most useful cases for institutional risk-sharing. For elliptical losses, Arrow--Debreu pricing yields a PMAS under a cross-monotonicity condition involving correlations with coalition aggregate risks (Corollary \ref{cor:elliptical-PMAS}), generalizing the results of \citet{chenhuwang2017stable}. For comonotonic endowments, the Arrow--Debreu pricing rule is also a PMAS if agents evaluate risk through distortion risk measures (Corollary \ref{cor:comonotonic-PMAS}). More generally, we provide conditions under which convex ordering of endowments or conditional means supports PMAS constructions for independent risks, exchangeable pools, and some specific families of random vectors that are relevant for (re)insurance, including mean-parametrized convolution semigroups, Dirichlet--Liouville radial risks, and frailty-type portfolios (Appendix \ref{app:additional-pmas-examples}).

Although cooperative allocation, fair pricing, and surplus division are well-developed themes in the operations research literature, only a small part of the risk-sharing literature explicitly analyzes risk-sharing through cooperative game theory. This literature provides only partial results, and no previous paper has fully leveraged the TU structure of risk-sharing games. See \citep{denault2001coherent,grechuk2013cooperative,chenhuwang2017stable,feng2024expansion} for risk sharing, and \citep{filipovic2008equilibrium, boonen2024paretoefficient} for the core property of equilibrium allocations. A complementary line of research approaches the pricing problem from an equilibrium perspective. \citet{boonen2016pricing} separate allocation and pricing in bilateral risk sharing under comonotonic additive preferences; \citet{boonen2021competitive} study competitive equilibria in comonotonic markets with dual-utility agents; \citet{boonen2024paretoefficient} analyze Pareto-efficient contracts in a centralized insurance market, and \citet{ghossoub2026efficiency} characterize efficiency and equilibrium support in pure-exchange economies with risk-averse monetary utilities. Our lower-envelope Arrow--Debreu pricing rule in Appendix \ref{app:distortion-background} is closely related to this equilibrium logic, but our interpretation differs. We use the pricing measure to implement efficient multilateral risk-sharing cooperative arrangements through deterministic transfers and to prove the coalitional stability of the induced welfare split, but we do not necessarily interpret those as an equilibrium outcome in a market.

Risk-sharing is related to a large operations research literature on cooperative cost allocation and surplus division. Axiomatic shared-cost allocation and equitable cost-sharing prices study how common costs should be divided among participants \citep{BilleraHeath1982,MirmanTauman1982}. Optimization-induced cooperative games also arise in stochastic inventory centralization and newsvendor settings, where duality, core stability, and entry monotonicity are used to construct stable allocation rules \citep{ChenZhang2009,ChenGaoHuWang2019}. Related surplus-division questions appear in supply chains through cost-savings allocation, profit-sharing agreements, and Nash bargaining \citep{LengParlar2008,NagarajanBassok2008,FuSimTeo2018}. Institutional risk sharing fits this broader research line: pooling creates a joint surplus, and the central operational question is whether that surplus can be allocated through transfers that are individually rational, coalitionally stable, and robust to myopic extension.

This perspective is also consistent with operational research work in which cooperative games are generated by optimization models under uncertainty: stochastic-programming duality yields constructive core allocations in inventory centralization games \citep{ChenZhang2009}, while dynamic linear programming games have been extended to risk-averse players using coherent conditional risk measures \citep{TorielloUhan2017}. Our total-balancedness condition is related to the result of \citet{anily2014subadditive}, who show that subadditivity together with degree-one homogeneity implies total balancedness for regular cooperative cost games.

Finally, a related paper on pool expansion is \citet{feng2024expansion}, which studies entropic risk measures with jointly normal losses and introduces strong and weak consensus conditions for one-step expansion under actuarial and equilibrium pricing. While they are motivated by a notion similar to our ``myopic extensions", their framework focuses on entry analysis without characterizing full PMAS requirements over the entire coalition lattice. Our intended applications are institutional, including nations, reinsurers, and financial syndicates, which have the capacity to evaluate subcoalition deviations. For such participants, a transfer rule that fails PMAS is harder to defend against subgroup objections, and we surmise that the appropriate stability benchmark is consistency across the entire coalition lattice rather than one-step expansion alone.

The remainder of this paper is structured as follows. Section~\ref{sec:preliminaries} introduces monetary and distortion risk measures and the efficient risk-sharing problem. Section~\ref{sec:main-results} defines risk-sharing games and studies individual rationality, core stability, and dual Arrow--Debreu pricing. Section~\ref{sec:stable-expansion-dual} studies stable expansion and PMAS, providing general criteria and constructions for several dependence structures. Section~\ref{sec:discussion} concludes. Proofs of the results stated in the main text are deferred to Appendix~\ref{app:main-text-proofs}. Appendix~\ref{app:distortion-background} records the lower-envelope specialization for distortion risk measures, Appendix~\ref{sec:transfer-rules} derives the properties of additional transfer rules, Appendix~\ref{app:additional-pmas-examples} collects further PMAS examples, and Appendix~\ref{sec:examples} illustrates the mechanisms in numerical examples.

\section{Preliminaries}\label{sec:preliminaries}

\subsection{Risk measures}\label{ss:risk-measures}

Let $(\Omega,\mathcal F,\P)$ be an atomless probability space. We denote by $\mathcal B(\mathbb R)$ the Borel $\sigma$-algebra on $\mathbb R$, namely the smallest $\sigma$-algebra containing the open subsets of $\mathbb R$. Unless otherwise stated, random variables are real-valued Borel-measurable functions on $\Omega$. Fix $p\in[1,\infty]$ and set $\X:=L^p(\Omega,\mathcal F,\P)$. Random variables represent losses, so positive values correspond to costs.

An extended-valued functional $\rho:\X\to(-\infty,\infty]$ is \emph{proper} if its effective domain $\dom\rho:=\{Y\in\X:\rho(Y)<\infty\}$ is nonempty. Fix a positive integer $n$ and let $[n]=\{1,\dots,n\}$. For each $i\in[n]$, agent $i$ evaluates losses through a proper risk measure $\rho_i:\X\to(-\infty,\infty]$. Consider the following properties:
\begin{enumerate}[label=\textup{(A\arabic*)},ref=A\arabic*]
\item\label{ax:monotonicity} \emph{Monotonicity}: $Y\le Y'$ a.s.\ implies $\rho(Y)\le\rho(Y')$.
\item\label{ax:cash-additivity} \emph{Cash additivity}: $\rho(0)\in\R$ and $\rho(Y+c)=\rho(Y)+c$ for all $Y\in\X$ and $c\in\R$.
\item\label{ax:subadditivity} \emph{Subadditivity}: $\rho(Y+Y')\le\rho(Y)+\rho(Y')$ for all $Y,Y'\in\X$.
\item\label{ax:pos-homogeneity} \emph{Positive homogeneity}: $\rho(\lambda Y)=\lambda\rho(Y)$ for all $Y\in\X$ and $\lambda\ge 0$.
\item\label{ax:convex} \emph{Convexity}: $\rho(\lambda Y+(1-\lambda)Y')\le \lambda\rho(Y)+(1-\lambda)\rho(Y')$ for all $Y,Y'\in\X$ and $\lambda\in[0,1]$.
\item\label{ax:lsc} \emph{Lower semicontinuity}: $\rho$ is lower semicontinuous on $\X$.
\item\label{ax:law-invariance} \emph{Law invariance}: $\rho(Y)=\rho(Y')$ whenever $Y$ and $Y'$ have the same distribution.
\item\label{ax:comonotonic_additivity} \emph{Comonotonic additivity}: $\rho(Y+Y')=\rho(Y) +\rho(Y')$ for all comonotonic $Y,Y'\in\X$.
\end{enumerate}
Throughout the paper, each $\rho_i$, $i\in[n]$, is proper and satisfies \eqref{ax:cash-additivity}; the initial losses introduced below satisfy $\zeta_i\in\dom\rho_i$ for every $i\in[n]$. A functional satisfying \eqref{ax:monotonicity} and \eqref{ax:cash-additivity} is called a \emph{monetary risk measure}. It is \emph{convex} if it satisfies \eqref{ax:convex}, and \emph{coherent} if it additionally satisfies \eqref{ax:subadditivity} and \eqref{ax:pos-homogeneity}; every coherent risk measure is convex. Convexity is imposed where total balancedness is proved, and monotonicity is used only when explicitly stated.

Convex risk measures admit a standard dual description in terms of supporting linear functionals. To state the version used below, fix a separating dual pair $(\X,\mathcal Z)$, so that $\mathcal Z$ is the chosen dual partner of $\X$, with pairing $\langle Y,\pi\rangle:=\E[\pi Y]$ whenever this expectation is well-defined. If a proper convex risk measure $\rho:\X\to(-\infty,\infty]$ is lower semicontinuous for the weak topology $\sigma(\X,\mathcal Z)$ induced by this dual pair, i.e.\ the topology generated by the maps $Y\mapsto\E[\pi Y]$ for $\pi\in\mathcal Z$, then the Fenchel--Moreau theorem gives the dual representation
\begin{equation}\label{eq:individual-dual-representation}
    \rho(Y)
    =
    \sup_{\pi\in\mathcal Z}
    \big\{\E[\pi Y]-\alpha_\rho(\pi)\big\},
    \qquad Y\in\X,
\end{equation}
where the minimal penalty is the convex conjugate
\begin{equation}\label{eq:minimal-penalty}
    \alpha_\rho(\pi):=\sup_{Y\in\X}\{\E[\pi Y]-\rho(Y)\},
    \qquad \pi\in\mathcal Z.
\end{equation}
For the agent-specific risk measures, write $\alpha_i:=\alpha_{\rho_i}$ for $i\in[n]$. On the effective domain of the penalty, cash additivity forces $\E[\pi]=1$, and monotonicity forces $\pi\ge0$; the two together make $\pi$ a probability density.

\emph{Distortion risk measures} are monetary risk measures satisfying \eqref{ax:pos-homogeneity} and \eqref{ax:comonotonic_additivity}. 
Let
\begin{equation*}
\mathcal{H}:=\{h:[0,1]\to\R:\ h \text{ is non-decreasing with } h(0)=0 \text{ and }h(1)=1\}.
\end{equation*}
A function $h\in\mathcal{H}$ is called a probability distortion, the set function $h\circ \P$ is called a capacity, and a distortion risk measure $\rho_h$ is the Choquet integral defined as 
\begin{equation}\label{eq:choquet-integral}
\rho_h(Y)=\int Y \d h\circ \P
=\int_{0}^{\infty} h(\P(Y>x))\,\d x
+\int_{-\infty}^{0} \left[ h(\P(Y>x))-h(1) \right]\,\d x,
\end{equation}
where the first integral is in the sense of Choquet and the last two are in the sense of Lebesgue. Under the convention that $Y\in \X$ represents a loss, if the distortion function $h$ is concave, then the distortion risk measure $\rho_h$ is subadditive, and thus coherent. 

\subsection{Optimal allocations}\label{ss:optimal-allocations}

Each agent $i\in[n]$ holds an initial loss $\zeta_i$. For a coalition $A\subseteq[n]$, set $X_A:=\sum_{i\in A}\zeta_i$; in particular, $X:=X_{[n]}$ is the aggregate loss of the grand coalition. Here and elsewhere, when a coalition-indexed object has no coalition subscript, it refers to the grand coalition $[n]$. For a loss $Y\in\X$ and $1 \leq m \leq n$, write
$$\A_m(Y):=\left\{(Y_i)_{i\in [m]}\in\X^m:\ \sum_{i\in [m]}Y_i=Y\right\}.$$
We interpret $\bm{X}=(X_1,\dots,X_n)\in\A_n(X)$ as a risk-sharing arrangement, with original no-sharing allocation $\bm{\zeta}:=(\zeta_1,\dots,\zeta_n)\in\A_n(X)$.

\begin{definition}[Comonotonicity]\label{def:comonotonicity}
Random variables $Y,Z\in\X$ are \emph{comonotonic} if
$$\bigl(Y(\omega)-Y(\omega')\bigr)\bigl(Z(\omega)-Z(\omega')\bigr)\ge 0 \quad\text{for $(\P\otimes\P)$-a.s.\ $(\omega,\omega')\in\Omega\times\Omega$.}$$
A collection $(Y_1,\dots,Y_n)\in\X^n$ is comonotonic if each pair $(Y_i,Y_j)$, $i,j\in[n]$, is comonotonic.
\end{definition}

By Denneberg's Lemma \citep[Proposition~4.5]{denneberg1994nonadditive}, $Y$ and $Z$ are comonotonic if and only if there exists a random variable $W$ and nondecreasing functions $f,g$ such that $Y=f(W)$ and $Z=g(W)$ almost surely. The same representation extends to $n$-tuples: $(Y_1,\dots,Y_n)$ is comonotonic if and only if there exists $W$ and nondecreasing $f_1,\dots,f_n$ with $Y_i=f_i(W)$ for all $i\in[n]$. In particular, one may take $W=\sum_{i\in [n]} Y_i$.

For a nonempty coalition $A\subseteq[n]$, let $\A_{|A|}^+(Y):=\{\bm{Y}\in\A_{|A|}(Y):\ \bm{Y} \text{ is comonotonic}\}$. The coalition inf-convolution is
\begin{equation}\label{eq:coalition-infconv-functional}
\dsquare_{i\in A}\rho_i (Y):=\inf_{\bm{Y}\in\A_{|A|}(Y)}\sum_{i\in A}\rho_i(Y_i), \qquad Y\in\X,
\end{equation}
and the coalition cost is
\begin{equation}\label{eq:coalition-cost}
    C(A):=\dsquare_{i\in A}\rho_i (X_A).
\end{equation}
Set $X_\varnothing:=0$ and $C(\varnothing):=0$. The associated comonotonic inf-convolution is
\begin{equation}\label{eq:comonotonic-infconv}
\dboxplus_{i\in [n]} \rho_i(X)
:=\inf_{\bm X\in\A_n^+(X)}\sum_{i=1}^n \rho_i(X_i).
\end{equation}

For every nonempty coalition $A\subseteq[n]$ and loss $Y\in\X$, define
$$\Opt_A(Y):=\argmin_{\bm{Y}\in\A_{|A|}(Y)}\sum_{i\in A}\rho_i(Y_i)\subseteq\A_{|A|}(Y).$$
In particular, $\Opt(X)$ denotes the efficient allocations of the aggregate loss $X$ for the grand coalition.

Unless explicitly stated otherwise, the following assumptions are in force throughout the paper.
\begin{assumption}\label{ass:standing}
Let $(\Omega,\mathcal F,\P)$ be an atomless standard probability space, $p\in[1,\infty]$,
and $\mathcal X=L^p(\Omega,\mathcal F,\mathbb P)$. For every $i\in[n]$, assume 
$\rho_i:\mathcal X\to(-\infty,\infty]$
is proper, cash additive, convex, lower semicontinuous, and law invariant, and
$\zeta_i\in\operatorname{dom}\rho_i$.
\end{assumption}

The following proposition is the Comonotonic Improvement Theorem \citep{landsberger1994comonotone, ludkovski2008comonotonicity}. It states that any non-comonotonic allocation can be Pareto-improved by a comonotonic rearrangement when the evaluations are convex, so the infimum over all allocations coincides with the infimum over comonotonic ones.

\begin{proposition}[Comonotonic Improvement Theorem]\label{prop:comonotonic-optimality}
Let $\rho_1,\dots,\rho_n$ be convex risk measures. Then
$\dsquare_{{i\in [n]}}\rho_i(X)=\dboxplus_{i \in [n]} \rho_i(X)$ for $X\in\X.$
\end{proposition}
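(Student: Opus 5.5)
We prove the Comonotonic Improvement Theorem (Proposition~\ref{prop:comonotonic-optimality}) by showing that every allocation is dominated by a comonotonic one, term by term. The inequality $\dsquare_{i\in[n]}\rho_i(X)\le\dboxplus_{i\in[n]}\rho_i(X)$ is immediate, since $\A_n^+(X)\subseteq\A_n(X)$. For the reverse inequality, we fix an arbitrary $\bm Y=(Y_1,\dots,Y_n)\in\A_n(X)$. We then build a comonotonic $\bm Y^c\in\A_n^+(X)$ with $\rho_i(Y_i^c)\le\rho_i(Y_i)$ for every $i$. Taking the infimum over $\bm Y$ gives the claim.

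\textbf{Step 1: construct the improvement.} We use the Landsberger--Meilijson construction: $Y_i^c:=\E[Y_i\mid X]$ is not comonotonic in general, so instead we invoke the comonotone improvement result. For any $\bm Y\in\A_n(X)$ there exist nondecreasing, $1$-Lipschitz functions $f_1,\dots,f_n$ with $\sum_i f_i=\mathrm{id}$ such that $f_i(X)\le_{\mathrm{cx}}Y_i$ for each $i$. The construction proceeds in three stages. First, for finitely-valued $X$, we rearrange conditional distributions step by step; this is Landsberger--Meilijson's original argument for two agents, extended to $n$ agents by induction, pairing agent $i$ against the aggregate of the others. Second, we pass to general $X\in\X$ by approximation and a compactness argument: equi-Lipschitz functions with $f_i(0)$ normalized admit a pointwise convergent subsequence by Arzel\`a--Ascoli, and convex order is preserved under $L^1$ limits. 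Cash shifts ensure the $f_i(X)$ sum to $X$.

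\textbf{Step 2: monotonicity in convex order.} We need that a convex, lower semicontinuous, law-invariant $\rho_i$ on an atomless standard space is consistent with convex order: $Z\le_{\mathrm{cx}}Y$ implies $\rho_i(Z)\le\rho_i(Y)$. This is the standard result for law-invariant convex functionals, obtained either via the Jouini--Schachermayer--Touzi dilatation monotonicity or via the dual representation \eqref{eq:individual-dual-representation}. In the latter route, law invariance allows the penalty $\alpha_i$ to be taken law invariant. Each term $\sup_{\pi}\{\E[\pi Y]-\alpha_i(\pi)\}$ then reduces to a quantile-type (Hardy--Littlewood) supremum, which is a Schur-convex function of the law of $Y$ and hence monotone in $\cx$. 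We also need $Y_i^c\in\X$. This holds because $f_i$ is Lipschitz and $X\in L^p$, and $\rho_i(Y_i^c)\le\rho_i(Y_i)<\infty$ holds whenever $Y_i\in\dom\rho_i$. If some $\rho_i(Y_i)=\infty$, the inequality for that agent is trivial.

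\textbf{Main obstacle.} The delicate step is Step 1 in full generality: the existence of a comonotonic improvement for arbitrary $L^p$ (possibly non-discrete, unbounded) aggregates with $n$ agents. Our first approach would be to cite \citet{landsberger1994comonotone} for the discrete case, together with the extension in \citet{ludkovski2008comonotonicity}. The approximation argument would then be carried out with care in $L^p$ for $p<\infty$, and in $L^\infty$ directly since $X$ is bounded there, as outlined above.
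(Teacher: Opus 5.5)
Your proposal is correct and follows essentially the paper's route. The paper gives no separate proof: it invokes the comonotone improvement theorem of \citet{landsberger1994comonotone} and \citet{ludkovski2008comonotonicity} (every allocation is dominated agentwise in $\cx$ by a comonotone one), combined with convex-order consistency of the law-invariant, convex, lower semicontinuous $\rho_i$ from Assumption~\ref{ass:standing} (cf.\ \citet[Theorem~4.3]{bauerle2006stochastic}), which is exactly your Steps~1--2. Since you ultimately cite the theorem, the looseness in your optional sketch is harmless. Note, however, two points: the induction ``agent $i$ against the aggregate of the others'' needs an extra Strassen-type coupling to split the improved aggregate $X-f_1(X)$ among the remaining agents, and the constants $f_i(0)$ should not be normalized by cash shifts, because that would break the equal means required by $\cx$; they are bounded automatically by the mean constraint and the $1$-Lipschitz property.
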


The following standard exactness result gives finite, attained coalition inf-convolutions. We use the loss-convention version of \citet[Theorem~2.5]{filipovic2008optimal}.

\begin{proposition}\label{prop:infconv-attainment}
For every nonempty coalition $A\subseteq[n]$, the inf-convolution $\dsquare_{i\in A}\rho_i$ is proper, cash additive, convex, lower semicontinuous, and law invariant. Moreover, $\dsquare_{i\in A}\rho_i (X_A)$ is exact: for every $Y\in\X$ there exists $\bm{Y}\in\A_{|A|}(Y)$ such that
$$\dsquare_{i\in A}\rho_i (Y)=\sum_{i\in A}\rho_i(Y_i).$$
In particular, $C(A)=\dsquare_{i\in A}\rho_i (X_A)$ is finite, $\Opt_A(X_A)$ is nonempty, and $\Opt_A(X_A)$ contains a comonotonic allocation.
\end{proposition}

An allocation $\bm{X}\in\A_n(X)$ is \emph{Pareto optimal} if there is no $\bm{Y}\in\A_n(X)$ with $\rho_i(Y_i)\le\rho_i(X_i)$ for all $i\in[n]$ and strict inequality for some $i \in [n]$.

\emph{Individual rationality} (IR) requires that agents weakly prefer participation to autarky:
\begin{equation}\label{eq:IR}
\rho_i(X_i)\le \rho_i(\zeta_i),\qquad i\in [n].
\end{equation}
Let $\A_{n,\IR}(X):=\{\bm{X}\in\A_n(X):\ \bm{X} \text{ satisfies \eqref{eq:IR}}\}$. An \emph{individually rational Pareto optimal} allocation is any Pareto optimal allocation in $\A_{n,\IR}(X)$.

\section{Risk-sharing games}\label{sec:main-results}

We define the risk-sharing game generated by coalition-wise inf-convolution, i.e., efficient coalition-wise risk allocation. 
Because of cash additivity, the problem of implementing risk sharing is a TU cooperative game. We analyze two basic properties that an allocation must satisfy: individual rationality and coalitional stability. Section~\ref{ss:dual-infconv-pricing} develops the general dual-pricing rule, while Appendix~\ref{app:distortion-background} records the lower-envelope distortion specialization.

\subsection{Risk-sharing games as TU games}\label{ss:gain-game}

Consider a risk-sharing pool whose members may reallocate only their own aggregate endowment. The cost of this restricted pool is the coalition-wise inf-convolution of the members' risk measures, and the difference between this cost and autarky is the welfare gain available for redistribution.

\begin{definition}[Gain game]\label{def:coalitional-game}
For a coalition $A\subseteq[n]$, define its welfare gain by
\begin{equation}\label{eq:gain-game}
\nu(A):=\sum_{i\in A}\rho_i(\zeta_i)-C(A).
\end{equation}
\end{definition}

The gain game records the best welfare improvement that a coalition can obtain by reallocating only its own endowments. The next lemma shows that $\nu$ is a (i) zero-normalized, (ii) nonnegative, (iii) monotone, and (iv) bounded set function.

\begin{lemma}\label{lemma:gain-monotone}
The gain game in \eqref{eq:gain-game} satisfies:
\begin{enumerate}[label=\textup{(\roman*)},ref=\roman*]
\item $\nu(\{i\})=0$ for every $i\in [n]$;
\item $\nu(A)\ge 0$ for every $A\subseteq [n]$;
\item if $A\subseteq B\subseteq[n]$, then $\nu(A)\le \nu(B)$;
\item $\nu([n])<\infty$.
\end{enumerate}
\end{lemma}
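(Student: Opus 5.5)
The plan is to work directly from the definition of $C(A)$ as an infimum over allocations of $X_A$, together with the finiteness and attainment guaranteed by Proposition~\ref{prop:infconv-attainment}. For (i), take $A=\{i\}$. Then $\A_1(\zeta_i)=\{\zeta_i\}$ is a singleton, so $C(\{i\})=\rho_i(\zeta_i)$ and hence $\nu(\{i\})=0$. For (ii), the no-sharing allocation $(\zeta_i)_{i\in A}$ lies in $\A_{|A|}(X_A)$, which gives $C(A)\le\sum_{i\in A}\rho_i(\zeta_i)$. All terms are finite because $\zeta_i\in\dom\rho_i$ and $C(A)$ is finite by Proposition~\ref{prop:infconv-attainment}, so $\nu(A)\ge0$. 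The empty coalition is covered by the convention $C(\varnothing)=0$, which gives $\nu(\varnothing)=0$.

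For (iii), monotonicity, the strategy is a superadditivity-type comparison. Let $A\subseteq B$ with $A$ nonempty; the case $A=\varnothing$ follows from (ii). Use Proposition~\ref{prop:infconv-attainment} to pick an optimal allocation $(Y_i)_{i\in A}\in\Opt_A(X_A)$, and extend it to $B$ by setting $Y_j:=\zeta_j$ for $j\in B\setminus A$. This extended vector sums to $X_A+\sum_{j\in B\setminus A}\zeta_j=X_B$, so it is feasible for $B$. Therefore
\[
C(B)\le C(A)+\sum_{j\in B\setminus A}\rho_j(\zeta_j),
\]
and rearranging gives $\nu(B)\ge\nu(A)$. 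Attainment is not actually needed here: an $\varepsilon$-optimal allocation works just as well, because $C(A)$ is finite.

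For (iv), $\nu([n])=\sum_i\rho_i(\zeta_i)-C([n])$. Each $\rho_i(\zeta_i)$ is finite by assumption, and $C([n])>-\infty$ by Proposition~\ref{prop:infconv-attainment}, which states that the coalition cost is finite. Hence $\nu([n])<\infty$, and by (iii) every $\nu(A)\le\nu([n])$ as well.

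The only delicate point is (iv). Finiteness from below of the inf-convolution is exactly the nontrivial content here, since without convexity or lower semicontinuity it could equal $-\infty$. I will rely on Proposition~\ref{prop:infconv-attainment} for this step rather than reprove it. If a self-contained argument were wanted, I would lower-bound $\sum_i\rho_i(Y_i)\ge\sum_i(\E[\pi Y_i]-\alpha_i(\pi))=\E[\pi X]-\sum_i\alpha_i(\pi)$ for a common $\pi$ with all $\alpha_i(\pi)<\infty$. The existence of such a common $\pi$ is precisely where the law-invariance and the standing assumptions enter, and it is the step I expect to be the main obstacle.
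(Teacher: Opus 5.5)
Your proposal is correct and follows the paper's proof essentially step by step. It uses singleton feasibility for (i), the no-sharing allocation for (ii), extension of an optimal $A$-allocation by the endowments $\zeta_j$, $j\in B\setminus A$, for (iii), and finiteness of $C([n])$ from Proposition~\ref{prop:infconv-attainment} for (iv). The common price you flag as the obstacle in a self-contained version is simply $\pi\equiv 1$: law invariance, convexity, and lower semicontinuity give $\rho_i(Y)\ge\E[Y]+\rho_i(0)$, which is exactly the bound the paper uses to prove that proposition.
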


Lemma \ref{lemma:gain-monotone} shows that (i) autarky yields zero gain, that (ii) no coalition loses from having the option to share internally, and that (iii) enlarging the coalition cannot reduce the attainable gain.

Since $\nu(A)$ is a scalar welfare gain attached to each coalition $A\subseteq[n]$, the pair $([n],\nu)$ is the characteristic-function form of the risk-sharing game. It is a transferable utility game because of cash additivity. Formally, a vector $\bm{c}\in\R^n$ on the balanced-budget hyperplane
\begin{equation}\label{eq:H}
H:=\left\{\bm{c}\in\R^n:\ \sum_{i\in [n]} c_i=0\right\}
\end{equation}
preserves feasibility, $\bm{X}+\bm{c}:=(X_1+c_1,\dots,X_n+c_n)\in\A_n(X)$, and shifts evaluations componentwise, $\rho_i(X_i+c_i)=\rho_i(X_i)+c_i$. We define a \textit{transfer rule} as a deterministic mapping
$((\rho_1,\zeta_1),\dots, (\rho_n, \zeta_n)) \longmapsto (c_1(X_1,\zeta_1),\dots, c_n(X_n,\zeta_n))$
attached to an optimal allocation $(X_1,\dots,X_n)$. Pareto optimality of $\bm{X}+\bm{c}$ is equivalent to $\sum_{i=1}^n\rho_i(X_i)=\dsquare_{i\in [n]}\rho_i$, so the question of which transfer rule to apply is a question of selection within the set of efficient allocations.

\begin{definition}[Transfer-equivalence classes]\label{def:transfer-classes}
For allocations $\bm{X},\bm{Y}\in\A_n(X)$ write $\bm{X}\sim \bm{Y}$ if $\bm{Y}-\bm{X}=\bm{c}\in H$. The equivalence classes defined by $\sim$ are called the transfer-equivalence classes, and we write $\Opt(X)_{/\sim}$ for the quotient space defined by $\sim$.
\end{definition}

Notice that a transfer-equivalence class always has a canonical pure risk allocation $\bm{X}_{0}\in \Opt(X)$, namely a representative whose components $X_{0,i}$ consist only of the random reallocation of risk and carry no additive cash component. Other elements of the class are obtained by applying a balanced transfer $\bm{c}\in H$ to $\bm{X}_0$. Notice that we also meaningfully refer to $\bm{X},\bm{Y} \in \Opt(X)_{/\sim}$ as different risk allocations when $ \bm{X}\nsim \bm{Y}$ so they do not belong to the same transfer-equivalence class. Finally, notice that $ \bm{X}\nsim \bm{Y}$ if and only if $\bm{X}_{0} \nsim \bm{Y}_{0}$  for $\bm{X} \sim \bm{X}_{0}$ and $\bm{Y} \sim \bm{Y}_{0}$. Taken together, these observations imply that we can slightly abuse the nomenclature and refer to different $\bm{X},\bm{Y} \in \Opt(X)_{/\sim}$ as either risk allocations or pure risk allocations whenever this is unambiguous.

While we write deterministic transfers in the form $\bm{X}\mapsto \bm{X}+\bm{c}$ with $\bm{c}\in H$ from \eqref{eq:H}, it is generally more convenient to track transfers through evaluation vectors and welfare gains.

\begin{definition}[Evaluation vector and welfare baselines]\label{def:evaluation-welfare}
For $\bm{X}=(X_1,\dots,X_n)\in\A_n(X)$ with finite component evaluations, define the evaluation vector
\begin{equation*}
\bm{r}(\bm{X}):=(\rho_1(X_1),\dots,\rho_n(X_n))\in\R^n.
\end{equation*}
Define agentwise welfare baselines (risk reductions) relative to the initial endowment $\bm{\zeta}$ by
\begin{equation*}
b_i(\bm{X}):=\rho_i(\zeta_i)-\rho_i(X_i),\qquad i\in [n].
\end{equation*}
\end{definition}

We reserve the symbol $w_i$ for the post-transfer welfare share assigned to agent $i\in[n]$, defined by $w_i:=b_i(\bm{X})-c_i$ when a balanced transfer $\bm{c}\in H$ is applied to $\bm{X}$. Since transfers act componentwise on evaluations, meaning $\bm{r}(\bm{X}+\bm{c})=\bm{r}(\bm{X})+\bm{c}$ for $\bm{c}\in H$, we have that $\{\bm{r}(\bm{X}+\bm{c}):\bm{c}\in H\}=\bm{r}(\bm{X})+H$ is an affine hyperplane in $\R^n$. 

The next theorem describes the geometry of individually rational allocations for an efficient risk allocation: the simplex below is the imputation simplex of the gain game, and the theorem recovers the well-known result that the utility possibility frontier of zero-normalized TU games is a simplex.

\begin{theorem}\label{thm:transfer-IR-geometry}
Fix a pure risk allocation $\bm{X}_0\in\Opt(X)$, let $b_i:=b_i(\bm{X}_0)$ for $i\in[n]$ be as in Definition~\ref{def:evaluation-welfare}, and write $W:=\nu([n])$ for the grand-coalition gain. Then:
\begin{enumerate}[label=\textup{(\roman*)},ref=\roman*]
\item\label{it:HIR} The set of transfers that make $\bm{X}_0$ individually rational is
\begin{equation}\label{eq:HIR}
H_{\IR}(\bm{X}_0)
:=\left\{\bm{c}\in\R^n:\ \sum_{i\in [n]} c_i=0,\ \ c_i\le b_i\ \forall i\in [n]\right\},
\end{equation}
and is nonempty. For $\bm{c}\in H$, the post-transfer welfare shares $w_i:=b_i(\bm{X}_0+\bm{c})$ satisfy $w_i=b_i-c_i$ for $i\in [n]$, and $\sum_{i\in [n]} w_i=W$.

\item\label{it:simplex} $H_{\IR}(\bm{X}_0)$ is in bijection with the imputation simplex $\Delta_W:=\{\bm{w}\in\R_+^n:\sum_{i\in [n]} w_i=W\}$ via $w_i=b_i-c_i$ and $c_i=b_i-w_i$ for $i\in[n]$. When $W>0$, the extreme points of $\Delta_W$ are $\{W\bm{e}^{(k)}:k\in [n]\}$, where $\bm{e}^{(k)}$ denotes the $k$th standard basis vector.
\end{enumerate}
\end{theorem}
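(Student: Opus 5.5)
The argument is a direct computation based on cash additivity (A2) together with the identification of $W$ with $\sum_i b_i$. The plan is to establish this identity first and then read off both parts.

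\textbf{Step 1: identify $W$.} Since $\bm{X}_0\in\Opt(X)$, Proposition~\ref{prop:infconv-attainment} gives
\[
\sum_{i\in[n]}\rho_i(X_{0,i})=\dsquare_{i\in[n]}\rho_i(X)=C([n]),
\]
and this value is finite. Each $\rho_i(X_{0,i})$ is then finite as well. Indeed, none can equal $+\infty$ because the sum is finite, and none can equal $-\infty$ because each $\rho_i$ takes values in $(-\infty,\infty]$. Since $\zeta_i\in\dom\rho_i$, each $b_i$ is a finite real number. It follows that
\[
\sum_{i\in[n]} b_i=\sum_{i\in[n]}\rho_i(\zeta_i)-C([n])=\nu([n])=W.
\]
By Lemma~\ref{lemma:gain-monotone}, $0\le W<\infty$.

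\textbf{Step 2: part (\ref{it:HIR}).} Take $\bm{c}\in H$. Cash additivity gives $\rho_i(X_{0,i}+c_i)=\rho_i(X_{0,i})+c_i$, and hence
\[
w_i=b_i(\bm{X}_0+\bm{c})=\rho_i(\zeta_i)-\rho_i(X_{0,i})-c_i=b_i-c_i.
\]
Summing over $i$ and using $\sum_i c_i=0$ yields $\sum_i w_i=W$. The shifted allocation $\bm{X}_0+\bm{c}$ still lies in $\A_n(X)$. It satisfies \eqref{eq:IR} if and only if $w_i\ge0$ for every $i$, that is, $c_i\le b_i$. This proves the characterization \eqref{eq:HIR}.

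For nonemptiness, exhibit the explicit transfer $c_i:=b_i-W/n$. Then $\sum_i c_i=W-W=0$, and $c_i\le b_i$ holds because $W\ge0$.

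\textbf{Step 3: part (\ref{it:simplex}).} Consider the affine maps $\bm{c}\mapsto\bm{b}-\bm{c}$ and $\bm{w}\mapsto\bm{b}-\bm{w}$, which are mutual inverses on $\R^n$. By Step 2, the first sends $H_{\IR}(\bm{X}_0)$ into $\Delta_W$. Conversely, if $\bm{w}\in\Delta_W$, then $\bm{c}:=\bm{b}-\bm{w}$ satisfies $\sum_i c_i=W-W=0$ and $c_i\le b_i$, so $\bm{c}\in H_{\IR}(\bm{X}_0)$. This gives the bijection.

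For the extreme points when $W>0$, write any $\bm{w}\in\Delta_W$ as $\bm{w}=\sum_k (w_k/W)\,W\bm{e}^{(k)}$. This is a convex combination, so the points $W\bm{e}^{(k)}$ generate $\Delta_W$, and every extreme point must be one of them. Each $W\bm{e}^{(k)}$ is indeed extreme. If $W\bm{e}^{(k)}=\lambda\bm{u}+(1-\lambda)\bm{v}$ with $\bm{u},\bm{v}\in\Delta_W$ and $\lambda\in(0,1)$, then nonnegativity forces $u_j=v_j=0$ for every $j\neq k$. The budget constraint then forces $u_k=v_k=W$, so $\bm{u}=\bm{v}=W\bm{e}^{(k)}$.

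\textbf{Main obstacle.} The only delicate point is Step 1: checking finiteness of all the evaluations so that the algebra is legitimate, and identifying $\sum_i b_i$ with $W$. Both rest on the exactness and finiteness statement in Proposition~\ref{prop:infconv-attainment}. Everything after that is routine linear algebra.
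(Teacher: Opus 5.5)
Your proposal is correct and follows essentially the same route as the paper's proof. Cash additivity gives $w_i=b_i-c_i$ and the IR characterization, efficiency of $\bm{X}_0$ gives $\sum_{i\in[n]}b_i=W$, nonemptiness comes from $W\ge 0$ (your $c_i=b_i-W/n$ is the paper's $c_i=b_i-w_i$ with $\bm{w}$ the barycentre of $\Delta_W$), and the bijection is the affine map $\bm{c}\mapsto\bm{b}-\bm{c}$; your finiteness check and direct verification of the vertices of $\Delta_W$ only make explicit what the paper treats as standard.
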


Theorem~\ref{thm:transfer-IR-geometry} separates the risk-sharing problem into an efficiency part and an implementability part. Once a pure risk allocation is fixed, individual rationality reduces to a linear system in $\R^n$, and the welfare simplex $\Delta_W$ parameterizes admissible transfers. At the vertex $\bm{w}=W\bm{e}^{(k)}$, agent $k$ captures the entire grand-coalition gain while every other agent's participation constraint binds; when $W=0$, the simplex collapses to $\bm{0}$ and the unique IR transfer leaves every agent at their reservation level.

The bijection $c_i=b_i-w_i$ of Theorem~\ref{thm:transfer-IR-geometry}\,(\ref{it:simplex}) turns the welfare simplex into explicit transfer coordinates: $H_{\IR}(\bm{X}_0)$ is itself an $(n-1)$-dimensional simplex that can be written in closed form. The interior of the simplex corresponds to allocations in which every agent receives a strictly positive share.

\begin{corollary}\label{prop:HIR-vertices}
In the setting of Theorem~\ref{thm:transfer-IR-geometry}, suppose $W>0$. Then $H_{\IR}(\bm{X}_0)$ is an $(n-1)$-dimensional simplex with extreme points $\bm{c}^{(k)}:=\bm{b}-W\bm{e}^{(k)}$, $k=1,\dots,n$, and $H_{\IR}(\bm{X}_0)=\operatorname{conv}\{\bm{c}^{(1)},\dots,\bm{c}^{(n)}\}$. The set of IR-efficient allocations in the deterministic-transfer class of $\bm{X}_0$ is $\{\bm{X}_0+\bm{c}:\bm{c}\in H_{\IR}(\bm{X}_0)\}=\operatorname{conv}\{\bm{X}_0+\bm{c}^{(1)},\dots,\bm{X}_0+\bm{c}^{(n)}\}$.
\end{corollary}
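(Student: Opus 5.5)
The plan is to transport the simplex structure of $\Delta_W$ to $H_{\IR}(\bm{X}_0)$ through the affine bijection of Theorem~\ref{thm:transfer-IR-geometry}\,(\ref{it:simplex}). Then I would push it once more, through the affine map $\bm{c}\mapsto\bm{X}_0+\bm{c}$, to the allocation space.

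First, define $T:\R^n\to\R^n$ by $T(\bm{w}):=\bm{b}-\bm{w}$. This is an affine bijection, being a reflection composed with a translation. By Theorem~\ref{thm:transfer-IR-geometry}\,(\ref{it:simplex}), $T(\Delta_W)=H_{\IR}(\bm{X}_0)$.

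When $W>0$, $\Delta_W=\operatorname{conv}\{W\bm{e}^{(1)},\dots,W\bm{e}^{(n)}\}$, and these $n$ points are affinely independent. Indeed, the differences $W(\bm{e}^{(k)}-\bm{e}^{(1)})$ for $k\ge2$ are linearly independent because $W\neq0$. So $\Delta_W$ is an $(n-1)$-dimensional simplex.

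Affine bijections preserve convex hulls, affine independence and extreme points. Hence $H_{\IR}(\bm{X}_0)=\operatorname{conv}\{T(W\bm{e}^{(k)})\}_k$ is an $(n-1)$-simplex with vertices $T(W\bm{e}^{(k)})=\bm{b}-W\bm{e}^{(k)}=\bm{c}^{(k)}$.

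As a sanity check, I would verify directly that $\bm{c}^{(k)}\in H_{\IR}(\bm{X}_0)$:
\begin{itemize}
\item its coordinates sum to $\sum_i b_i-W$;
\item $\sum_i b_i=\sum_i\rho_i(\zeta_i)-\sum_i\rho_i(X_{0,i})=\sum_i\rho_i(\zeta_i)-C([n])=W$, since $\bm{X}_0\in\Opt(X)$, so the sum is zero;
\item $c^{(k)}_i\le b_i$ holds because $W\ge0$.
\end{itemize}

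For the allocation statement, I consider the map $\Phi:\bm{c}\mapsto\bm{X}_0+\bm{c}$ from $\R^n$ into $\X^n$, where constants are viewed as deterministic random variables. $\Phi$ is affine and injective. It therefore maps $\operatorname{conv}\{\bm{c}^{(k)}\}$ onto $\operatorname{conv}\{\Phi(\bm{c}^{(k)})\}$, because $\Phi(\sum_k\lambda_k\bm{c}^{(k)})=\sum_k\lambda_k(\bm{X}_0+\bm{c}^{(k)})$ whenever $\sum_k\lambda_k=1$.

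By Theorem~\ref{thm:transfer-IR-geometry}\,(\ref{it:HIR}), the IR allocations in the transfer class of $\bm{X}_0$ are exactly $\Phi(H_{\IR}(\bm{X}_0))$. They are efficient because, by cash additivity, $\sum_i\rho_i(X_{0,i}+c_i)=\sum_i\rho_i(X_{0,i})+\sum_i c_i=C([n])$ for every $\bm{c}\in H$.

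The main subtlety is this last identification. I need to check that "IR-efficient allocations in the deterministic-transfer class" means precisely $\bm{X}_0+\bm{c}$ with $\bm{c}\in H$ satisfying IR. That holds by Definition~\ref{def:transfer-classes}, together with the fact that balanced transfers preserve efficiency via the cash-additivity computation above. Everything else is routine affine geometry.
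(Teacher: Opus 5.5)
Your proposal is correct and follows the paper's proof step for step. The paper also transports $\Delta_W$ to $H_{\IR}(\bm{X}_0)$ through the affine bijection $T(\bm{w})=\bm{b}-\bm{w}$, maps the vertices $W\bm{e}^{(k)}$ to $\bm{c}^{(k)}$, and gets the allocation-space identity by translating by $\bm{X}_0$; your extra checks (affine independence of the vertices, membership of $\bm{c}^{(k)}$ in $H_{\IR}(\bm{X}_0)$, and efficiency of balanced transfers via cash additivity) just make explicit what the paper leaves implicit.
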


Once an efficient transfer class is fixed, the random part of every allocation in $\bm{X}_0 + H$ is determined by efficiency, and deterministic transfers choose a point in the welfare simplex.
The welfare simplex itself $\Delta_W$ is uniquely determined from the characteristic function $\nu$, and so we refer to it as the IR-efficient set. However, when $\Opt(X)$ contains multiple transfer classes, the set of possible transfers $\cup_{\bm{X}_0\in \Opt(X)_\sim} H_{\IR}(\bm{X}_0)$ is a union of simplices, one per transfer-equivalence class, and this union need not be convex. Because of this lack of convexity, the sequel often avoids this possible ambiguity by referring only to the IR-efficient set and constructing the transfers directly through the vertices of $\Delta_W$.

\subsection{Core implementation and existence}\label{ss:core-implement}

The gain game $\nu$ associates with each coalition its aggregate welfare gain, and we now inquire which splits of the grand-coalition gain $\nu([n])$ are stable in the sense that no subgroup can improve by deviating. The classical answer is given by the core.

\begin{definition}[Core]\label{def:core}
The core of the TU game $([n],\nu)$ is
\begin{equation*}
\Core(\nu):=\left\{\bm{w}\in\R^n:\ \sum_{i\in [n]} w_i=\nu([n]),\ \sum_{i\in A} w_i\ge \nu(A)\ \forall A\subseteq [n]\right\}.
\end{equation*}
\end{definition}

If a proposed split lies outside the core, some coalition could ``walk away'' and do better on its own, so the arrangement is unstable. In risk sharing, core stability means that every subgroup receives at least as much benefit as it could achieve by pooling only among its own members, a strong yet natural participation constraint.

The next proposition records the basic correspondence: every core element is realized by a deterministic transfer on an efficient allocation, and the resulting allocation is IR.

\begin{proposition}\label{prop:core-implement}
In the setting of Theorem~\ref{thm:transfer-IR-geometry}, the following hold:
\begin{enumerate}[label=\textup{(\roman*)},ref=\roman*]
\item\label{it:core-nonneg} every $\bm{w}\in\Core(\nu)$ satisfies $w_i\ge 0$ for all $i\in [n]$;
\item\label{it:core-transfer} for any $\bm{w}\in\Core(\nu)$, the transfer
$c_i:=b_i-w_i,$ for $i\in [n]$, belongs to $H_{\IR}(\bm{X}_0)$, and the allocation $\bm{X}:=\bm{X}_0+\bm{c}$ is feasible, efficient, and IR. Its welfare baselines equal $\bm{w}$, meaning $b_i(\bm{X})=w_i$ for all $i\in [n]$.
\end{enumerate}
\end{proposition}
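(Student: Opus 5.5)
Both parts follow directly from the core inequalities, Lemma~\ref{lemma:gain-monotone}, and the bijection in Theorem~\ref{thm:transfer-IR-geometry}. No new analytic input about the risk measures is required.

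For part~(\ref{it:core-nonneg}), fix $\bm{w}\in\Core(\nu)$ and apply the coalitional constraint $\sum_{i\in A}w_i\ge\nu(A)$ with the singleton $A=\{i\}$. This gives $w_i\ge\nu(\{i\})$. By Lemma~\ref{lemma:gain-monotone}(i), $\nu(\{i\})=0$, so $w_i\ge0$ for every $i\in[n]$.

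For part~(\ref{it:core-transfer}), set $c_i:=b_i-w_i$. Three facts are needed.

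\emph{Budget balance.} Since $\bm{X}_0\in\Opt(X)$, we have $\sum_i\rho_i(X_{0,i})=C([n])$. Hence $\sum_i b_i=\sum_i\rho_i(\zeta_i)-C([n])=\nu([n])=W$. The efficiency constraint in the core gives $\sum_i w_i=\nu([n])=W$. Therefore $\sum_i c_i=W-W=0$, so $\bm{c}\in H$.

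\emph{IR bound.} Part~(\ref{it:core-nonneg}) gives $w_i\ge0$, hence $c_i\le b_i$. Together with budget balance this places $\bm{c}\in H_{\IR}(\bm{X}_0)$. Equivalently, $\bm{w}\in\Delta_W$ and the bijection of Theorem~\ref{thm:transfer-IR-geometry}(\ref{it:simplex}) applies.

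\emph{Properties of $\bm{X}=\bm{X}_0+\bm{c}$.} Feasibility holds because $\sum_i(X_{0,i}+c_i)=X+\sum_ic_i=X$. Cash additivity gives $\rho_i(X_{0,i}+c_i)=\rho_i(X_{0,i})+c_i$, with both sides finite because $\bm{X}_0$ attains the finite value $C([n])$ (Proposition~\ref{prop:infconv-attainment}). Summing over $i$, the total evaluation is unchanged at $C([n])$, so $\bm{X}$ is efficient. The welfare baselines are
\[
b_i(\bm{X})=\rho_i(\zeta_i)-\rho_i(X_{0,i})-c_i=b_i-c_i=w_i\ge0 .
\]
This identity is exactly the individual rationality condition~\eqref{eq:IR}.

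The only point that needs care is finiteness. The baselines $b_i$ are real numbers only if each $\rho_i(X_{0,i})$ is finite, not merely their sum. This follows because $\rho_i>-\infty$ for every $i$ and the sum equals the finite value $C([n])$. With that settled, the remaining steps are routine.
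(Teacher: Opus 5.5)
Your proof is correct and matches the paper's argument step for step: (i) uses the singleton core constraint with $\nu(\{i\})=0$, and (ii) uses budget balance from $\sum_i w_i=W=\sum_i b_i$, the bound $c_i\le b_i$ from (i), and cash additivity to get efficiency, IR, and $b_i(\bm{X})=w_i$. Your explicit checks of feasibility and of finiteness of each $\rho_i(X_{0,i})$ (from properness and finiteness of $C([n])$) are small points the paper leaves implicit.
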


By cash additivity, any core element $\bm{w}$ translates into a transfer via $\bm{c}=\bm{b}-\bm{w}$ on the pure risk allocation $\bm{X}_0$, so coalitional stability reduces to how a fixed total surplus $W$ is split. Figure~\ref{fig:welfare-simplex} illustrates the welfare simplex and the core for three agents.

\begin{figure}[ht]
\centering
\begin{tikzpicture}
\begin{axis}[
  width=5cm, height=5cm,
  xmin=-0.18, xmax=1.12,
  ymin=-0.12, ymax=1.12,
  xlabel={$w_1$},
  ylabel={$w_2$},
  axis lines=middle,
  every axis x label/.style={at={(ticklabel* cs:1.02)}, anchor=west},
  every axis y label/.style={at={(ticklabel* cs:1.02)}, anchor=south},
  xtick=\empty,
  ytick=\empty,
  label style={font=\normalsize},
  clip=false,
  legend style={
    font=\normalsize,
    at={(1.5,0.5)},
    anchor=west,
    draw=gray!50,
    fill=white,
    fill opacity=0.92,
    text opacity=1,
    cells={anchor=west},
  },
]

\addplot[fill=gray!18, draw=none, forget plot]
  coordinates {(0,0) (1,0) (0,1) (0,0)} -- cycle;
\addplot[very thick, black, forget plot]
  coordinates {(0,0) (1,0) (0,1) (0,0)} -- cycle;

\addplot[fill=gray!45, draw=none, forget plot]
  coordinates {
    (0.35, 0)
    (0.70, 0)
    (0.70, 0.30)
    (0.40, 0.60)
    (0.00, 0.60)
    (0.00, 0.35)
    (0.35, 0)
  } -- cycle;

\addplot[very thick, black, forget plot]
  coordinates {(0.35, 0) (0.70, 0)};
\addplot[very thick, black, forget plot]
  coordinates {(0.00, 0.60) (0.00, 0.35)};
\addplot[very thick, black, forget plot]
  coordinates {(0.70, 0.30) (0.40, 0.60)};

\addlegendimage{area legend, fill=gray!18, draw=black, thick}
\addlegendentry{$\Delta_W$}
\addlegendimage{area legend, fill=gray!45, draw=black, thick}
\addlegendentry{$\mathrm{Core}(\nu)$}

\addplot[very thick, blue]
  coordinates {(0.00, 0.35) (0.35, 0)};
\addlegendentry{$w_1\!+\!w_2 = \nu(\{1,2\})$}
\addplot[very thick, red]
  coordinates {(0.40, 0.60) (0.00, 0.60)};
\addlegendentry{$w_2 = W - \nu(\{1,3\})$}
\addplot[very thick, teal]
  coordinates {(0.70, 0) (0.70, 0.30)};
\addlegendentry{$w_1 = W - \nu(\{2,3\})$}

\node[font=\normalsize, anchor=north west]
  at (axis cs:1.0, 0.0)
  {$W\bm{e}^{(1)}$};
\node[font=\normalsize, anchor=south east]
  at (axis cs:0.0, 1.0)
  {$W\bm{e}^{(2)}$};
\node[font=\normalsize, anchor=north east]
  at (axis cs:0.0, 0.0)
  {$W\bm{e}^{(3)}$};


\node[font=\normalsize, rotate=315, anchor=south east]
  at (axis cs:0.62, 0.38) {$w_3 = 0$};

\end{axis}
\end{tikzpicture}
\caption{The welfare simplex $\Delta_W$ for $n=3$ agents, projected onto
$(w_1, w_2)$ coordinates with $w_3 = W - w_1 - w_2$. The outer triangle
is the set of individually rational welfare-gain splits: $w_i \geq 0$
for all $i\in[3]$ and $\sum_{i\in [3]} w_i = W$. Each vertex $W\bm{e}^{(k)}$ assigns the entire
welfare gain $W$ to agent $k$ while all others bind their participation
constraints. The shaded hexagonal region is the core, defined by the
additional coalitional constraints $\sum_{i \in A} w_i \geq \nu(A)$ for
all $A \subsetneq [n]$. The coloured edges indicate the binding coalitional
constraints that define the core boundary.}
\label{fig:welfare-simplex}
\end{figure}
Similarly to the IR-efficient, the core is uniquely determined by the characteristic function $\nu$, and we can construct transfers directly through the vertices of $\Delta_W$.

We now turn to the conditions under which the core is nonempty. In general TU games, the Bondareva-Shapley Theorem states that the core is nonempty if and only if the game is balanced.

\begin{definition}\label{def:balanced}
Let $([n],\nu)$ be a TU game with $\nu(\varnothing)=0$.
\begin{enumerate}[label=\textup{(\roman*)},ref=\roman*]
\item The game $\nu$ is \emph{balanced} if for every $m\ge 1$, every $\lambda_1,\dots,\lambda_m\ge 0$, and every $A_1,\dots,A_m\subseteq [n]$,
$$\sum_{j:\, i\in A_j}\lambda_j=1\;\;\text{for all } i\in [n] \quad\Longrightarrow\quad \sum_{j\in [m]}\lambda_j\nu(A_j)\le\nu([n]).$$
\item The game $\nu$ is \emph{totally balanced} if for every nonempty coalition $B\subseteq [n]$, the restricted game $\nu^B:2^B\to\R$ defined by $\nu^B(A):=\nu(A)$ for subcoalitions $A\subseteq B$ is balanced.
\end{enumerate}
\end{definition}

The first part of the next theorem is the well-known Bondareva-Shapley Theorem; see e.g., \citet[Theorem~4.1]{MarinacciMontrucchio2004Ambiguity} for a proof. The second part immediately follows by definition, and we thus omit the proof.
\begin{theorem}[\citealp{Bondareva1963,Shapley1967}]\label{thm:bondareva-shapley}
Assume $\nu$ is a TU game with $\nu(\varnothing)=0$. Then:
\begin{enumerate}[label=\textup{(\roman*)},ref=\roman*]
\item The core $\Core(\nu)$ is nonempty if and only if $\nu$ is balanced.
\item The core of the cost game in \eqref{eq:coalition-cost} is nonempty if and only if every $m\ge 1$, $\lambda_1,\dots,\lambda_m\ge 0$, and $A_1,\dots,A_m\subseteq [n]$ satisfying
\begin{equation}\label{eq:balanced-weights}
\sum_{j:\, i\in A_j}\lambda_j=1,\qquad \forall\, i\in [n],
\end{equation}
also satisfy
\begin{equation}\label{eq:balancedness-cost}
\sum_{j\in [m]}\lambda_j C(A_j)\ge C([n]).
\end{equation}
\end{enumerate}
\end{theorem}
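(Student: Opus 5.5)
Part (i) is the classical Bondareva--Shapley theorem, which I would take from \citet[Theorem~4.1]{MarinacciMontrucchio2004Ambiguity}. If a sketch were wanted, I would use linear programming duality. Consider the primal problem of minimizing $\sum_{i\in[n]} w_i$ subject to $\sum_{i\in A} w_i\ge\nu(A)$ for all $A\subseteq[n]$. Its dual is the problem of maximizing $\sum_{A}\lambda_A\nu(A)$ over $\lambda\ge 0$ subject to $\sum_{A\ni i}\lambda_A=1$ for every $i$. The primal is feasible, since a large constant vector works, and the dual is feasible, since singletons with weight one work. Hence both have equal finite optimal values. The core is nonempty exactly when the primal value is at most $\nu([n])$, which by duality means the dual value is at most $\nu([n])$. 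That is precisely balancedness, because any finite family $A_1,\dots,A_m$ with repetitions can be merged into one weight $\lambda_A$ per set. Finiteness of $\nu$ on all coalitions comes from Lemma~\ref{lemma:gain-monotone}, with $0\le\nu(A)\le\nu([n])<\infty$.

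For part (ii), I read the ``core of the cost game'' as the core of $\nu$ expressed in cost terms: cost shares $y_i=\rho_i(\zeta_i)-w_i$ with $\sum_{i\in[n]} y_i=C([n])$ and $\sum_{i\in A}y_i\le C(A)$ for every $A$. The map $\bm w\mapsto\bm y$ is a bijective affine change of variables. It carries $\Core(\nu)$ onto this cost core, because $\sum_{i\in A}w_i\ge\nu(A)$ if and only if $\sum_{i\in A}y_i\le C(A)$ by \eqref{eq:gain-game}. So it suffices to show that balancedness of $\nu$ is equivalent to \eqref{eq:balanced-weights}$\Rightarrow$\eqref{eq:balancedness-cost}.

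The key step is the following computation. Take weights satisfying \eqref{eq:balanced-weights} and exchange the order of summation:
$$\sum_{j\in[m]}\lambda_j\sum_{i\in A_j}\rho_i(\zeta_i)=\sum_{i\in[n]}\rho_i(\zeta_i)\sum_{j:\,i\in A_j}\lambda_j=\sum_{i\in[n]}\rho_i(\zeta_i).$$
Therefore
$$\sum_{j\in[m]}\lambda_j\nu(A_j)=\sum_{i\in[n]}\rho_i(\zeta_i)-\sum_{j\in[m]}\lambda_jC(A_j).$$
From this, $\sum_{j}\lambda_j\nu(A_j)\le\nu([n])$ holds if and only if $\sum_j\lambda_jC(A_j)\ge C([n])$. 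Combining this with part (i) gives (ii).

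The only subtle point is well-definedness. Each $C(A)$ must be finite, which holds by Proposition~\ref{prop:infconv-attainment}. Each $\rho_i(\zeta_i)$ must be finite, which holds because $\zeta_i\in\dom\rho_i$ and $\rho_i>-\infty$. These make the rearrangement of the finite sums legitimate. With finiteness in hand, there is no real obstacle beyond the duality argument in (i).
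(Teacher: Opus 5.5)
Your proposal is correct and matches the paper's treatment: part (i) is taken from the cited Bondareva--Shapley theorem (your LP-duality sketch is the standard argument), and part (ii) is the definitional translation $y_i=\rho_i(\zeta_i)-w_i$ between gain and cost cores, in which balanced weights make the stand-alone terms $\sum_{i}\rho_i(\zeta_i)$ cancel. The paper omits (ii) as immediate from the definitions; your finiteness remarks (via Proposition~\ref{prop:infconv-attainment} and $\zeta_i\in\dom\rho_i$) supply the only detail needed to make that rearrangement legitimate.
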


The cost formulation \eqref{eq:balancedness-cost} is natural because $C(A)$ is an infimum of total evaluations, so balancedness amounts to a consistency requirement across coalition-wise risk-sharing problems. 

The next theorem shows that the convexity of each risk measure $\rho_i$, $i\in[n]$, implies that $C$ (equivalently, $\nu$) is totally balanced.

\begin{theorem}\label{thm:convex-implies-core}
The cost game $C$ is totally balanced, hence $\Core(\nu^B)\neq\varnothing$ for every nonempty $B\subseteq[n]$.
\end{theorem}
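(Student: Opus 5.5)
Because the balancedness condition for a restricted game $\nu^B$ only involves the agents in $B$ and the sets $C(A)$ with $A\subseteq B$, it is enough to show that the cost game is balanced for an arbitrary player set. Applying that argument with $B$ in place of $[n]$ then yields total balancedness. By Theorem~\ref{thm:bondareva-shapley}, this gives $\Core(\nu^B)\neq\varnothing$. So fix a nonempty $B$ together with weights $\lambda_1,\dots,\lambda_m\ge0$ and coalitions $A_1,\dots,A_m\subseteq B$ such that $\sum_{j:\,i\in A_j}\lambda_j=1$ for each $i\in B$. The goal is to prove $\sum_j\lambda_jC(A_j)\ge C(B)$. Coalitions with $\lambda_j=0$ or $A_j=\varnothing$ can be dropped, since their terms contribute nothing.

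The argument is constructive: take near-optimal allocations for each $A_j$ and average them. Proposition~\ref{prop:infconv-attainment} supplies, for every $j$, an allocation $(Y^j_i)_{i\in A_j}\in\Opt_{A_j}(X_{A_j})$. Thus $\sum_{i\in A_j}Y^j_i=X_{A_j}$ and $\sum_{i\in A_j}\rho_i(Y^j_i)=C(A_j)$, which is finite. For each $i\in B$ define $Z_i:=\sum_{j:\,i\in A_j}\lambda_jY^j_i$. This is feasible for $B$, because
\[
\sum_{i\in B}Z_i=\sum_j\lambda_j\sum_{i\in A_j}Y^j_i=\sum_j\lambda_jX_{A_j}=\sum_{i\in B}\zeta_i\sum_{j:\,i\in A_j}\lambda_j=X_B .
\]
The weights attached to each $Z_i$ sum to one, so $Z_i$ is a genuine convex combination of the $Y^j_i$. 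Convexity of $\rho_i$ therefore gives $\rho_i(Z_i)\le\sum_{j:\,i\in A_j}\lambda_j\rho_i(Y^j_i)$; Jensen's inequality for finitely many points holds for extended-valued convex functions, and here every term on the right is finite. Summing over $i\in B$ and swapping the order of summation,
\[
C(B)\le\sum_{i\in B}\rho_i(Z_i)\le\sum_j\lambda_j\sum_{i\in A_j}\rho_i(Y^j_i)=\sum_j\lambda_jC(A_j).
\]
This is exactly \eqref{eq:balancedness-cost} for the player set $B$.

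The only points that need care are bookkeeping ones. First, the weights must sum to one individually for each agent and not just in aggregate, and this is precisely what makes $Z_i$ a convex combination. Second, every term must be finite; attainment and finiteness from Proposition~\ref{prop:infconv-attainment} take care of this, and even without attainment one could use $\varepsilon$-optimal allocations and let $\varepsilon\to0$. Third, $\nu$-balancedness and $C$-balancedness have to be linked. Since $\nu(A)=\sum_{i\in A}\rho_i(\zeta_i)-C(A)$, the balanced-weight condition gives $\sum_j\lambda_j\sum_{i\in A_j}\rho_i(\zeta_i)=\sum_{i\in B}\rho_i(\zeta_i)$. Hence $\sum_j\lambda_j\nu(A_j)\le\nu(B)$ holds if and only if $\sum_j\lambda_jC(A_j)\ge C(B)$, and Theorem~\ref{thm:bondareva-shapley}(i) applied to $\nu^B$ finishes the proof. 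I do not expect any real obstacle beyond checking that the construction of $Z_i$ is feasible.
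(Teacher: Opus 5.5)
Your proof is correct and is essentially the paper's argument. Both average optimal subcoalition allocations with the balancing weights, invoke convexity of each $\rho_i$ agent by agent, and sum. The only differences are cosmetic: you work directly on an arbitrary $B$ rather than on $[n]$ followed by a restriction remark, and you spell out the translation between $\nu$-balancedness and $C$-balancedness, which the paper leaves to Theorem~\ref{thm:bondareva-shapley}(ii).
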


The argument used in the proof of Theorem~\ref{thm:convex-implies-core} adapts \citet[Proposition~3.1]{csoka2009stable}, who verify balancedness for risk allocation problems under a single coherent measure by exploiting positive homogeneity and subadditivity. Here, coalition costs are defined via inf-convolution of heterogeneous functionals, so the proof constructs a feasible grand-coalition allocation from optimal subcoalition allocations and controls its cost using only the convexity of each $\rho_i$, $i\in[n]$, without positive homogeneity.\footnote{In the capital-allocation setting, where a single convex risk measure evaluates the coalition aggregate directly, \citet{tsanakas2009split} shows that convexity without positive homogeneity creates incentives for infinite fragmentation, so that the resulting game is not balanced; the inf-convolution formulation sidesteps this because convexity is exploited at the agent level through the optimization over feasible allocations.}

Theorem~\ref{thm:convex-implies-core} is important because it guarantees the existence of stable risk-sharing arrangements. When the core is empty, an efficient risk-sharing arrangement creates a positive aggregate surplus yet remains impossible to implement: some coalition could leave the grand coalition, reallocate internally, and obtain a larger welfare gain. In that case, the problem is not merely to choose a better price, as no deterministic transfer can make the proposed efficient allocation immune to all deviations across subgroups. Theorem~\ref{thm:convex-implies-core} shows that the core is always nonempty for standard risk-averse preferences, and this obstruction disappears. This existence result also clarifies the role of the transfer rules studied in Section~\ref{ss:dual-infconv-pricing} and Appendix~\ref{sec:transfer-rules}. Once the core is known to be nonempty, different transfer rules can be compared as points in the welfare simplex, with the core distinguishing those that are robust to coalitional deviations. A rule that falls in the core implements efficient sharing in a way that is robust to coalitional deviations; a rule outside the core may still be individually rational and efficient, but it is vulnerable to subgroup objections. Total balancedness is stronger still: it guarantees the same nonemptiness property for every possible subcoalition, which is the prerequisite for the entry-monotonicity analysis in Section~\ref{sec:stable-expansion-dual}.

\subsection{Dual pricing allocation}\label{ss:dual-infconv-pricing}

It is now clear that the inf-convolution defining the cost $C$ uniquely determines the welfare simplex $\Delta_W$ and the core $\Core(\nu)$, and that the efficiency of a random allocation $\bm{X}\in \textrm{Opt}(X)$ alone does not guarantee it is implementable. Although the risk-sharing literature has mostly focused on efficient random allocations, implementation also requires a well-specified and meaningful transfer rule.  Appendix~\ref{sec:transfer-rules} collects several such rules; here we focus on the Arrow--Debreu state-price rule, which is central to the main results of the paper.  This rule prices the original endowments with a common Arrow--Debreu state-price functional and uses the resulting cost shares as the target post-transfer evaluations.  It is generated by a dual optimizer of the grand-coalition inf-convolution: the optimizer supports the representative-agent risk measure at the aggregate loss, and its linear part supplies the state prices used to charge each agent for its endowment.

This construction is more specific than an arbitrary supporting price, but more general than the lower-envelope distortion formula that motivates much of the distortion risk measure analysis in Appendix \ref{app:distortion-background}. The aim is to isolate a pricing rule whose individual-rationality and core properties can be directly read off from dual feasibility and Fenchel inequalities, since the same object will later be selected, coalition by coalition, when we study stable expansion in Section~\ref{sec:stable-expansion-dual}. When the individual risk measures are distortion risk measures, the intersection of the dual envelopes yields the envelope associated with lower-envelope distortion, and the construction specializes to the well-known lower-envelope Arrow--Debreu pricing rule, which we recall in Appendix \ref{app:distortion-background}. For general convex risk measures, the same notation also keeps track of the penalty terms that replace zero-penalty coherent support.

The next proposition is the finite-coalition inf-convolution duality in \citet[Theorem~3.6]{barrieu2005infconvolutiona}. Their statement is written for two convex risk measures on bounded positions; here, we formulate the result on $\X=L^p(\Omega,\mathcal F,\P)$ and allow arbitrary finite coalitions. With the individual minimal penalties from \eqref{eq:minimal-penalty}, set
$$ \alpha_A(\pi):=\sum_{i\in A}\alpha_i(\pi),  \qquad \pi\in\mathcal Z,$$
for every nonempty coalition $A\subseteq[n]$; for the grand coalition, write $\alpha:=\alpha_{[n]}$.

\begin{proposition}\label{prop:dual-representation}
For every nonempty coalition $A\subseteq[n]$, the penalty function associated with $\dsquare_{i\in A}\rho_i$ is $\alpha_A$, and
$$ \dsquare_{i\in A}\rho_i (Y) = \sup_{\pi\in\mathcal Z}
    \left\{\E[\pi Y]-\alpha_A(\pi)\right\},
    \qquad Y\in\X.$$
\end{proposition}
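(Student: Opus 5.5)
The plan is to establish the two claims of Proposition~\ref{prop:dual-representation} in order: first identify the convex conjugate of $\rho_A:=\dsquare_{i\in A}\rho_i$ as $\alpha_A$, then apply Fenchel--Moreau to $\rho_A$.

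\textbf{Step 1: the conjugate.} For $\pi\in\mathcal Z$ we compute
\begin{align*}
\alpha_{\rho_A}(\pi)
&=\sup_{Y\in\X}\Big\{\E[\pi Y]-\inf_{\bm Y\in\A_{|A|}(Y)}\sum_{i\in A}\rho_i(Y_i)\Big\}
=\sup_{Y\in\X}\ \sup_{\bm Y\in\A_{|A|}(Y)}\sum_{i\in A}\big(\E[\pi Y_i]-\rho_i(Y_i)\big).
\end{align*}
The constraint $\sum_i Y_i=Y$ is absorbed by the outer supremum over $Y$. The double supremum is therefore a supremum over independent $(Y_i)_{i\in A}\in\X^{|A|}$, and it splits as $\sum_{i\in A}\alpha_i(\pi)=\alpha_A(\pi)$.

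Two points need care here. One is the arithmetic with $+\infty$. Each $\rho_i$ is proper and, by convexity and lower semicontinuity, never equals $-\infty$, so $\alpha_i>-\infty$ and all sums are well defined in $(-\infty,\infty]$. The other is that $\E[\pi Y_i]$ must be finite. I would take $\mathcal Z=L^q$ with $q$ conjugate to $p$, or restrict to the natural dual pair, so that this holds.

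\textbf{Step 2: the representation.} By Proposition~\ref{prop:infconv-attainment}, $\rho_A$ is proper, convex, and lower semicontinuous on $\X$. Fenchel--Moreau then gives $\rho_A(Y)=\sup_{\pi\in\mathcal Z}\{\E[\pi Y]-\alpha_{\rho_A}(\pi)\}$. Substituting Step~1 yields the stated formula.

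\textbf{Main obstacle.} The delicate point is the topology. Fenchel--Moreau needs lower semicontinuity for $\sigma(\X,\mathcal Z)$, whereas Proposition~\ref{prop:infconv-attainment} gives lower semicontinuity on $\X$, which I read as norm lower semicontinuity. For convex functions these coincide: by Mazur's theorem, norm-closed convex sublevel sets are weakly closed. This settles the case $p<\infty$ with $\mathcal Z=L^q$.

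For $p=\infty$ with $\mathcal Z=L^1$, norm lower semicontinuity is not enough, and one needs the Fatou property. I would obtain it from law invariance via the Jouini--Schachermayer--Touzi result, which gives Fatou for law-invariant convex risk measures on $L^\infty$ over an atomless standard space. Law invariance and convexity of $\rho_A$ are again supplied by Proposition~\ref{prop:infconv-attainment}. This is the step I expect to require the most justification.

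A weaker route, if needed, avoids citing the closedness of $\rho_A$ directly. The inequality $\rho_A\ge\sup_\pi\{\E[\pi Y]-\alpha_A(\pi)\}$ follows elementarily from the Fenchel inequality for each $\rho_i$ summed over $i\in A$. The reverse inequality is exactly where closedness of $\rho_A$ enters.
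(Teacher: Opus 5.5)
Your proof is correct and follows the paper's route: identify the conjugate of $\dsquare_{i\in A}\rho_i$ as $\alpha_A$, then apply Fenchel--Moreau. The differences are minor. You compute the conjugate directly for all of $A$ at once, which is exactly the identity the paper obtains by citing Barrieu--El Karoui and iterating; this spares you from checking their hypotheses on the intermediate inf-convolutions. You also justify the $\sigma(\X,\mathcal Z)$-lower semicontinuity that the paper only asserts, though one citation there should be corrected. The Jouini--Schachermayer--Touzi theorem covers finite-valued monotone convex risk measures, whereas $\dsquare_{i\in A}\rho_i$ here need be neither monotone nor finite, so for $p=\infty$ invoke Svindland's extension to proper, norm-lower-semicontinuous, law-invariant convex functions on $L^\infty$ over an atomless space.
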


The following assumption will sometimes be applied in addition to Assumption~\ref{ass:standing}.
\begin{assumption}\label{ass:dual-pricing}
For every nonempty coalition $A\subseteq[n]$ and every $Y\in\X$, write $\Pi_A(Y):=\argmax_{\pi\in\mathcal Z}\{\E[\pi Y]-\alpha_A(\pi)\}$ for the dual-optimizer set. Assume $\Pi_A(X_A)\neq\varnothing$ for every nonempty coalition $A\subseteq[n]$, and there exists a $\pi_A^\star\in\Pi_A(X_A)$ that is $\sigma(X_A)$-measurable.
\end{assumption}

\begin{theorem}\label{thm:dual-AD-support}
Under Assumption~\ref{ass:dual-pricing}, $\dsquare_{i\in [n]}\rho_i (Y)\ge\E[\pi^\star Y]-\alpha(\pi^\star)$ for every $Y\in\X$, with equality at $Y=X$. Moreover, for every $\bm X\in\Opt(X)$ and every $i\in[n]$, $\rho_i(X_i)-\E[\pi^\star X_i]=-\alpha_i(\pi^\star)=\inf_{Y\in\X}\{\rho_i(Y)-\E[\pi^\star Y]\}$, so that $X_i\in\argmin_{Y\in\X}\{\rho_i(Y)-\E[\pi^\star Y]\}$.
\end{theorem}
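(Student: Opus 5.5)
The plan is to read both claims off Proposition~\ref{prop:dual-representation}, applied to the grand coalition, together with the Fenchel--Young inequality for each individual pair $(\rho_i,\alpha_i)$.

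\textbf{The first claim.} Since $\pi^\star\in\Pi_{[n]}(X)\subseteq\mathcal Z$ is one admissible candidate in the supremum defining $\dsquare_{i\in[n]}\rho_i(Y)$, we get $\dsquare_{i\in[n]}\rho_i(Y)\ge \E[\pi^\star Y]-\alpha(\pi^\star)$ for every $Y\in\X$. At $Y=X$, $\pi^\star$ attains the supremum by definition of the argmax set. Hence equality holds, and
\[
C([n])=\E[\pi^\star X]-\alpha(\pi^\star).
\]
This value is finite by Proposition~\ref{prop:infconv-attainment}, so $\alpha(\pi^\star)<\infty$. Each $\alpha_i$ takes values in $(-\infty,\infty]$, because $\alpha_i(\pi)\ge \E[\pi\cdot 0]-\rho_i(0)>-\infty$. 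It follows that every $\alpha_i(\pi^\star)$ is finite.

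\textbf{The second claim.} Fix $\bm X\in\Opt(X)$, which is nonempty by Proposition~\ref{prop:infconv-attainment}. By the definition \eqref{eq:minimal-penalty} of the conjugate,
\[
-\alpha_i(\pi^\star)=\inf_{Y\in\X}\{\rho_i(Y)-\E[\pi^\star Y]\}\le \rho_i(X_i)-\E[\pi^\star X_i]
\]
for each $i$. This is the Fenchel--Young inequality. It requires $\E[\pi^\star X_i]$ to be well defined and finite, which holds under the dual pairing since $X_i\in\X$ and $\pi^\star\in\mathcal Z$. The pairing is bilinear and $\sum_i X_i=X$, so summing over $i$ gives
\[
\sum_{i}\bigl(\rho_i(X_i)-\E[\pi^\star X_i]\bigr)=C([n])-\E[\pi^\star X]=-\alpha(\pi^\star)=\sum_i\bigl(-\alpha_i(\pi^\star)\bigr).
\]
A sum of finite nonnegative gaps $\rho_i(X_i)-\E[\pi^\star X_i]+\alpha_i(\pi^\star)$ equals zero, so each gap is zero. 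This is exactly $\rho_i(X_i)-\E[\pi^\star X_i]=-\alpha_i(\pi^\star)=\inf_Y\{\rho_i(Y)-\E[\pi^\star Y]\}$, and therefore $X_i$ is a minimizer.

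\textbf{Main obstacle.} The only delicate point is finiteness. I must check that every term $\rho_i(X_i)$ is finite, so that the gaps are well-defined nonnegative reals and no $\infty-\infty$ arises. Each $\rho_i(X_i)>-\infty$ because $\rho_i$ is valued in $(-\infty,\infty]$. Their sum equals the finite number $C([n])$, so each term is also $<\infty$. Together with the finiteness of the $\alpha_i(\pi^\star)$ established above, the cancellation argument is rigorous. The $\sigma(X)$-measurability part of Assumption~\ref{ass:dual-pricing} is not needed for this statement.
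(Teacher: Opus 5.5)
Your proposal is correct and follows essentially the same route as the paper. You use feasibility of $\pi^\star$ in the dual formula of Proposition~\ref{prop:dual-representation} for the support inequality and for equality at $X$, and then sum the individual Fenchel--Young inequalities over an optimal allocation to force equality in each term. Your added finiteness bookkeeping is a small refinement the paper leaves implicit: you check that each $\alpha_i(\pi^\star)$ and each $\rho_i(X_i)$ is finite, and note that $-\alpha_i(\pi^\star)=\inf_{Y}\{\rho_i(Y)-\E[\pi^\star Y]\}$ holds directly by definition of the conjugate.
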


The condition $X_i\in\argmin_{Y\in\X}\{\rho_i(Y)-\E[\pi^\star Y]\}$ states that the allocation solves the individual net-cost problem $\inf_{Y\in\X}\{\rho_i(Y)-\E[\pi^\star Y]\}$ against this common price, which implies that Theorem~\ref{thm:dual-AD-support} pins down the Arrow--Debreu pricing measure from a risk-trading equilibrium defined in the sense of \cite{embrechts2018quantilebaseda}. However, one must be careful with this interpretation, as a given pure risk allocation $\bm X_0\in\Opt(X)$ need not be affordable for some agent. Indeed, the set $\Opt(X)$ can contain many pure risk allocations (from different transfer-equivalence classes in $\Opt(X)_{/\sim}$), and not every such allocation needs to be affordable for a given set of budget constraints. This problem cannot be tackled without explicitly modelling such constraints, and so one cannot claim that a particular tuple $(\bm X, \pi^\star)$ necessarily constitutes an equilibrium. 

Theorem~\ref{thm:dual-AD-support} can still be understood as a full decentralization result, where the corresponding allocation would be obtained in an economy where the final positions can be afforded by all agents participating in the pool. More formally, there always exists an economy, potentially fictitious, such that $(\bm X, \pi^\star)$ constitutes an equilibrium.

\begin{theorem}\label{thm:dual-core-allocation}
Under Assumption~\ref{ass:dual-pricing}, fix a nonempty coalition $B\subseteq[n]$, and define $C_i^\star(B):=\E[\pi_B^\star\zeta_i]-\alpha_i(\pi_B^\star)$ and $w_i^\star(B):=\rho_i(\zeta_i)-C_i^\star(B)$ for $i\in B$. Then:
\begin{enumerate}[label=\textup{(\roman*)},ref=\roman*]
\item\label{it:dual-cost-core-selector} $\bm C^\star(B)=(C_i^\star(B))_{i\in B}$ belongs to the cost core of the restricted coalition problem, in that $\sum_{i\in B}C_i^\star(B)=C(B)$, and $\sum_{i\in A}C_i^\star(B)\le C(A)$ for every $\varnothing\ne A\subseteq B$.
\item\label{it:dual-gain-core-allocation} Equivalently, $\bm w^\star(B)=(w_i^\star(B))_{i\in B}\in\Core(\nu^B)$.
\item\label{it:dual-transfer-implementation} For every $\bm X^B\in\Opt_B(X_B)$, the deterministic transfers $c_i^\star(B):=C_i^\star(B)-\rho_i(X_i^B)$, $i\in B$, satisfy $c_i^\star(B)=\E[\pi_B^\star(\zeta_i-X_i^B)]$, $\sum_{i\in B}c_i^\star(B)=0$, and $\rho_i(X_i^B+c_i^\star(B))=C_i^\star(B)$ for $i\in B$.
\end{enumerate}
\end{theorem}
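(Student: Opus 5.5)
The plan is to derive everything from weak duality at the subcoalition level, Proposition~\ref{prop:dual-representation}, together with the exactness of the inf-convolution from Proposition~\ref{prop:infconv-attainment}. Fix $B$ and the dual optimizer $\pi_B^\star\in\Pi_B(X_B)$ supplied by Assumption~\ref{ass:dual-pricing}.

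\textbf{Efficiency.} Summing the definition of $C_i^\star(B)$ over $i\in B$ and using $\sum_{i\in B}\zeta_i=X_B$ and $\alpha_B=\sum_{i\in B}\alpha_i$ gives
\[
\sum_{i\in B}C_i^\star(B)=\E[\pi_B^\star X_B]-\alpha_B(\pi_B^\star).
\]
Since $\pi_B^\star$ attains the supremum in Proposition~\ref{prop:dual-representation} applied to $A=B$, $Y=X_B$, the right-hand side equals $\dsquare_{i\in B}\rho_i(X_B)=C(B)$.

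\textbf{Coalitional inequalities.} For $\varnothing\neq A\subseteq B$, apply Proposition~\ref{prop:dual-representation} to the coalition $A$ at $Y=X_A$, evaluated at the (possibly suboptimal) point $\pi=\pi_B^\star$. This gives
\[
C(A)=\sup_\pi\{\E[\pi X_A]-\alpha_A(\pi)\}\ge \E[\pi_B^\star X_A]-\sum_{i\in A}\alpha_i(\pi_B^\star)=\sum_{i\in A}C_i^\star(B).
\]
One point needs care: $\alpha_i(\pi_B^\star)$ must be finite for each $i$, so the individual terms are well defined. This follows because $\alpha_B(\pi_B^\star)<\infty$ (the optimum value $C(B)$ is finite), and each $\alpha_i>-\infty$ since $\rho_i$ is proper; hence every summand is finite. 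Also, $\E[\pi_B^\star\zeta_i]$ is finite because of the dual pairing.

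\textbf{Gain version (ii).} Part (ii) is then a relabeling. Since $w_i^\star=\rho_i(\zeta_i)-C_i^\star$, we get $\sum_{i\in B}w_i^\star=\sum_{i\in B}\rho_i(\zeta_i)-C(B)=\nu(B)$, and for each $A\subseteq B$,
\[
\sum_{i\in A}w_i^\star\ge\sum_{i\in A}\rho_i(\zeta_i)-C(A)=\nu(A).
\]

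\textbf{Transfers (iii).} Take $\bm X^B\in\Opt_B(X_B)$. Theorem~\ref{thm:dual-AD-support}, applied with $B$ in place of $[n]$, gives $\rho_i(X_i^B)=\E[\pi_B^\star X_i^B]-\alpha_i(\pi_B^\star)$. This is the step that needs checking. The proof of that theorem carries over verbatim to the coalition $B$: by Fenchel, $\rho_i(X_i^B)\ge\E[\pi_B^\star X_i^B]-\alpha_i(\pi_B^\star)$ for each $i$, and summing these gives equality in total, because $\sum_{i\in B}\rho_i(X^B_i)=C(B)$. Hence each individual inequality must be an equality. Subtracting from $C_i^\star(B)$ yields
\[
c_i^\star(B)=\E[\pi_B^\star(\zeta_i-X_i^B)].
\]
The sum over $i\in B$ vanishes since $\sum_{i\in B}X_i^B=X_B=\sum_{i\in B}\zeta_i$. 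Finally, cash additivity gives $\rho_i(X_i^B+c_i^\star(B))=\rho_i(X_i^B)+c_i^\star(B)=C_i^\star(B)$.

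The main obstacle is only this coalitionwise use of Theorem~\ref{thm:dual-AD-support} together with the finiteness bookkeeping; the rest is direct weak duality.
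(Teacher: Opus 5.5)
Your proposal is correct and follows the paper's proof essentially step for step. Efficiency comes from dual optimality of $\pi_B^\star$. The subcoalition inequalities come from the feasibility of $\pi_B^\star$ in the coalition-$A$ dual problem, i.e.\ weak duality via Proposition~\ref{prop:dual-representation}. Part (ii) is obtained by subtracting autarky costs, and part (iii) by the Fenchel-equality argument of Theorem~\ref{thm:dual-AD-support} applied to coalition $B$. Your finiteness bookkeeping is slightly more explicit than the paper's one-line justification: each $\alpha_i(\pi_B^\star)$ is finite because $\alpha_i>-\infty$ by properness and $\alpha_B(\pi_B^\star)=\E[\pi_B^\star X_B]-C(B)$ is finite.
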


Theorem~\ref{thm:dual-core-allocation} is a selection result, not a new core-existence statement. Total balancedness of $\nu$ is already established in Theorem~\ref{thm:convex-implies-core}, and the additional content is that the dual optimizer $\pi_B^\star$ delivers an explicit core vector and an implementing deterministic transfer for every coalition $B\subseteq[n]$.

\begin{example}[Coherent]\label{ex:coherent-dual}
If, for each $i\in[n]$, $\rho_i$ is a coherent risk measure with dual envelope $\mathcal D_i$, then $\alpha_i=\delta_{\mathcal D_i}$ is the indicator penalty.  Hence, for each nonempty $A\subseteq[n]$, the aggregate penalty is the indicator of $\mathcal D_A:=\cap_{i\in A}\mathcal D_i$, and
$$ \dsquare_{i\in A}\rho_i (Y)=\sup_{\pi\in\mathcal D_A}\E[\pi Y], \quad Y\in\X,
    \qquad C_i^\star(A)=\E[\pi_A^\star\zeta_i], \quad i\in A.$$
This is the dual-based stable allocation rule of \citet{chenhuwang2017stable} in the present notation.
\end{example}

In Appendix~\ref{app:distortion-background}, we specialize Example~\ref{ex:coherent-dual} to distortion risk measures via the lower envelope of the distortions.

\begin{example}[Mean--variance]\label{ex:mean-variance-dual}
For each $i\in[n]$ and loss $Y\in L^2$, let $\rho_i(Y)=\E[Y]+\gamma_i\Var(Y)$, where $\gamma_i>0$.  This functional is convex and cash additive. Working in $L^2$ with the pairing $\langle Y,\pi\rangle=\E[\pi Y]$, the conjugate is finite only on $\{\pi\in L^2:\E[\pi]=1\}$, and the dual representation is
$$ \rho_i(Y) = \sup_{\pi:\ \E[\pi]=1} \left\{\E[\pi Y]-\frac{1}{4\gamma_i}\E[(\pi-1)^2]\right\},
    \qquad i\in[n],\ Y\in L^2.$$
For a nonempty coalition $A\subseteq[n]$, the inf-convolution is again mean--variance: $\dsquare_{i\in A}\rho_i (Y)=\E[Y]+\gamma_A\Var(Y)$, where $\gamma_A:=(\sum_{i\in A}1/\gamma_i)^{-1}$.  The aggregate penalty is $\alpha_A(\pi)=(4\gamma_A)^{-1}\E[(\pi-1)^2]$ for $\E[\pi]=1$, and the unconstrained dual optimizer at $X_A$ is $\pi_A^\star=1+2\gamma_A(X_A-\E[X_A])$.
Consequently,
$$ C_i^\star(A) = \E[\zeta_i] +2\gamma_A\Cov(\zeta_i,X_A) -\frac{\gamma_A^2}{\gamma_i}\Var(X_A), \qquad i\in A.$$ 
Note that mean-variance is not monotone in general, meaning that $\pi_A^\star$ can be signed. This implies that $\pi_A^\star$ is not a probability measure but can still be used as a pricing measure; see also the discussion after Theorem~\ref{thm:dual-AD-support}.
\end{example}

\begin{example}[Entropic]\label{ex:entropic-dual}
For each $i\in[n]$ and loss $Y\in\X$ with $\E[\e^{Y/\iota}]<\infty$, let $\rho_i(Y)=\iota_i\log\E[\e^{Y/\iota_i}]$, where $\iota_i>0$. This functional is convex, cash additive, monotone, and law invariant. Its dual representation is
$$ \rho_i(Y) = \sup_{\pi\ge0,\ \E[\pi]=1} \left\{\E[\pi Y]-\iota_i\E[\pi\log\pi]\right\}, \qquad i\in[n].$$
For a nonempty coalition $A\subseteq[n]$, the inf-convolution is again entropic: $\dsquare_{i\in A}\rho_i (Y)=\iota_A\log\E[\e^{Y/\iota_A}]$, where $\iota_A:=\sum_{i\in A}\iota_i$. The aggregate penalty is $\alpha_A(\pi)=\iota_A\E[\pi\log\pi]$ on $\{\pi\ge0,\ \E[\pi]=1\}$, and the dual optimizer at $X_A$ is $\pi_A^\star=\e^{X_A/\iota_A}/\E[\e^{X_A/\iota_A}]$. With $\d\Q_A^\star/\d\P=\pi_A^\star$,
$$ C_i^\star(A) = \E^{\Q_A^\star}[\zeta_i] -\frac{\iota_i}{\iota_A}\E^{\Q_A^\star}[X_A]    +\iota_i\log\E[\e^{X_A/\iota_A}], \qquad i\in A.$$
\end{example}

\section{Entry monotonicity}\label{sec:stable-expansion-dual}

Stable expansion of risk-sharing pools requires transfer rules that guarantee adding new agents not only leads to a Pareto improvement but also yields a coalitionally stable allocation. This is the fundamental idea underlying \textit{Population Monotonic Allocation Schemes} (PMAS), a refinement of the notion of core due to \cite{sprumont1990population}.  

Since each incumbent's standalone cost $\rho_i(\zeta_i)$ is fixed, entry monotonicity can be checked equivalently on evaluated costs: incumbents must be assigned weakly lower costs in larger coalitions. We therefore formulate the PMAS problem using coalition costs $C(A)=\dsquare_{i\in A}\rho_i(X_A)$ and coalition-wise cost shares for $A \subseteq [n]$. For clarity, the first three subsections emphasize formalism, and we defer applications to the last subsection.

\subsection{PMAS and cost-based entry monotonicity}\label{ss:pmas-cost-allocation-dual}

For a coalition-wise rule $R$, write $C_i^R(A)$ for the evaluated cost assigned to incumbent $i\in A$ when coalition $A$ forms, and write $w_i^R(A):=\rho_i(\zeta_i)-C_i^R(A)$ for the corresponding welfare-gain share. For an efficient transfer rule, we have
$$ \sum_{i\in A}C_i^R(A)=C(A), \qquad A\subseteq[n],$$
or equivalently $\sum_{i\in A}w_i^R(A)=\nu(A)$.

\begin{definition}[Entry monotonicity]\label{def:entry}
A coalition-wise rule $R$ is \emph{entry-monotone} if $w_i^R(B)\ge w_i^R(A)$ whenever $\varnothing\ne A\subseteq B\subseteq[n]$ and $i\in A$.  Equivalently, incumbent evaluated costs are antitone under inclusion: $C_i^R(B)\le C_i^R(A)$ whenever $\varnothing\ne A\subseteq B\subseteq[n]$ and $i\in A$.
\end{definition}

\begin{definition}[PMAS]\label{def:PMAS-transfer}
A coalition-wise rule $R$ is a \emph{Population Monotonic Allocation Scheme} (PMAS) for the gain game $\nu$ if, for every nonempty coalition $B\subseteq[n]$,
\begin{enumerate}[label=\textup{(\roman*)},ref=\roman*]
\item $\bm w^R(B)\in\Core(\nu^B)$, and
\item $w_i^R(B)\ge w_i^R(A)$ whenever $\varnothing\ne A\subseteq B\subseteq[n]$ and $i\in A$.
\end{enumerate}
\end{definition}

It is clear that the game needs to be totally balanced for a PMAS to exist. \cite{sprumont1990population} originally provided the next two results. The first records the relation between entry monotonicity and core stability for efficient coalition-wise rules.

\begin{proposition}\label{prop:entry-not-core}
Suppose a coalition-wise rule $R$ is efficient on every coalition.  If $R$ is entry-monotone, then $\bm w^R(B)\in\Core(\nu^B)$ for every nonempty coalition $B\subseteq[n]$.
\end{proposition}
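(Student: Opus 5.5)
The plan is to verify the core inequalities of Definition~\ref{def:core} for the restricted game $\nu^B$ directly, using only efficiency of $R$ on each coalition and entry monotonicity. Fix a nonempty $B\subseteq[n]$. By efficiency applied to $B$ itself, $\sum_{i\in B}w_i^R(B)=\nu(B)$, which is the budget condition in $\Core(\nu^B)$. It remains to show that $\sum_{i\in A}w_i^R(B)\ge\nu(A)$ for every $A\subseteq B$.

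For $A=\varnothing$ the inequality reads $0\ge\nu(\varnothing)=0$, since $C(\varnothing)=0$ and the empty sum of stand-alone costs vanishes. For nonempty $A\subseteq B$, entry monotonicity (Definition~\ref{def:entry}) gives $w_i^R(B)\ge w_i^R(A)$ for each incumbent $i\in A$. Summing over $i\in A$ and using efficiency of $R$ on the coalition $A$ yields
\begin{equation*}
\sum_{i\in A}w_i^R(B)\ \ge\ \sum_{i\in A}w_i^R(A)\ =\ \nu(A),
\end{equation*}
which is exactly the coalitional constraint. Hence $\bm w^R(B)\in\Core(\nu^B)$.

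The cost formulation gives the same argument: $\sum_{i\in A}C_i^R(B)\le\sum_{i\in A}C_i^R(A)=C(A)$. There is no real obstacle. The only points to check are that efficiency is used twice, once on $B$ for the budget equality and once on the subcoalition $A$ to identify $\sum_{i\in A}w_i^R(A)$ with $\nu(A)$, and that the empty coalition is handled by the normalization $\nu(\varnothing)=0$. Finiteness of all terms follows from Lemma~\ref{lemma:gain-monotone}(iv) together with monotonicity of $\nu$.
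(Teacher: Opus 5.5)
Your proposal is correct and follows the paper's proof exactly: entry monotonicity summed over $A$ together with efficiency on $A$ gives $\sum_{i\in A}w_i^R(B)\ge\nu(A)$, and efficiency on $B$ gives the budget equality. Your separate treatment of $A=\varnothing$ and the finiteness remark are harmless additions that the paper leaves implicit.
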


The author also shows that convex games always guarantee the existence of PMAS, where convexity (supermodularity) means that 
$$ \nu(A)+\nu(B)\le \nu(A\cup B)+\nu(A\cap B), \qquad A,B\subseteq[n].$$
\begin{proposition}\label{prop:convex-PMAS}
If the gain game $\nu$ is convex, then it admits a PMAS. In particular, the coalition-wise Shapley scheme is a PMAS.
\end{proposition}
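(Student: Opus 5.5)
The plan is to prove the statement in two parts: first show that the coalition-wise Shapley scheme is entry-monotone, then invoke Proposition~\ref{prop:entry-not-core} to get the core requirement for free. For every nonempty $B\subseteq[n]$ and $i\in B$, define
$$\phi_i(B):=\sum_{S\subseteq B\setminus\{i\}}\frac{|S|!\,(|B|-|S|-1)!}{|B|!}\bigl[\nu(S\cup\{i\})-\nu(S)\bigr].$$
This is the Shapley value of the restricted game $\nu^B$. Efficiency, $\sum_{i\in B}\phi_i(B)=\nu(B)$, is the standard property of the Shapley value on each restricted game, so the rule is efficient on every coalition. By Proposition~\ref{prop:entry-not-core}, it then suffices to prove that $\phi_i(B)\ge\phi_i(A)$ whenever $i\in A\subseteq B$. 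The existence claim follows at once from this.

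By induction on $|B\setminus A|$, it is enough to treat $B=A\cup\{j\}$ with $j\notin A$. For this step I would use the random-order representation of the Shapley value. $\phi_i(A)$ is the expected marginal contribution $\nu(P_i\cup\{i\})-\nu(P_i)$, where $P_i$ is the set of predecessors of $i$ in a uniformly random ordering of $A$. Couple the two orderings: draw a uniform ordering of $B$ and delete $j$. This yields a uniform ordering of $A$. With this coupling, $i$'s predecessor set in $B$ is either the same set $P$ (if $j$ comes after $i$) or $P\cup\{j\}$ (if $j$ comes before $i$).

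It therefore suffices to have, pathwise,
$$\nu(P\cup\{j\}\cup\{i\})-\nu(P\cup\{j\})\ \ge\ \nu(P\cup\{i\})-\nu(P).$$
This is exactly supermodularity applied to $P\cup\{i\}$ and $P\cup\{j\}$: their union is $P\cup\{i,j\}$ and their intersection is $P$, using $i\neq j$. Taking expectations gives $\phi_i(A\cup\{j\})\ge\phi_i(A)$.

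The only delicate point is the coupling: one has to check that deleting $j$ from a uniform ordering of $B$ produces a uniform ordering of $A$. Equivalently, one can bypass the coupling with the combinatorial identity that rewrites the coefficients of $\phi_i(A)$ as sums of the coefficients of $\phi_i(A\cup\{j\})$ over $S$ and $S\cup\{j\}$, namely
$$\frac{s!(a-s-1)!}{a!}=\frac{s!(a-s)!}{(a+1)!}+\frac{(s+1)!(a-s-1)!}{(a+1)!}.$$
Once this is in place, the argument is immediate.

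Alternatively, core membership can be checked directly, which avoids relying on Proposition~\ref{prop:entry-not-core}. The standard fact that marginal vectors of a convex game lie in its core, applied to each $\nu^B$ (which remains supermodular), gives this. Since the core is convex, the Shapley value, being an average of marginal vectors, lies in $\Core(\nu^B)$.
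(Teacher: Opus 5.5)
Your proof is correct: the one-step coupling (deleting $j$ from a uniform ordering of $A\cup\{j\}$), the supermodularity inequality applied to $P\cup\{i\}$ and $P\cup\{j\}$, and the coefficient identity all check out, and efficiency plus entry monotonicity gives core membership exactly as in Proposition~\ref{prop:entry-not-core}. The paper does not prove this itself but attributes it to \citet{sprumont1990population}, whose argument is essentially yours: convexity makes marginal contributions increase with the predecessor set, so restricting any fixed ordering of $[n]$ to each coalition yields an entry-monotone marginal scheme, and averaging these over orderings gives the coalition-wise Shapley scheme.
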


In practice, risk-sharing games are rarely convex. We are therefore interested in providing alternative sufficient conditions for the existence of PMAS. The next proposition provides a cost criterion that will be used in the rest of the section.
\begin{proposition}\label{prop:dual-premium-pmas}
For each nonempty coalition $B\subseteq[n]$, let $\bm C^R(B)=(C_i^R(B))_{i\in B}$ be a cost-core selector for the restricted cost game on $B$, so that
$$ \sum_{i\in B}C_i^R(B)=C(B), \qquad \sum_{i\in A}C_i^R(B)\le C(A), \quad A\subseteq B.$$
If $C_i^R(B)\le C_i^R(A)$ whenever $\varnothing\ne A\subseteq B\subseteq[n]$ and $i\in A$, then the welfare gains $w_i^R(A)=\rho_i(\zeta_i)-C_i^R(A)$ is a PMAS of $\nu$.
\end{proposition}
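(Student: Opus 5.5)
The plan is to translate the cost-side hypotheses directly into the two conditions of Definition~\ref{def:PMAS-transfer}, using only the identity $w_i^R(A)=\rho_i(\zeta_i)-C_i^R(A)$ and the definition of $\nu$ in \eqref{eq:gain-game}. No properties of the risk measures are needed beyond the finiteness of $\rho_i(\zeta_i)$ (guaranteed by $\zeta_i\in\dom\rho_i$) and the finiteness of $C(A)$ from Proposition~\ref{prop:infconv-attainment}. This finiteness ensures that all subtractions below are legitimate.

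\emph{Core membership.} Fix a nonempty $B\subseteq[n]$. Summing the welfare shares gives
\[
\sum_{i\in B}w_i^R(B)=\sum_{i\in B}\rho_i(\zeta_i)-C(B)=\nu(B).
\]
For any $A\subseteq B$, the cost-core inequality $\sum_{i\in A}C_i^R(B)\le C(A)$ yields
\[
\sum_{i\in A}w_i^R(B)=\sum_{i\in A}\rho_i(\zeta_i)-\sum_{i\in A}C_i^R(B)\ge \sum_{i\in A}\rho_i(\zeta_i)-C(A)=\nu(A).
\]
For $A=\varnothing$, both sides vanish because $C(\varnothing)=0$. Since $\nu^B$ is the restriction of $\nu$, this shows $\bm w^R(B)\in\Core(\nu^B)$, which is condition (i). Alternatively, condition (i) would follow from Proposition~\ref{prop:entry-not-core} once entry monotonicity is established, because the rule is efficient on every coalition. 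The direct computation is cleaner, however, and I will present that.

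\emph{Monotonicity.} For $\varnothing\ne A\subseteq B$ and $i\in A$, the hypothesis $C_i^R(B)\le C_i^R(A)$ gives
\[
w_i^R(B)=\rho_i(\zeta_i)-C_i^R(B)\ge\rho_i(\zeta_i)-C_i^R(A)=w_i^R(A).
\]
The standalone term $\rho_i(\zeta_i)$ is independent of the coalition, so it cancels. This is condition (ii), and it concludes the proof.

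I expect no real obstacle. The only point needing care is that every quantity is a real number rather than an extended value, so that subtracting and rearranging inequalities is valid. This is exactly what the standing assumptions and Proposition~\ref{prop:infconv-attainment} provide.
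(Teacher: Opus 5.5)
Your proposal is correct and matches the paper's proof essentially line for line: efficiency together with the cost-core inequality gives $\bm w^R(B)\in\Core(\nu^B)$ after subtracting from the standalone costs, and antitonicity of $C_i^R$ gives entry monotonicity once $\rho_i(\zeta_i)$ cancels. Your remark that $\rho_i(\zeta_i)$ and $C(A)$ are finite is a small point of care that the paper leaves implicit.
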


The rest of this section focuses on two surplus allocation rules that share the desirable property of being implementable at pool formation without full knowledge of potential participants from pool extensions. Generally, a PMAS is a selection on the coalition lattice, and such a selection typically requires full knowledge of the coalition lattice. However, if one can find ``procedural" rules in the sense that they can be directly applied at entry without knowledge of the potential other pool configurations, then this procedure is essentially a myopic policy like one finds in one-step-ahead decision processes (dynamic programming).

The first allocation rule we analyze is the proportional-cost allocation rule. It requires fixing a set of weights to the agent, and attributing surplus in proportion to these weights. When these weights are chosen procedurally, then the allocation is myopic. Examples include uniform weights or the expected risk of Theorem~\ref{thm:mean-proportional-PMAS}. The second surplus allocation rule we analyze is the Arrow--Debreu pricing; clearly, it is purely procedural and thus myopic. 

\subsection{Proportional-cost allocations}\label{ss:mean-proportional-dual}

We now apply the PMAS criterion to concrete transfer rules. The first family splits each coalition's cost in fixed proportions, providing a simple benchmark that is useful in practice.

\begin{definition}[Proportional-cost allocation]
   Fix weights ${\bm \xi} = (\xi_1, \dots, \xi_n)$ such that $\xi_i > 0$ for $i \in [n]$ and set $\xi_A:=\sum_{i\in A}\xi_i$.  The proportional-cost allocation assigns
\begin{equation}\label{eq:PC-cost-share}
    C_i^{\bm \xi}(A):=\xi_i\frac{C(A)}{\xi_A},
    \qquad i\in A.
\end{equation} 
\end{definition}
This definition is purely cooperative and depends only on coalition costs and the chosen weights.

\begin{theorem}\label{thm:PC-PMAS}
The coalition-wise proportional-cost allocation \eqref{eq:PC-cost-share} is a PMAS if and only if normalized coalition costs are antitone:
\begin{equation}\label{eq:PC-PMAS-condition}
    \frac{C(B)}{\xi_B}\le \frac{C(A)}{\xi_A},
    \qquad \varnothing\ne A\subseteq B\subseteq[n].
\end{equation}
\end{theorem}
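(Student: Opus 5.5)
The proof splits into the two directions of the equivalence. In both directions we work with costs rather than gains, using Proposition~\ref{prop:dual-premium-pmas} for sufficiency and Definition~\ref{def:PMAS-transfer} directly for necessity.

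For sufficiency, assume \eqref{eq:PC-PMAS-condition}. We first check that $\bm C^{\bm\xi}(B)$ is efficient on every coalition: summing \eqref{eq:PC-cost-share} over $i\in B$ gives $\sum_{i\in B}\xi_i C(B)/\xi_B=C(B)$.

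Entry monotonicity then follows from the condition itself. For $i\in A\subseteq B$ we have $C_i^{\bm\xi}(B)=\xi_i C(B)/\xi_B\le \xi_i C(A)/\xi_A=C_i^{\bm\xi}(A)$, because $\xi_i>0$. Since $\rho_i(\zeta_i)$ does not depend on the coalition, this is equivalent to $w_i^{\bm\xi}(B)\ge w_i^{\bm\xi}(A)$.

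To finish, we can invoke Proposition~\ref{prop:entry-not-core}: an efficient, entry-monotone rule lands in $\Core(\nu^B)$ for every $B$. Alternatively, we can verify the cost-core inequalities directly. For $\varnothing\ne A\subseteq B$,
\[
\sum_{i\in A}C_i^{\bm\xi}(B)=\xi_A\,\frac{C(B)}{\xi_B}\le \xi_A\,\frac{C(A)}{\xi_A}=C(A),
\]
which is exactly the cost-core condition. Proposition~\ref{prop:dual-premium-pmas} then yields that the rule is a PMAS.

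For necessity, suppose the proportional-cost rule is a PMAS, and fix $\varnothing\ne A\subseteq B$. Two routes are available.

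The first uses only the core requirement (i) of Definition~\ref{def:PMAS-transfer}, applied to $B$. It says that $\sum_{i\in A}w_i^{\bm\xi}(B)\ge\nu(A)$, which in cost terms is $\sum_{i\in A}C_i^{\bm\xi}(B)\le C(A)$, that is, $\xi_A C(B)/\xi_B\le C(A)$. Dividing by $\xi_A>0$ gives \eqref{eq:PC-PMAS-condition}.

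The second uses the monotonicity requirement (ii). Pick any $i\in A$; then $C_i^{\bm\xi}(B)\le C_i^{\bm\xi}(A)$, and dividing by $\xi_i>0$ gives the same inequality. Either route suffices.

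One point needs care: the translation between welfare-gain statements and cost statements. The identity $\sum_{i\in A}w_i=\sum_{i\in A}\rho_i(\zeta_i)-\sum_{i\in A}C_i$, combined with the definition \eqref{eq:gain-game} of $\nu(A)$, shows that the gain core inequality and the cost core inequality are equivalent. This requires every $C(A)$ to be finite, which Proposition~\ref{prop:infconv-attainment} guarantees. Apart from this, the argument is routine; the only real content is that proportional splitting makes the core constraint for $A$ inside $B$ coincide with the per-agent monotonicity constraint, both collapsing to \eqref{eq:PC-PMAS-condition}.
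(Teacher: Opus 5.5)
Your proposal is correct and follows the paper's proof. For sufficiency, the paper also checks the cost-core inequality $\xi_A C(B)/\xi_B\le C(A)$ and the per-agent antitonicity, then applies Proposition~\ref{prop:dual-premium-pmas}; for necessity, it uses entry monotonicity divided by $\xi_i>0$, which is your second route. Your core-based route for necessity is also valid but not needed.
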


PMAS conditions are usually difficult to check directly, and only a small number of tractable constructions are known.  Proposition~\ref{prop:dual-representation} turns the proportional-cost condition \eqref{eq:PC-PMAS-condition} into a comparison of normalized coalition evaluators, which gives a sufficient condition that can be verified in concrete risk-sharing models.  For each nonempty coalition $A\subseteq[n]$, define $R_A^\xi:\X\to(-\infty,\infty]$ by $$R_A^{\bm \xi}(Y):=\dsquare_{i\in A}\rho_i(\xi_A Y) / \xi_A.$$
Under the assumptions of Proposition~\ref{prop:dual-representation}, we have
$$ R_A^{\bm \xi}(Y) = \sup_{\pi\in\mathcal Z} \left\{\E[\pi Y]-\frac{\alpha_A(\pi)}{\xi_A}\right\},$$ 
and thus $C(A)/\xi_A=R_A^\xi(X_A/\xi_A)$.

Because $R_A^{\bm \xi}$ is a positive multiple of $\dsquare_{i\in A}\rho_i$, which is convex, lower semicontinuous, and law invariant (Proposition~\ref{prop:infconv-attainment}), it is automatically \emph{convex-order consistent}: $Y\cx Z$ implies $R_A^{\bm \xi}(Y)\le R_A^{\bm \xi}(Z)$, since law-invariant convex functionals are consistent with the convex order \citep[Theorem~4.3]{bauerle2006stochastic}. The next theorem uses this property to compare normalized coalition costs. 

\begin{theorem}\label{thm:dual-cx-proportional-PMAS}
Suppose that for every $\varnothing\ne A\subseteq B\subseteq[n]$, the normalized losses satisfy $X_B/\xi_B\cx X_A/\xi_A$ and $R_B^{\bm \xi}(X_A/\xi_A)\le R_A^{\bm \xi}(X_A/\xi_A)$.
Then, normalized coalition costs are antitone, and the proportional-cost allocation \eqref{eq:PC-cost-share} is a PMAS.
\end{theorem}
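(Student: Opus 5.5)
The plan is to chain two inequalities and then invoke Theorem~\ref{thm:PC-PMAS}. Fix $\varnothing\ne A\subseteq B\subseteq[n]$. By definition of $R_B^{\bm\xi}$ and efficiency of the coalition cost,
\[
\frac{C(B)}{\xi_B}=\frac{1}{\xi_B}\dsquare_{i\in B}\rho_i(X_B)=R_B^{\bm\xi}\!\left(\frac{X_B}{\xi_B}\right),
\qquad
\frac{C(A)}{\xi_A}=R_A^{\bm\xi}\!\left(\frac{X_A}{\xi_A}\right),
\]
which follows directly from the definition $R_A^{\bm\xi}(Y)=\dsquare_{i\in A}\rho_i(\xi_AY)/\xi_A$ applied to $Y=X_A/\xi_A$. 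So the task is to show $R_B^{\bm\xi}(X_B/\xi_B)\le R_A^{\bm\xi}(X_A/\xi_A)$.

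First, I would use convex-order consistency of the \emph{single} functional $R_B^{\bm\xi}$. The functional $\dsquare_{i\in B}\rho_i$ is proper, convex, lower semicontinuous and law invariant by Proposition~\ref{prop:infconv-attainment}. Rescaling the argument by $\xi_B>0$ and dividing by $\xi_B$ preserves all four properties, so $R_B^{\bm\xi}$ is convex-order consistent by \citet[Theorem~4.3]{bauerle2006stochastic}. The hypothesis $X_B/\xi_B\cx X_A/\xi_A$ then gives
\[
R_B^{\bm\xi}(X_B/\xi_B)\le R_B^{\bm\xi}(X_A/\xi_A).
\]
One point to check is that the cited consistency result applies on $L^p$ with possibly extended values. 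Since the losses lie in $L^p$, which sits inside $L^1$ where the convex order is defined, and $C(B)$ is finite, I expect this to go through. If needed, I would argue via the dual representation: each law-invariant penalty term is nondecreasing in convex order after averaging over rearrangements.

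Second, the comparison hypothesis $R_B^{\bm\xi}(X_A/\xi_A)\le R_A^{\bm\xi}(X_A/\xi_A)$ switches the evaluator from $B$ to $A$ at the fixed argument $X_A/\xi_A$. Chaining the two inequalities yields
\[
\frac{C(B)}{\xi_B}\le\frac{C(A)}{\xi_A},
\]
which is exactly the antitonicity condition~\eqref{eq:PC-PMAS-condition}. Theorem~\ref{thm:PC-PMAS} then delivers that the proportional-cost allocation~\eqref{eq:PC-cost-share} is a PMAS.

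The only step I expect to need care is the first: justifying convex-order consistency of $R_B^{\bm\xi}$ in the present generality (extended values, $L^p$, atomless standard space). The rest is bookkeeping.
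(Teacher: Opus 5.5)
Your proposal is correct and matches the paper's proof: the paper writes $C(B)/\xi_B=R_B^{\bm\xi}(X_B/\xi_B)\le R_B^{\bm\xi}(X_A/\xi_A)\le R_A^{\bm\xi}(X_A/\xi_A)=C(A)/\xi_A$, justifying the first step by convex-order consistency of $R_B^{\bm\xi}$ and the second by the evaluator-order hypothesis, and then invokes Theorem~\ref{thm:PC-PMAS}. Your extra check of convex-order consistency in the $L^p$ setting goes beyond the paper, which obtains it in the main text from Proposition~\ref{prop:infconv-attainment} together with \citet[Theorem~4.3]{bauerle2006stochastic} without further argument.
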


A simple sufficient condition for $R_B^{\bm \xi}\le R_A^{\bm \xi}$, $A\subseteq B$, is the normalized penalty order
$$ \frac{\alpha_B(\pi)}{\xi_B} \ge \frac{\alpha_A(\pi)}{\xi_A}, \qquad \pi\in\mathcal Z.$$
This specializes cleanly in the three running examples.
\begin{itemize}
\item \emph{Coherent risk measures} (Example~\ref{ex:coherent-dual}): $\alpha_A=\delta_{\mathcal D_A}$ with $\mathcal D_A=\cap_{i\in A}\mathcal D_i$, so $\mathcal D_B\subseteq\mathcal D_A$ whenever $A\subseteq B$ and the normalized penalty order holds automatically for any weights $\xi$.
\item \emph{Mean--variance} (Example~\ref{ex:mean-variance-dual}): the order reduces to $\xi_B\gamma_B\le\xi_A\gamma_A$, which holds across the coalition lattice exactly when $\xi_i\gamma_i$ is constant in $i\in[n]$, that is, $\gamma_i=c/\xi_i$ for some $c>0$.
\item \emph{Entropic} (Example~\ref{ex:entropic-dual}): the order reduces to $\iota_B/\xi_B\ge\iota_A/\xi_A$, which holds across the coalition lattice exactly when $\iota_i/\xi_i$ is constant in $i\in[n]$, that is, $\iota_i=c\xi_i$ for some $c>0$.
\end{itemize}
This controls the order of evaluators, not the arguments being evaluated.  The remaining primitive condition is then the convex-order comparison of the normalized aggregate risks, $X_B/\xi_B\cx X_A/\xi_A$.

Convex-order comparisons require equal means, so the relevant proportional weights are effectively mean weights, up to a common scale. Indeed, for $B=A\cup\{i\}$ with $i\notin A$, the equal-means condition $\E[X_B/\xi_B]=\E[X_A/\xi_A]=c$ for some $c\in\R$ gives $\mu_B/\xi_B=\mu_A/\xi_A=c$, hence $\xi_i=\xi_B-\xi_A=\mu_i/c$, so $\xi_i$ is proportional to $\mu_i$. Multiplying all weights by the same constant does not change the allocation, so that we may take $\xi_i=\mu_i:=\E[\zeta_i]$ for $i\in[n]$.

Assume now that $\mu_i>0$ for every $i\in[n]$.  Write $\mu_A:=\sum_{i\in A}\mu_i$, $\bar X_A:=X_A/\mu_A$, and $\bar C(A):=C(A)/\mu_A$.
The mean-proportional allocation is the proportional-cost allocation with $\xi_i=\mu_i$:
\begin{equation}\label{eq:proportional-selector-main}
    C_i^{\bm \mu}(A):=\mu_i\frac{C(A)}{\mu_A},
    \qquad i\in A.
\end{equation}

\begin{theorem}\label{thm:mean-proportional-PMAS}
Suppose that, for every $\varnothing\ne A\subseteq B\subseteq[n]$,
\begin{equation}\label{eq:normalized-cx-main}
    \bar X_B\cx \bar X_A,
\end{equation}
and that $R_B^{\bm \mu}(\bar X_A)\le R_A^{\bm \mu}(\bar X_A)$.
Then $\bar C(B)\le\bar C(A)$ for all $\varnothing\ne A\subseteq B\subseteq[n]$, and the mean-proportional allocation \eqref{eq:proportional-selector-main} is a PMAS.
\end{theorem}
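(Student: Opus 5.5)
The plan is to reduce Theorem~\ref{thm:mean-proportional-PMAS} to Theorem~\ref{thm:dual-cx-proportional-PMAS} by specializing the weights to $\xi_i=\mu_i$. This choice is admissible because we assume $\mu_i>0$ for every $i\in[n]$, so $\bm\xi=\bm\mu$ is a strictly positive weight vector. With this choice, $\xi_A=\mu_A$ and $X_A/\xi_A=\bar X_A$. The two hypotheses of Theorem~\ref{thm:mean-proportional-PMAS}, namely \eqref{eq:normalized-cx-main} and $R_B^{\bm\mu}(\bar X_A)\le R_A^{\bm\mu}(\bar X_A)$, are then literally the hypotheses of Theorem~\ref{thm:dual-cx-proportional-PMAS}. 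The conclusion follows once we note that $C_i^{\bm\mu}(A)$ in \eqref{eq:proportional-selector-main} is the proportional-cost allocation \eqref{eq:PC-cost-share} with these weights.

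For a self-contained argument, I would write out the chain of inequalities directly. Fix $\varnothing\ne A\subseteq B\subseteq[n]$. By the definition of $R_B^{\bm\mu}$, we have $\bar C(B)=C(B)/\mu_B=R_B^{\bm\mu}(\bar X_B)$. The functional $R_B^{\bm\mu}$ is a positive multiple, composed with scaling, of $\dsquare_{i\in B}\rho_i$, which is convex, lower semicontinuous and law invariant by Proposition~\ref{prop:infconv-attainment}. Hence it is convex-order consistent by \citet[Theorem~4.3]{bauerle2006stochastic}, and \eqref{eq:normalized-cx-main} gives $R_B^{\bm\mu}(\bar X_B)\le R_B^{\bm\mu}(\bar X_A)$. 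The evaluator comparison then gives $R_B^{\bm\mu}(\bar X_A)\le R_A^{\bm\mu}(\bar X_A)$. Finally, $R_A^{\bm\mu}(\bar X_A)=\dsquare_{i\in A}\rho_i(X_A)/\mu_A=\bar C(A)$. Combining these yields $\bar C(B)\le\bar C(A)$, which is the normalized antitonicity \eqref{eq:PC-PMAS-condition} with $\xi=\mu$.

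The last step invokes Theorem~\ref{thm:PC-PMAS}, whose ``if'' direction turns \eqref{eq:PC-PMAS-condition} into the PMAS property. Concretely, $C_i^{\bm\mu}(B)=\mu_i\bar C(B)\le\mu_i\bar C(A)=C_i^{\bm\mu}(A)$, and the shares sum to $C(A)$. Entry monotonicity together with efficiency gives core membership by Proposition~\ref{prop:entry-not-core}, or alternatively by Proposition~\ref{prop:dual-premium-pmas}.

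The only step needing care is the convex-order consistency of $R_B^{\bm\mu}$. This requires the relevant random variables to lie in $\X$ and the convex-order comparison to be applicable to extended-valued lower semicontinuous law-invariant convex functionals on $L^p$ over an atomless space. Both are provided by Assumption~\ref{ass:standing} and the cited result. Finiteness is not an issue, since $C(A)$ is finite by Proposition~\ref{prop:infconv-attainment}.
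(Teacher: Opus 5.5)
Your proposal is correct and follows the paper's proof essentially verbatim. You specialize Theorem~\ref{thm:dual-cx-proportional-PMAS} to $\xi_i=\mu_i$, write the chain $\bar C(B)=R_B^{\bm\mu}(\bar X_B)\le R_B^{\bm\mu}(\bar X_A)\le R_A^{\bm\mu}(\bar X_A)=\bar C(A)$ using convex-order consistency and the evaluator order, and conclude with Theorem~\ref{thm:PC-PMAS}. Your aside that core membership can also be obtained from Proposition~\ref{prop:entry-not-core} is a valid minor variant of the route through Proposition~\ref{prop:dual-premium-pmas} used inside Theorem~\ref{thm:PC-PMAS}.
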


Next, we turn to allocations spanned by a pricing measure.
\subsection{Dual-pricing allocations and conditional-mean criteria}\label{ss:dual-pricing-selector}

The proportional and mean-proportional allocations are cooperative cost-sharing rules.  We now return to pricing. Under Assumption~\ref{ass:dual-pricing}, the associated dual-pricing cost share is
\begin{equation}\label{eq:dual-pricing-cost-share}
    C_i^\star(A) := \E[\pi_A^\star\zeta_i]-\alpha_i(\pi_A^\star), \qquad i\in A.
\end{equation}

\begin{proposition}\label{prop:dual-pricing-PMAS}
Under Assumption~\ref{ass:dual-pricing}, $\bm C^\star(A)$ is a cost-core selector for every nonempty $A\subseteq[n]$. If, in addition, $C_i^\star(B)\le C_i^\star(A)$ whenever $\varnothing\ne A\subseteq B\subseteq[n]$ and $i\in A$, then the welfare gains $w_i^\star(A)=\rho_i(\zeta_i)-C_i^\star(A)$, $i\in A$, is a PMAS.
\end{proposition}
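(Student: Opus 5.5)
The proposition has two parts, and both follow from earlier results with no new analysis.

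For the first part, the claim that $\bm C^\star(A)$ is a cost-core selector for every nonempty $A\subseteq[n]$ is exactly Theorem~\ref{thm:dual-core-allocation}\,(\ref{it:dual-cost-core-selector}), applied with $B=A$. That result gives two facts. First, $\sum_{i\in A}C_i^\star(A)=C(A)$. Second, $\sum_{i\in A'}C_i^\star(A)\le C(A')$ for every $\varnothing\ne A'\subseteq A$.

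If I wanted to recall why this holds, I would argue as follows.
\begin{itemize}
\item \emph{Efficiency.} Sum the definition \eqref{eq:dual-pricing-cost-share} over $i\in A$ to get $\E[\pi_A^\star X_A]-\alpha_A(\pi_A^\star)$. By Proposition~\ref{prop:dual-representation} and the dual optimality of $\pi_A^\star$ from Assumption~\ref{ass:dual-pricing}, this equals $C(A)$.
\item \emph{Coalitional constraints.} For $A'\subseteq A$, take an optimal allocation $(Y_i)_{i\in A'}\in\Opt_{A'}(X_{A'})$, which exists by Proposition~\ref{prop:infconv-attainment}. The Fenchel inequality gives $\E[\pi_A^\star Y_i]-\alpha_i(\pi_A^\star)\le\rho_i(Y_i)$. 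Summing over $i\in A'$ and using $\sum_{i\in A'}Y_i=X_{A'}=\sum_{i\in A'}\zeta_i$ yields $\sum_{i\in A'}C_i^\star(A)\le C(A')$.
\end{itemize}
Since this is already recorded in Theorem~\ref{thm:dual-core-allocation}, I would simply cite it.

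For the second part, the cost-core selector property just established, together with the hypothesis $C_i^\star(B)\le C_i^\star(A)$ for $i\in A\subseteq B$, is precisely the pair of hypotheses of Proposition~\ref{prop:dual-premium-pmas}, taking the rule $R$ to be the dual-pricing rule. Invoking that proposition shows that $w_i^\star(A)=\rho_i(\zeta_i)-C_i^\star(A)$ is a PMAS.

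To check this directly instead, I would verify the two conditions of Definition~\ref{def:PMAS-transfer}.
\begin{itemize}
\item \emph{Core membership.} This is Theorem~\ref{thm:dual-core-allocation}\,(\ref{it:dual-gain-core-allocation}). Alternatively, translate the cost inequalities by the standalone costs $\rho_i(\zeta_i)$.
\item \emph{Monotonicity.} $w_i^\star(B)-w_i^\star(A)=C_i^\star(A)-C_i^\star(B)\ge0$ by hypothesis. Alternatively, Proposition~\ref{prop:entry-not-core} gives core membership from entry monotonicity together with efficiency.
\end{itemize}

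There is no genuine obstacle here. The only point needing care is that $\pi_A^\star$ is a fixed selection from $\Pi_A(X_A)$ for each coalition, so the rule is well defined. Finiteness of the terms $\alpha_i(\pi_A^\star)$ also follows, because their sum $\alpha_A(\pi_A^\star)$ is finite and each term is bounded below by $-\rho_i(0)$.
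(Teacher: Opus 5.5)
Your proposal is correct and takes the same route as the paper: the proof there cites Theorem~\ref{thm:dual-core-allocation}\,(\ref{it:dual-cost-core-selector}) for the cost-core selector property and then invokes Proposition~\ref{prop:dual-premium-pmas} under the antitonicity hypothesis. Your supplementary checks are also sound, namely the Fenchel-inequality verification of the coalitional constraints (via a primal optimizer of the subcoalition) and the finiteness of each $\alpha_i(\pi_A^\star)$ from the lower bound $-\rho_i(0)$.
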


In the coherent-risk-measure setting of Example~\ref{ex:coherent-dual}, \eqref{eq:dual-pricing-cost-share} becomes $C_i^\star(A)=\E[\pi_A^\star\zeta_i]$ for $i\in A$, with $\pi_A^\star\in\argmax_{\pi\in\cap_{j\in A}\mathcal D_j}\E[\pi X_A]$.  For coherent distortion risk measures, Appendix~\ref{app:distortion-background} identifies this rule with lower-envelope pricing $\Q_A^\star$ so that $C_i^\star(A)=\E^{\Q_A^\star}[\zeta_i]$.

The next results give sufficient conditions for this antitonicity condition. For every nonempty coalition $A\subseteq[n]$ and every $i\in A$, define the conditional-mean random variable
$m_{i,A}(X_A):=\E[\zeta_i\mid X_A]$.  For a selected dual optimizer, define $\Gamma_{i,A}^\star:\X\to\R$ by $$\Gamma_{i,A}^\star(Y):=\E[\pi_A^\star Y]-\alpha_i(\pi_A^\star).$$
The evaluator $\Gamma_{i,A}^\star$ is affine in $Y$, but its intercept depends on the individual penalty of agent $i\in A$ at the coalition optimizer $\pi_A^\star$.

\begin{theorem}\label{thm:cmrs-pricing-PMAS}
Suppose Assumption~\ref{ass:dual-pricing} holds and that the selected dual prices are cross-coalition compatible in the sense that $\Gamma_{i,B}^\star(m_{i,B}(X_B))\le\Gamma_{i,A}^\star(m_{i,A}(X_A))$ for every $\varnothing\ne A\subseteq B\subseteq[n]$ and every $i\in A$. Then the dual-pricing welfare gains $w_i^\star(A)=\rho_i(\zeta_i)-C_i^\star(A)$ are a PMAS.
\end{theorem}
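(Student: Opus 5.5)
The plan is to reduce the claim to Proposition~\ref{prop:dual-pricing-PMAS}. That proposition already shows that, under Assumption~\ref{ass:dual-pricing}, each $\bm C^\star(A)$ is a cost-core selector. So it suffices to verify the antitonicity $C_i^\star(B)\le C_i^\star(A)$ for all $\varnothing\ne A\subseteq B\subseteq[n]$ and $i\in A$. The core property on every coalition is then inherited from Theorem~\ref{thm:dual-core-allocation}, and Proposition~\ref{prop:dual-premium-pmas} delivers the PMAS conclusion for $w_i^\star(A)=\rho_i(\zeta_i)-C_i^\star(A)$.

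The key step is to rewrite each cost share through the affine evaluator $\Gamma_{i,A}^\star$ applied to a conditional mean. By definition, $C_i^\star(A)=\E[\pi_A^\star\zeta_i]-\alpha_i(\pi_A^\star)=\Gamma_{i,A}^\star(\zeta_i)$. By Assumption~\ref{ass:dual-pricing}, the selected $\pi_A^\star$ is $\sigma(X_A)$-measurable. The tower property therefore gives
\[
\E[\pi_A^\star\zeta_i]=\E\bigl[\pi_A^\star\,\E[\zeta_i\mid X_A]\bigr]=\E[\pi_A^\star\, m_{i,A}(X_A)],
\]
and hence $C_i^\star(A)=\Gamma_{i,A}^\star(m_{i,A}(X_A))$. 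The same identity holds with $B$ in place of $A$, using the $\sigma(X_B)$-measurability of $\pi_B^\star$.

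Combining the two identities with the cross-coalition compatibility hypothesis gives
\[
C_i^\star(B)=\Gamma_{i,B}^\star(m_{i,B}(X_B))\le\Gamma_{i,A}^\star(m_{i,A}(X_A))=C_i^\star(A),
\]
which is exactly the required antitonicity.

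The main obstacle is justifying the tower-property step in $L^p$. Three things need checking:
\begin{itemize}
\item $\pi_A^\star\zeta_i$ must be integrable. Since $\pi_A^\star\in\mathcal Z$ is paired with $\zeta_i\in\X$, the expectation is well defined by the dual pairing, and $\alpha_i(\pi_A^\star)<\infty$ because $\alpha_A(\pi_A^\star)$ is finite at a maximizer.
\item $m_{i,A}(X_A)$ must be well defined, which holds because $\zeta_i\in L^p\subseteq L^1$.
\item The product $\pi_A^\star\, m_{i,A}(X_A)$ must be integrable. For this I would use the conditional Jensen inequality, $\|\E[\zeta_i\mid X_A]\|_p\le\|\zeta_i\|_p$, together with the Hölder-type duality between $\X$ and $\mathcal Z$. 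If $\mathcal Z$ is not contained in $L^q$, I would truncate $\pi_A^\star$ and pass to the limit by monotone or dominated convergence.
\end{itemize}
Once this is settled, the rest is bookkeeping with the earlier propositions.
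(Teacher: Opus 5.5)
Your proposal matches the paper's proof: rewrite $C_i^\star(A)=\Gamma_{i,A}^\star(m_{i,A}(X_A))$ using the $\sigma(X_A)$-measurability of $\pi_A^\star$ and the tower property, apply the compatibility hypothesis to get antitonicity, and conclude with Proposition~\ref{prop:dual-pricing-PMAS}. Your integrability check goes beyond the paper, which does not verify this step, and it can be shortened: when $\pi_A^\star\zeta_i\in L^1$ and $\pi_A^\star$ is $\sigma(X_A)$-measurable, "taking out what is known" gives $\pi_A^\star m_{i,A}(X_A)=\E[\pi_A^\star\zeta_i\mid X_A]$, which is automatically integrable, so neither the H\"older nor the truncation argument is needed.
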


The compatibility condition is a price condition on the relevant conditional-mean variables. One sufficient way to verify it is
$$ \E[\pi_B^\star m_{i,B}(X_B)] \le \E[\pi_A^\star m_{i,A}(X_A)] \quad\text{and}\quad \alpha_i(\pi_B^\star)\ge \alpha_i(\pi_A^\star).$$
In coherent models, the penalty inequality is automatic on the feasible dual envelopes.

The next identity gives a density-level way to inspect the conditional-mean components.

\begin{proposition}\label{prop:cmrs-density-identity}
Fix a nonempty coalition $A\subseteq[n]$ and $i\in A$.  Assume $\zeta_i$ is integrable and define the finite signed allocation measure
$$ \eta_{i,A}(D):=\E[\zeta_i\id_{\{X_A\in D\}}], \qquad D\in\mathcal B(\R).$$
Suppose $X_A$ has density $f_A>0$ and $\eta_{i,A}$ has density $\varsigma_{i,A}$. Then $m_{i,A}(s)=\varsigma_{i,A}(s)/f_A(s)$ for $f_A$-a.e.\ $s\in\R$. Consequently, $m_{i,A}$ has a nondecreasing version whenever $\varsigma_{i,A}/f_A$ is nondecreasing.
\end{proposition}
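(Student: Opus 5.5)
The plan is to identify the conditional expectation $\E[\zeta_i\mid X_A]$ through its defining property on the generating sets $\{X_A\in D\}$, $D\in\mathcal B(\R)$, and then read off the density ratio. Write $m_{i,A}(X_A)$ for a version of the conditional mean. Doob's factorization lemma gives a Borel function $m_{i,A}:\R\to\R$ with this property. Because $\zeta_i$ is integrable, $\eta_{i,A}$ is a finite signed measure. It is absolutely continuous with respect to the law of $X_A$: if $\P(X_A\in D)=0$, then $\eta_{i,A}(D)=0$. The hypothesis that $\eta_{i,A}$ has a density $\varsigma_{i,A}$ (with respect to Lebesgue measure) is what we work with directly.

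First, for every $D\in\mathcal B(\R)$, the tower property gives
\[
\eta_{i,A}(D)=\E\bigl[\zeta_i\id_{\{X_A\in D\}}\bigr]=\E\bigl[m_{i,A}(X_A)\id_{\{X_A\in D\}}\bigr]=\int_D m_{i,A}(s)f_A(s)\d s .
\]
The last equality uses that $X_A$ has density $f_A$. Since $m_{i,A}(X_A)$ is integrable, $m_{i,A}f_A$ is Lebesgue integrable. On the other hand, $\eta_{i,A}(D)=\int_D\varsigma_{i,A}(s)\d s$ by assumption.

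Second, two integrable functions whose integrals agree on every Borel set coincide Lebesgue-a.e. Hence $m_{i,A}f_A=\varsigma_{i,A}$ a.e. Because $f_A>0$, we may divide and obtain $m_{i,A}=\varsigma_{i,A}/f_A$ Lebesgue-a.e. This is in particular $f_A$-a.e., i.e.\ a.e.\ with respect to the law of $X_A$, which is the asserted identity.

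Third, for the consequence we use that conditional-mean versions are determined only up to null sets of the law of $X_A$. If $\varsigma_{i,A}/f_A$ is nondecreasing, then $s\mapsto\varsigma_{i,A}(s)/f_A(s)$ is itself a Borel function. It agrees with $m_{i,A}$ almost surely under the law of $X_A$, so $\varsigma_{i,A}(X_A)/f_A(X_A)$ is a version of $\E[\zeta_i\mid X_A]$, and it is nondecreasing.

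No step is a real obstacle. The only care needed is bookkeeping of null sets, namely that "Lebesgue-a.e." and "$f_A$-a.e." agree because $f_A>0$. One should also state which density convention is used for $\eta_{i,A}$ (Lebesgue). If instead $\varsigma_{i,A}$ were meant as a density with respect to the law of $X_A$, the same argument gives $m_{i,A}=\varsigma_{i,A}$ directly, so the Lebesgue reading is the one consistent with the stated formula.
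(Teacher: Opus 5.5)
Your proposal is correct and follows the same route as the paper. The paper tests the identity $\E[\zeta_i\varphi(X_A)]=\E[m_{i,A}(X_A)\varphi(X_A)]$ against all bounded Borel $\varphi$ rather than only indicators $\id_D$, then concludes $\varsigma_{i,A}=m_{i,A}f_A$ Lebesgue-a.e.\ by uniqueness of densities, exactly as you do; your added bookkeeping (integrability of $m_{i,A}f_A$, Borel measurability of the monotone ratio, and the version argument) just makes explicit what the paper leaves implicit.
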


Conditional-mean independence gives a primitive condition for convex-order contraction.

\begin{proposition}\label{prop:cmi-enlargement-cmrs}
Fix nonempty coalitions $A\subseteq B\subseteq[n]$ and $i\in A$. Write $Y:=X_{B\setminus A}$. If
$\E[\zeta_i\mid X_A,Y]=\E[\zeta_i\mid X_A]$, then $m_{i,B}(X_B)\cx m_{i,A}(X_A)$.
\end{proposition}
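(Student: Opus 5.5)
The natural route is to exhibit $m_{i,A}(X_A)$ as a conditional expectation of $m_{i,B}(X_B)$ and then apply the conditional Jensen characterization of convex order. Recall that $U\cx V$ holds whenever there is a version with $U=\E[V\mid\mathcal G]$ for some sub-$\sigma$-algebra $\mathcal G$, or more generally whenever $U=\E[V\mid U]$; conditional Jensen then gives $\E[\phi(U)]\le\E[\phi(V)]$ for every convex $\phi$ for which the expectations exist. Here the roles are reversed relative to naive intuition. The claim is that the larger coalition's conditional mean is \emph{smaller} in convex order, so we want $m_{i,B}(X_B)$ to be a conditional expectation of $m_{i,A}(X_A)$. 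Set $U:=m_{i,B}(X_B)$ and $V:=m_{i,A}(X_A)$.

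The computation runs as follows. Since $X_B=X_A+Y$, we have $\sigma(X_B)\subseteq\sigma(X_A,Y)$. The tower property then gives
\[
U=\E[\zeta_i\mid X_B]=\E\bigl[\E[\zeta_i\mid X_A,Y]\bigm| X_B\bigr].
\]
By the conditional-mean independence hypothesis, $\E[\zeta_i\mid X_A,Y]=\E[\zeta_i\mid X_A]=V$. Therefore
\[
U=\E[V\mid X_B].
\]
Now $\mathcal G:=\sigma(X_B)$ is a sub-$\sigma$-algebra, so for convex $\phi$ conditional Jensen yields $\phi(U)\le\E[\phi(V)\mid X_B]$. Taking expectations gives $\E[\phi(U)]\le\E[\phi(V)]$, and the means agree because both equal $\E[\zeta_i]$. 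Hence $U\cx V$, which is $m_{i,B}(X_B)\cx m_{i,A}(X_A)$.

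The only delicate point is integrability. We need $\zeta_i\in L^1$ so that all conditional expectations exist; this holds because $\zeta_i\in\X=L^p$ with $p\ge1$. When $\E[\phi(V)]=+\infty$ the inequality is trivial. When $\E[\phi(V)]<\infty$, the conditional Jensen inequality is valid for convex $\phi$ bounded below by an affine function, which is automatic for real convex $\phi$ applied to integrable variables. Note also that no independence of $Y$ from $(\zeta_i,X_A)$ is needed, only the stated conditional-mean identity. I therefore expect no real obstacle beyond stating the Jensen/martingale characterization of convex order cleanly.
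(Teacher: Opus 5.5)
Your proof is correct and is essentially the paper's argument. Both show $m_{i,B}(X_B)=\E[m_{i,A}(X_A)\mid X_B]$ and then conclude by conditional Jensen: you get this identity from the tower property and $\sigma(X_B)\subseteq\sigma(X_A,Y)$, while the paper checks the defining identity of conditional expectation against bounded test functions $\varphi(X_B)$.
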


For the dual-pricing rule, PMAS verification reduces to cross-coalition compatibility of the priced conditional-mean components. The next corollary records a useful convex-order contraction of conditional means.

\begin{corollary}\label{cor:independent-enlargement-cmrs}
Suppose $\zeta_1,\dots,\zeta_n$ are independent and integrable. Then $m_{i,B}(X_B)\cx m_{i,A}(X_A)$ for every $\varnothing\ne A\subseteq B\subseteq[n]$ and every $i\in A$.
\end{corollary}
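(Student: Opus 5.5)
The plan is to deduce the corollary directly from Proposition~\ref{prop:cmi-enlargement-cmrs}. For this, I only need to check its conditional-mean independence hypothesis
$$\E[\zeta_i\mid X_A,Y]=\E[\zeta_i\mid X_A],\qquad Y:=X_{B\setminus A},$$
for every $\varnothing\ne A\subseteq B\subseteq[n]$ and every $i\in A$. If $B=A$, then $Y=0$ and there is nothing to prove; indeed, the conclusion $m_{i,A}(X_A)\cx m_{i,A}(X_A)$ is trivial. So assume $B\setminus A\ne\varnothing$.

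The key step is an independence observation. Since $\zeta_1,\dots,\zeta_n$ are independent, the random vector $(\zeta_j)_{j\in A}$ is independent of $(\zeta_k)_{k\in B\setminus A}$. Hence the pair $(\zeta_i,X_A)$, which is a measurable function of $(\zeta_j)_{j\in A}$, is independent of $Y=\sum_{k\in B\setminus A}\zeta_k$. I would then invoke the standard fact that if $\sigma(U,V)$ is independent of $\sigma(Y)$ and $U$ is integrable, then $\E[U\mid V,Y]=\E[U\mid V]$ a.s. I will take $U=\zeta_i$, which is integrable by assumption, and $V=X_A$.

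To verify this fact, I check the defining property on the $\pi$-system of generators $\{V\in D_1\}\cap\{Y\in D_2\}$. For such sets,
$$\E\bigl[U\id_{\{V\in D_1\}}\id_{\{Y\in D_2\}}\bigr]=\E\bigl[U\id_{\{V\in D_1\}}\bigr]\P(Y\in D_2)=\E\bigl[\E[U\mid V]\id_{\{V\in D_1\}}\bigr]\P(Y\in D_2).$$
The right-hand side equals $\E\bigl[\E[U\mid V]\id_{\{V\in D_1\}}\id_{\{Y\in D_2\}}\bigr]$, again by independence. A Dynkin $\pi$--$\lambda$ argument then extends the identity to all of $\sigma(V,Y)$. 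Finally, $\E[U\mid V]$ is $\sigma(V,Y)$-measurable, which confirms the claim.

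With the hypothesis verified, Proposition~\ref{prop:cmi-enlargement-cmrs} gives $m_{i,B}(X_B)\cx m_{i,A}(X_A)$. One subtlety is that the proposition is stated with conditioning on $(X_A,Y)$, while $m_{i,B}$ conditions on $X_B=X_A+Y$. This coarsening is handled inside that proposition via the tower property and Jensen, so nothing further is needed here. There is no real obstacle; the only care point is stating the independence of $(\zeta_i,X_A)$ from $Y$ jointly, rather than merely of $\zeta_i$ from $Y$.
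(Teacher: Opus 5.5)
Your proposal is correct and follows the paper's own proof. The paper likewise observes that $X_{B\setminus A}$ is independent of $(\zeta_i,X_A)$, so the conditional-mean-independence hypothesis of Proposition~\ref{prop:cmi-enlargement-cmrs} holds. Your $\pi$--$\lambda$ verification of $\E[\zeta_i\mid X_A,Y]=\E[\zeta_i\mid X_A]$ spells out the standard fact that the paper uses without proof.
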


For general convex risk measures, proportional conditional means alone do not ensure that dual pricing yields mean-proportional cost sharing, because the individual penalty terms need not split in mean proportion. The missing condition is a proportional split of the selected dual penalty. The next theorem states the resulting identification. 

\begin{theorem}\label{thm:proportional-cost-dual-pricing-PMAS}
Assume that $\mu_i>0$ for every $i\in[n]$, and that Assumption~\ref{ass:dual-pricing} holds. If, for every nonempty coalition $A\subseteq[n]$ and every $i\in A$,
\begin{equation}\label{eq:proportional-cmrs-main}
    \E[\zeta_i\mid X_A]=\frac{\mu_i}{\mu_A}X_A
\end{equation}
and
$$\alpha_i(\pi_A^\star)=(\mu_i/\mu_A)\alpha_A(\pi_A^\star),$$
then the dual pricing and the mean-proportional allocation coincide and are a PMAS.
\end{theorem}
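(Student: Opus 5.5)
The plan is to establish the two claims in sequence. First, I will show that under \eqref{eq:proportional-cmrs-main} and the penalty-splitting condition, the dual-pricing cost shares \eqref{eq:dual-pricing-cost-share} coincide with the mean-proportional shares \eqref{eq:proportional-selector-main}, coalition by coalition. Second, I will deduce entry monotonicity from the cost-core property of the dual selector, which Theorem~\ref{thm:dual-core-allocation} already provides. This avoids any direct comparison of prices across coalitions.

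\emph{Step 1 (identification).} Fix a nonempty $A\subseteq[n]$ and $i\in A$, and take the $\sigma(X_A)$-measurable optimizer $\pi_A^\star\in\Pi_A(X_A)$ from Assumption~\ref{ass:dual-pricing}.
\begin{itemize}
\item By the tower property and the measurability of $\pi_A^\star$,
$$\E[\pi_A^\star\zeta_i]=\E\bigl[\pi_A^\star\,\E[\zeta_i\mid X_A]\bigr]=\frac{\mu_i}{\mu_A}\E[\pi_A^\star X_A],$$
where the last equality uses \eqref{eq:proportional-cmrs-main}.
\item Substituting the hypothesis $\alpha_i(\pi_A^\star)=(\mu_i/\mu_A)\alpha_A(\pi_A^\star)$ then gives
$$C_i^\star(A)=\frac{\mu_i}{\mu_A}\bigl(\E[\pi_A^\star X_A]-\alpha_A(\pi_A^\star)\bigr).$$
\item Since $\pi_A^\star$ attains the supremum in Proposition~\ref{prop:dual-representation} at $Y=X_A$, the bracket equals $\dsquare_{i\in A}\rho_i(X_A)=C(A)$.
\end{itemize}
Hence $C_i^\star(A)=\mu_iC(A)/\mu_A=C_i^{\bm\mu}(A)$. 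The only delicate point is the conditioning step. It requires $\pi_A^\star\zeta_i$ to be integrable, so that the pairing is well defined. I expect this to follow from $\zeta_i\in\X$ and $\pi_A^\star\in\mathcal Z$, with $\pi_A^\star$ a bounded or suitably integrable function of $X_A$. I regard this technical step as the main obstacle, although it is one of bookkeeping rather than substance.

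\emph{Step 2 (antitonicity via the core).} Fix $\varnothing\ne A\subseteq B$. By Theorem~\ref{thm:dual-core-allocation}\,(\ref{it:dual-cost-core-selector}), $\sum_{i\in A}C_i^\star(B)\le C(A)$. Using Step 1 for the coalition $B$, the left-hand side equals $\sum_{i\in A}\mu_iC(B)/\mu_B=\mu_A C(B)/\mu_B$. Dividing by $\mu_A>0$ yields $\bar C(B)\le\bar C(A)$, which is exactly condition \eqref{eq:PC-PMAS-condition} with $\xi=\mu$. Consequently, for each $i\in A$,
$$C_i^\star(B)=\mu_i\bar C(B)\le\mu_i\bar C(A)=C_i^\star(A).$$

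\emph{Conclusion.} Proposition~\ref{prop:dual-pricing-PMAS} now yields that $w_i^\star$ is a PMAS. Equivalently, Theorem~\ref{thm:PC-PMAS} gives the same conclusion for the mean-proportional rule, with which $w_i^\star$ coincides. Notably, no convex-order hypothesis is needed here, because the proportionality conditions force the dual selector of the larger coalition, restricted to $A$, to inherit the mean-proportional form.
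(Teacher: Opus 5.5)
Your proposal is correct and follows the paper's proof essentially step for step. It first identifies $C_i^\star(A)=\mu_iC(A)/\mu_A$ via the tower property, the proportional conditional means, the penalty split and dual optimality. It then extracts $C(B)/\mu_B\le C(A)/\mu_A$ from the cost-core inequality $\sum_{i\in A}C_i^\star(B)\le C(A)$ and concludes with Theorem~\ref{thm:PC-PMAS} (equivalently Proposition~\ref{prop:dual-pricing-PMAS}). The integrability of $\pi_A^\star\zeta_i$ that you flag is likewise left implicit in the paper, through the assumption that the pairing $\E[\pi Y]$ is well defined for $\pi\in\mathcal Z$ and $Y\in\X$.
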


The penalty-splitting condition is automatic in coherent models.

\begin{example}[Coherent specialization]\label{ex:coherent-specialization}
Suppose that, for each $i\in[n]$, $\rho_i$ is coherent with dual envelope $\mathcal D_i$.  For every nonempty coalition $A\subseteq[n]$, the individual penalties are indicators and vanish on $\cap_{j\in A}\mathcal D_j$, so the dual-pricing allocation in \eqref{eq:dual-pricing-cost-share} is
$$ C_i^\star(A)=\E[\pi_A^\star\zeta_i], \qquad \pi_A^\star\in\argmax_{\pi\in\cap_{j\in A}\mathcal D_j}\E[\pi X_A], \quad i\in A.$$
Theorem~\ref{thm:proportional-cost-dual-pricing-PMAS} identifies this allocation with the mean-proportional allocation. Appendix~\ref{app:distortion-background} gives the PMAS obtained from dual pricing to distortion risk measures.
\end{example}

Following \citet{blierwong2026laplace}, the condition \eqref{eq:proportional-cmrs-main} can be verified from Laplace transforms rather than conditional densities. Fix a nonempty coalition $A\subseteq[n]$ and $i\in A$.  Assume $\zeta_i\in L^1$, $X_A\in L^1$, and $\mu_A\ne0$. Define
$$ \mathcal L_A(t):=\E[\e^{-tX_A}], \qquad \mathcal L_{i,A}^{\ast}(t):=\E[\zeta_i\e^{-tX_A}],$$
and define the finite signed measures $\eta(D):=\E[\zeta_i\id_{\{X_A\in D\}}]$ and $\beta(D):=\E[X_A\id_{\{X_A\in D\}}]$ for Borel sets $D\in\mathcal B (\R)$.  Assume the Laplace--Stieltjes transforms of $\eta$ and $\beta$ are finite on an interval $t\in(0,\varepsilon)$ for some $\varepsilon>0$, that these transforms uniquely determine the finite signed measures, and that differentiation under the expectation gives $-\mathcal L_A'(t)=\E[X_A\e^{-tX_A}]$ on this interval.

\begin{proposition}\label{prop:laplace-proportionality}
If $\mathcal L_{i,A}^{\ast}(t)=(\mu_i/\mu_A)(-\mathcal L_A'(t))$ for $t\in(0,\varepsilon)$ for some $\varepsilon > 0$, then \eqref{eq:proportional-cmrs-main} holds for this coalition $A$ and agent $i$.
\end{proposition}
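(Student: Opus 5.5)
The plan is to show that the two finite signed measures $\eta(D)=\E[\zeta_i\id_{\{X_A\in D\}}]$ and $(\mu_i/\mu_A)\beta(D)=(\mu_i/\mu_A)\E[X_A\id_{\{X_A\in D\}}]$ coincide on $\mathcal B(\R)$. Once this holds, the defining property of conditional expectation gives \eqref{eq:proportional-cmrs-main}.

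First I will express both transforms in terms of the measures. By the change-of-variables formula,
$$\int_\R \e^{-ts}\,\eta(\d s)=\E[\zeta_i\e^{-tX_A}]=\mathcal L_{i,A}^{\ast}(t).$$
Similarly,
$$\int_\R \e^{-ts}\,\beta(\d s)=\E[X_A\e^{-tX_A}]=-\mathcal L_A'(t)$$
on $(0,\varepsilon)$, where the last equality is the assumed differentiation under the expectation. The integrability $\zeta_i,X_A\in L^1$ makes $\eta$ and $\beta$ finite signed measures. The hypothesis of the proposition then says that the Laplace--Stieltjes transforms of $\eta$ and of $(\mu_i/\mu_A)\beta$ agree on $(0,\varepsilon)$. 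By the assumed uniqueness of these transforms, $\eta=(\mu_i/\mu_A)\beta$ as measures.

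Second, I will translate this equality back into random-variable form. For every Borel $D$,
$$\E[\zeta_i\id_{\{X_A\in D\}}]=\E\Bigl[\tfrac{\mu_i}{\mu_A}X_A\,\id_{\{X_A\in D\}}\Bigr].$$
The sets $\{X_A\in D\}$ exhaust $\sigma(X_A)$, and $(\mu_i/\mu_A)X_A$ is integrable and $\sigma(X_A)$-measurable. By a.s. uniqueness of conditional expectation, $\E[\zeta_i\mid X_A]=(\mu_i/\mu_A)X_A$ almost surely, which is \eqref{eq:proportional-cmrs-main} for this $A$ and $i$.

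The only delicate step is the uniqueness invocation for signed measures known only through a transform on a small interval. This is included in the stated assumptions, so it will be cited rather than proved. If more justification is wanted, I would apply uniqueness separately to the Jordan decompositions, or argue by analytic continuation on the strip of convergence.
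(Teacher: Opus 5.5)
Your proposal is correct and matches the paper's proof: you identify $\mathcal L_{i,A}^{\ast}$ and $-\mathcal L_A'$ as the Laplace--Stieltjes transforms of $\eta$ and $\beta$, use the assumed uniqueness to get $\eta=(\mu_i/\mu_A)\beta$ on $\mathcal B(\R)$, and read that equality as the defining property of $\E[\zeta_i\mid X_A]$. If you ever write out the optional extra justification, apply the Jordan decomposition to the difference $\eta-(\mu_i/\mu_A)\beta$, whose transform vanishes on $(0,\varepsilon)$, not to $\eta$ and $\beta$ separately, since equal transforms of two signed measures say nothing directly about their positive parts.
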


Theorem~\ref{thm:proportional-cost-dual-pricing-PMAS} explains why proportional-regression models are useful: the mean-proportional PMAS becomes a pricing PMAS. Outside these cases, mean-proportional allocations remain a cooperative solution rather than an equilibrium concept.

\subsection{PMAS special cases}\label{ss:pmas-corollaries}

The preceding results give abstract conditions for PMAS. We now record special cases where those conditions can be verified, with an emphasis on dependence structures that arise in institutional risk-sharing pools. The first results use the dual-pricing allocation from Section~\ref{ss:dual-pricing-selector}. In other words, these conditions allow for myopic extensions of the pool. If some original pool satisfies the conditions, any new agent that also satisfies them can join, but the original pool members need not know the exact information about the new members at the pool's inception; they only need to agree on a surplus allocation rule. 

We begin with the case of independent endowments. Independence describes many routine insurance and lending settings, including geographically diversified motor or property portfolios, multi-line insurance books, and loan portfolios whose cross-exposures have been hedged or sold down. In this case, Corollary~\ref{cor:independent-enlargement-cmrs} supplies a useful convex-order contraction of conditional means.

\begin{corollary}\label{cor:independent-logconcave-PMAS}
Suppose the endowments $\zeta_1,\dots,\zeta_n$ are independent and integrable and that Assumption~\ref{ass:dual-pricing} holds. If the selected dual prices satisfy the cross-coalition compatibility condition in Theorem~\ref{thm:cmrs-pricing-PMAS}, then the dual-pricing allocation is a PMAS.
\end{corollary}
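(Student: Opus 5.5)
The corollary follows directly from Theorem~\ref{thm:cmrs-pricing-PMAS}. The plan is to check that its hypotheses hold and then explain how independence enters.

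Assumption~\ref{ass:dual-pricing} is assumed, and the cross-coalition compatibility condition
\[
\Gamma_{i,B}^\star(m_{i,B}(X_B))\le\Gamma_{i,A}^\star(m_{i,A}(X_A)), \qquad \varnothing\ne A\subseteq B\subseteq[n],\ i\in A,
\]
is assumed as well. These are exactly the hypotheses of Theorem~\ref{thm:cmrs-pricing-PMAS}. That theorem therefore gives that the welfare gains $w_i^\star(A)=\rho_i(\zeta_i)-C_i^\star(A)$ form a PMAS.

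To make the argument self-contained, I would recall the mechanism behind Theorem~\ref{thm:cmrs-pricing-PMAS}. Because $\pi_A^\star$ is $\sigma(X_A)$-measurable, the tower property gives
\[
\E[\pi_A^\star\zeta_i]=\E\bigl[\pi_A^\star\,\E[\zeta_i\mid X_A]\bigr]=\E[\pi_A^\star m_{i,A}(X_A)].
\]
Hence $C_i^\star(A)=\Gamma_{i,A}^\star(m_{i,A}(X_A))$. The compatibility condition is then literally $C_i^\star(B)\le C_i^\star(A)$. Proposition~\ref{prop:dual-pricing-PMAS} provides the cost-core property coalition by coalition via Theorem~\ref{thm:dual-core-allocation}, and together with this antitonicity it yields the PMAS.

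Independence and integrability enter through Corollary~\ref{cor:independent-enlargement-cmrs}. It gives $m_{i,B}(X_B)\cx m_{i,A}(X_A)$, so the conditional-mean arguments of $\Gamma^\star$ contract in convex order as the coalition grows. I would note this as the structural reason compatibility is natural here. The price part of the condition, comparing $\pi_B^\star$ with $\pi_A^\star$, still requires the assumed compatibility, because $\Gamma_{i,A}^\star$ is affine and is not in general convex-order monotone across different prices. There is no genuine obstacle; the only point needing care is that the measurability of $\pi_A^\star$ is what justifies replacing $\zeta_i$ by its conditional mean.
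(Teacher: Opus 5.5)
Your proposal is correct and matches the paper's proof, which simply notes that Assumption~\ref{ass:dual-pricing} and the cross-coalition compatibility condition are exactly the hypotheses of Theorem~\ref{thm:cmrs-pricing-PMAS}. You are also right that independence, through Corollary~\ref{cor:independent-enlargement-cmrs}, only motivates the compatibility condition and is not used in the deduction.
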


The next case turns the same idea into a closed-form regression condition. Elliptical distributions extend the Gaussian framework to symmetric heavy tails and are widely used in quantitative risk management, including the multivariate-$t$ models that underlie market-risk capital aggregation, credit risk modelling and when determining the solvency capital requirement in Solvency II \citep{mcneil2015quantitative}. In the elliptical family, the conditional mean $\E[\zeta_i\mid X_A]$ is affine in $X_A$ for every nonempty $A\subseteq[n]$ and every $i\in A$, so the dual-pricing comparison reduces to a scalar comparison of priced aggregate deviations and individual penalties.

\begin{corollary}\label{cor:elliptical-PMAS}
Suppose Assumption~\ref{ass:dual-pricing} and that $\bm{\zeta}=(\zeta_1,\dots,\zeta_n)$ is elliptically distributed with mean vector $\bm{\mu}$, positive-definite covariance matrix $\Sigma$, and finite second moments. For every nonempty coalition $A\subseteq[n]$, set $\mu_A:=\E[X_A]$, $\sigma_A^2:=\Var(X_A)$, and $\beta_{i,A}:=\Cov(\zeta_i,X_A)/\Var(X_A)$ for $i\in A$. Then
$$ C_i^\star(A) = \mu_i + \beta_{i,A}\left(\E[\pi_A^\star X_A]-\mu_A\right) - \alpha_i(\pi_A^\star),
\qquad i\in A.$$
If, for every $\varnothing\ne A\subseteq B\subseteq[n]$ and every $i\in A$,
$$ \beta_{i,B}\left(\E[\pi_B^\star X_B]-\mu_B\right)-\alpha_i(\pi_B^\star) \le    \beta_{i,A}\left(\E[\pi_A^\star X_A]-\mu_A\right)-\alpha_i(\pi_A^\star),$$
then the dual-pricing allocation is a PMAS.
\end{corollary}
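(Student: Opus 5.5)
The plan is to obtain the closed-form expression for $C_i^\star(A)$ by conditioning on $X_A$, and then to read off the PMAS conclusion from Theorem~\ref{thm:cmrs-pricing-PMAS} (equivalently Proposition~\ref{prop:dual-pricing-PMAS}).

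\emph{Conditional-mean formula.} By Assumption~\ref{ass:dual-pricing} the selected optimizer $\pi_A^\star$ is $\sigma(X_A)$-measurable. The tower property therefore gives
$$\E[\pi_A^\star\zeta_i]=\E\bigl[\pi_A^\star\,\E[\zeta_i\mid X_A]\bigr]=\E[\pi_A^\star m_{i,A}(X_A)],$$
which is exactly $C_i^\star(A)=\Gamma_{i,A}^\star(m_{i,A}(X_A))$. Integrability of $\pi_A^\star\zeta_i$ is needed to justify the conditioning. I would take it from the dual pairing ($\pi_A^\star\in\mathcal Z$ and $\zeta_i\in\X$). Some care is required, since only $\E[\pi_A^\star Y]$ for $Y\in\X$ is assumed finite. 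Because $m_{i,A}(X_A)\in\X$ by Jensen's inequality in $L^p$, the right-hand side is well defined.

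\emph{Affine regression.} Next I use the standard property of elliptical vectors with finite second moments. The pair $(\zeta_i,X_A)=(e_i^\top\bm\zeta,\mathbf 1_A^\top\bm\zeta)$ is a linear image of $\bm\zeta$, hence a bivariate elliptical vector with the same generator. Its conditional expectation is the linear regression
$$\E[\zeta_i\mid X_A]=\mu_i+\beta_{i,A}(X_A-\mu_A),$$
with $\beta_{i,A}=\Cov(\zeta_i,X_A)/\Var(X_A)$. Here $\Var(X_A)=\mathbf 1_A^\top\Sigma\mathbf 1_A>0$ by positive definiteness, so the coefficient is well defined. This is the only distribution-specific input, and it is the step I expect to need the most justification. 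I would prove it via the stochastic representation $\bm\zeta=\bm\mu+R\,\Lambda U$. An alternative is to decompose $\zeta_i-\mu_i-\beta_{i,A}(X_A-\mu_A)$ and use the symmetry of the spherical generator: conditionally on the linear functional $X_A$, the orthogonal residual has a sign-symmetric law, so its conditional mean is zero. This argument also covers degenerate generators without densities.

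\emph{Assembling the formula and the PMAS claim.} Substituting the regression into the conditional-mean formula and using linearity of $Y\mapsto\E[\pi_A^\star Y]$ gives
$$C_i^\star(A)=\mu_i\E[\pi_A^\star]+\beta_{i,A}\bigl(\E[\pi_A^\star X_A]-\mu_A\E[\pi_A^\star]\bigr)-\alpha_i(\pi_A^\star).$$
Cash additivity forces $\E[\pi]=1$ on the effective domain of the penalty, noted after \eqref{eq:minimal-penalty}. Since $\alpha_i(\pi_A^\star)<\infty$ (because $\alpha_A(\pi_A^\star)$ is finite at an optimizer), we get $\E[\pi_A^\star]=1$, which yields the displayed formula. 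Finally, the hypothesised inequality, after adding $\mu_i$ to both sides, is exactly $C_i^\star(B)\le C_i^\star(A)$ for all $\varnothing\ne A\subseteq B$ and $i\in A$. By the formula above this is the cross-coalition compatibility condition of Theorem~\ref{thm:cmrs-pricing-PMAS}. That theorem, or directly Proposition~\ref{prop:dual-pricing-PMAS} together with the cost-core property from Theorem~\ref{thm:dual-core-allocation}, then gives that the dual-pricing welfare gains form a PMAS.
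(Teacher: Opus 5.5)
Your proposal is correct and follows the paper's route. The $\sigma(X_A)$-measurability of $\pi_A^\star$ and the tower property reduce $\E[\pi_A^\star\zeta_i]$ to the elliptical linear-regression formula for $\E[\zeta_i\mid X_A]$, finiteness of the penalty under cash additivity gives $\E[\pi_A^\star]=1$, and the hypothesised inequality is exactly $C_i^\star(B)\le C_i^\star(A)$, so Proposition~\ref{prop:dual-pricing-PMAS} yields the PMAS. The paper simply cites \citet{cambanis1981theory} for the regression step and leaves the tower-property and integrability justifications implicit, whereas you spell them out.
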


The next corollary is useful for risk sharing among homogeneous pools. When $(\zeta_1,\dots,\zeta_n)$ is an exchangeable random vector, the conditional means are equal for every participant, which means that the equal-weight proportional-cost share coincides with each participant's dual-pricing share.

\begin{corollary}\label{cor:exchangeable-PMAS-common}
Suppose Assumption~\ref{ass:dual-pricing}, that $(\zeta_1,\dots,\zeta_n)$ is exchangeable and integrable, and that $\alpha_i(\pi_A^\star)=|A|^{-1}\alpha_A(\pi_A^\star)$ for every nonempty $A\subseteq[n]$ and every $i\in A$. Then $\E[\zeta_i\mid X_A]=X_A/|A|$ for every nonempty $A\subseteq[n]$ and every $i\in A$, and the normalized convex-order condition \eqref{eq:normalized-cx-main} holds. The dual-pricing allocation coincides with the mean-proportional allocation, and it is a PMAS.
\end{corollary}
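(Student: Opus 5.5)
The plan is to verify the two hypotheses of Theorem~\ref{thm:proportional-cost-dual-pricing-PMAS} and read off the rest. The penalty-splitting hypothesis $\alpha_i(\pi_A^\star)=|A|^{-1}\alpha_A(\pi_A^\star)$ is assumed. The regression hypothesis \eqref{eq:proportional-cmrs-main} needs $\E[\zeta_i\mid X_A]=X_A/|A|$ together with $\mu_i/\mu_A=1/|A|$, which I will derive from exchangeability. Once both hold, Theorem~\ref{thm:proportional-cost-dual-pricing-PMAS} shows that dual pricing coincides with the mean-proportional allocation and is a PMAS. I will also verify the normalized convex-order condition \eqref{eq:normalized-cx-main} separately, since it is part of the claim.

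First, the conditional means. Fix a nonempty $A$ and $i,j\in A$. Transposing $i$ and $j$ leaves the joint law of $(\zeta_1,\dots,\zeta_n)$ unchanged and also leaves $X_A$ unchanged. Hence $(\zeta_i,X_A)\overset{d}{=}(\zeta_j,X_A)$, and so $\E[\zeta_i\mid X_A]=\E[\zeta_j\mid X_A]$ almost surely. This can be made rigorous by checking $\E[\zeta_i g(X_A)]=\E[\zeta_j g(X_A)]$ for bounded Borel $g$, which integrability permits. Summing over $j\in A$ gives $|A|\,\E[\zeta_i\mid X_A]=\E[X_A\mid X_A]=X_A$. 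Taking expectations yields $\mu_i=\mu$ for all $i$, so $\mu_i/\mu_A=1/|A|$ and \eqref{eq:proportional-cmrs-main} holds. The positivity $\mu_i>0$ required by Theorem~\ref{thm:proportional-cost-dual-pricing-PMAS} is implicitly needed for mean weights to make sense. I would note this and assume $\mu>0$, as in the surrounding text; otherwise I would use the uniform weights $\xi_i=1$, which give the same allocation $C(A)/|A|$.

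Second, the convex order. Here $\bar X_A=X_A/(|A|\mu)$, and up to the common factor $\mu$ this is the sample mean $S_A:=X_A/|A|$. For $A\subseteq B$, exchangeability gives $S_A\overset{d}{=}S_{A'}$ for every $A'\subseteq B$ with $|A'|=|A|$. Moreover $S_B=\E[S_{A'}\mid \mathcal G]$, where $\mathcal G$ is generated by $X_B$ and $A'$ is a uniformly random $|A|$-subset of $B$ chosen independently of $\bm\zeta$. Indeed, by the first step $\E[\zeta_k\mid X_B]=S_B$ for each $k\in B$, so $\E[S_{A'}\mid X_B]=S_B$. The variable $S_{A'}$ drawn with a random index set still has the law of $S_A$, because for each fixed $A'$ the law is the same. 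Jensen's inequality for conditional expectations then gives $S_B\cx S_A$, which is \eqref{eq:normalized-cx-main}.

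The main obstacle is justifying the random-subset step cleanly. The alternative route is the classical reverse-martingale fact that sample means of exchangeable sequences are convex-order decreasing, which can be proved directly as $\E[f(S_B)]\le\E[f(S_A)]$ for convex $f$ by writing $S_B$ as the average of $S_{A'}$ over all $|A|$-subsets of $B$. Applying convexity of $f$ to that average gives $f(S_B)\le \binom{|B|}{|A|}^{-1}\sum_{A'} f(S_{A'})$, and taking expectations with the equality of laws finishes the argument. I will use this deterministic averaging argument, which avoids the conditional step altogether.
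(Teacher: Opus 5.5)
Your proposal is correct and follows essentially the paper's route. The paper also derives $\E[\zeta_i\mid X_A]=X_A/|A|$ from exchangeability, gets $C_i^\star(A)=C(A)/|A|$ from the tower property and the penalty split, and deduces antitonicity of $C(A)/|A|$ from the cost-core property; this is an inline version of Theorem~\ref{thm:proportional-cost-dual-pricing-PMAS} with $\xi_i=1$, which sidesteps the $\mu_i>0$ issue you flag, and its convex-order step is the randomized form of your average over all $|A|$-subsets of $B$. As a small simplification, your first step applied to $B$ already gives $\E[X_A/|A|\mid X_B]=X_B/|B|$ for $A\subseteq B$, so conditional Jensen yields \eqref{eq:normalized-cx-main} directly without any subset averaging.
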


The penalty-split hypothesis is automatic in the two leading cases: when the risk measures are coherent, the individual penalties vanish at any feasible dual optimizer; and when all agents share the same convex risk measure, the individual penalties are equal, so each equals $|A|^{-1}\alpha_A(\pi_A^\star)$.

Consider the situation in which $(\zeta_1,\dots,\zeta_n)$ is a vector of comonotonic risks, so there is no diversification benefit. This worst-case dependence scenario is relevant for catastrophe, pandemic, and other systemic-risk insurance pools, where individual losses move together, and stable expansion must rely on the entry of additional risk bearers, such as reinsurers, rather than on diversification. Coalition expansion may still lower the distortion used to evaluate shared losses, and this risk-capacity effect alone produces entry monotonicity. As the set of participants grows, the coalition's willingness to bear each layer of risk increases.



\begin{corollary}\label{cor:comonotonic-PMAS}
Suppose Assumption~\ref{ass:dual-pricing} holds, that $(\zeta_1,\dots,\zeta_n)$ are comonotonic, and suppose $\rho_i(0)=0$ for every $i\in[n]$. Assume that each coalition inf-convolution is comonotonic additive on the cone generated by the endowments, in the sense that, for every pair of nonempty coalitions $A\subseteq B\subseteq[n]$,
\begin{equation}\label{eq:comonotonic-additive-inf-convolution}
    \dsquare_{j\in B}\rho_j(X_A)=\sum_{i\in A}\dsquare_{j\in B}\rho_j(\zeta_i).
\end{equation}
For each nonempty coalition $A\subseteq[n]$ and every $i\in A$, define $w_i(A):=\rho_i(\zeta_i)-\dsquare_{j\in A}\rho_j(\zeta_i)$. Then $(\bm{w}(A))_{A\subseteq[n]}$ is a PMAS.
\end{corollary}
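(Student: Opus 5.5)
The plan is to check the two PMAS requirements (efficiency on every coalition and entry monotonicity) and then call on Proposition~\ref{prop:entry-not-core}, which gives the core property for free. Set $C_i(A):=\dsquare_{j\in A}\rho_j(\zeta_i)$, so that $w_i(A)=\rho_i(\zeta_i)-C_i(A)$.

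\emph{Efficiency.} Apply \eqref{eq:comonotonic-additive-inf-convolution} with $B=A$:
$$\sum_{i\in A}C_i(A)=\dsquare_{j\in A}\rho_j(X_A)=C(A).$$
Hence $\sum_{i\in A}w_i(A)=\nu(A)$ for every nonempty $A$. For singletons, $C_i(\{i\})=\rho_i(\zeta_i)$, so $w_i(\{i\})=0$, consistent with Lemma~\ref{lemma:gain-monotone}.

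\emph{Entry monotonicity.} Take $\varnothing\ne A\subseteq B$ and $i\in A$. We need
$$\dsquare_{j\in B}\rho_j(\zeta_i)\le\dsquare_{j\in A}\rho_j(\zeta_i).$$
Start from any allocation $(Y_j)_{j\in A}\in\A_{|A|}(\zeta_i)$. Extend it to $B$ by setting $Y_j:=0$ for $j\in B\setminus A$. The extension is feasible for $\zeta_i$, and because $\rho_j(0)=0$ it has the same total cost. Taking the infimum over $(Y_j)_{j\in A}$ gives the inequality. So incumbents' costs are antitone, and $w_i(B)\ge w_i(A)$.

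\emph{Core property.} Since the rule is efficient on every coalition and entry-monotone, Proposition~\ref{prop:entry-not-core} gives $\bm w(B)\in\Core(\nu^B)$ for every $B$. This can also be checked directly, and the direct route is what I would verify carefully. For $A\subseteq B$, antitonicity and the $B=A$ efficiency identity give
$$\sum_{i\in A}C_i(B)\le\sum_{i\in A}C_i(A)=C(A).$$
This is exactly the cost-core inequality, and Proposition~\ref{prop:dual-premium-pmas} then yields the PMAS.

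The only step that needs care is the zero-extension in the monotonicity argument. It needs $0\in\dom\rho_j$, which holds because $\rho_j(0)=0$ is finite. The finiteness of $C_i(A)$ follows from Proposition~\ref{prop:infconv-attainment} applied to inf-convolutions at $\zeta_i$, since $\zeta_i\in\dom\rho_i$. Assumption~\ref{ass:dual-pricing} and the comonotonicity of the endowments enter only through the hypothesis \eqref{eq:comonotonic-additive-inf-convolution}, which they justify (for instance, for distortion risk measures the inf-convolution is comonotonic additive). No further use of them is needed.
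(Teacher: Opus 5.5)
Your proof is correct. The efficiency step (the case $A=B$ of \eqref{eq:comonotonic-additive-inf-convolution}) and the entry-monotonicity step (zero-extension using $\rho_j(0)=0$) are the same as in the paper.

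The difference is in the core step. The paper invokes \eqref{eq:comonotonic-additive-inf-convolution} for every pair $A\subseteq B$. It writes $\sum_{i\in A}\dsquare_{j\in B}\rho_j(\zeta_i)=\dsquare_{j\in B}\rho_j(X_A)\le\dsquare_{j\in A}\rho_j(X_A)=C(A)$, which applies coalition monotonicity of the inf-convolution to the subcoalition aggregate $X_A$. You instead sum the agentwise antitonicity over $i\in A$ and close with efficiency on $A$; this is the content of Proposition~\ref{prop:entry-not-core}.

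Your route yields a slightly stronger statement. It only ever uses the diagonal instance of \eqref{eq:comonotonic-additive-inf-convolution}, namely that the shares $\dsquare_{j\in A}\rho_j(\zeta_i)$ exhaust $C(A)$. The off-diagonal instances $A\subsetneq B$ in the hypothesis are therefore redundant. The paper's route is a direct cost-core check, with the reading that the larger coalition's evaluator prices $X_A$ no higher than $A$'s own cost, but it needs the full hypothesis to get there.

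One small correction to your closing remark: Assumption~\ref{ass:dual-pricing} does not justify \eqref{eq:comonotonic-additive-inf-convolution}. It is simply unused, in your argument and in the paper's. Comonotonicity of the endowments is only the motivation for that identity, for instance for distortion risk measures.
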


Note that comonotonic additivity of the inf-convolution in \eqref{eq:comonotonic-additive-inf-convolution} holds for distortion risk measures from Appendix~\ref{app:distortion-background}, which includes the most relevant risk measures in practice.

We collect further examples of conditions for PMAS in Appendix~\ref{app:additional-pmas-examples}. Among others, we provide PMAS conditions for risk-sharing with mean--variance and entropic risk measures; proportional conditional-mean examples based on convolution semigroups; Dirichlet--Liouville radial risks; conditional-convolution mixtures; and an example showing that a mean-proportional cost can be PMAS without coinciding with Arrow--Debreu pricing as in Theorem \ref{thm:proportional-cost-dual-pricing-PMAS}.

\section{Conclusion}\label{sec:discussion}

We analyze how efficient risk sharing under cash-additive evaluations can be implemented in practice. Computing efficient allocations through inf-convolution is only half the problem: without a rule that determines deterministic transfers, the resulting allocation is not implementable, and the exercise has little practical content. Cash additivity makes the distributional step explicit and casts the problem as a transferable-utility game, allowing us to draw directly on the cooperative game literature. We immediately obtain (Theorem~\ref{thm:transfer-IR-geometry}) that individually rational transfers form a polyhedron in each efficient transfer class, which is homeomorphic to the welfare simplex. 

We then show that the gain game is totally balanced whenever the agents are risk averse (Theorem~\ref{thm:convex-implies-core}), so every coalition has a nonempty core and stable transfers always exist. Total balancedness is a necessary condition for the existence of PMAS, and the aforementioned result motivates our analysis of coalition expansion. 

Since risk-sharing games are rarely convex, we aim to provide novel sufficient conditions to guarantee the existence of a PMAS. We focus on two allocation rules, the proportional-cost allocation rule and the (dual-)pricing allocation rule. Both rules have the advantage of being myopic, in that they do not require full knowledge of all potential future participants at inception. The guiding principle is that proportional-cost allocations are PMAS when normalized coalition risks decrease in the convex order (Theorem~\ref{thm:dual-cx-proportional-PMAS}), and that pricing allocations are PMAS under cross-coalition compatibility of the priced conditional means (Theorem~\ref{thm:cmrs-pricing-PMAS}). These novel PMAS conditions are often satisfied in many real-world applications, such as independent risks (Corollary~\ref{cor:independent-logconcave-PMAS}), elliptical risks (Corollary~\ref{cor:elliptical-PMAS}), and comonotonic risks (Corollary~\ref{cor:comonotonic-PMAS}).

\section*{Acknowledgements}

The authors thank Ruodu Wang for his very helpful discussions and comments. CBW acknowledges financial support from the Natural Sciences and Engineering Research Council of Canada (RGPIN-2025-06879). 


\bibliographystyle{apalike}
\bibliography{ref}

\appendix

\section{Proofs of the main text}\label{app:main-text-proofs}

\begin{proof}[Proof of Proposition~\ref{prop:infconv-attainment}]
Fix a nonempty coalition $A\subseteq[n]$. Applying the loss-convention form of \citet[Theorem~2.5]{filipovic2008optimal} to the restricted family $(\rho_i)_{i\in A}$ gives that $\dsquare_{i\in A}\rho_i$ is proper, cash additive, convex, lower semicontinuous, law invariant, and exact. For every $Y\in\X$, Proposition~\ref{prop:comonotonic-optimality} gives that an optimal allocation of $Y$ is comonotonic. Taking $Y=X_A$ gives a comonotonic element of $\Opt_A(X_A)$.

It remains only to prove that $C(A)$ is finite. Since $\zeta_i\in\dom\rho_i$ for every $i\in A$, the no-sharing allocation obtained by assigning $\zeta_i$ to each $i\in A$ is feasible, so
$C(A)\le \sum_{i\in A}\rho_i(\zeta_i)<\infty.$
On the other hand, lower semicontinuity, convexity, law invariance, and cash additivity imply the standard lower bound
$\rho_i(Y)\ge \E[Y]+\rho_i(0)$ for $Y\in\X.$
Hence, for any feasible allocation $\bm{Y}\in\A_{|A|}(X_A)$, we have
$ \sum_{i\in A}\rho_i(Y_i) \ge \E[X_A]+\sum_{i\in A}\rho_i(0)>-\infty$
and $C(A)\in\R$.
\end{proof}

\begin{proof}[Proof of Lemma~\ref{lemma:gain-monotone}]
For $A=\{i\}$, feasibility forces $Y_i=X_{\{i\}}=\zeta_i$, so $C(\{i\})=\rho_i(\zeta_i)$ and $\nu(\{i\})=0$, showing (i). For (ii), the empty coalition has zero gain by definition; if $A$ is nonempty, the no-sharing allocation $Y_i=\zeta_i$ for $i\in A$ is feasible, hence $C(A)\le \sum_{i\in A}\rho_i(\zeta_i)$ and $\nu(A)\ge 0$. For (iii), fix $A\subseteq B\subseteq[n]$. If $A=\varnothing$, the claim follows from (ii). Otherwise choose $\bm X^A\in\Opt_A(X_A)$ and extend it to a feasible allocation for $B$ by setting $X_j^A:=\zeta_j$ for $j\in B\setminus A$. This yields
$$C(B)\le \sum_{i\in A}\rho_i(X_i^A)+\sum_{j\in B\setminus A}\rho_j(\zeta_j)=C(A)+\sum_{j\in B\setminus A}\rho_j(\zeta_j),$$
and therefore $\nu(B)=\sum_{k\in B}\rho_k(\zeta_k)-C(B)\ge \nu(A)$.
Finally, (iv) holds because $\nu([n])=\sum_{i\in[n]}\rho_i(\zeta_i)-C([n])$ and, under the hypotheses of Proposition~\ref{prop:infconv-attainment}, each $\rho_i(\zeta_i)$, $i\in[n]$, and $C([n])$ are finite.
\end{proof}

\begin{proof}[Proof of Theorem~\ref{thm:transfer-IR-geometry}]
To prove~\eqref{it:HIR}, let $\bm{c}\in H$. Individual rationality for $\bm{X}_0+\bm{c}$ requires $\rho_i(X_{0,i}+c_i)\le \rho_i(\zeta_i)$ for all $i\in[n]$. By cash additivity, this is equivalent to $c_i\le b_i$, which gives \eqref{eq:HIR}. The identity $w_i=b_i-c_i$ follows by substitution; since $\bm{X}_0$ is efficient, $\sum_{i\in [n]} b_i=\sum_{i\in [n]} \rho_i(\zeta_i)-C([n])=W$, so $\sum_{i\in [n]} w_i=W$ when $\sum_{i\in [n]} c_i=0$. Nonemptiness follows from $W\ge 0$ (Lemma~\ref{lemma:gain-monotone}\,(ii)): choose any $\bm{w}\in\Delta_W$ and set $c_i:=b_i-w_i$ for $i\in[n]$. To prove~\eqref{it:simplex}, the bijection is immediate from~\eqref{it:HIR}, and the extreme points of the standard simplex $\Delta_W$ are its vertices $W\bm{e}^{(k)}$.
\end{proof}

\begin{proof}[Proof of Corollary~\ref{prop:HIR-vertices}]
The affine bijection $T(\bm{w}):=\bm{b}-\bm{w}$ maps $\Delta_W$ onto $H_{\IR}(\bm{X}_0)$ by Theorem~\ref{thm:transfer-IR-geometry}\,(\ref{it:simplex}). The vertices $W\bm{e}^{(k)}$ of $\Delta_W$ map to $\bm{b}-W\bm{e}^{(k)}=\bm{c}^{(k)}$, and affine bijections preserve extreme points. The allocation-space identity follows by translation by $\bm{X}_0$.
\end{proof}

\begin{proof}[Proof of Proposition~\ref{prop:core-implement}]
To prove~\eqref{it:core-nonneg}, note that the core constraint for the singleton coalition $\{i\}$ reads $w_i\ge \nu(\{i\})=0$, since the gain game is zero-normalized. To prove~\eqref{it:core-transfer}, observe that $\sum_{i\in [n]} w_i=\nu([n])=W=\sum_{i\in [n]} b_i$ gives $\sum_{i\in [n]} c_i=0$, so $\bm{c}\in H$. Since $w_i\ge 0$ for every $i\in[n]$, we have $c_i=b_i-w_i\le b_i$, hence $\bm{c}\in H_{\IR}(\bm{X}_0)$ by \eqref{eq:HIR}. By cash additivity and $\sum_{i \in [n]} c_i=0$, $\sum_{i\in [n]}\rho_i(X_{0,i}+c_i)=\sum_{i \in [n]}\rho_i(X_{0,i})=C([n])$, so $\bm{X}_0+\bm{c}$ remains optimal. Individual rationality holds because $\rho_i(X_{0,i}+c_i)=\rho_i(\zeta_i)-w_i\le \rho_i(\zeta_i)$ for every $i\in[n]$. Finally, $b_i(\bm{X})=b_i-c_i=w_i$ for every $i\in[n]$. 
\end{proof}

\begin{proof}[Proof of Theorem~\ref{thm:convex-implies-core}]
Fix $m\in\N$, balanced weights $\lambda_1,\dots,\lambda_m\ge 0$, and coalitions $A_1,\dots,A_m\subseteq [n]$ satisfying \eqref{eq:balanced-weights}. Terms with $A_j=\varnothing$ may be discarded, since they do not affect \eqref{eq:balanced-weights} and $C(\varnothing)=0$. For each remaining $j$, choose $\bm X^{A_j}\in\Opt_{A_j}(X_{A_j})$, and set $X_i^{A_j}:=0$ for $i\notin A_j$. Define $Z_i:=\sum_{j:\, i\in A_j} \lambda_j X_i^{A_j}$ for each $i\in [n]$. Because $\sum_{j:\, i\in A_j}\lambda_j=1$ by \eqref{eq:balanced-weights}, each $Z_i$ is a convex combination; interchanging the sums over $i$ and $j$ confirms $\sum_{i\in [n]} Z_i = \sum_{j\in [m]}\lambda_j X_{A_j}=X$, so $\bm{Z}\in\A_n(X)$. Convexity of each $\rho_i$ gives $\rho_i(Z_i)\le \sum_{j:\,i\in A_j}\lambda_j\rho_i(X_i^{A_j})$; summing over $i\in[n]$ and interchanging the order of summation yields
$$C([n]) \le \sum_{i\in [n]}\rho_i(Z_i) \le \sum_{j\in [m]}\lambda_j \sum_{i\in A_j}\rho_i(X_i^{A_j}) = \sum_{j\in [m]}\lambda_j C(A_j).$$
This is \eqref{eq:balancedness-cost}. For any nonempty $B\subseteq [n]$, the restricted game $C^B$ inherits the same hypotheses, so the identical argument gives balancedness of $C^B$; hence $C$ is totally balanced.
\end{proof}

\begin{proof}[Proof of Proposition~\ref{prop:dual-representation}]
Fix a nonempty coalition $A\subseteq[n]$. For a two-agent coalition,
\citet[Theorem~3.6]{barrieu2005infconvolutiona} identifies the minimal
penalty of the closed inf-convolution with the sum of the individual minimal
penalties, after translating their gain convention into the present loss
convention. Iterating this binary identity over the finite coalition $A$
shows that the minimal penalty associated with $\dsquare_{i\in A}\rho_i$ is
$\sum_{i\in A}\alpha_i=\alpha_A .$
Since $\dsquare_{i\in A}\rho_i$ is proper, convex, and
$\sigma(\X,\mathcal Z)$-lower semicontinuous, the Fenchel--Moreau Theorem in the dual pair $(\X,\mathcal Z)$ gives its
representation by its minimal penalty. Substituting the preceding penalty yields the dual formula.
\end{proof}

\begin{proof}[Proof of Theorem~\ref{thm:dual-AD-support}]
The support inequality follows from Proposition~\ref{prop:dual-representation} because $\pi^\star$ is feasible in the supremum defining $\dsquare_{i\in [n]}\rho_i (Y)$. Equality at $X$ is the dual optimality of $\pi^\star$ for $X$ under Assumption~\ref{ass:dual-pricing}.

Let $\bm X\in\Opt(X)$. For each $i\in[n]$, \eqref{eq:individual-dual-representation} gives $\rho_i(X_i)\ge\E[\pi^\star X_i]-\alpha_i(\pi^\star)$, and summing over $i\in[n]$ yields $\sum_{i\in [n]}\rho_i(X_i)\ge\E[\pi^\star X]-\sum_{i=1}^n\alpha_i(\pi^\star)=\dsquare_{i\in [n]}\rho_i (X)$. The left-hand side equals $\dsquare_{i\in [n]}\rho_i (X)$ because $\bm X$ is primal optimal, so equality holds in the sum and therefore in every individual Fenchel inequality. By \eqref{eq:individual-dual-representation}, $\rho_i(Y)-\E[\pi^\star Y]\ge-\alpha_i(\pi^\star)$ for every $Y\in\X$ and every $i\in[n]$, so $\inf_{Y\in\X}\{\rho_i(Y)-\E[\pi^\star Y]\}=-\alpha_i(\pi^\star)$, attained at $X_i$.
\end{proof}

\begin{proof}[Proof of Theorem~\ref{thm:dual-core-allocation}]
Fix nonempty $B\subseteq[n]$. Dual optimality of $\pi_B^\star$ at $X_B$ gives $\sum_{i\in B}C_i^\star(B)=\E[\pi_B^\star X_B]-\sum_{i\in B}\alpha_i(\pi_B^\star)=C(B)$. For $\varnothing\ne A\subseteq B$, the same $\pi_B^\star$ is feasible in the coalition-$A$ dual problem because $\alpha_i(\pi_B^\star)<\infty$ for $i\in B$, hence $\sum_{i\in A}C_i^\star(B)=\E[\pi_B^\star X_A]-\sum_{i\in A}\alpha_i(\pi_B^\star)\le\sup_{\pi\in\mathcal Z}\{\E[\pi X_A]-\sum_{i\in A}\alpha_i(\pi)\}=C(A)$. Subtracting the autarky baselines $\sum_{i\in A}\rho_i(\zeta_i)$ gives \textup{(\ref{it:dual-gain-core-allocation})}.

For \textup{(\ref{it:dual-transfer-implementation})}, fix $\bm X^B\in\Opt_B(X_B)$. The Fenchel-equality argument of Theorem~\ref{thm:dual-AD-support}, applied to the coalition-$B$ problem with optimizer $\pi_B^\star$, gives $\rho_i(X_i^B)=\E[\pi_B^\star X_i^B]-\alpha_i(\pi_B^\star)$ for $i\in B$, so $c_i^\star(B)=C_i^\star(B)-\rho_i(X_i^B)=\E[\pi_B^\star(\zeta_i-X_i^B)]$. Summing over $i\in B$ gives zero because $\sum_{i\in B}\zeta_i=\sum_{i\in B}X_i^B=X_B$, and cash additivity gives $\rho_i(X_i^B+c_i^\star(B))=\rho_i(X_i^B)+c_i^\star(B)=C_i^\star(B)$.
\end{proof}

\begin{proof}[Proof of Proposition~\ref{prop:entry-not-core}]
Fix $\varnothing\ne A\subseteq B\subseteq[n]$.  Entry monotonicity gives $w_i^R(B)\ge w_i^R(A)$ for every $i\in A$, hence
$$ \sum_{i\in A}w_i^R(B) \ge \sum_{i\in A}w_i^R(A) = \nu(A),$$
where the equality is efficiency on coalition $A$.  Efficiency on $B$ gives $\sum_{i\in B}w_i^R(B)=\nu(B)$.  Thus $\bm w^R(B)$ belongs to the gain core of the restricted game on $B$.
\end{proof}

\begin{proof}[Proof of Proposition~\ref{prop:dual-premium-pmas}]
For every nonempty $B\subseteq[n]$, efficiency follows from the cost-core identities.  For $A\subseteq B$,
$$ \sum_{i\in A}w_i^R(B) = \sum_{i\in A}\rho_i(\zeta_i)-\sum_{i\in A}C_i^R(B) 
\ge \sum_{i\in A}\rho_i(\zeta_i)-C(A) = \nu(A),$$
so $\bm w^R(B)\in\Core(\nu^B)$.  Finally,
$w_i^R(B)-w_i^R(A)=C_i^R(A)-C_i^R(B)\ge0$, which is entry monotonicity.
\end{proof}

\begin{proof}[Proof of Theorem~\ref{thm:PC-PMAS}]
If \eqref{eq:PC-PMAS-condition} holds, then for $A\subseteq B$,
$$ \sum_{i\in A}C_i^{\bm \xi}(B) = \xi_A\frac{C(B)}{\xi_B} \le \xi_A\frac{C(A)}{\xi_A} = C(A),$$
so the allocation rule is cost-core selecting.  Notice that $C_i^{\bm \xi}(B)\le C_i^{\bm \xi}(A)$ for every $i\in A$, and Proposition~\ref{prop:dual-premium-pmas} gives the PMAS property.

Conversely, if the proportional-cost allocation is a PMAS, then entry monotonicity gives $C_i^{\bm \xi}(B)\le C_i^{\bm \xi}(A)$ for any $i\in A\subseteq B$.  Since $\xi_i>0$, this is \eqref{eq:PC-PMAS-condition}.
\end{proof}

\begin{proof}[Proof of Theorem~\ref{thm:dual-cx-proportional-PMAS}]
For $A\subseteq B$,
$$ \frac{C(B)}{\xi_B} = R_B^{\bm \xi}\left(\frac{X_B}{\xi_B}\right) \le R_B^{\bm \xi}\left(\frac{X_A}{\xi_A}\right) \le R_A^{\bm \xi}\left(\frac{X_A}{\xi_A}\right) = \frac{C(A)}{\xi_A}.$$
The first inequality is convex-order consistency; the second is the evaluator order.  Theorem~\ref{thm:PC-PMAS} gives the PMAS conclusion.
\end{proof}

\begin{proof}[Proof of Theorem~\ref{thm:mean-proportional-PMAS}]
This is Theorem~\ref{thm:dual-cx-proportional-PMAS} with $\xi_i=\mu_i$.  Explicitly,
$$\bar C(B) = R_B^{\bm \mu}(\bar X_B) \le R_B^{\bm \mu}(\bar X_A) \le R_A^{\bm \mu}(\bar X_A) = \bar C(A).$$
Theorem~\ref{thm:PC-PMAS} then gives the PMAS property.
\end{proof}

\begin{proof}[Proof of Proposition~\ref{prop:dual-pricing-PMAS}]
The cost-core selector property is Theorem~\ref{thm:dual-core-allocation}\,\textup{(\ref{it:dual-cost-core-selector})} applied to each nonempty $A\subseteq[n]$. Under the antitonicity condition, Proposition~\ref{prop:dual-premium-pmas} gives the PMAS conclusion.
\end{proof}

\begin{proof}[Proof of Theorem~\ref{thm:cmrs-pricing-PMAS}]
By $\sigma(X_A)$-measurability and the tower property of conditional expectations, we have
$$ C_i^\star(A) =\E[\pi_A^\star\zeta_i]-\alpha_i(\pi_A^\star) = \E[\pi_A^\star m_{i,A}(X_A)]-\alpha_i(\pi_A^\star) = \Gamma_{i,A}^\star\left(m_{i,A}(X_A)\right).$$
The compatibility condition gives $C_i^\star(B)\le C_i^\star(A)$ for every incumbent $i\in A\subseteq B$. Proposition~\ref{prop:dual-pricing-PMAS} gives the PMAS conclusion.
\end{proof}

\begin{proof}[Proof of Proposition~\ref{prop:cmrs-density-identity}]
For every bounded Borel function $\varphi:\R\to\R$ it holds
$$ \int \varphi(s)\varsigma_{i,A}(s)\,\d s = \E[\zeta_i \varphi(X_A)] = \E[m_{i,A}(X_A)\varphi(X_A)] = \int \varphi(s)m_{i,A}(s)f_A(s)\,\d s.$$
It follows that $\varsigma_{i,A}=m_{i,A}f_A$ Lebesgue-a.e., which gives the density identity.  The monotonicity statement follows immediately.
\end{proof}

\begin{proof}[Proof of Proposition~\ref{prop:cmi-enlargement-cmrs}]
For every bounded Borel function $\varphi:\R\to\R$ it holds
$$ \E[\varphi(X_B)m_{i,A}(X_A)] =\E[\varphi(X_A+Y)\E[\zeta_i\mid X_A]] =  \E[\varphi(X_A+Y)\E[\zeta_i\mid X_A,Y]] = \E[\varphi(X_B)\zeta_i].$$
Thus
$$ \E[m_{i,A}(X_A)\mid X_B] = \E[\zeta_i\mid X_B] = m_{i,B}(X_B).$$
Jensen's inequality gives $m_{i,B}(X_B)\cx m_{i,A}(X_A)$.
\end{proof}

\begin{proof}[Proof of Corollary~\ref{cor:independent-enlargement-cmrs}]
For $A\subseteq B\subseteq[n]$ and $i\in A$, $X_{B\setminus A}$ is independent of $(\zeta_i,X_A)$, so the conditional-mean-independence condition in Proposition~\ref{prop:cmi-enlargement-cmrs} holds, which implies the stated result.
\end{proof}

\begin{proof}[Proof of Theorem~\ref{thm:proportional-cost-dual-pricing-PMAS}]
Fix a nonempty coalition $A\subseteq[n]$. For every $i\in A$, the tower property, \eqref{eq:proportional-cmrs-main}, and $\sigma(X_A)$-measurability of $\pi_A^\star$ give
$$ \E[\pi_A^\star\zeta_i] = \E\left[\pi_A^\star\E[\zeta_i\mid X_A]\right] = \frac{\mu_i}{\mu_A}\E[\pi_A^\star X_A].$$
Using the proportional penalty split,
$$ C_i^\star(A) = \frac{\mu_i}{\mu_A}\E[\pi_A^\star X_A] - \frac{\mu_i}{\mu_A}\alpha_A(\pi_A^\star)
    = \frac{\mu_i}{\mu_A}\left(\E[\pi_A^\star X_A]-\alpha_A(\pi_A^\star)\right) = \frac{\mu_i}{\mu_A}C(A),$$
where the last equality is the dual optimality of $\pi_A^\star$. Thus, the dual-pricing allocation coincides with the mean-proportional allocation \eqref{eq:proportional-selector-main}.

For the PMAS property, $\bm C^\star(B)$ is a cost-core selector for every nonempty $B\subseteq[n]$ (Proposition~\ref{prop:dual-pricing-PMAS}), so $\sum_{i\in A}C_i^\star(B)\le C(A)$ whenever $\varnothing\ne A\subseteq B\subseteq[n]$. Substituting the proportional form $C_i^\star(B)=(\mu_i/\mu_B)C(B)$ gives $(\mu_A/\mu_B)C(B)\le C(A)$, that is, $C(B)/\mu_B\le C(A)/\mu_A$. This is the antitonicity condition \eqref{eq:PC-PMAS-condition} with $\xi_i=\mu_i$, so Theorem~\ref{thm:PC-PMAS} shows that the mean-proportional allocation is a PMAS.
\end{proof}

\begin{proof}[Proof of Proposition~\ref{prop:laplace-proportionality}]
The transforms $\mathcal L_{i,A}^{\ast}$ and $-\mathcal L_A'$ are the Laplace--Stieltjes transforms of $\eta$ and $\beta$, respectively. By the uniqueness of Laplace--Stieltjes transforms for finite measures, the assumed proportionality implies
$$ \E[\zeta_i\id_{\{X_A\in D\}}] = \frac{\mu_i}{\mu_A}\E[X_A\id_{\{X_A\in D\}}]$$
for all Borel sets $D\subseteq\R$, which is equivalent to \eqref{eq:proportional-cmrs-main} for this coalition $A$ and agent $i$.
\end{proof}


\begin{proof}[Proof of Corollary~\ref{cor:independent-logconcave-PMAS}]
The conclusion follows from Theorem~\ref{thm:cmrs-pricing-PMAS}, since Assumption~\ref{ass:dual-pricing} and cross-coalition compatibility are exactly its hypotheses.
\end{proof}

\begin{proof}[Proof of Corollary~\ref{cor:elliptical-PMAS}]
By the linear regression property of elliptically distributed vectors \citep{cambanis1981theory},
$$ \E[\zeta_i\mid X_A] = \mu_i+\beta_{i,A}(X_A-\mu_A).$$
Since finite dual penalties under cash additivity imply $\E[\pi_A^\star]=1$,
$$ \E[\pi_A^\star\zeta_i] = \mu_i+\beta_{i,A}\left(\E[\pi_A^\star X_A]-\mu_A\right).$$
Subtracting $\alpha_i(\pi_A^\star)$ gives the displayed cost-share formula. The cross-coalition inequality in the statement gives $C_i^\star(B)\le C_i^\star(A)$ whenever $\varnothing\ne A\subseteq B\subseteq[n]$ and $i\in A$. Proposition~\ref{prop:dual-pricing-PMAS} gives the PMAS conclusion.
\end{proof}

\begin{proof}[Proof of Corollary~\ref{cor:exchangeable-PMAS-common}]
Exchangeability gives equal conditional means inside each coalition:
$\E[\zeta_i\mid X_A]=X_A/|A|$ for every nonempty $A\subseteq[n]$
and every $i\in A$.
By the $\sigma(X_A)$-measurability of $\pi_A^\star$ and the tower property,
$\E[\pi_A^\star\zeta_i]=|A|^{-1}\E[\pi_A^\star X_A]$, and the penalty split gives
$$C_i^\star(A)=\E[\pi_A^\star\zeta_i]-\alpha_i(\pi_A^\star)=|A|^{-1}\bigl(\E[\pi_A^\star X_A]-\alpha_A(\pi_A^\star)\bigr)=|A|^{-1}C(A),$$
where the last equality is dual optimality of the selected optimizer for
coalition $A$. Thus, the dual-pricing allocation coincides with the equal-split rule, which equals the mean-proportional allocation because exchangeability gives equal means.

For the PMAS property, $\bm C^\star(B)$ is a cost-core selector for every nonempty $B\subseteq[n]$ (Proposition~\ref{prop:dual-pricing-PMAS}), so $\sum_{i\in A}C_i^\star(B)\le C(A)$ whenever $\varnothing\ne A\subseteq B\subseteq[n]$. Substituting $C_i^\star(B)=C(B)/|B|$ gives $|A|\,C(B)/|B|\le C(A)$, that is, $C(B)/|B|\le C(A)/|A|$, the antitonicity condition \eqref{eq:PC-PMAS-condition} with $\xi_i=1$; Theorem~\ref{thm:PC-PMAS} gives the PMAS property.

Finally, for $\varnothing\ne A\subseteq B\subseteq[n]$, let $J$ be a uniformly chosen $|A|$-subset of $B$, independent of the risks. Then $|A|^{-1}\sum_{j\in J}\zeta_j$ has the same distribution as $X_A/|A|$, and
$$\E\left[ |A|^{-1}\sum_{j\in J}\zeta_j \,\middle|\, (\zeta_j)_{j\in B}\right] = \frac{X_B}{|B|}.$$
Jensen's inequality gives the normalized convex-order condition \eqref{eq:normalized-cx-main}.
\end{proof}

\begin{proof}[Proof of Corollary~\ref{cor:comonotonic-PMAS}]
For every nonempty coalition $B\subseteq[n]$, take the associated cost share of agent $i\in B$ to be $\dsquare_{j\in B}\rho_j(\zeta_i)$. The displayed comonotonic additivity assumption yields efficiency when $A=B$. If $A\subseteq B$, coalition monotonicity of normalized inf-convolutions gives $\dsquare_{j\in B}\rho_j(X_A)\le \dsquare_{j\in A}\rho_j(X_A)$, since the larger coalition can use an optimal $A$-allocation and assign zero to the agents in $B\setminus A$. Hence
$$\sum_{i\in A}\dsquare_{j\in B}\rho_j(\zeta_i) = \dsquare_{j\in B}\rho_j(X_A) \le \dsquare_{j\in A}\rho_j(X_A) = C(A).$$
Thus, the selector is cost-core selecting. If $A\subseteq B$ and $i\in A$, the same coalition monotonicity gives $\dsquare_{j\in B}\rho_j(\zeta_i)\le\dsquare_{j\in A}\rho_j(\zeta_i)$. Proposition~\ref{prop:dual-premium-pmas} applies.
\end{proof}

\section{Distortion risk measures}\label{app:distortion-background}

The main stability results from the main text require only convex risk measures, but the case of distortion risk measures is useful because it yields explicit formulas for the inf-convolution and dual-price objects. This appendix specializes some of the results from Section \ref{sec:main-results} to distortion risk measures. 

For this Appendix, we make the following assumption.
\begin{assumption}\label{ass:distortion}
In addition to Assumption \ref{ass:standing} and \ref{ass:dual-pricing}, assume that for each $i\in[n]$, agent $i$ evaluates losses by a concave distortion risk measure as in \eqref{eq:choquet-integral}. That is, $\rho_i=\rho_{h_i}$ for an increasing, concave distortion $h_i:[0,1]\to[0,1]$ with $h_i(0)=0$ and $h_i(1)=1$.
\end{assumption}

Under Assumption~\ref{ass:distortion}, each risk measure $\rho_{h_i}, i \in [n]$ is a coherent distortion risk measure. We record two standard representations used in the examples. For integer-valued $Y$ supported on $\{0,\ldots,K\}$ for some $K\in\N$, the tail integral in \eqref{eq:choquet-integral} reduces to $\rho_h(Y)=\sum_{\ell=1}^K h(\P(Y\ge \ell))$. For a bounded random variable $Y$, the same functional can also be written in quantile form as $\rho_h(Y)=\int_0^1 \VaR_{1-u}(Y)\,\d h(u)$; see, e.g., \cite{dhaene2012remarks} for details.

We use the following tail-integral identity when passing from lower-envelope distortions to explicit pricing kernels. It is a special case of Lemma 3 of \citet{lauzier2025risk}.
\begin{lemma}\label{lemma:lauzier-lemma3}
Let $h\in\mathcal{H}$, let $X\in\X$ be bounded from below, and let $f:\R\to\R$ be increasing and Lipschitz with right-derivative $g$. Then
\begin{equation*}
\rho_h\big(f(X)\big)
=f(0)+\int_{0}^{\infty} g(x)\,h(\P(X>x))\,\d x
+\int_{-\infty}^{0} g(x)\,\big(h(\P(X>x))-h(1)\big)\,\d x.
\end{equation*}
\end{lemma}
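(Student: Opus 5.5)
The plan is to write $f(X)$ as a distortion of the layer structure of $X$, using the Choquet representation \eqref{eq:choquet-integral} together with a layer-cake (Fubini) argument. First, note that since $f$ is increasing, $\{f(X)>y\}=\{X>f^{-1}(y)\}$ for a suitable generalized inverse, so the survival function of $f(X)$ is obtained by composing that of $X$ with the inverse of $f$. The idea is then to perform the change of variables $y=f(x)$ in the Choquet integral, with $\d y=g(x)\,\d x$. This is legitimate because $f$ is Lipschitz, hence absolutely continuous with a.e.\ derivative equal to its right-derivative $g$. The result is an integral of $g(x)\,h(\P(X>x))$.

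The steps are as follows.

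\emph{Step 1: reduce to $f(0)=0$.} By cash additivity of $\rho_h$ (which holds because $h(1)=1$), $\rho_h(f(X))=f(0)+\rho_h(f(X)-f(0))$. It therefore suffices to treat $\tilde f:=f-f(0)$, which satisfies $\tilde f(0)=0$ and has the same right-derivative $g$.

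\emph{Step 2: change variables on the positive and negative parts separately.} For $y>0$ in the range of $\tilde f$, set $x(y):=\inf\{x:\tilde f(x)>y\}$. Monotonicity gives $\P(\tilde f(X)>y)=\P(X>x(y))$, and this holds for all $y$ except possibly countably many (the plateau levels of $\tilde f$), which do not affect the Lebesgue integral. The substitution formula for absolutely continuous monotone maps then gives
\[
\int_0^\infty h(\P(\tilde f(X)>y))\,\d y=\int_0^\infty g(x)\,h(\P(X>x))\,\d x.
\]
The same argument applies on $(-\infty,0)$ with the integrand $h(\P(\cdot>\cdot))-h(1)$. On intervals where $\tilde f$ is flat, $g=0$, so these intervals contribute nothing on either side, consistent with the formula.

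\emph{Step 3: finiteness and integrability.} Since $X$ is bounded from below, say $X\ge -M$, we have $h(\P(X>x))-h(1)=0$ for $x<-M$, so the negative integral is over a bounded interval with a bounded integrand ($g\le L$). For the positive part, $0\le g\,h(\P(X>x))\le L\,h(\P(X>x))$, so both sides are either finite together or infinite together. This makes the change of variables valid in $[0,\infty]$ by monotone convergence, applied on truncations $[0,N]$.

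The main obstacle is Step 2: justifying the change of variables for a merely Lipschitz increasing $f$ that may have plateaus, together with handling the discrepancy between the strict and non-strict inequalities in $\{\tilde f(X)>y\}$ versus $\{X>x(y)\}$. I would first try the standard substitution formula $\int_{\tilde f(a)}^{\tilde f(b)}\varphi(y)\,\d y=\int_a^b\varphi(\tilde f(x))\,g(x)\,\d x$, valid for bounded Borel $\varphi$ and absolutely continuous increasing $\tilde f$. Applying it with $\varphi(y)=h(\P(\tilde f(X)>y))$ works because $\varphi(\tilde f(x))=h(\P(\tilde f(X)>\tilde f(x)))$. This equals $h(\P(X>x))$ whenever $g(x)>0$ at a point of strict increase, and the set where these differ while $g\neq0$ is Lebesgue-null. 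This route avoids the generalized inverse altogether.
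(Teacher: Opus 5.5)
Your argument is correct. The paper gives no proof of this lemma: it imports it as a special case of Lemma~3 of \citet{lauzier2025risk}. Your proposal is therefore a self-contained verification rather than a reconstruction of an in-paper argument.

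The route you settle on in your last paragraph is the right one. That is the forward substitution $\int_{\tilde f(a)}^{\tilde f(b)}\varphi(y)\,\d y=\int_a^b\varphi(\tilde f(x))\,g(x)\,\d x$ for monotone absolutely continuous $\tilde f$ and bounded or nonnegative Borel $\varphi$. Your $\varphi$ is monotone, hence Borel. You can drop the generalized-inverse route entirely. Three points can be sharpened.

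\emph{The level sets hold exactly.} Since $\tilde f$ is continuous and nondecreasing, $\{x:\tilde f(x)>y\}$ is an open up-set. Hence $\{\tilde f(X)>y\}=\{X>x(y)\}$ for every $y$, not merely off a countable set of plateau levels.

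\emph{The exceptional set is empty.} For each $x$, one of two cases holds:
\begin{itemize}
\item $\tilde f(x')>\tilde f(x)$ for all $x'>x$, in which case $\{\tilde f(X)>\tilde f(x)\}=\{X>x\}$ exactly;
\item $\tilde f$ is constant on some $[x,x']$ with $x'>x$, in which case $g(x)=0$.
\end{itemize}
So $g(x)\,\varphi(\tilde f(x))=g(x)\,h(\P(X>x))$ wherever $g$ is defined. The set you treat as Lebesgue-null is in fact empty.

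\emph{The limits should be stated explicitly.} On the positive side, the $y$-ranges of your truncations are $[0,\tilde f(N)]$, which increase to $[0,\sup\tilde f)$. Beyond $\sup\tilde f$ the integrand is $h(0)=0$. On the negative side, below $\tilde f(-M)$ the integrand is $h(1)-1=0$, because $\tilde f(X)\ge\tilde f(-M)$ almost surely. With these remarks, the identity holds in $(-\infty,\infty]$, including the case $\rho_h(f(X))=+\infty$. It also uses only monotonicity of $h$ and no concavity, which matches the generality of the statement.
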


We first recall the lower-envelope representation for the inf-convolution of distortion risk measures; see, e.g., \citet{lauzier2025risk} for a more general result.
It identifies the pool's aggregate evaluation as a single lower-envelope distortion functional and provides an explicit layer allocation that assigns each marginal increment to the agent that prices it most favourably.

\begin{theorem}[\citealp{lauzier2025risk}, Theorem~3]\label{thm:lauzier-thm3}
Let $h_1,\dots,h_n\in\mathcal{H}$ and define the lower envelope
\begin{equation}\label{eq:lower-envelope}
h_\wedge(t):=\min\{h_1(t),\dots,h_n(t)\},\qquad t\in[0,1].
\end{equation}
Then $\dboxplus_{i=1}^n \rho_{h_i}=\rho_{h_\wedge}$. Fixing a loss $X\in\X$, a sum-optimal allocation $(X_1,\dots,X_n)\in\A_n^+(X)$ is given by
\begin{equation}\label{eq:lauzier-fi}
X_i=f_i(X),\qquad f_i(x)=\int_{0}^{x} g_i(t)\,\d t,
\qquad i\in[n],\ x\in\R,
\end{equation}
where for $x \le 0$ the integral is understood as $-\int_x^0 g_i(t)\,\d t$, and where $M_x:=\{j\in [n]: h_j(\P(X>x))=h_\wedge(\P(X>x))\}$ and
\begin{equation}\label{eq:lauzier-gi}
g_i(x)=\frac{1}{|M_x|}\,\id_{\{i\in M_x\}},\qquad i\in[n],\ x\in\R.
\end{equation}
Moreover, the sum-optimal allocation in $\A_n^+(X)$ is unique up to constant shifts almost surely if and only if $|M_x|=1$ for $\mu_X$-almost every $x$, where $\mu_X$ is the law of $X$.
\end{theorem}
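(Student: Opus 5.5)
The plan is to prove the theorem in three steps: reduce every comonotonic allocation to increasing Lipschitz functions of the aggregate $X$, compute each agent's cost through the layer representation of Lemma~\ref{lemma:lauzier-lemma3}, and then minimize layer by layer.

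\emph{Step 1 (parametrize comonotonic allocations).} Let $(X_1,\dots,X_n)\in\A_n^+(X)$. By Denneberg's Lemma with $W=X$, we can write $X_i=f_i(X)$ with each $f_i$ nondecreasing and $\sum_i f_i(x)=x$ on the support of $X$. Since the $f_i$ are nondecreasing and sum to the identity, each satisfies $0\le f_i(y)-f_i(x)\le y-x$ for $x\le y$. Hence each $f_i$ is $1$-Lipschitz and can be extended to all of $\R$ with this property, for instance by linear interpolation across gaps of the support. Its right-derivative $g_i$ therefore exists a.e., with $g_i\ge 0$ and $\sum_i g_i=1$. By cash additivity of each $\rho_{h_i}$, shifting the $f_i$ by constants that sum to zero does not change $\sum_i\rho_{h_i}(X_i)$. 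We may therefore assume $\sum_i f_i(0)=0$.

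\emph{Step 2 (layer decomposition and the lower bound).} Assume first that $X$ is bounded from below. Applying Lemma~\ref{lemma:lauzier-lemma3} to each $f_i(X)$ and summing over $i$ gives
\[
\sum_{i}\rho_{h_i}(X_i)=\int_0^\infty\sum_i g_i(x)\,h_i(\P(X>x))\,\d x+\int_{-\infty}^0\sum_i g_i(x)\bigl(h_i(\P(X>x))-1\bigr)\d x .
\]
Pointwise in $x$, we have $\sum_i g_i h_i(\cdot)\ge \sum_i g_i h_\wedge(\cdot)=h_\wedge(\cdot)$, because the $g_i$ are nonnegative and sum to one. The same bound holds for the terms with $-1$ subtracted. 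Applying Lemma~\ref{lemma:lauzier-lemma3} to the identity map, the resulting lower bound is exactly $\rho_{h_\wedge}(X)$. So every comonotonic allocation costs at least $\rho_{h_\wedge}(X)$, which gives $\dboxplus_i\rho_{h_i}(X)\ge\rho_{h_\wedge}(X)$.

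\emph{Step 3 (attainment).} The $g_i$ in \eqref{eq:lauzier-gi} are nonnegative, sum to one, and put weight only on indices in $M_x$. The $f_i$ in \eqref{eq:lauzier-fi} are therefore nondecreasing, $1$-Lipschitz, and satisfy $\sum_i f_i(x)=x$ and $f_i(0)=0$. The resulting allocation is feasible and comonotonic. In the layer formula, $\sum_i g_i(x)h_i(\P(X>x))=h_\wedge(\P(X>x))$ for every $x$, so the allocation attains the lower bound. This proves $\dboxplus_i\rho_{h_i}=\rho_{h_\wedge}$ together with the optimality of \eqref{eq:lauzier-fi}.

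\emph{Main obstacle: unbounded $X$.} The step I expect to be hardest is removing the assumption that $X$ is bounded from below, which Lemma~\ref{lemma:lauzier-lemma3} requires. I would first try truncation: set $X^{(k)}:=X\vee(-k)$ and $X_i^{(k)}:=f_i(X^{(k)})$, apply Steps 2–3, and let $k\to\infty$. For fixed $1$-Lipschitz $f_i$, $f_i(X^{(k)})\to f_i(X)$ in $L^p$, and the Choquet integrals converge by monotone convergence in the negative-tail term, since $0\le h_i\le 1$ there. The lower bound and the attained value therefore both pass to the limit. If this runs into integrability issues, the fallback is to verify the layer formula directly for Lipschitz $f$ by writing $\rho_h(f(X))$ via the quantile representation and changing variables. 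I expect the difficulty to be routine but fiddly.

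\emph{Uniqueness.} Suppose $|M_x|=1$ for $\mu_X$-a.e.\ $x$. By Step 2, equality forces $g_i(x)=\id_{\{i\in M_x\}}$ for a.e.\ $x$ carrying weight. On gaps of the support of $X$, $\P(X>x)$ is constant, so $M_x$ is constant there and is determined at an adjacent support point. The increments of every optimal $f_i$ between support points are therefore determined, and $f_i(X)$ is unique up to an additive constant a.s. Conversely, suppose $|M_x|\ge2$ on a set of positive $\mu_X$-measure, with $j\ne k$ both in $M_x$ there. Moving a fraction of $g_j$ to $g_k$ on that set leaves the pointwise integrand at $h_\wedge$, so the new allocation is still optimal. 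It changes $f_j(X)-f_k(X)$ by a nonconstant random variable, so it is not a constant shift of the original. The delicate point in this direction is checking that the positive $\mu_X$-measure condition makes the perturbation visible in $f_j(X)$ almost surely, that is, nonconstant as a random variable.
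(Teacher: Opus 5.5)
Your Steps~1--3 are sound. The paper gives no proof of this theorem: it imports it from the cited source and records only the layer identity of Lemma~\ref{lemma:lauzier-lemma3}. Your argument is the natural one behind it: parametrize comonotonic allocations by nondecreasing $1$-Lipschitz $f_i$ with $\sum_i f_i(x)=x$, apply the layer identity, and minimize the integrand pointwise. Your truncation also works, but not for the reason you give. $L^p$ convergence plays no role, since $\rho_h$ need not be $L^p$-continuous for non-concave $h$. What you actually use is $f_i(X\vee(-k))=f_i(X)\vee f_i(-k)$, together with monotone convergence in the negative-tail integral.

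The genuine gap is the uniqueness part, and it cannot be repaired as stated, because the criterion written with $\mu_X$ is false in both directions. Your Step~2 shows that an allocation is optimal iff $g_i(x)=0$ for all $i\notin M_x$ for \emph{Lebesgue}-a.e.\ layer $x$. Sets of layers that are $\mu_X$-null and sets that are Lebesgue-null differ exactly at gaps and atoms of $X$, and each of your two directions fails at one of these.

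In the ``if'' direction, $M_x$ on a gap is indeed determined at an endpoint, but that endpoint may be $\mu_X$-null, so the hypothesis says nothing about it. Take $h_1(t)=t$ and $h_2(t)=t+\tfrac18\sin(2\pi t)$, which lie in $\mathcal H$ and cross only at $t=\tfrac12$, and let $X$ be uniform on $[0,1]\cup[2,3]$. Then $|M_x|\ge 2$ only where $\P(X>x)\in\{0,\tfrac12,1\}$, i.e.\ on $(-\infty,0]\cup[1,2]\cup[3,\infty)$, which is $\mu_X$-null. Yet the layer $[1,2]$ is tied, any split of it is optimal, and two different splits change $X_1$ by a nonzero multiple of $\id_{\{X\ge 2\}}$, which is not constant.

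In the ``only if'' direction, the point you flagged as delicate is simply false. A $\mu_X$-positive set can be Lebesgue-null, and perturbing $g$ on it changes nothing. At an atom at the essential supremum, $\P(X>x)=0$ and hence $M_x=[n]$ automatically. Take $h_1(t)=t^2$, $h_2(t)=t$ and $X\sim\mathrm{Bernoulli}(1/2)$. Every comonotonic allocation is $(a+d_1X,\,-a+d_2X)$ with $d_1,d_2\ge 0$ and $d_1+d_2=1$, at cost $d_1/4+d_2/2$. So the optimum $(X+a,-a)$ is unique up to constants, although $|M_1|=2$ and $\mu_X(\{1\})=\tfrac12$.

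The provable statement is: uniqueness up to constant shifts holds iff $|M_x|=1$ for Lebesgue-a.e.\ $x$ with $0<\P(X>x)<1$. Outside that range $M_x=[n]$ trivially, and those layers only shift constants. With this criterion your argument closes. Step~2 pins down each $g_i$ a.e.\ on that range, hence every increment of $f_i$ between support points of $X$. Conversely, moving mass from $g_k$ to $g_j$ on a Lebesgue-positive tied set inside some $(a,b)$ with $\P(X\le a)>0$ and $\P(X\ge b)>0$ shifts $f_j(X)$ by strictly more on $\{X\ge b\}$ than on $\{X\le a\}$.
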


Figure~\ref{fig:lower-envelope} illustrates the layer allocation in Theorem~\ref{thm:lauzier-thm3} for three agents: the bold lower-envelope curve identifies which agent absorbs each layer of the aggregate loss.

\begin{figure}[ht]
\centering
\begin{tikzpicture}
\begin{axis}[
  name=distplot,
  width=12cm, height=7cm,
  xmin=0, xmax=1.1,
  ymin=0, ymax=1.1,
  xlabel={$u$},
  ylabel={$h_i(1-u)$},
  axis lines=left,
  every axis x label/.style={at={(ticklabel* cs:1.0)}, anchor=north west},
  every axis y label/.style={at={(ticklabel* cs:1.0)}, anchor=south east},
  xtick={0, 1.0},
  extra x ticks={0.395, 0.775},
  extra x tick labels={{}, {}},
  extra x tick style={tick label style={font=\normalsize}},
  ytick={0, 0.2, 0.4, 0.6, 0.8, 1.0},
  tick label style={font=\normalsize},
  label style={font=\normalsize},
  legend style={
    font=\normalsize,
    at={(0.97,0.97)},
    anchor=north east,
    draw=gray!50,
    fill=white,
    fill opacity=0.92,
    cells={anchor=west},
  },
  clip=false,
  samples=200,
  domain=0:1,
]


\addplot[name path=env1, draw=none, domain=0:0.395, samples=60, forget plot]
  {sqrt(1-x)};
\addplot[name path=zero1, draw=none, domain=0:0.395, forget plot] {0};
\addplot[blue!15, forget plot] fill between[of=env1 and zero1];

\addplot[name path=env2, draw=none, forget plot]
  table[x=u, y=g2, col sep=comma] {
    u,     g2
    0.395, 0.775
    0.42,  0.759
    0.44,  0.743
    0.46,  0.726
    0.48,  0.709
    0.50,  0.692
    0.52,  0.674
    0.54,  0.655
    0.56,  0.637
    0.58,  0.617
    0.60,  0.597
    0.62,  0.577
    0.64,  0.556
    0.66,  0.535
    0.68,  0.513
    0.70,  0.490
    0.72,  0.467
    0.74,  0.443
    0.76,  0.418
    0.775, 0.396
  };
\addplot[name path=zero2, draw=none, domain=0.395:0.775, forget plot]
  {0};
\addplot[red!15, forget plot] fill between[of=env2 and zero2];

\addplot[name path=env3, draw=none, domain=0.775:1, samples=60, forget plot]
  {1-x^2};
\addplot[name path=zero3, draw=none, domain=0.775:1, forget plot] {0};
\addplot[teal!15, forget plot] fill between[of=env3 and zero3];


\addplot[thick, blue, densely dotted]
  {sqrt(1-x)};
\addlegendentry{$h_1(1\!-\!u)$}

\addplot[thick, red, dashed]
  table[x=u, y=g2, col sep=comma] {
    u,     g2
    0.000, 1.000
    0.005, 0.999
    0.020, 0.995
    0.040, 0.988
    0.060, 0.980
    0.080, 0.972
    0.100, 0.963
    0.120, 0.953
    0.140, 0.943
    0.160, 0.933
    0.180, 0.922
    0.200, 0.910
    0.220, 0.898
    0.240, 0.886
    0.260, 0.874
    0.280, 0.861
    0.300, 0.847
    0.320, 0.833
    0.340, 0.819
    0.360, 0.805
    0.380, 0.790
    0.400, 0.774
    0.420, 0.759
    0.440, 0.743
    0.460, 0.726
    0.480, 0.709
    0.500, 0.692
    0.520, 0.674
    0.540, 0.655
    0.560, 0.637
    0.580, 0.617
    0.600, 0.597
    0.620, 0.577
    0.640, 0.556
    0.660, 0.535
    0.680, 0.513
    0.700, 0.490
    0.720, 0.467
    0.740, 0.443
    0.760, 0.418
    0.780, 0.393
    0.800, 0.366
    0.820, 0.339
    0.840, 0.311
    0.860, 0.281
    0.880, 0.250
    0.900, 0.217
    0.920, 0.183
    0.940, 0.146
    0.960, 0.106
    0.980, 0.060
    0.990, 0.034
    0.995, 0.019
  };
\addlegendentry{$h_2(1\!-\!u)$}

\addplot[thick, teal, dashdotted]
  {1-x^2};
\addlegendentry{$h_3(1\!-\!u)$}


\addplot[line width=1.6pt, blue, solid, domain=0:0.395,
  samples=60, forget plot]
  {sqrt(1-x)};

\addplot[line width=1.6pt, red, solid, forget plot]
  table[x=u, y=g2, col sep=comma] {
    u,     g2
    0.395, 0.775
    0.42,  0.759
    0.44,  0.743
    0.46,  0.726
    0.48,  0.709
    0.50,  0.692
    0.52,  0.674
    0.54,  0.655
    0.56,  0.637
    0.58,  0.617
    0.60,  0.597
    0.62,  0.577
    0.64,  0.556
    0.66,  0.535
    0.68,  0.513
    0.70,  0.490
    0.72,  0.467
    0.74,  0.443
    0.76,  0.418
    0.775, 0.396
  };

\addplot[line width=1.6pt, teal, solid, domain=0.775:1,
  samples=60, forget plot]
  {1-x^2};

\addlegendimage{line width=1.6pt, solid}
\addlegendentry{$h_\wedge(1\!-\!u)$}

\addplot[thin, densely dotted, black, forget plot]
  coordinates {(0.395, 0) (0.395, 0.778)};
\addplot[thin, densely dotted, black, forget plot]
  coordinates {(0.775, 0) (0.775, 0.396)};

\draw[fill=blue!15, draw=black, thin]
  (axis cs:0,-0.275) rectangle (axis cs:0.395,-0.175);
\draw[fill=red!15, draw=black, thin]
  (axis cs:0.395,-0.275) rectangle (axis cs:0.775,-0.175);
\draw[fill=teal!15, draw=black, thin]
  (axis cs:0.775,-0.275) rectangle (axis cs:1,-0.175);

\node[font=\small, anchor=center]
  at (axis cs:0.200, -0.225) {Agent 1};
\node[font=\small, anchor=center]
  at (axis cs:0.585, -0.225) {Agent 2};
\node[font=\small, anchor=center]
  at (axis cs:0.885, -0.225) {Agent 3};

\node[font=\small, anchor=east, inner sep=1pt]
  at (axis cs:-0.04, -0.225) {Layer:};

\draw[-stealth, thin, gray!70]
  (axis cs:0.15, -0.315) -- (axis cs:0.85, -0.315);
\node[font=\small, gray!60!black, anchor=north]
  at (axis cs:0.50, -0.318) {increasing aggregate loss};

\end{axis}
\end{tikzpicture}
\caption{Concave distortions, plotted as $h_i(1-u)$ for $i \in [3]$ with their lower envelope $h_\wedge(1-u)=\min_{i\in[3]} h_i(1-u)$. The bold curve gives the layer allocation of Theorem~\ref{thm:lauzier-thm3}: Agent~1 absorbs the attritional layers, Agent~2 the middle layers, and Agent~3 the tail layers.}
\label{fig:lower-envelope}
\end{figure}

In the context of distortion risk measures, the canonical pure risk allocation can be identified directly from the distortions. Once $h_\wedge$ is known, every marginal layer of the aggregate loss is assigned to the agent or agents attaining the lower envelope at that tail probability.

The same lower envelope is also the dual object behind the pricing rule in Section~\ref{ss:dual-infconv-pricing}. For a concave distortion $h$, define its dual envelope by
\begin{equation}\label{eq:distortion-dual-envelope}
    \mathcal D_h := \left\{\pi\in L^1_+:\ \E[\pi]=1,\ 
    \E[\pi\id_E]\le h(\P(E))\ \text{for all }E\in\mathcal F\right\}.
\end{equation}

For a nonempty coalition $A\subseteq[n]$ with $\rho_i=\rho_{h_i}$ for $i\in A$, the dual coalition envelope
$$ \bigcap_{i\in A}\mathcal D_{h_i} = \left\{\pi\ge0:\ \E[\pi]=1,\ \E[\pi\id_E]\le h_A(\P(E))\ \text{for all }E\in\mathcal F\right\},
$$
where $h_A(t):=\min_{i\in A}h_i(t)$ for $t\in[0,1]$. Since each $h_i$ is concave for $i \in A$, $h_A$ is also concave, and $\dsquare_{i\in A}\rho_i (Y)=\rho_{h_A}(Y)$ and $C(A)=\rho_{h_A}(X_A)$. A dual optimizer $\pi_A^\star=\d\Q_A^\star/\d\P$ therefore gives the lower-envelope pricing selector $C_i^\star(A)=\E^{\Q_A^\star}[\zeta_i]$ for $i\in A$.

\begin{proposition}\label{prop:distortion-dual-lower-envelope}
Let $A\subseteq[n]$ be a nonempty coalition. For each $i\in A$, let $h_i$ be a concave distortion, and let $h_A:=\min_{i\in A}h_i$. Then $\rho_h(Y)=\sup_{\pi\in\mathcal D_h}\E[\pi Y]$ for every concave distortion $h$ and every $Y\in\X$ for which the two sides are finite, and
$\mathcal D_{h_A}=\bigcap_{i\in A}\mathcal D_{h_i}$. Whenever $C(A)=\rho_{h_A}(X_A)$, the coherent dual representation of the coalition inf-convolution is $C(A)=\sup_{\pi\in\mathcal D_{h_A}}\E[\pi X_A]$.
\end{proposition}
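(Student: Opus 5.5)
The plan has three steps: establish the dual representation of a single concave distortion functional, identify the intersection of the dual sets with $\mathcal D_{h_A}$, and then combine the two.

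\textbf{Step 1: dual representation.} Fix a concave distortion $h$. The capacity $\nu_h:=h\circ\P$ is submodular, because concavity of $h$ yields $h(\P(E\cup F))+h(\P(E\cap F))\le h(\P(E))+h(\P(F))$. By Schmeidler's theorem, or equivalently the Shapley--Bondareva characterization of the core of a convex game, the Choquet integral with respect to a submodular capacity equals the supremum of $\E^{\Q}[Y]$ over additive set functions in its core $\{\Q:\Q(\Omega)=1,\ \Q(E)\le h(\P(E))\}$.

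Since $\P$ is atomless and $h$ is concave with $h(0)=0$, $h$ is continuous on $(0,1]$. I would first restrict to bounded $Y$ and to $h$ continuous at $0$. In that case every core element satisfies $\Q(E)\le h(\P(E))\to0$ as $\P(E)\to0$, so it is countably additive and absolutely continuous with respect to $\P$. Radon--Nikodym then identifies the core with $\mathcal D_h$.

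For such bounded $Y$, the supremum is attained by the comonotone density $\pi=h'(1-U_Y)$, where $U_Y$ is a uniform random variable generating the quantiles of $Y$. A direct check via Lemma~\ref{lemma:lauzier-lemma3}, or the quantile form $\rho_h(Y)=\int_0^1\VaR_{1-u}(Y)\,\d h(u)$, gives $\E[\pi Y]=\rho_h(Y)$.

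To reach general $Y\in\X$ with both sides finite, I would truncate $Y^{(k)}:=(Y\wedge k)\vee(-k)$. Monotone convergence for the Choquet integral handles the left side. On the right side, $\sup_{\pi}\E[\pi Y^{(k)}]\to\sup_\pi\E[\pi Y]$ follows by monotonicity separately on the positive and negative parts, using finiteness. A jump of $h$ at $0$ is handled by also allowing the comonotone maximizer to concentrate mass. \textbf{This extension is the main obstacle.} If it proves delicate, I would state the identity for $Y$ in the domain where $\rho_h$ is finite and argue through lower semicontinuity together with Proposition~\ref{prop:dual-representation}.

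\textbf{Step 2: intersection of dual sets.} The inclusion $\mathcal D_{h_A}\subseteq\bigcap_{i\in A}\mathcal D_{h_i}$ holds because $h_A\le h_i$. Conversely, if $\pi$ lies in every $\mathcal D_{h_i}$, then for each event $E$ we have $\E[\pi\id_E]\le h_i(\P(E))$ for all $i$. Taking the minimum over $i\in A$ gives $\E[\pi\id_E]\le h_A(\P(E))$, so $\pi\in\mathcal D_{h_A}$.

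We also need $h_A$ to be a concave distortion. It is a minimum of concave, nondecreasing functions, so it is concave and nondecreasing, and $h_A(0)=0$, $h_A(1)=1$.

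\textbf{Step 3: combining.} Apply Step 1 with $h=h_A$ and $Y=X_A$. Whenever $C(A)=\rho_{h_A}(X_A)$, which holds by Theorem~\ref{thm:lauzier-thm3} combined with Proposition~\ref{prop:comonotonic-optimality}, this gives $C(A)=\sup_{\pi\in\mathcal D_{h_A}}\E[\pi X_A]$. By Step 2 this coincides with the coherent representation in Example~\ref{ex:coherent-dual}.
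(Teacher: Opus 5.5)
Your Steps~2 and~3 are exactly the paper's argument. The envelope identity comes from taking the minimum over $i\in A$ of the constraints $\E[\pi\id_E]\le h_i(\P(E))$, and the last claim is the dual representation applied to $h_A$ and $X_A$. You differ only in Step~1. The paper does not prove it; it cites the standard dual representation of coherent distortion risk measures. Your Schmeidler-plus-Radon--Nikodym derivation is therefore extra, and it is sound for bounded $Y$ with $h(0+)=0$.

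The extension to general $Y$ needs two repairs.

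\emph{The reverse inequality under truncation.} Monotone convergence for a fixed $\pi$ gives $\E[\pi Y]\le\rho_h(Y)$, but not the reverse inequality. Nothing in the truncated problem controls how a near-maximizer $\pi_k$ for $Y^{(k)}$ spreads its mass inside $\{Y<-k\}$, so $\E[\pi_k Y]$ need not be close to $\E[\pi_k Y^{(k)}]$. Use instead the single density $\pi^\ast=h'(1-U_Y)$. It is optimal for every $Y^{(k)}$ simultaneously, because $Y^{(k)}$ is a nondecreasing function of $Y$. Monotone convergence on $Y^+\wedge k$ and $Y^-\wedge k$ then gives $\E[\pi^\ast Y]=\rho_h(Y)$.

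\emph{The case $h(0+)>0$.} Here Schmeidler's core contains purely finitely additive elements that are not in $\mathcal D_h\subseteq L^1_+$, so the maximizer cannot literally ``concentrate mass.'' The supremum is not attained and must be approached by densities such as $h(0+)\,\id_{E_\varepsilon}/\P(E_\varepsilon)+(1-h(0+))\tilde\pi$. Here $E_\varepsilon=\{Y>\operatorname{ess\,sup}Y-\varepsilon\}$ and $\tilde\pi$ is the comonotone density of the renormalized distortion $(h-h(0+))/(1-h(0+))$; if $h(0+)=1$, use simply $\id_{E_\varepsilon}/\P(E_\varepsilon)$.

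A shorter self-contained route avoids both Schmeidler and truncation. For any $\pi\in\mathcal D_h$, insert $\E[\pi\id_{\{Y>x\}}]\le h(\P(Y>x))$ into the layer-cake formula for the probability measure $\pi\,\d\P$. This gives $\E[\pi Y]\le\rho_h(Y)$ directly, on both the positive and negative parts. When $h(0+)=0$, the comonotone density attains equality on every upper level set $\{Y>x\}$, the property recorded in Theorem~\ref{thm:pricing-kernel}. That yields the reverse inequality for all $Y$ with $\rho_h(Y)$ finite.
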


\begin{proof}
The first assertion is the dual representation of coherent distortion risk measures. For the envelope identity, the constraints $\E[\pi\id_E]\le h_i(\P(E))$ for every $i\in A$ are equivalent to the single constraint $\E[\pi\id_E]\le \min_{i\in A}h_i(\P(E))=h_A(\P(E))$ for every $E\in\mathcal F$. The final identity is the same coherent dual representation applied to the lower-envelope coalition cost.
\end{proof}

Thus, the lower-envelope distortion determines both sides of the construction: it gives the primal efficient allocation in Theorem~\ref{thm:lauzier-thm3}, and its dual envelope gives the Arrow--Debreu price used by the allocation in Section~\ref{ss:dual-infconv-pricing}. By Proposition~\ref{prop:infconv-attainment}, coalition costs are finite and attained for every coalition under Assumption \ref{ass:standing}. In the distortion specialization, Theorem~\ref{thm:lauzier-thm3} identifies the value and an explicit lower-envelope optimizer.

\begin{theorem}\label{thm:pricing-kernel}
Let $\rho_i=\rho_{h_i}$, $i\in[n]$, be coherent distortion risk measures, and let $h_A:=\min_{i\in A}h_i$ for every nonempty coalition $A\subseteq[n]$. Let $\Q_A^\star$ be an additive dual optimizer, that is, a probability measure with density $\pi_A^\star=\d\Q_A^\star/\d\P\in\mathcal D_{h_A}$ attaining $\E[\pi_A^\star X_A]=\rho_{h_A}(X_A)$; it satisfies $\Q_A^\star(E)\le h_A(\P(E))$ for every $E\in\mathcal F$, with equality on the upper level events $\{X_A>x\}$, $x\in\R$. Then:
\begin{enumerate}[label=\textup{(\roman*)},ref=\roman*]
\item\label{it:dual-domination} $\E^{\Q_A^\star}[Y]\le\rho_{h_i}(Y)$ for every $i\in A$ and every $Y\in\X$.
\item\label{it:coalition-core-price} The cost shares $C_i^\star(A):=\E^{\Q_A^\star}[\zeta_i]$, $i\in A$, form a cost-core selector:
$$ \sum_{i\in A} C_i^\star(A)=C(A), \qquad \sum_{i\in S} C_i^\star(A)\le C(S), \quad \varnothing\ne S\subseteq A.$$
\item\label{it:zero-excess} For every $\bm X^{A}\in\Opt_A(X_A)$ and every $i\in A$, $\rho_{h_i}(X_i^{A})=\E^{\Q_A^\star}[X_i^{A}]$.
\end{enumerate}
\end{theorem}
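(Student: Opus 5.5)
The plan is to read everything off the coherent dual representation in Proposition~\ref{prop:distortion-dual-lower-envelope} together with the envelope identity $\mathcal D_{h_A}=\bigcap_{i\in A}\mathcal D_{h_i}$. The three items are the distortion versions of Theorem~\ref{thm:dual-AD-support} and Theorem~\ref{thm:dual-core-allocation} with zero penalties. So the proof consists of checking that $\pi_A^\star$ is dual feasible for each individual agent and for each subcoalition, and then using Fenchel equalities.

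\textbf{Item (i).} Since $\pi_A^\star\in\mathcal D_{h_A}=\bigcap_{i\in A}\mathcal D_{h_i}$, we have $\pi_A^\star\in\mathcal D_{h_i}$ for every $i\in A$. Applying $\rho_{h_i}(Y)=\sup_{\pi\in\mathcal D_{h_i}}\E[\pi Y]$ then gives $\E^{\Q_A^\star}[Y]\le\rho_{h_i}(Y)$.

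There is one caveat: Proposition~\ref{prop:distortion-dual-lower-envelope} is stated for $Y$ where both sides are finite. If $\rho_{h_i}(Y)=+\infty$ the inequality is trivial. Otherwise I would argue directly from the layer-cake formula. Write $\E^{\Q}[Y]=\int_0^\infty \Q(Y>x)\,\d x-\int_{-\infty}^0(1-\Q(Y>x))\,\d x$ and use $\Q(Y>x)\le h_i(\P(Y>x))$ termwise. The negative part works because $1-\Q(E)\ge 1-h_i(\P(E))$, so both integrands move in the right direction, and this reproduces \eqref{eq:choquet-integral}. This direct argument is what I would actually write, because it avoids integrability issues.

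\textbf{Item (ii).} For efficiency:
\[
\sum_{i\in A}C_i^\star(A)=\E^{\Q_A^\star}[X_A]=\rho_{h_A}(X_A)=C(A),
\]
using the attainment hypothesis and $C(A)=\rho_{h_A}(X_A)$, which holds by Theorem~\ref{thm:lauzier-thm3} and Proposition~\ref{prop:comonotonic-optimality}.

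For the subcoalition inequalities, take $\varnothing\ne S\subseteq A$. Then $h_A\le h_S$ pointwise, so $\mathcal D_{h_A}\subseteq\mathcal D_{h_S}$. Linearity of expectation gives
\[
\sum_{i\in S}C_i^\star(A)=\E^{\Q_A^\star}[X_S]\le\sup_{\pi\in\mathcal D_{h_S}}\E[\pi X_S]=\rho_{h_S}(X_S)=C(S).
\]
Equivalently, one can apply item (i) with $h_S$ in place of $h_i$, using the same layer-cake argument.

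\textbf{Item (iii).} Take $\bm X^A\in\Opt_A(X_A)$. Summing item (i) over $i\in A$ gives
\[
\sum_{i\in A}\rho_{h_i}(X_i^A)\ge\E^{\Q_A^\star}\Big[\sum_{i\in A} X_i^A\Big]=\E^{\Q_A^\star}[X_A]=C(A).
\]
By optimality the left-hand side equals $C(A)$, so the sum of nonnegative gaps $\rho_{h_i}(X_i^A)-\E^{\Q_A^\star}[X_i^A]$ is zero. Each gap therefore vanishes.

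\textbf{Main obstacle.} The only delicate point is the finiteness and integrability bookkeeping in item (i): $\E^{\Q_A^\star}[Y]$ must be well defined for a general $Y\in L^p$. I would handle this by first proving the inequality for $Y$ bounded below via the layer-cake comparison, and then passing to general $Y$ by truncation $Y\vee(-k)$ with monotone convergence on both sides. The equality of $\Q_A^\star$ and $h_A\circ\P$ on the level sets $\{X_A>x\}$ is not needed beyond the attainment $\E[\pi_A^\star X_A]=\rho_{h_A}(X_A)$. It is simply the layer-cake reason why attainment holds, and I would only remark on it.
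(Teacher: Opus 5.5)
Your proof is correct and follows the paper's route. Domination comes from $\pi_A^\star\in\mathcal D_{h_A}=\bigcap_{i\in A}\mathcal D_{h_i}$, and the cost-core property comes from dual feasibility of $\pi_A^\star$ for every subcoalition: the paper cites Proposition~\ref{prop:dual-pricing-PMAS}, while you use $\mathcal D_{h_A}\subseteq\mathcal D_{h_S}$ directly. Zero excess then follows by summing the domination inequalities over an optimal allocation. Your layer-cake and truncation argument adds rigor to item (i) that the paper leaves implicit.
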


\begin{proof}[Proof of Theorem~\ref{thm:pricing-kernel}]
By Proposition~\ref{prop:distortion-dual-lower-envelope}, the lower-envelope distortion case is the coherent specialization of the dual framework in Section~\ref{ss:dual-infconv-pricing}, with coalition envelope $\mathcal D_{h_A}=\cap_{i\in A}\mathcal D_{h_i}$. The domination statement follows from $\pi_A^\star\in\mathcal D_{h_i}$ for every $i\in A$. The cost-core selector statement is Proposition~\ref{prop:dual-pricing-PMAS} applied to the case of coherent risk measures.

For the zero-excess identity, let $\bm X^{A}\in\Opt_A(X_A)$. By domination, $\E^{\Q_A^\star}[X_i^{A}]\le\rho_{h_i}(X_i^{A})$ for each $i\in A$. Summing over $A$ gives
$$C(A) = \E^{\Q_A^\star}[X_A] = \sum_{i\in A}\E^{\Q_A^\star}[X_i^{A}] \le \sum_{i\in A}\rho_{h_i}(X_i^{A}) = C(A).$$
Hence every summand satisfies $\rho_{h_i}(X_i^{A})=\E^{\Q_A^\star}[X_i^{A}]$.
\end{proof}

We note that $\pi_A^\star$ is the dual pricing element corresponding to the optimizer $\Q_A^\star$ used by \citet{chenhuwang2017stable} in the coherent distortion setting; in the absolutely continuous case, $\pi_A^\star=\d\Q_A^\star/\d\P$. The condition $\pi_A^\star\in\cap_{i\in A}\mathcal D_{h_i}$ is exactly simultaneous domination by every agent's risk measure in the coalition. The zero-excess property is the corresponding equality at an efficient allocation, and it converts $\Q_A^\star$-expectation transfers into coalitionally stable transfers. This parallels Arrow--Debreu principles, which support efficient allocations via state-price densities \citep{mas1995microeconomic}, here tailored to coherent distortion risk measures.

\begin{theorem}\label{thm:Qstar-core}
Let $\rho_i=\rho_{h_i}$, $i\in[n]$, be coherent distortion risk measures. Let $\Q^\star$ be the lower-envelope pricing measure for the grand coalition, and let $\nu$ be the gain game from Definition~\ref{def:coalitional-game}. The lower-envelope pricing welfare-gain vector $\bm{w}^{\Q^\star}\in\R^n$ defined by $w_i^{\Q^\star}:=\rho_{h_i}(\zeta_i)-\E^{\Q^\star}[\zeta_i]$, $i\in[n]$, belongs to $\Core(\nu)$.
\end{theorem}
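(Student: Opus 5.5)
The plan is to reduce Theorem~\ref{thm:Qstar-core} to the coherent case of Theorem~\ref{thm:dual-core-allocation} with $B=[n]$, using the lower-envelope identification of Appendix~\ref{app:distortion-background}. By Proposition~\ref{prop:distortion-dual-lower-envelope}, each $\rho_{h_i}$ has indicator penalty $\alpha_i=\delta_{\mathcal D_{h_i}}$. Hence the aggregate penalty for the grand coalition is the indicator of $\mathcal D_{h_\wedge}=\bigcap_{i\in[n]}\mathcal D_{h_i}$. The lower-envelope pricing measure $\Q^\star$ has density $\pi^\star=\d\Q^\star/\d\P\in\mathcal D_{h_\wedge}$ attaining $\E[\pi^\star X]=\rho_{h_\wedge}(X)=C([n])$, which is the content of Theorem~\ref{thm:pricing-kernel} with $A=[n]$. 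Therefore $\alpha_i(\pi^\star)=0$ for every $i$, and $C_i^\star([n])=\E^{\Q^\star}[\zeta_i]$.

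Rather than invoke Theorem~\ref{thm:dual-core-allocation} as a black box (it needs the measurability part of Assumption~\ref{ass:dual-pricing}, which is not needed here), I will verify the core inequalities directly in two steps.

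\emph{Efficiency.} We have $\sum_{i}w_i^{\Q^\star}=\sum_i\rho_{h_i}(\zeta_i)-\E^{\Q^\star}[X]=\sum_i\rho_{h_i}(\zeta_i)-C([n])=\nu([n])$. This uses linearity of expectation and the optimality of $\pi^\star$.

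\emph{Coalitional rationality.} Fix a nonempty $S\subseteq[n]$ and take any allocation $\bm Y\in\A_{|S|}(X_S)$. Since $\pi^\star\in\mathcal D_{h_i}$ for every $i\in S$, the domination statement of Theorem~\ref{thm:pricing-kernel}\,(\ref{it:dual-domination}) gives $\E^{\Q^\star}[Y_i]\le\rho_{h_i}(Y_i)$. Summing over $i\in S$ yields $\E^{\Q^\star}[X_S]\le\sum_{i\in S}\rho_{h_i}(Y_i)$. Taking the infimum over $\bm Y$ gives
\[
\sum_{i\in S}\E^{\Q^\star}[\zeta_i]\le C(S).
\]
Subtracting this from $\sum_{i\in S}\rho_{h_i}(\zeta_i)$ gives $\sum_{i\in S}w_i^{\Q^\star}\ge\nu(S)$, which places $\bm w^{\Q^\star}$ in $\Core(\nu)$.

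The main obstacle is justifying that $\Q^\star$ exists as an element of $\bigcap_i\mathcal D_{h_i}$ attaining the supremum at $X$. Equivalently, one needs the identity $\mathcal D_{h_\wedge}=\bigcap_i\mathcal D_{h_i}$ together with attainment of $\sup_{\pi\in\mathcal D_{h_\wedge}}\E[\pi X]$. Attainment is granted by Assumption~\ref{ass:distortion}, which includes Assumption~\ref{ass:dual-pricing}. If an explicit construction is preferred, I would take $\pi^\star=h_\wedge'(1-U_X)$, where $U_X$ is a distributional transform of $X$. The concavity of $h_\wedge$ (a minimum of concave functions) gives $\E[\pi^\star\id_E]\le h_\wedge(\P(E))$, with equality on the upper level sets $\{X>x\}$. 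This yields $\E[\pi^\star X]=\rho_{h_\wedge}(X)$ through the Choquet integral \eqref{eq:choquet-integral}.
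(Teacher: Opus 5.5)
Your argument is correct. It differs from the paper's mainly in how you obtain the subcoalition inequality $\E^{\Q^\star}[X_S]\le C(S)$.

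The paper's proof is a one-line reduction to Theorem~\ref{thm:dual-core-allocation}\,\textup{(\ref{it:dual-gain-core-allocation})} with $B=[n]$. There the indicator penalties vanish on $\bigcap_i\mathcal D_{h_i}=\mathcal D_{h_\wedge}$ by Proposition~\ref{prop:distortion-dual-lower-envelope}. Inside that theorem the inequality is derived on the dual side: $\pi^\star$ is feasible in the coalition-$S$ dual problem, and Proposition~\ref{prop:dual-representation} (the Barrieu--El Karoui penalty identity plus Fenchel--Moreau) identifies the dual supremum with $C(S)$.

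You instead prove the weak-duality inequality from the primal side. You sum the individual dominations $\E^{\Q^\star}[Y_i]\le\rho_{h_i}(Y_i)$ over an arbitrary feasible split of $X_S$ and take the infimum. This is the same device the paper uses for the zero-excess part of Theorem~\ref{thm:pricing-kernel}.

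Your route is more self-contained. Strong duality enters only at the grand coalition, through the defining property $\E^{\Q^\star}[X]=C([n])$. You need nothing about the dual representation of subcoalition inf-convolutions, attainment of $C(S)$, or $\sigma(X)$-measurability of $\pi^\star$. Indeed, summing the individual Fenchel inequalities $\rho_i(Y_i)\ge\E[\pi Y_i]-\alpha_i(\pi)$ in the same way would give part \textup{(\ref{it:dual-cost-core-selector})} of Theorem~\ref{thm:dual-core-allocation} for general convex penalties without Proposition~\ref{prop:dual-representation}. The paper's route buys economy: once the general convex selector theorem is in place, the distortion statement is literally its coherent instance and needs no new argument.

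Your closing paragraph constructing $\pi^\star=h_\wedge'(1-U_X)$ is unnecessary, since the statement takes $\Q^\star$ as given. If you keep it, add two conditions. First, it requires $h_\wedge$ to be absolutely continuous, as the paper itself notes. A concave distortion can jump at $0$ (as for the essential supremum), and then the formula fails to integrate to one. Second, the resulting density must lie in the dual space $\mathcal Z$.
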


\begin{proof}[Proof of Theorem~\ref{thm:Qstar-core}]
This is Theorem~\ref{thm:dual-core-allocation}\,\textup{(\ref{it:dual-gain-core-allocation})} in the case of coherent risk measures, using the envelope identity from Proposition~\ref{prop:distortion-dual-lower-envelope}.
\end{proof}

Theorem~\ref{thm:Qstar-core} is the distortion-risk-measure analogue of the classical result that Walrasian equilibria belong to the core of an exchange economy \citep[Proposition~18.B.1]{mas1995microeconomic}. \citet[Proposition~4.3]{grechuk2013cooperative} proves an analogous inclusion for subdifferential-based pricing in a mean-deviation setting.

Since core membership implies individual rationality, the $\Q^\star$-welfare-gain vector from Theorem~\ref{thm:Qstar-core} produces an implementable transfer rule for every allocation class in $\Opt(X)_{/\sim}$. The $\Q^\star$-pricing rule distributes the full surplus from pooling: each agent is charged the $\Q^\star$-expected value of their endowment and rebated the difference between their autarky evaluation and that price, so the welfare shares aggregate to the grand-coalition gain, $\sum_{i=1}^n w_i^{\Q^\star}=\sum_{i=1}^n \rho_{h_i}(\zeta_i)-C([n])$.


For explicit density formulas, introduce a randomized probability-integral transform of the coalition aggregate.

\begin{assumption}\label{ass:randomized-PIT}
There exists a random variable $V\sim\mathrm{Unif}(0,1)$, defined on the underlying probability space $(\Omega,\mathcal{F},\P)$ and independent of $\sigma(\zeta_1,\dots,\zeta_n)$.
\end{assumption}

For a nonempty coalition $A\subseteq[n]$, define
\begin{equation}\label{eq:randomized-PIT}
U_A \;:=\; 1 - F_{X_A}(X_A{-}) \;-\; V\bigl(F_{X_A}(X_A) - F_{X_A}(X_A{-})\bigr).
\end{equation}
Then $U_A\sim\mathrm{Unif}(0,1)$ by the randomized probability-integral transform \citep[Proposition~1.3]{ruschendorf2013mathematical}. When $h_A$ is absolutely continuous, the density $(h_A)'_+(U_A)$ is the canonical lower-envelope Arrow--Debreu price whenever it attains the dual supremum at $X_A$.

\begin{example}[Expected Shortfall]\label{ex:ES-lower-envelope}
For $i\in[n]$, let agent $i$ use $\ES_{\beta_i}$ with $\beta_i\in[0,1)$, so its distortion is $h_{\beta_i}(u)=\min\{u/(1-\beta_i),1\}$. For a nonempty coalition $A\subseteq[n]$, let $\beta_A:=\min_{i\in A}\beta_i$. Since Expected Shortfall distortions are pointwise ordered by the confidence level, $h_\wedge^A=h_{\beta_A}$, and therefore $C(A)=\ES_{\beta_A}(X_A)$ and $\mathcal D_A=\mathcal D_{h_{\beta_A}}$. The layer allocation assigns all of $X_A$ to the least conservative agents $M_A:=\{i\in A:\beta_i=\beta_A\}$, split equally among them under the tie rule in Theorem~\ref{thm:lauzier-thm3}. If Assumption~\ref{ass:randomized-PIT} holds, the corresponding price density is
$$ \frac{\d\Q_A^\star}{\d\P} = \frac{1}{1-\beta_A}\id_{\{U_A<1-\beta_A\}},$$
where $U_A$ is the randomized probability-integral transform of $X_A$. If $X_A$ is continuous, this is the usual tail measure on the worst $1-\beta_A$ probability mass of $X_A$, and $C_i^\star(A)=\E^{\Q_A^\star}[\zeta_i]$ for $i\in A$.
\end{example}

\section{Other transfer rules}\label{sec:transfer-rules}

Section~\ref{sec:main-results} defines the risk-sharing game and reduces implementation of efficient risk sharing to selecting a welfare-gain split (equivalently, a deterministic transfer) in the welfare simplex, subject to individual-rationality and core constraints. Each such selection defines a transfer rule, a function that maps agent characteristics to balanced-budget transfers on an optimal allocation. This appendix derives the implementation, individual-rationality, and core-stability properties of other canonical transfer rules in the literature, and records counterexamples showing that none of these rules is a PMAS in general.

\subsection{Proportional-cost transfer rules}\label{ss:proportional-cost-transfers}

Section~\ref{ss:mean-proportional-dual} defines the proportional-cost rule and gives a condition under which it is a PMAS. Here, we record its fixed-coalition implementation, IR criterion, and core criterion. If $\bm X^{A}=(X_i)_{i\in A}$ is an efficient allocation for the restricted coalition problem, the deterministic transfer that implements the target evaluated costs in \eqref{eq:PC-cost-share} is
\begin{equation}\label{eq:PC-transfer}
    c_i^{\bm \xi}(A):=C_i^{\bm \xi}(A)-\rho_i(X_i^{A}),
    \qquad i\in A.
\end{equation}
The normalized object in \eqref{eq:PC-cost-share} is the cost per unit of ${\bm \xi}$, not cost per capita unless $\xi_i\equiv1$ for each $i \in [n]$.

The proportional-cost rule equalizes cost per unit of the chosen weight: every member $i\in A$ of a coalition $A\subseteq[n]$ pays $C_i^{\bm \xi}(A)$ and obtains welfare $w_i^\xi(A)=\rho_i(\zeta_i)-C_i^{\bm \xi}(A)$, so costs split proportionally to $\xi$ while welfare gains favour agents whose autarky cost per unit of weight, $\rho_i(\zeta_i)/\xi_i$, most exceeds the coalition rate $C_i^{\bm \xi}(A)/\xi_i$. Choosing $\xi_i\equiv1$ gives an egalitarian per-capita cost split, $\xi_i=\mu_i$ allocates by expected loss, $\xi_i=\rho_i(\zeta_i)$ implements a pay-as-you-stand split under which welfare gains are proportional to autarky cost, and a contractual exposure measure for $\xi_i$ allocates by that exposure.

\begin{theorem}\label{thm:PC-on-class}
Fix a nonempty coalition $A\subseteq[n]$ and let $\bm X^{A}\in\Opt_A(X_A)$. Define $C_i^{\bm \xi}(A)$ and $c_i^{\bm \xi}(A)$ by \eqref{eq:PC-cost-share} and \eqref{eq:PC-transfer}, for $i \in A$. Then:
\begin{enumerate}[label=\textup{(\roman*)},ref=\roman*]
\item $\sum_{i\in A}c_i^{\bm \xi}(A)=0$, so the transfer is budget-balanced.
\item The transferred allocation $\bm X^{A}+\bm c^{\bm \xi}(A)$ is efficient and satisfies $\rho_i(X_i^{A}+c_i^{\bm \xi}(A))=C_i^{\bm \xi}(A)$ for every $i\in A$.
\item $(C_i^{\bm \xi}(A))_{i \in A}$ is the unique allocation in the deterministic-transfer class $\bm X^{A}+H_A$, where $H_A:=\{\bm d\in\R^A:\sum_{i\in A}d_i=0\}$, whose evaluated costs are proportional to $\bm \xi$.
\item It is individually rational for coalition $A$ if and only if $C_i^{\bm \xi}(A)\le \rho_i(\zeta_i)$ for every $i\in A$.
\end{enumerate}
\end{theorem}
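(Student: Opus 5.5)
The plan is to verify the four claims of Theorem~\ref{thm:PC-on-class} in order. Each one reduces to cash additivity together with the efficiency identity $\sum_{i\in A}\rho_i(X_i^A)=C(A)$ for $\bm X^A\in\Opt_A(X_A)$, which Proposition~\ref{prop:infconv-attainment} guarantees is finite and attained.

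For (i), I would sum \eqref{eq:PC-transfer} over $i\in A$. Since $\sum_{i\in A}\xi_i/\xi_A=1$, we get $\sum_{i\in A}C_i^{\bm\xi}(A)=C(A)$, while efficiency gives $\sum_{i\in A}\rho_i(X_i^A)=C(A)$. The difference is therefore zero.

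For (ii), feasibility of $\bm X^A+\bm c^{\bm\xi}(A)$ follows from (i), because adding a zero-sum deterministic vector preserves $\sum_{i\in A}(X_i^A+c_i^{\bm\xi}(A))=X_A$. Cash additivity \eqref{ax:cash-additivity} then gives
\[
\rho_i\bigl(X_i^A+c_i^{\bm\xi}(A)\bigr)=\rho_i(X_i^A)+c_i^{\bm\xi}(A)=C_i^{\bm\xi}(A).
\]
Summing over $i\in A$ shows that the total evaluation equals $C(A)$, so the transferred allocation is efficient. This mirrors the argument in Proposition~\ref{prop:core-implement}.

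For (iii), I would take a general element $\bm X^A+\bm d$ with $\bm d\in H_A$. By cash additivity its evaluation vector is $(\rho_i(X_i^A)+d_i)_{i\in A}$, and the entries sum to $C(A)$. If this vector is proportional to $\bm\xi$, say equal to $t\bm\xi$, then summing gives $t\xi_A=C(A)$. Since $\xi_A>0$, this forces $t=C(A)/\xi_A$, so the evaluated costs coincide with $C_i^{\bm\xi}(A)$ and $d_i=c_i^{\bm\xi}(A)$. Existence of such an element is exactly (ii). The only delicate point I expect is the wording: uniqueness should be read at the level of evaluation vectors, or equivalently transfers, within the fixed class. This holds because the map $\bm d\mapsto\bm r(\bm X^A)+\bm d$ is injective.

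For (iv), individual rationality of the transferred allocation means $\rho_i(X_i^A+c_i^{\bm\xi}(A))\le\rho_i(\zeta_i)$ for every $i\in A$. By (ii) the left side equals $C_i^{\bm\xi}(A)$, which gives the stated equivalence immediately. No step here is a real obstacle; the main care is simply invoking efficiency of $\bm X^A$ in (i).
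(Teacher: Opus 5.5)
Your proposal is correct and follows the paper's proof. Efficiency of $\bm X^A$ together with cash additivity gives (i), (ii) and (iv), and summing the evaluation vector gives uniqueness in (iii). In (iii) you write the evaluation vector as $t\bm\xi$ and deduce $t\xi_A=C(A)$; this is actually the correctly normalized form of the paper's step, which writes the common factor as $a\xi$ with a scalar $\xi$ and sums to $a\xi=C(A)$.
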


\begin{proof}
Since $\bm X^{A}$ is efficient, $\sum_{i\in A}\rho_i(X_i^{A})=C(A)$. Hence
$$ \sum_{i\in A}c_i^{\bm \xi}(A) = \sum_{i\in A}C_i^{\bm \xi}(A)-\sum_{i\in A}\rho_i(X_i^{A})
    = C(A)-C(A)= 0.$$
Cash additivity gives $\rho_i(X_i^{A}+c_i^{\bm \xi}(A))=\rho_i(X_i^{A})+c_i^{\bm \xi}(A)=C_i^{\bm \xi}(A)$. Deterministic balanced transfers preserve the aggregate loss and the objective value, so the transferred allocation remains efficient.

For uniqueness, suppose $\bm d\in H_A$ and the evaluated costs of $\bm X^{A}+\bm d$ are proportional to some constant $\xi>0$, say $\rho_i(X_i^{A}+d_i)=a\xi$ for all $i\in A$. Summing over $A$ and using efficiency gives $a\xi=C(A)$, so $a=C(A)/\xi$ and $d_i=C_i^{\bm \xi}(A)-\rho_i(X_i^{A})=c_i^{\bm \xi}(A)$. Finally, IR is exactly $\rho_i(X_i^{A}+c_i^{\bm \xi}(A))\le\rho_i(\zeta_i)$ for every $i\in A$, which is the stated condition.
\end{proof}



The proportional-cost IR criterion depends only on $C([n])$ and the weights ${\bm \xi}$, not on the representative efficient random allocation within a deterministic-transfer class. This contrasts with other pricing rules, such as actuarial fairness, where IR can depend on the risk premia of the chosen efficient allocation.

\begin{theorem}\label{thm:PC-core}
Fix a nonempty coalition $B\subseteq[n]$. The proportional-cost welfare-gain vector $\bm{w}^{\bm \xi}(B)$ belongs to $\Core(\nu^B)$ if and only if
\begin{equation}\label{eq:PC-core}
    \frac{C(B)}{\xi_B}
    \le
    \frac{C(A)}{\xi_A},
    \qquad \forall \, \varnothing\ne A\subseteq B.
\end{equation}
Equivalently, $\bar C_\xi(B)\le\bar C_\xi(A)$ for every nonempty $A\subseteq B$.
\end{theorem}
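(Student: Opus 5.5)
The plan is to work directly from the definition of the restricted core, translated into costs. By Definition~\ref{def:core} applied to $\nu^B$, we have $\bm w^{\bm\xi}(B)\in\Core(\nu^B)$ if and only if two things hold. First, $\sum_{i\in B}w_i^{\bm\xi}(B)=\nu(B)$. Second, $\sum_{i\in A}w_i^{\bm\xi}(B)\ge\nu(A)$ for every $A\subseteq B$.

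Efficiency holds automatically. Indeed, $\sum_{i\in B}C_i^{\bm\xi}(B)=\frac{\xi_B}{\xi_B}C(B)=C(B)$, so subtracting from $\sum_{i\in B}\rho_i(\zeta_i)$ gives $\nu(B)$. The empty coalition is trivial since $\nu(\varnothing)=0$.

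For each nonempty $A\subseteq B$, the coalitional constraint reads
\[
\sum_{i\in A}\rho_i(\zeta_i)-\sum_{i\in A}C_i^{\bm\xi}(B)\;\ge\;\sum_{i\in A}\rho_i(\zeta_i)-C(A).
\]
The autarky terms are finite since $\zeta_i\in\dom\rho_i$, so they cancel. The constraint is therefore equivalent to the cost-core inequality
\[
\sum_{i\in A}C_i^{\bm\xi}(B)\le C(A).
\]
Substituting \eqref{eq:PC-cost-share} turns the left-hand side into $\xi_A\,C(B)/\xi_B$. Because $\xi_A>0$ (all weights are positive and $A\neq\varnothing$), we may divide by $\xi_A$. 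The constraint then becomes exactly $C(B)/\xi_B\le C(A)/\xi_A$.

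Ranging over all nonempty $A\subseteq B$ yields both directions of the equivalence at once, because every step above is an equivalence. The final restatement is just notation, with $\bar C_\xi(A):=C(A)/\xi_A$.

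The only points needing care are the following. Finiteness of $C(A)$ comes from Proposition~\ref{prop:infconv-attainment}, which ensures the cancellation is legitimate. Positivity of $\xi_A$ justifies the division. No real obstacle is expected; this is the single-coalition version of the computation in the proof of Theorem~\ref{thm:PC-PMAS}.
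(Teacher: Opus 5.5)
Your proposal is correct and follows essentially the same route as the paper: confirm efficiency, cancel the stand-alone terms so each gain-core inequality becomes $\xi_A C(B)/\xi_B\le C(A)$, then divide by $\xi_A>0$. Your explicit notes on the finiteness of the autarky terms and the positivity of $\xi_A$ only make precise two steps the paper's proof leaves implicit.
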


\begin{proof}
Efficiency holds because $\sum_{i\in B}w_i^{\bm \xi}(B)=\sum_{i\in B}\rho_i(\zeta_i)-C(B)=\nu(B)$. For a nonempty $A\subseteq B$, the gain-core inequality is
$$ \sum_{i\in A}\rho_i(\zeta_i)-\xi_A\frac{C(B)}{\xi_B} \ge \sum_{i\in A}\rho_i(\zeta_i)-C(A).$$
Cancelling the stand-alone terms gives $\xi_A C(B)/\xi_B\le C(A)$, which is \eqref{eq:PC-core}. This is the same as the cost-core inequality $\sum_{i\in A}C_i^\xi(B)\le C(A)$. The singleton cases recover the IR conditions because $C(\{i\})=\rho_i(\zeta_i)$.
\end{proof}

\subsection{Expected-value pricing transfers}\label{ss:expected-value-transfers}

The simplest way to specify a transfer rule is to price each agent's position under a probability measure $\Q\ll\P$, where $\P$ is the physical measure from Section~\ref{ss:risk-measures}. Given any such $\Q$, the $\Q$-expected-value transfer rule maps agent characteristics to the balanced-budget transfer $c_i^{\Q}:=\E^{\Q}[\zeta_i]-\E^{\Q}[X_i]$. Note that $\Q^*$ is the Arrow--Debreu pricing measure in \ref{thm:pricing-kernel}, which is one possible $\Q$-pricing transfer rule. The next proposition shows that IR reduces to an agent-wise comparison of $\Q$-risk premia.

\begin{proposition}\label{prop:Q-transfer-criterion}
Fix an efficient allocation $\bm{X}\in\Opt(X)$ and let $\Q\ll\P$ be any probability measure such that $\zeta_i,X_i\in L^1(\Q)$ for each $i \in[n]$. Define
\begin{equation}\label{eq:cQ}
c_i^{\Q}:=\E^{\Q}[\zeta_i]-\E^{\Q}[X_i],\qquad i\in [n].
\end{equation}
Then $\bm{c}^{\Q}\in H$, and $\bm{X}+\bm{c}^{\Q}$ is IR if and only if
\begin{equation}\label{eq:Q-premium-condition}
\rho_i(X_i)-\E^{\Q}[X_i]\le \rho_i(\zeta_i)-\E^{\Q}[\zeta_i],\qquad \forall \, i\in [n].
\end{equation}
\end{proposition}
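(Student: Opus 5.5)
The plan is a direct verification from the definitions; no earlier machinery beyond cash additivity and the feasibility of $\bm X$ is needed. There are two things to show: that $\bm c^{\Q}$ lies in the balanced-budget hyperplane $H$ from \eqref{eq:H}, and that individual rationality \eqref{eq:IR} for the transferred allocation is equivalent to the premium comparison \eqref{eq:Q-premium-condition}.

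For the budget balance, I would sum \eqref{eq:cQ} over $i\in[n]$. Each $\zeta_i$ and $X_i$ lies in $L^1(\Q)$, so linearity of $\E^{\Q}$ applies to the finite sums and gives
$$\sum_{i\in[n]}c_i^{\Q}=\E^{\Q}\Bigl[\sum_{i\in[n]}\zeta_i\Bigr]-\E^{\Q}\Bigl[\sum_{i\in[n]}X_i\Bigr].$$
Since $\bm X\in\A_n(X)$, we have $\sum_i X_i=X=\sum_i\zeta_i$ $\P$-a.s. Because $\Q\ll\P$, this equality also holds $\Q$-a.s., and the two expectations coincide. Hence $\bm c^{\Q}\in H$.

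For the IR equivalence, recall that each $c_i^{\Q}$ is a real constant. Cash additivity (\ref{ax:cash-additivity}) then gives $\rho_i(X_i+c_i^{\Q})=\rho_i(X_i)+\E^{\Q}[\zeta_i]-\E^{\Q}[X_i]$. IR for agent $i$ requires this quantity to be at most $\rho_i(\zeta_i)$. Moving $\E^{\Q}[\zeta_i]$ to the right-hand side turns that requirement into exactly \eqref{eq:Q-premium-condition} for agent $i$. Each rearrangement is an equivalence, so the family of conditions over all $i\in[n]$ is equivalent to IR of $\bm X+\bm c^{\Q}$.

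The only step needing care is finiteness. Efficiency of $\bm X$ together with Proposition~\ref{prop:infconv-attainment} gives that $\sum_i\rho_i(X_i)=C([n])$ is finite, and the lower bound $\rho_i(Y)\ge\E[Y]+\rho_i(0)$ from that proof rules out $-\infty$. Together these make each $\rho_i(X_i)$ finite, and each $\rho_i(\zeta_i)$ is finite by the standing assumption. The subtractions are therefore legitimate. I would also remark that $\bm X+\bm c^{\Q}$ stays efficient, because deterministic balanced transfers leave the total evaluation unchanged; this is not needed for the equivalence itself.
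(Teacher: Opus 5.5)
Your proof is correct and follows the paper's argument: budget balance comes from $\sum_i X_i=\sum_i\zeta_i$, and the IR equivalence comes from cash additivity applied to the constant $c_i^{\Q}$; your observation that $\Q\ll\P$ transfers the a.s.\ identity is a detail the paper leaves implicit. Your finiteness discussion is harmless extra care, though the lower bound is not needed: each $\rho_i$ maps into $(-\infty,\infty]$, so finiteness of $\sum_i\rho_i(X_i)=C([n])$ already forces every term to be finite.
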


\begin{proof}[Proof of Proposition~\ref{prop:Q-transfer-criterion}]
Since $\sum_{i=1}^n \zeta_i=\sum_{i=1}^n X_i=X$, we have $\sum_{i=1}^n c_i^{\Q}=\E^{\Q}[X]-\E^{\Q}[X]=0$. By cash additivity,
$$\rho_i(X_i+c_i^{\Q})=\rho_i(X_i)+c_i^{\Q} =\rho_i(X_i)+\E^{\Q}[\zeta_i]-\E^{\Q}[X_i].$$
Thus $\rho_i(X_i+c_i^{\Q})\le \rho_i(\zeta_i)$ is equivalent to \eqref{eq:Q-premium-condition}. \end{proof}

The coalitional analogue reduces core membership to an inequality for each sub-coalition.

\begin{proposition}\label{prop:supporting-price-core-criterion}
Fix an efficient allocation $\bm{X}^\star\in \Opt(X)$ and let $\Q\ll \P$ be such that $\zeta_i,X_i^\star\in L^1(\Q)$ for each $i\in[n]$. Define
\[
c_i^{\Q}:=\E^{\Q}[\zeta_i]-\E^{\Q}[X_i^\star],
\qquad
w_i^{\Q}:=\rho_i(\zeta_i)-\rho_i(X_i^\star+c_i^{\Q}),
\qquad i\in[n],
\]
and, for each coalition $A\subseteq[n]$, write $X_A:=\sum_{i\in A}\zeta_i$ and $X_A^\star:=\sum_{i\in A}X_i^\star$.
Then the welfare-gain vector $\bm{w}^{\Q}$ belongs to $\Core(\nu)$ if and only if
\begin{equation}\label{eq:general-supporting-price-core-criterion}
C(A)-\E^{\Q}[X_A]
\;\ge\;
\sum_{i\in A}\rho_i(X_i^\star)-\E^{\Q}[X_A^\star],
\qquad \forall A\subseteq[n].
\end{equation}
Equivalently,
\begin{equation}\label{eq:general-supporting-price-core-criterion-sum}
\sum_{i\in A}\Big(\rho_i(X_i^\star)-\E^{\Q}[X_i^\star]\Big)
\;\le\;
C(A)-\E^{\Q}[X_A],
\qquad \forall A\subseteq[n].
\end{equation}
\end{proposition}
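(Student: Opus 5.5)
The plan is to reduce core membership directly to the defining inequalities and then use cash additivity to rewrite everything in terms of $\Q$-risk premia.

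\emph{Step 1: express the welfare shares explicitly.} By cash additivity, $\rho_i(X_i^\star+c_i^{\Q})=\rho_i(X_i^\star)+\E^{\Q}[\zeta_i]-\E^{\Q}[X_i^\star]$. Hence
$$w_i^{\Q}=\rho_i(\zeta_i)-\rho_i(X_i^\star)-\E^{\Q}[\zeta_i]+\E^{\Q}[X_i^\star].$$
As in Proposition~\ref{prop:Q-transfer-criterion}, $\sum_i c_i^{\Q}=0$, because $\sum_i\zeta_i=\sum_i X_i^\star=X$ and $\Q$-expectation is linear on $L^1(\Q)$.

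\emph{Step 2: check efficiency.} Summing the expression for $w_i^{\Q}$ over $[n]$ gives
$$\sum_i w_i^{\Q}=\sum_i\rho_i(\zeta_i)-\sum_i\rho_i(X_i^\star).$$
Since $\bm X^\star\in\Opt(X)$, we have $\sum_i\rho_i(X_i^\star)=C([n])$, so the sum equals $\nu([n])$. The equality constraint in Definition~\ref{def:core} therefore always holds, and core membership is equivalent to the coalitional inequalities alone.

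\emph{Step 3: rewrite the coalitional inequalities.} For each $A\subseteq[n]$, summing the expression for $w_i^{\Q}$ over $A$ and using $\sum_{i\in A}\zeta_i=X_A$ and $\sum_{i\in A}X_i^\star=X_A^\star$ gives
$$\sum_{i\in A}w_i^{\Q}=\sum_{i\in A}\rho_i(\zeta_i)-\sum_{i\in A}\rho_i(X_i^\star)-\E^{\Q}[X_A]+\E^{\Q}[X_A^\star].$$
The requirement $\sum_{i\in A}w_i^{\Q}\ge\nu(A)=\sum_{i\in A}\rho_i(\zeta_i)-C(A)$ then becomes, after cancelling the stand-alone terms and rearranging, exactly \eqref{eq:general-supporting-price-core-criterion}. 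Moving the $\Q$-expectation of $X_A^\star$ inside the sum gives the equivalent form \eqref{eq:general-supporting-price-core-criterion-sum}.

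\emph{Step 4: edge cases.} For $A=\varnothing$ both sides of \eqref{eq:general-supporting-price-core-criterion} are zero, so it is trivial. For $A=[n]$ it holds with equality by Step 2, since $X_{[n]}^\star=X_{[n]}=X$. Every step above is an equivalence, so the two directions follow together.

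I expect no real obstacle. The only points needing care are finiteness and integrability: all terms are finite because $\zeta_i\in\dom\rho_i$, $C(A)$ is finite by Proposition~\ref{prop:infconv-attainment}, $\rho_i(X_i^\star)$ is finite by efficiency, and $\zeta_i,X_i^\star\in L^1(\Q)$ by hypothesis. These guarantee that the cancellations and the use of linearity of $\E^{\Q}$ are legitimate.
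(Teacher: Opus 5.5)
Your proposal is correct and follows the paper's proof essentially line for line. Cash additivity gives the explicit form of $w_i^{\Q}$. Balancedness of $\bm c^{\Q}$ together with efficiency of $\bm X^\star$ gives $\sum_i w_i^{\Q}=\nu([n])$. Summing over $A$ and cancelling the stand-alone terms turns each core inequality into \eqref{eq:general-supporting-price-core-criterion}, and linearity of $\E^{\Q}$ gives the second form. Your remarks on the edge cases $A=\varnothing$ and $A=[n]$, and on the finiteness of all terms, are sound and simply make explicit what the paper leaves implicit.
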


\begin{proof}[Proof of Proposition~\ref{prop:supporting-price-core-criterion}]
Since $\sum_{i=1}^n c_i^\Q=0$ and $\bm{X}^\star$ is efficient, $\sum_{i=1}^n w_i^\Q=\nu([n])$. Cash additivity gives $w_i^{\Q}=\rho_i(\zeta_i)-\rho_i(X_i^\star)-\E^{\Q}[\zeta_i]+\E^{\Q}[X_i^\star]$ for $i\in[n]$. Summing over $i\in A$ and substituting $\nu(A)=\sum_{i\in A}\rho_i(\zeta_i)-C(A)$, the core condition $\sum_{i\in A}w_i^{\Q}\ge\nu(A)$ reduces to \eqref{eq:general-supporting-price-core-criterion} after cancelling $\sum_{i\in A}\rho_i(\zeta_i)$. Linearity of $\E^{\Q}$ gives \eqref{eq:general-supporting-price-core-criterion-sum}.
\end{proof}

For any $\Q\ll\P$, the $\Q$-risk premium $\Pi_i^{\Q}(Y):=\rho_i(Y)-\E^{\Q}[Y]$ is cash-invariant for each $i \in [n]$. It follows that IR and core membership of a $\Q$-expected-value transfer rule are invariant within a fixed deterministic-transfer class, and Proposition~\ref{prop:Q-transfer-criterion} shows that the transfer vector is unique once $\Q$ and the representative class $\bm{X}+H$ are fixed. Nonuniqueness enters across transfer classes. Within the lower-envelope comonotonic family from Theorem~\ref{thm:lauzier-thm3}, ties in the lower envelope generate multiple transfer-equivalence classes: when $|M_x|>1$ on a set of positive $\mu_X$-measure, the equal-split rule~\eqref{eq:lauzier-gi} is only one admissible assignment of marginal layers to minimizers. Any tie-consistent reassignment of these layers produces a different random allocation in $\A_n^+(X)$ and hence a different element of $\Opt(X)_{/\sim}$, while each such class shares the same aggregate cost $\rho_{h_\wedge}(X)$. The lower-envelope Arrow--Debreu pricing measure $\Q^\star$ is still determined by the aggregate loss and the lower-envelope distortion, not by the tie-breaking allocation. What may differ across tied efficient classes is the induced expected-value transfer, since $c_i^{\Q}=\E^{\Q}[\zeta_i]-\E^{\Q}[X_i]$ depends on the random allocation; for $\Q^\star$, the welfare-gain vector remains class-independent because Theorem~\ref{thm:pricing-kernel}\,\textup{(\ref{it:zero-excess})} gives $\Pi_i^{\Q^\star}(X_i)=0$ for every $\bm{X}\in\Opt(X)$ and every $i\in[n]$.

\subsection{Actuarial fairness: transfers under the physical measure}\label{ss:actuarial-fairness}

Actuarial fairness is the special case when $\Q$ is the physical measure $\P$, using a risk-neutral price. The actuarial-fairness rule makes each agent pay a premium equal to the expected value of their allocated loss, preserving expected losses agent by agent. While canonical under risk-neutral evaluation, this rule can conflict with IR when agents use nonlinear functionals. We provide a criterion for compatibility in terms of each agent's risk premium.

\begin{definition}[Actuarial fairness]\label{def:AF}
An allocation $\bm{X}\in\A_n(X)$ is actuarially fair relative to $\bm{\zeta}$ if
\begin{equation}\label{eq:AF}
\E[X_i]=\E[\zeta_i],\qquad \forall i\in [n].
\end{equation}
\end{definition}

\begin{definition}[Risk premium]\label{def:risk-premium}
For each $i\in [n]$, define the risk premium functional $\Pi_i(Y):=\rho_i(Y)-\E[Y]$, for $Y\in\X$.
Cash additivity implies $\Pi_i(Y+c)=\Pi_i(Y)$ for every $i\in[n]$, $Y\in\X$, and $c\in\R$.
\end{definition}

Once an efficient allocation $\bm{X}$ is fixed, actuarial fairness leads to a unique transfer vector within $\bm{X}+H$. Individual rationality then reduces entirely to comparing risk premia before and after reallocation.

\begin{theorem}\label{thm:AF-on-class}
Fix an efficient allocation $\bm{X}\in\Opt(X)$. Define the actuarially fair transfer $c_i^{\AF}:=\E[\zeta_i]-\E[X_i]$ for $i\in [n],$
and set $\bm{X}^{\AF}:=\bm{X}+\bm{c}^{\AF}$.
Then:
\begin{enumerate}[label=\textup{(\roman*)},ref=\roman*]
\item\label{it:AF-unique} $\bm{c}^{\AF}\in H$ and $\bm{X}^{\AF}$ is the unique efficient, actuarially fair allocation in the deterministic-transfer class $\bm{X}+H$.
\item\label{it:AF-IR} $\bm{X}^{\AF}$ is individually rational if and only if
\begin{equation}\label{eq:AF-IR-condition}
\Pi_i(X_i)\le \Pi_i(\zeta_i),\qquad \forall i\in [n].
\end{equation}
\end{enumerate}
\end{theorem}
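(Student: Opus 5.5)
The plan mirrors the proofs of Proposition~\ref{prop:Q-transfer-criterion} and Theorem~\ref{thm:PC-on-class}, specialized to $\Q=\P$. For (\ref{it:AF-unique}), I first check budget balance. Since $\bm X\in\A_n(X)$, we have $\sum_{i\in[n]}X_i=X=\sum_{i\in[n]}\zeta_i$, so linearity of expectation gives $\sum_{i\in[n]}c_i^{\AF}=\E[X]-\E[X]=0$, i.e.\ $\bm c^{\AF}\in H$. Efficiency of $\bm X^{\AF}$ then follows from cash additivity: $\sum_{i}\rho_i(X_i+c_i^{\AF})=\sum_i\rho_i(X_i)+\sum_i c_i^{\AF}=C([n])$, and feasibility is preserved because the transfers sum to zero. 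Actuarial fairness is immediate, since $\E[X_i+c_i^{\AF}]=\E[X_i]+\E[\zeta_i]-\E[X_i]=\E[\zeta_i]$.

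For uniqueness, I take any $\bm d\in H$ such that $\bm X+\bm d$ is actuarially fair. Then $\E[X_i]+d_i=\E[\zeta_i]$ for each $i$, which forces $d_i=c_i^{\AF}$. Every element of the class $\bm X+H$ is efficient (by the same cash-additivity computation), so efficiency imposes no further restriction, and the actuarially fair allocation in the class is unique. The only point needing care is integrability: $X_i,\zeta_i\in\X=L^p\subseteq L^1$, so all expectations are finite.

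For (\ref{it:AF-IR}), IR of $\bm X^{\AF}$ means $\rho_i(X_i+c_i^{\AF})\le\rho_i(\zeta_i)$ for all $i$. By cash additivity, the left side equals $\rho_i(X_i)+\E[\zeta_i]-\E[X_i]$. Rearranging gives $\rho_i(X_i)-\E[X_i]\le\rho_i(\zeta_i)-\E[\zeta_i]$, which is exactly $\Pi_i(X_i)\le\Pi_i(\zeta_i)$ by Definition~\ref{def:risk-premium}. Equivalently, this is Proposition~\ref{prop:Q-transfer-criterion} with $\Q=\P$.

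There is no real obstacle here. The only step worth stating explicitly is that the $\rho_i(X_i)$ are finite, which follows from $\bm X\in\Opt(X)$ together with $C([n])\in\R$ (Proposition~\ref{prop:infconv-attainment}): each $\rho_i(X_i)>-\infty$ by the lower bound used in that proof, so the sum being finite forces each term to be finite. This guarantees that the rearrangement above involves no $\infty-\infty$ expressions.
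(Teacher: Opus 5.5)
Your proof is correct and matches the paper's argument step for step: budget balance from $\sum_i\zeta_i=\sum_iX_i=X$, efficiency of $\bm X^{\AF}$ by cash additivity, uniqueness by solving $\E[X_i]+d_i=\E[\zeta_i]$ componentwise, and part~(\ref{it:AF-IR}) as the case $\Q=\P$ of Proposition~\ref{prop:Q-transfer-criterion}. Your added finiteness remark is harmless, though $\rho_i(X_i)>-\infty$ already holds because each $\rho_i$ takes values in $(-\infty,\infty]$, so you do not need the lower bound from Proposition~\ref{prop:infconv-attainment}.
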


\begin{proof}[Proof of Theorem~\ref{thm:AF-on-class}]
Part~(\ref{it:AF-unique}): since $\sum_{i=1}^n\zeta_i=\sum_{i=1}^n X_i=X$, we have $\sum_{i=1}^n c_i^{\AF}=0$, hence $\bm{c}^{\AF}\in H$. By cash additivity and $\sum_{i=1}^n c_i^{\AF}=0$, $\sum_{i=1}^n\rho_i(X_i^{\AF})=\sum_{i=1}^n\rho_i(X_i)=C([n])$, so $\bm{X}^{\AF}\in\Opt(X)$, and $\E[X_i^{\AF}]=\E[\zeta_i]$ by construction. Uniqueness: if $\bm{X}=\bm{X}^\star+\bm{c}$ with $\bm{c}\in H$ satisfies~\eqref{eq:AF}, then $c_i=\E[\zeta_i]-\E[X_i^\star]=c_i^{\AF}$ for each $i\in[n]$. Part~(\ref{it:AF-IR}) is Proposition~\ref{prop:Q-transfer-criterion} with $\Q=\P$, noting that $\rho_i(X_i)-\E[X_i]=\Pi_i(X_i)$ for $i\in[n]$.
\end{proof}

Actuarial fairness is IR if and only if each agent's risk premium weakly decreases from autarky to an efficient allocation position. 

Theorem~\ref{thm:AF-on-class} highlights a contrast with the common approach in the risk-sharing literature, which takes the aggregate risk $X$ as given and characterizes optimal allocations through the set $\Opt(X)_{/\sim}$ and the inf-convolution $\dsquare_{i\in [n]}\rho_i (X)$. However, without modelling the endowments $\zeta_1, \dots, \zeta_n$, one cannot assess whether efficient allocations can be implemented by simple rules such as actuarial fairness. Theorem~\ref{thm:AF-on-class} also characterizes when actuarial fairness satisfies individual rationality, but IR is the weakest stability requirement, requiring only that no single agent prefers autarky. A stronger requirement is coalitional stability, which requires that no subgroup of agents can collectively improve upon the proposed welfare-gain vector by pooling among themselves. While a general statement is currently unavailable, we can characterize exactly when actuarial fairness yields a welfare-gain vector in the core of the gain game for distortion risk measures.

Risk premia are constant on each transfer-equivalence class because they are cash-invariant. However, $\Opt(X)$ can contain multiple transfer classes (i.e., $\Opt(X)_{/\sim}$ is not a singleton) corresponding to different random allocations, and the risk premia can depend on which class is chosen. The next theorem provides a criterion for the existence of allocations that are simultaneously efficient, actuarially fair, and IR.

\begin{theorem}\label{thm:AF-IR-existence}
There exists an efficient allocation that is both actuarially fair and IR if and only if there exists $\bm{X}\in\Opt(X)$ satisfying
\begin{equation}\label{eq:AF-IR-existence}
\Pi_i(X_i)\le \Pi_i(\zeta_i),\qquad \forall i\in [n].
\end{equation}
\end{theorem}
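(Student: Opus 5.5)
The plan is to reduce both directions to Theorem~\ref{thm:AF-on-class}, using the fact that risk premia are cash-invariant and therefore constant on each transfer-equivalence class.

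For the \emph{if} direction, take $\bm{X}\in\Opt(X)$ satisfying \eqref{eq:AF-IR-existence} and form $\bm{X}^{\AF}=\bm{X}+\bm{c}^{\AF}$ with $c_i^{\AF}=\E[\zeta_i]-\E[X_i]$. Part~(\ref{it:AF-unique}) of Theorem~\ref{thm:AF-on-class} shows that $\bm{X}^{\AF}$ is efficient and actuarially fair. Part~(\ref{it:AF-IR}) shows that it is IR exactly when $\Pi_i(X_i)\le\Pi_i(\zeta_i)$ for every $i$, and this is the hypothesis.

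For the \emph{only if} direction, let $\bm{Y}$ be efficient, actuarially fair and IR. Then $\bm{Y}\in\Opt(X)$. Actuarial fairness gives $\E[Y_i]=\E[\zeta_i]$, so the actuarially fair transfer computed from $\bm{Y}$ is zero and $\bm{Y}^{\AF}=\bm{Y}$. Part~(\ref{it:AF-IR}) of Theorem~\ref{thm:AF-on-class} then converts IR of $\bm{Y}$ into $\Pi_i(Y_i)\le\Pi_i(\zeta_i)$ for all $i$. Hence $\bm{X}:=\bm{Y}$ witnesses \eqref{eq:AF-IR-existence}.

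One can also argue directly, without invoking part~(\ref{it:AF-IR}). IR reads $\rho_i(Y_i)\le\rho_i(\zeta_i)$, and subtracting the equal means $\E[Y_i]=\E[\zeta_i]$ gives the premium inequality immediately. Neither step is a real obstacle; the only care needed is that all evaluations are finite. This follows because $\bm{Y}$ is efficient with $C([n])$ finite (Proposition~\ref{prop:infconv-attainment}), and every $\rho_i(Y_i)$ exceeds $-\infty$ by the lower bound $\rho_i(Y)\ge\E[Y]+\rho_i(0)$ established there.
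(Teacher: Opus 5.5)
Your proposal is correct and follows the paper's route: both directions reduce to Theorem~\ref{thm:AF-on-class}, applied to the transfer class $\bm{X}+H$ of an efficient allocation. Your remark that an efficient actuarially fair allocation is itself the unique actuarially fair point of its own class spells out the ``only if'' step that the paper's one-line proof leaves implicit.
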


\begin{proof}[Proof of Theorem~\ref{thm:AF-IR-existence}]
Fix $\bm{X}\in\Opt(X)$. By Theorem~\ref{thm:AF-on-class}, the unique actuarially fair point in the class $\bm{X}+H$ is IR if and only if \eqref{eq:AF-IR-condition} holds, which is \eqref{eq:AF-IR-existence}. Existence of an efficient AF--IR allocation is therefore equivalent to the existence of at least one efficient transfer class passing this test.
\end{proof}

Notice that if $\Opt(X)$ consists of a single deterministic-transfer class so that $\Opt(X)_{/\sim}$ is a singleton, then AF--IR existence does not depend on the choice of an efficient allocation. In this case, either \eqref{eq:AF-IR-existence} is true, or AF can never satisfy IR.

\begin{corollary}\label{thm:AF-core-characterization}
Let $\bm{X}\in\Opt(X)$. Then the actuarially fair welfare-gain vector $\bm{w}^{\AF}\in\R^n$ with $w_i^{\AF}=\Pi_i(\zeta_i)-\Pi_i(X_i)$ belongs to $\Core(\nu)$ if and only if
\begin{equation}\label{eq:AF-core-condition}
\dsquare_{i\in A}\rho_i(X_A)-\E[X_A]\ge \sum_{i\in A}\left(\rho_i(X_i)-\E[X_i]\right),\qquad \forall\,\varnothing\ne A\subseteq[n].
\end{equation}
In particular, the singleton case $A=\{i\}$ recovers the IR condition~\eqref{eq:AF-IR-condition}.
\end{corollary}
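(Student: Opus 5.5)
This is a direct specialization of Proposition~\ref{prop:supporting-price-core-criterion} with $\Q=\P$, so the plan is to reduce to that result and then carry out a short bookkeeping step. First, I identify the actuarially fair transfer of Theorem~\ref{thm:AF-on-class} as the $\P$-expected-value transfer $c_i^{\P}=\E[\zeta_i]-\E[X_i]$. Integrability is automatic because $\X=L^p\subseteq L^1$ under Assumption~\ref{ass:standing}.

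Next I check that the welfare vector in the statement agrees with the one in Proposition~\ref{prop:supporting-price-core-criterion}. By cash additivity,
$$w_i^{\P}=\rho_i(\zeta_i)-\rho_i(X_i)-\E[\zeta_i]+\E[X_i],$$
which equals $\Pi_i(\zeta_i)-\Pi_i(X_i)=w_i^{\AF}$.

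Proposition~\ref{prop:supporting-price-core-criterion} then says that $\bm w^{\AF}\in\Core(\nu)$ if and only if
$$C(A)-\E[X_A]\ge\sum_{i\in A}\rho_i(X_i)-\E\Big[\sum_{i\in A}X_i\Big]$$
holds for all $A\subseteq[n]$. Since $C(A)=\dsquare_{i\in A}\rho_i(X_A)$ and expectation is linear, this is exactly \eqref{eq:AF-core-condition}.

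Two points need care.

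\begin{itemize}
\item \emph{The empty coalition.} For $A=\varnothing$ both sides vanish, so restricting to nonempty $A$ loses nothing.
\item \emph{The grand coalition.} Efficiency of $\bm X$ gives $\sum_{i\in[n]}\rho_i(X_i)=C([n])$, and $\E[X]$ appears on both sides. So $A=[n]$ holds with equality, which matches the efficiency requirement $\sum_i w_i^{\AF}=\nu([n])$ in the definition of the core.
\end{itemize}

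For the singleton claim, take $A=\{i\}$. Then $C(\{i\})=\rho_i(\zeta_i)$ by Lemma~\ref{lemma:gain-monotone}, and the inequality reads $\rho_i(\zeta_i)-\E[\zeta_i]\ge\rho_i(X_i)-\E[X_i]$. This is exactly $\Pi_i(X_i)\le\Pi_i(\zeta_i)$, i.e., \eqref{eq:AF-IR-condition}.

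There is no real obstacle here. The only subtle point is to make sure that "the left-hand side of \eqref{eq:AF-core-condition}" uses the coalition aggregate $X_A=\sum_{i\in A}\zeta_i$ of endowments, while the right-hand side uses the grand-coalition allocation components $X_i$. This is precisely the distinction between $X_A$ and $X_A^\star$ in Proposition~\ref{prop:supporting-price-core-criterion}. If instead one wants a self-contained argument, the core inequality $\sum_{i\in A}w_i^{\AF}\ge\nu(A)$ can be expanded directly, and cancelling $\sum_{i\in A}\rho_i(\zeta_i)$ gives the same result.
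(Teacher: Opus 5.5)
Your proposal is correct and is essentially the paper's argument. The paper expands $\sum_{i\in A}w_i^{\AF}\ge\nu(A)$ directly, cancels $\sum_{i\in A}\rho_i(\zeta_i)$, and notes that efficiency is automatic because $\bm X\in\Opt(X)$; this is exactly what the proof of Proposition~\ref{prop:supporting-price-core-criterion} does, so your specialization to $\Q=\P$ amounts to the same computation, and your singleton step via $C(\{i\})=\rho_i(\zeta_i)$ matches the paper.
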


\begin{proof}[Proof of Corollary~\ref{thm:AF-core-characterization}]
Since $\bm{X}\in\Opt(X)$, we have $\sum_{i=1}^n\rho_i(X_i)=C([n])$. Hence $\sum_{i=1}^n w_i^{\AF}=\sum_{i=1}^n(\Pi_i(\zeta_i)-\Pi_i(X_i))=\nu([n])$, using $\rho_i=\Pi_i+\E$ and $\sum_i\zeta_i=\sum_iX_i=X$. Thus, only the coalitional inequalities need to be checked. For a nonempty $A\subseteq[n]$, the core inequality $\sum_{i\in A}w_i^{\AF}\ge\nu(A)$ is equivalent, after substituting $w_i^{\AF}=\Pi_i(\zeta_i)-\Pi_i(X_i)$ and $\nu(A)=\sum_{i\in A}\rho_i(\zeta_i)-C(A)$, to $C(A)-\E[X_A]\ge\sum_{i\in A}\Pi_i(X_i)$. Since $C(A)=\dsquare_{i\in A}\rho_i(X_A)$ and $\Pi_i(X_i)=\rho_i(X_i)-\E[X_i]$, this is exactly~\eqref{eq:AF-core-condition}. For $A=\{i\}$, $\dsquare_{i\in A}\rho_i(X_A)=\rho_i(\zeta_i)$, so~\eqref{eq:AF-core-condition} becomes $\Pi_i(\zeta_i)\ge\Pi_i(X_i)$.
\end{proof}


We note that Corollary \ref{thm:AF-core-characterization} does not require coalitional stability of AF for all sub-allocations, as condition \eqref{eq:AF-core-condition} is computed directly on the aggregate loss $X_A$ using the lower envelope $h_\wedge^A$ of the coalition $A\subseteq[n]$. The AF core condition is therefore coalition-specific and must be verified on a coalition-by-coalition basis.


To interpret the IR criterion \eqref{eq:AF-IR-condition}, we record how the sign of the risk premium $\Pi_h=\rho_h-\E$ depends on whether the distortion $h \in \mathcal H$ lies above or below the identity.
The following result is well known in the theory of distortion risk measures, but we include a proof for completeness and to clarify its connection to actuarial fairness.

\begin{lemma}\label{lemma:distortion-premium-sign}
Let $\rho_h$ be the distortion risk measure defined in \eqref{eq:choquet-integral}. Define $\Pi_h(Y):=\rho_h(Y)-\E[Y]$ for bounded $Y\in\X$. Then:
\begin{enumerate}[label=\textup{(\roman*)},ref=\roman*]
\item if $h(u)\ge u$ for all $u\in[0,1]$, then $\Pi_h(Y)\ge 0$ for all bounded $Y\in\X$;
\item if $h(u)\le u$ for all $u\in[0,1]$, then $\Pi_h(Y)\le 0$ for all bounded $Y\in\X$.
\end{enumerate}
\end{lemma}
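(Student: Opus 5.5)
The plan is to work directly with the Choquet integral representation \eqref{eq:choquet-integral} and compare it term by term with the layer-cake representation of the expectation. For a bounded $Y\in\X$, the standard tail-integral identity gives
$$\E[Y]=\int_0^\infty \P(Y>x)\,\d x+\int_{-\infty}^0\bigl[\P(Y>x)-1\bigr]\,\d x,$$
with both integrals over finite ranges because $Y$ is bounded. Since $h(1)=1$, subtracting this from \eqref{eq:choquet-integral} gives
$$\Pi_h(Y)=\int_{-\infty}^{\infty}\Bigl[h\bigl(\P(Y>x)\bigr)-\P(Y>x)\Bigr]\,\d x .$$
Combining the two pieces into this single integral is legitimate, because each integrand vanishes outside a bounded interval. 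Indeed, for $x\ge \operatorname{ess\,sup}Y$ we have $\P(Y>x)=0$ and $h(0)=0$. For $x<\operatorname{ess\,inf}Y$ we have $\P(Y>x)=1$, and both $h(1)-1$ and $1-1$ vanish. The integrand is measurable because it is a composition with the monotone maps $x\mapsto\P(Y>x)$ and $h$.

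The conclusion then follows pointwise. In case (i), $h(u)\ge u$ for all $u\in[0,1]$. Applying this at $u=\P(Y>x)$ shows that the integrand is nonnegative for every $x$, so $\Pi_h(Y)\ge0$. Case (ii) is symmetric: the integrand is nonpositive, so $\Pi_h(Y)\le 0$.

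The only step needing care is the justification of the two tail-integral formulas and of merging the integrals. Boundedness of $Y$ settles this: all integrals are Lebesgue integrals of bounded functions over compact intervals, so finiteness and linearity are immediate. No concavity of $h$ is used, and none of the earlier results of the paper are needed.
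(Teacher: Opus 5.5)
Your proof is correct and follows the paper's argument: both express $\Pi_h(Y)$ as a single tail integral of $h(\P(\cdot>x))-\P(\cdot>x)$ and conclude from the pointwise sign of the integrand. The only cosmetic difference is how the negative part is handled. The paper first uses cash additivity to shift to a nonnegative $Y+m$, so only the positive-part integral appears, whereas you keep the negative-part integral and merge the two pieces using $h(1)=1$ and $h(0)=0$.
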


\begin{proof}[Proof of Lemma~\ref{lemma:distortion-premium-sign}]
Since $Y\in\X$ is bounded, there exists $m\in\R$ such that $Y+m\ge 0$ almost surely. By cash additivity, $\Pi_h(Y)=\Pi_h(Y+m)$. Since $Y+m\ge 0$, the tail representations
$$\rho_h(Y+m)=\int_0^\infty h\big(\P(Y+m>t)\big)\,\d t,\qquad \E[Y+m]=\int_0^\infty \P(Y+m>t)\,\d t$$
yield
$$\Pi_h(Y)=\Pi_h(Y+m)=\int_0^\infty \Big(h\big(\P(Y+m>t)\big)-\P(Y+m>t)\Big)\,\d t.$$
The integrand is pointwise $\ge 0$ when $h(u)\ge u$ for all $u\in[0,1]$, giving (i), and pointwise $\le 0$ when $h(u)\le u$ for all $u\in[0,1]$, giving (ii).
\end{proof}

Distortions above the identity overweight adverse quantiles and produce a nonnegative premium; those below the identity produce a nonpositive premium. This explains why actuarial fairness may systematically violate IR for agents with conservative distortions.

\begin{example}[Actuarial fairness: diversification determines IR]\label{ex:AF-diversification}
Let $n=2$ and consider dual-power distortions $h_i(u)=1-(1-u)^{\theta_i}$ for $i \in [2]$, where $\theta_1=2$ and $\theta_2=3$, so that $h_1\le h_2$ pointwise and agent~1 is the less conservative. Both agents hold symmetric Bernoulli endowments $\P(\zeta_i=1)=1/2$ for $i \in [2]$, giving $\rho_1(\zeta_1)=3/4$, $\rho_2(\zeta_2)=7/8$, and $\Pi_i(\zeta_i)=\rho_i(\zeta_i)-1/2>0$ for both agents. Since $h_1\le h_2$, the inf-convolution equals $\rho_{h_1}(X)$ and the efficient allocation is $\bm{X}=(X,0)$ regardless of the dependence structure; only the law of $X=\zeta_1+\zeta_2$ changes.
\begin{itemize}
  \item Take $\zeta_1=\zeta_2=Y$ for a single $Y\sim\mathrm{Bernoulli}(1/2)$, so that $X=2Y$. The actuarially fair transfer is $c_1^{\AF}=\E[\zeta_1]-\E[2Y]=-1/2$ and $c_2^{\AF}=1/2$. Agent~1 absorbs the full aggregate loss: positive homogeneity gives $\Pi_1(2Y)=2\Pi_1(Y)=1/2>1/4=\Pi_1(\zeta_1)$, violating \eqref{eq:AF-IR-condition}. Comonotonicity eliminates diversification, so the risk premium doubles, while actuarial fairness compensates only the additional expected cost. The actuarially fair welfare-gain vector is $(w_1^{\AF},w_2^{\AF})=(-1/4,\,3/8)$, which violates IR.
  \item Take $\zeta_1=Y$ and $\zeta_2=1-Y$, so that $X=1$ almost surely. The efficient split from the inf-convolution is $\bm{X}=(1,0)$. The actuarially fair transfer is $c_1^{\AF}=-1/2$ and $c_2^{\AF}=1/2$, and both post-transfer evaluations satisfy IR: $\rho_1(1)+c_1^{\AF}=1/2<3/4=\rho_1(\zeta_1)$ and $\rho_2(0)+c_2^{\AF}=1/2<7/8=\rho_2(\zeta_2)$. Equivalently, $\Pi_i(X_i)=0$ for both agents since each receives a constant, so \eqref{eq:AF-IR-condition} holds with strict inequality. Counter-monotonicity creates a perfect hedge: the aggregate is riskless, and all risk premia vanish. The actuarially fair welfare-gain vector is $(w_1^{\AF},w_2^{\AF})=(1/4,\,3/8)$.
\end{itemize}
In both cases, agent~2 cedes all risk and receives the same welfare gain, $w_2^{\AF}=3/8$; the dependence structure affects only agent~1, who bears the aggregate cost of insufficient diversification.
\end{example}

In Example~\ref{ex:AF-diversification}, actuarial fairness satisfies IR if and only if the efficient allocation reduces each agent's risk premium below its stand-alone level. When the endowments are comonotonic, agent~1 absorbs a scaled copy of her own loss and gains no diversification. The risk premium grows, while actuarial fairness compensates only for the additional expected cost, resulting in a negative welfare gain. When the endowments are countermonotonic, pooling eliminates all risk, and the risk premia vanish, so both agents gain. The extreme case is a risk-averse reinsurer entering the pool with no initial exposure, so that the reinsurer's risk premium is zero. Since the efficient allocation assigns her a nonconstant share, her risk premium can only increase; actuarial fairness offers no compensation beyond expected value, so condition~\eqref{eq:AF-IR-condition} necessarily fails. For this reason, actuarial fairness is viable only when pooling provides sufficient diversification to spread the welfare gain among all participants.

In practice, insurers and reinsurers often charge an actuarially fair premium plus a loading that covers operating costs and profit margins. Adding a loading makes IR more likely but does not guarantee it: if the risk premium shortfall exceeds the loading, the agent still loses from participation. A more systematic remedy is to replace the physical measure $\P$ with a pricing measure that accounts for risk aversion, as discussed in Section~\ref{ss:dual-infconv-pricing}.

\subsection{The Shapley value and the Weber set}\label{ss:shapley}
The Shapley value \citep{Shapley1953} is a classical solution concept in cooperative game theory that assigns to each player a payoff reflecting their average marginal contribution across all possible orderings of the players. It is the unique division rule satisfying four axioms: efficiency, meaning the payoffs exhaust the total surplus $\nu([n])$; symmetry, meaning two players who make identical marginal contributions to every coalition receive the same payoff; the null-player property, meaning a player whose marginal contribution to every coalition is zero receives nothing; and additivity, meaning the value of a sum of two games equals the sum of the values. 

\begin{definition}[Shapley value]\label{def:shapley}
For a TU game $([n],\nu)$, the Shapley value $\bm{\Sh}(\nu)\in\R^n$ is defined by
\begin{equation}\label{eq:shapley}
\Sh_i(\nu)
:=\sum_{A\subseteq [n]\setminus\{i\}}
\frac{|A|!\,(n-|A|-1)!}{n!}\,\big(\nu(A\cup\{i\})-\nu(A)\big),\qquad i\in [n].
\end{equation}
Equivalently, $\Sh_i(\nu)$ is the expected marginal contribution of $i$ under a uniformly random permutation of $[n]$.
\end{definition}

The Shapley value depends only on the characteristic function $\nu$, not on the choice of efficient allocation $\bm{X}$; it is therefore uniquely determined on the quotient space $\Opt(X)_{/\sim}$ and, a fortiori, invariant within any transfer class. Since it averages marginal contributions over all orderings of the agents, monotonicity of $\nu$ implies that the Shapley payoffs are nonnegative and sum to $W$. The transfer construction from Theorem~\ref{thm:transfer-IR-geometry} then yields an implementable allocation on any efficient representative, with the induced final allocation invariant within each transfer-equivalence class.

\begin{theorem}\label{thm:shapley-IR}
Define the Shapley gain split by $w_i^{\mathrm{Sh}}:=\Sh_i(\nu)$ for $i\in[n]$ and $\bm{w}^{\mathrm{Sh}}:=(w_i^{\mathrm{Sh}})_{i\in[n]}\in\R^n$. Then:
\begin{enumerate}[label=\textup{(\roman*)},ref=\roman*]
\item\label{it:Sh-nonneg} $\bm{w}^{\mathrm{Sh}}\in\Delta_W$: in particular, $w_i^{\mathrm{Sh}}\ge 0$ for all $i\in [n]$ and $\sum_{i\in [n]} w_i^{\mathrm{Sh}}=W$.
\item\label{it:Sh-transfer} The transfer $c^{\mathrm{Sh}}_i:=b_i-w_i^{\mathrm{Sh}},$ for $i\in [n],$ belongs to $H_{\IR}(\bm{X}_0)$, and $\bm{X}^{\mathrm{Sh}}:=\bm{X}_0+\bm{c}^{\mathrm{Sh}}$ is feasible, efficient, and IR. Its welfare baselines equal the Shapley gains, that is, $b_i(\bm{X}^{\mathrm{Sh}})=w_i^{\mathrm{Sh}},$ for $i\in [n].$
\item\label{it:Sh-invariant} $\bm{w}^{\mathrm{Sh}}$ depends only on $\nu$ and hence is the same for every $\bm{X}\in\Opt(X)$.
\end{enumerate}
\end{theorem}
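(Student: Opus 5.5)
The proof rests on the monotonicity of the gain game from Lemma~\ref{lemma:gain-monotone}, together with the transfer geometry of Theorem~\ref{thm:transfer-IR-geometry} and Proposition~\ref{prop:core-implement}. I will prove the three parts in order.

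For part~(\ref{it:Sh-nonneg}), I will use the permutation form of \eqref{eq:shapley}. Fix an ordering $\sigma$ of $[n]$ and let $P_i^\sigma$ be the set of predecessors of $i$. The marginal vector $m^\sigma_i:=\nu(P_i^\sigma\cup\{i\})-\nu(P_i^\sigma)$ is nonnegative by Lemma~\ref{lemma:gain-monotone}\,(iii), since $P_i^\sigma\subseteq P_i^\sigma\cup\{i\}$. Its coordinates telescope along $\sigma$, so $\sum_i m_i^\sigma=\nu([n])-\nu(\varnothing)=W$, using $\nu(\varnothing)=0$. The Shapley value is the uniform average of the $m^\sigma$. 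Hence it is a convex combination of points of $\Delta_W$, and $\Delta_W$ is convex, which gives $\bm w^{\mathrm{Sh}}\in\Delta_W$. This step requires $W$ finite, which is Lemma~\ref{lemma:gain-monotone}\,(iv).

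For part~(\ref{it:Sh-transfer}), I will apply the bijection of Theorem~\ref{thm:transfer-IR-geometry}\,(\ref{it:simplex}). Because $\bm w^{\mathrm{Sh}}\in\Delta_W$, the vector $c_i^{\mathrm{Sh}}=b_i-w_i^{\mathrm{Sh}}$ lies in $H_{\IR}(\bm X_0)$. Explicitly, $\sum_i c_i^{\mathrm{Sh}}=\sum_i b_i-W=0$ by efficiency of $\bm X_0$, and $c_i^{\mathrm{Sh}}\le b_i$ because $w_i^{\mathrm{Sh}}\ge0$. I will repeat the argument of Proposition~\ref{prop:core-implement}\,(\ref{it:core-transfer}), which uses only $\bm w\in\Delta_W$ and not core membership. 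Cash additivity and $\sum_i c_i=0$ give feasibility and efficiency, since the total evaluation is unchanged and equals $C([n])$. IR holds because $\rho_i(X_{0,i}+c_i^{\mathrm{Sh}})=\rho_i(\zeta_i)-w_i^{\mathrm{Sh}}\le\rho_i(\zeta_i)$. The identity $b_i(\bm X^{\mathrm{Sh}})=b_i-c_i^{\mathrm{Sh}}=w_i^{\mathrm{Sh}}$ follows by substitution.

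For part~(\ref{it:Sh-invariant}), I only need to observe that \eqref{eq:shapley} involves $\nu$ alone. By \eqref{eq:gain-game}, $\nu$ is determined by the autarky evaluations $\rho_i(\zeta_i)$ and the coalition costs $C(A)$. Neither depends on the choice of $\bm X\in\Opt(X)$.

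No step is a real obstacle. The only point needing care is in part~(\ref{it:Sh-transfer}): Proposition~\ref{prop:core-implement} is stated for core elements, so I will rerun its short argument for a general point of $\Delta_W$ rather than cite it directly. I will not claim that the Shapley value lies in the core.
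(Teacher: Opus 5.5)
Your proposal is correct and follows essentially the same route as the paper. Part (i) uses nonnegative, telescoping marginal vectors averaged into $\Delta_W$, part (ii) reruns the argument of Proposition~\ref{prop:core-implement}\,(\ref{it:core-transfer}) using only $\bm{w}\in\Delta_W$, and part (iii) observes that \eqref{eq:shapley} depends on $\nu$ alone.
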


\begin{proof}[Proof of Theorem~\ref{thm:shapley-IR}]
To show part~(\ref{it:Sh-nonneg}), recall that the Shapley value is the average of the marginal vectors induced by all orderings of the agents. Every marginal vector has nonnegative components by monotonicity of $\nu$ (Lemma~\ref{lemma:gain-monotone}) and sums to $\nu([n])=W$, so their average lies in $\Delta_W$.

The argument for part~(\ref{it:Sh-transfer}) is identical to the proof of Proposition~\ref{prop:core-implement}\,(\ref{it:core-transfer}): one has $\sum_{i\in [n]}^n c_i^{\mathrm{Sh}}=\sum_{i\in [n]}^n b_i-\sum_{i\in [n]} w_i^{\mathrm{Sh}}=W-W=0$, and $c_i^{\mathrm{Sh}}\le b_i$ because $w_i^{\mathrm{Sh}}\ge 0$ for every $i\in[n]$. Hence $\bm{c}^{\mathrm{Sh}}\in H_{\IR}(\bm{X}_0)$, and cash additivity yields IR together with the welfare-gain identity.

For part~(\ref{it:Sh-invariant}), observe that the Shapley value~\eqref{eq:shapley} is defined entirely in terms of $\nu$. Neither $\bm{X}_0$ nor the representative within a transfer class enters the definition, so $\bm{w}^{\mathrm{Sh}}$ is the same for every element of $\Opt(X)$.
\end{proof}

The only substantive requirement for the Shapley transfer above is that the gain game be well defined and monotone. Coalitional stability requires more. The standard object linking the Shapley value to the core is the Weber set, which also explains why convexity of the gain game is enough to put the Shapley value in the core.

The Weber set is a relaxation of the core obtained from marginal-contribution vectors along orderings of the agents. Before defining it, we recall the class of convex (supermodular) games, in which the Weber set and the core coincide. Games with increasing marginal value are said to be \emph{supermodular}, or convex.
\begin{definition}[Convex (supermodular) game]\label{def:convex-game}
A TU game $([n],\nu)$ is \emph{convex} if for all coalitions $A,B\subseteq [n]$, $\nu(A)+\nu(B)\;\le\; \nu(A\cup B)+\nu(A\cap B).$
\end{definition}

Let $A\subseteq[n]$ be a given coalition and let $i\in[n]\setminus A$ be an individual agent. Convexity requires that $\nu(A\cup \{i\})\geq \nu(A) +\nu(\{i\}).$ In our risk-sharing game, $\nu(\{i\})=0$, and this condition is always satisfied when adding one agent to the pool. However, a necessary condition for supermodularity is that for every pair of disjoint coalitions $A,B\subseteq[n]$ (that is, $A\cap B=\emptyset$) we have $\nu(A\cup B)\geq \nu(A) +\nu(B),$ so that merging any disjoint coalitions yields greater welfare gains, which is an increasing marginal value interpretation. Further, the convexity of $\nu$ can be defined directly as having increasing marginal contributions: for every $i\in [n]$ and all $A\subseteq B\subseteq [n]\setminus\{i\}$, we have $\nu(A\cup\{i\})-\nu(A)\;\le\; \nu(B\cup\{i\})-\nu(B).$
The latter states that adding an agent to a larger coalition always produces a weakly larger marginal contribution than adding the same agent to a smaller subcoalition. For risk-sharing games, this condition holds when each agent's marginal contribution to the welfare gain is greater in a larger coalition than in a smaller one. We do not pursue general primitive conditions for the convexity of the risk-sharing game, as the property typically fails to hold (especially for large pools), though particular setups may yield it.

Convex TU games have many desirable properties, including being balanced (i.e., having a nonempty core). The converse is not true. The Shapley value always belongs to the core of convex games. For risk-sharing games, the convexity of the risk measures is not necessary to obtain the convexity of the game. This opens up the possibility that some risk-sharing games with non-convex risk measures might admit coalitionally stable allocations. General results on the convexity of risk-sharing games remain scarce. Identifying primitive conditions under which a cost game is convex is an active line of research outside risk sharing. For instance, \citet{ghamami2019submodular} study the allocation of over-the-counter derivative trades across central counterparty portfolios. They give two covariance-matrix conditions under which portfolio standard deviation is submodular on the trade subset lattice: nonpositive off-diagonals and a diagonal-dominance bound.

When the core is nonempty, coalitionally stable allocations exist. The Shapley value provides a unique fair split, but it may lie outside the core when the game is not convex. A natural question is whether there exists a universal outer bound that contains both the core and the Shapley value. The Weber set, introduced by \citet{Weber1988}, is the object in question.

\begin{definition}[Marginal vectors and Weber set]\label{def:weber}
Let $\mathfrak{P}([n])$ be the set of all permutations of $[n]$. For $\mathfrak{p}\in\mathfrak{P}([n])$ and $i\in [n]$, define the predecessor set $P_\mathfrak{p}(i):=\{j\in [n]:\mathfrak{p}(j)<\mathfrak{p}(i)\}$ and the associated \emph{marginal vector} $\bm{m}_\mathfrak{p}(\nu)\in\R^n$ by
$$  m_{\mathfrak{p},i}(\nu):= \nu\bigl(P_\mathfrak{p}(i)\cup\{i\}\bigr)-\nu\bigl(P_\mathfrak{p}(i)\bigr),\qquad i\in [n].$$
Note that $\sum_{i\in [n]} m_{\mathfrak{p},i}(\nu)=\nu([n])$ for every $\mathfrak{p}$. The \emph{Weber set} of $([n],\nu)$ is the convex hull of all marginal vectors:
$$  \Weber(\nu):= \operatorname{conv}\bigl\{\, \bm{m}_\mathfrak{p}(\nu) : \mathfrak{p}\in\mathfrak{P}([n])\,\bigr\}.$$
\end{definition}

Equivalently, $\Weber(\nu)$ is the set of random-order values: $\bm{w}\in \Weber(\nu)$ if and only if $\bm{w}=\sum_{\mathfrak{p}\in \mathfrak{P}([n])} q(\mathfrak{p})\,\bm{m}_\mathfrak{p}(\nu)$ for some probability distribution $q$ on $\mathfrak{P}([n])$. The Shapley value is the barycentre of the Weber set, that is, the random-order value under the uniform distribution, $\bm{\Sh}(\nu)=\frac{1}{n!}\sum_\mathfrak{p} \bm{m}_{\mathfrak{p} \in \mathfrak{P}([n])}(\nu)$, so $\bm{\Sh}(\nu) \in \Weber(\nu)$ always holds.

It is well known that the core, when nonempty, is contained in the Weber set. Welfare-gain vectors in the Weber set are sometimes called \emph{admissible}; core stability is thus a refinement of admissibility. Finally, when the game is convex (supermodular), every marginal vector is itself a core element, so the entire Weber set collapses to the core. The next proposition collects these facts. The inclusion of the core in the Weber set is due to \citet{Weber1988}, and the convex-game equivalence was established by \citet{shapley1971cores} and \citet{Ichiishi1981}.

\begin{proposition}\label{prop:core-weber}
For every TU game $([n],\nu)$, $\Core(\nu)\subseteq \Weber(\nu).$
Moreover, $\nu$ is convex
if and only if $\Core(\nu)=\Weber(\nu)$. In particular, $\bm{\Sh}(\nu)\in \Weber(\nu)$ always.
\end{proposition}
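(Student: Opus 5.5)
The plan has three parts, one for each assertion of Proposition~\ref{prop:core-weber}: the inclusion $\Core(\nu)\subseteq\Weber(\nu)$, the convex-game equivalence, and the trivial Shapley membership.

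\textbf{Shapley membership.} This part is immediate. The Shapley value is the uniform average of the marginal vectors, so it is a convex combination of them and lies in $\Weber(\nu)$.

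\textbf{Inclusion, via separation.} I argue by contradiction. Suppose $\bm w\in\Core(\nu)\setminus\Weber(\nu)$. Since $\Weber(\nu)$ is a polytope, strict separation gives $\bm y\in\R^n$ with $\langle\bm y,\bm w\rangle<\langle\bm y,\bm m_{\mathfrak p}(\nu)\rangle$ for every permutation $\mathfrak p$.

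Order the agents so that $y_{\mathfrak p^{-1}(1)}\ge\cdots\ge y_{\mathfrak p^{-1}(n)}$, and let $S_k$ be the set of the first $k$ agents in this order. Abel summation writes $\langle\bm y,\bm z\rangle$ for any $\bm z$ as
\[
\sum_{k=1}^{n-1}(y_{(k)}-y_{(k+1)})\sum_{i\in S_k}z_i+y_{(n)}\sum_{i\in[n]}z_i,
\]
where $y_{(k)}$ denotes the $k$th largest coordinate.

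For the marginal vector $\bm m_{\mathfrak p}(\nu)$, the partial sums telescope: $\sum_{i\in S_k}m_{\mathfrak p,i}(\nu)=\nu(S_k)$. Core membership of $\bm w$ gives $\sum_{i\in S_k}w_i\ge\nu(S_k)$ and $\sum_{i\in[n]}w_i=\nu([n])$. The Abel weights $y_{(k)}-y_{(k+1)}$ are nonnegative, so $\langle\bm y,\bm w\rangle\ge\langle\bm y,\bm m_{\mathfrak p}(\nu)\rangle$, which is a contradiction.

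\textbf{Convexity implies every marginal vector is in the core.} Fix $\mathfrak p$ and a coalition $A$. List $A=\{a_1,\dots,a_k\}$ in $\mathfrak p$-order and set $T_j:=\{a_1,\dots,a_j\}$. Each $T_{j-1}$ is contained in $P_{\mathfrak p}(a_j)$, so the increasing-marginal-contributions form of convexity gives
\[
m_{\mathfrak p,a_j}(\nu)\ge \nu(T_j)-\nu(T_{j-1}).
\]
Summing over $j$ yields $\sum_{i\in A}m_{\mathfrak p,i}(\nu)\ge\nu(A)$. Efficiency of the marginal vector already holds. The core is convex, so it contains $\Weber(\nu)$, and together with the inclusion above, $\Core(\nu)=\Weber(\nu)$.

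\textbf{Core equals Weber set implies convexity.} This is the direction I expect to be the main obstacle. Fix $A,B$. Choose a permutation that lists $A\cap B$ first, then $A\setminus B$, then $B\setminus A$, then the remaining agents. Under this order, the partial sums of the marginal vector $\bm m$ over $A\cup B$ and over $A\cap B$ are the telescoped values $\nu(A\cup B)$ and $\nu(A\cap B)$. The same holds for the initial segment $A$, since $A$ is an initial segment of this order.

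Since $\bm m\in\Weber(\nu)=\Core(\nu)$, the core constraint on $B$ gives $\sum_{i\in B}m_i\ge\nu(B)$. Splitting $B$ into $A\cap B$ and $B\setminus A$ gives
\[
\sum_{i\in B}m_i=\nu(A\cap B)+\bigl[\nu(A\cup B)-\nu(A)\bigr].
\]
Combining the two yields $\nu(A)+\nu(B)\le\nu(A\cup B)+\nu(A\cap B)$, which is supermodularity. The delicate point is choosing this ordering so that the telescoping works; with it, the argument closes.
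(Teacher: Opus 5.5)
Your proof is correct and complete. The paper does not prove this proposition at all: it attributes $\Core(\nu)\subseteq\Weber(\nu)$ to Weber (1988) and the convex-game equivalence to Shapley (1971) and Ichiishi (1981), so your argument is a self-contained replacement for a citation.

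\textbf{The inclusion.} You use a separating-hyperplane argument and choose the permutation by sorting the coordinates of the separating vector in decreasing order. Abel summation then turns the core inequalities on initial segments into $\langle\bm y,\bm w\rangle\ge\langle\bm y,\bm m_{\mathfrak p}(\nu)\rangle$. This is essentially Derks's (1992) short separation proof rather than Weber's original argument, and it is very economical.

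\textbf{The equivalence.} Both halves are the classical Shapley and Ichiishi arguments. Your ordering $A\cap B$, then $A\setminus B$, then $B\setminus A$, then the rest is exactly what makes the three relevant partial sums telescope.

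Two small points are worth making explicit:
\begin{itemize}
\item All your telescoping steps use $\nu(\varnothing)=0$. The paper implicitly assumes this when it states that every marginal vector sums to $\nu([n])$.
\item The paper's Definition~\ref{def:convex-game} is the supermodular form. You should record the one-line derivation of the increasing-marginal-contributions form you invoke: for $S\subseteq T\subseteq[n]\setminus\{i\}$, apply supermodularity to $S\cup\{i\}$ and $T$. Their union is $T\cup\{i\}$ and their intersection is $S$. For the converse direction you derive supermodularity directly, so no further equivalence is needed.
\end{itemize}

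Compared with the paper's citation, your version shows that the proposition uses nothing about the risk-sharing structure, only that $\nu$ is a set function with $\nu(\varnothing)=0$.
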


In this paper, payoffs $\bm{w}$ are post-transfer welfare shares relative to autarky: for an efficient allocation $\bm{X}$ and a balanced transfer $\bm{c}\in H$, agent $i$'s welfare share is $w_i=b_i(\bm{X})-c_i=\rho_i(\zeta_i)-\rho_i(X_i+c_i)$ for $i\in[n]$. On any fixed efficient transfer class, the set of individually rational Pareto-optimal allocations coincides with the welfare simplex $\Delta_{\nu([n])}=\{\bm{w}\in\R^n_+:\sum_{i\in [n]} w_i=\nu([n])\}$ from Theorem~\ref{thm:transfer-IR-geometry}. Since every marginal vector sums to $\nu([n])$ and has nonnegative components by monotonicity of $\nu$ (Lemma~\ref{lemma:gain-monotone}), it follows that $\Weber(\nu) \subseteq \Delta_{\nu([n])}$. The admissible set in welfare-share coordinates is therefore $\Weber(\nu) \cap \Delta_{\nu([n])} = \Weber(\nu)$, and the corresponding evaluation vectors $\bm{r}=\bigl(\rho_i(\zeta_i)\bigr)_{i\in [n]}-\bm{w}$ satisfy $r_i\le \rho_i(\zeta_i)$ for $i\in[n]$. The preceding results give the following chain of cooperative requirements on a welfare-gain vector $\bm{w}$ with $\sum_{i\in[n]} w_i = \nu([n])$:
\begin{equation}\label{eq:chain-of-subsets}
\text{efficiency}\;\supseteq\;\underbrace{\Delta_{\nu([n])}}_{\text{IR}}\;\supseteq\;\underbrace{\Weber(\nu)}_{\text{admissibility}}\;\supseteq\;\underbrace{\Core(\nu)}_{\text{core stability}}.
\end{equation}


\subsection{The nucleolus}\label{ss:nucleolus}

The Shapley value selects a welfare-gain vector by averaging the marginal contributions across all orderings of the set of agents; it is the unique rule that satisfies efficiency, symmetry, and additivity. A complementary approach is to select the core element that is most robust to coalitional deviation. The nucleolus, introduced by \citet{Schmeidler1969}, is the unique imputation that lexicographically minimizes the maximum excess across all coalitions. It therefore identifies the welfare-gain vector under which the most dissatisfied coalition has the least incentive to leave the grand coalition.

\begin{definition}[Excess]\label{def:excess}
For a TU game $([n],\nu)$ with $\nu(\varnothing)=0$ and a payoff vector $\bm{w}\in\R^n$ with $\sum_{i=1}^n w_i=\nu([n])$, the \emph{excess} of coalition $A\subseteq [n]$ at $\bm{w}$ is
\begin{equation}\label{eq:excess}
e(A,\bm{w}):=\nu(A)-\sum_{i\in A} w_i.
\end{equation}
\end{definition}

\begin{definition}[Nucleolus]\label{def:nucleolus}
Let $([n],\nu)$ be a TU game with $\nu(\varnothing)=0$ and let $\mathcal{I}(\nu):=\{\bm{w}\in\R^n:\sum_{i\in[n]} w_i=\nu([n]),\; w_i\ge \nu(\{i\})\;\forall i\in[n]\}$ denote the set of \emph{imputations}, that is, efficient payoff vectors that give each player at least their stand-alone value. For each $\bm{w}\in\mathcal{I}(\nu)$, define the \emph{excess vector} $\bm{\theta}(\bm{w})\in\R^{2^n-2}$ as the vector of excesses $e(A,\bm{w})$ for all proper nonempty coalitions $A\subset [n]$, arranged in non-increasing order. The \emph{nucleolus} $\Nuc(\nu)$ is the unique imputation that lexicographically minimises $\bm{\theta}(\bm{w})$ over $\mathcal{I}(\nu)$:
\begin{equation}\label{eq:nucleolus-def}
\Nuc(\nu):=\argmin_{\bm{w}\in\mathcal{I}(\nu)}\; \bm{\theta}(\bm{w}).
\end{equation}
\end{definition}

The nucleolus has several fundamental properties that make it a natural transfer rule. It always exists and is unique \citep{Schmeidler1969}; if the core is nonempty, the nucleolus lies in the core. In particular, since the gain game is totally balanced under convex risk measures (Theorem~\ref{thm:convex-implies-core}), the nucleolus is well defined and lies in the core.

Like the Shapley value, the nucleolus depends only on $\nu$ and is therefore uniquely determined on the quotient space $\Opt(X)_{/\sim}$. Since the nucleolus is an imputation, it lies in $\Delta_W$, and the transfer construction from Theorem~\ref{thm:transfer-IR-geometry} yields an implementable allocation on any efficient representative.

\begin{theorem}\label{thm:nuc-IR}
Define the nucleolus gain split by $w_i^{\Nuc}:=\Nuc_i(\nu)$ for $i\in[n]$ and $\bm{w}^{\Nuc}:=(w_i^{\Nuc})_{i\in[n]}\in\R^n$. Then:
\begin{enumerate}[label=\textup{(\roman*)},ref=\roman*]
\item\label{it:Nuc-nonneg} $\bm{w}^{\Nuc}\in\Delta_W$: in particular, $w_i^{\Nuc}\ge 0$ for all $i\in [n]$ and $\sum_{i\in [n]} w_i^{\Nuc}=W$.
\item\label{it:Nuc-transfer} The transfer $c^{\Nuc}_i:=b_i-w_i^{\Nuc}$ for each $i\in [n]$ belongs to $H_{\IR}(\bm{X}_0)$, and $\bm{X}^{\Nuc}:=\bm{X}_0+\bm{c}^{\Nuc}$ is feasible, efficient, and IR. Its welfare baselines equal the nucleolus gains, $b_i(\bm{X}^{\Nuc})=w_i^{\Nuc}$ for each $i\in [n]$.
\item\label{it:Nuc-invariant} $\bm{w}^{\Nuc}$ depends only on $\nu$ and hence is the same for every $\bm{X}\in\Opt(X)$.
\end{enumerate}
\end{theorem}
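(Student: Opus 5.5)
The plan mirrors the proof of Theorem~\ref{thm:shapley-IR}, replacing the averaging argument by the defining properties of the nucleolus. For part~(\ref{it:Nuc-nonneg}), the nucleolus is by definition an element of the imputation set $\mathcal I(\nu)$. Lemma~\ref{lemma:gain-monotone}\,(i) gives $\nu(\{i\})=0$ for every $i\in[n]$, so every imputation satisfies $w_i\ge0$ and $\sum_{i\in[n]}w_i=\nu([n])=W$. Hence $\mathcal I(\nu)=\Delta_W$ exactly. Lemma~\ref{lemma:gain-monotone}\,(ii) and (iv) give $0\le W<\infty$, so $\Delta_W$ is a nonempty compact polytope. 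This guarantees that the lexicographic minimization in \eqref{eq:nucleolus-def} is over a nonempty compact set, and Schmeidler's existence and uniqueness result applies. Therefore $\bm w^{\Nuc}\in\Delta_W$.

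For part~(\ref{it:Nuc-transfer}), I would set $c_i^{\Nuc}=b_i-w_i^{\Nuc}$ and repeat the argument of Proposition~\ref{prop:core-implement}\,(\ref{it:core-transfer}) verbatim. Efficiency of $\bm X_0$ gives $\sum_i b_i=W$, so $\sum_i c_i^{\Nuc}=W-W=0$. Nonnegativity of $w_i^{\Nuc}$ gives $c_i^{\Nuc}\le b_i$, hence $\bm c^{\Nuc}\in H_{\IR}(\bm X_0)$ by \eqref{eq:HIR}. Cash additivity then gives $\rho_i(X_{0,i}+c_i^{\Nuc})=\rho_i(X_{0,i})+c_i^{\Nuc}=\rho_i(\zeta_i)-w_i^{\Nuc}$, which yields IR and the identity $b_i(\bm X^{\Nuc})=w_i^{\Nuc}$. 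Feasibility and efficiency hold because balanced deterministic transfers leave both the aggregate loss and the total evaluation $C([n])$ unchanged. Equivalently, one can simply invoke the bijection of Theorem~\ref{thm:transfer-IR-geometry}\,(\ref{it:simplex}) once part~(\ref{it:Nuc-nonneg}) is established.

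For part~(\ref{it:Nuc-invariant}), Definition~\ref{def:nucleolus} involves only $\nu$, through the excesses $e(A,\bm w)$ and the imputation set. The game $\nu$ is defined from the autarky costs and $C(A)$ alone, with no reference to any efficient allocation. So $\bm w^{\Nuc}$ is the same for every $\bm X\in\Opt(X)$.

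I do not expect a real obstacle here. The only point needing care is the well-posedness in part~(\ref{it:Nuc-nonneg}): one must note that $\mathcal I(\nu)$ is nonempty and compact, which rests on the finiteness and nonnegativity of $W$ from Lemma~\ref{lemma:gain-monotone}. If desired, one can also note that, by Theorem~\ref{thm:convex-implies-core}, the nucleolus lies in $\Core(\nu)$. This gives nonnegativity a second time through Proposition~\ref{prop:core-implement}\,(\ref{it:core-nonneg}), although it is not needed for the argument.
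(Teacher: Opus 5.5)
Your proposal is correct and follows the paper's proof. Both arguments place the nucleolus in the imputation set $\mathcal I(\nu)=\Delta_W$ using $\nu(\{i\})=0$, derive part (ii) from the transfer argument of Proposition~\ref{prop:core-implement}\,(\ref{it:core-transfer}) (the paper routes this through Theorem~\ref{thm:shapley-IR}), and get part (iii) from the definition depending only on $\nu$; your extra check via Lemma~\ref{lemma:gain-monotone} that $\Delta_W$ is nonempty and compact is a harmless refinement of the paper's appeal to Schmeidler.
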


\begin{proof}[Proof of Theorem~\ref{thm:nuc-IR}]
To show part~(\ref{it:Nuc-nonneg}), recall that the nucleolus is an imputation \citep{Schmeidler1969}, so $w_i^{\Nuc}\ge\nu(\{i\})=0$ for every $i\in [n]$ and $\sum_{i\in [n]} w_i^{\Nuc}=\nu([n])=W$, hence $\bm{w}^{\Nuc}\in\Delta_W$. The argument for part~(\ref{it:Nuc-transfer}) is identical to that of Theorem~\ref{thm:shapley-IR}\,(\ref{it:Sh-transfer}). Finally, for part~(\ref{it:Nuc-invariant}), the nucleolus~\eqref{eq:nucleolus-def} is defined entirely in terms of $\nu$, so $\bm{w}^{\Nuc}$ is the same for every element of $\Opt(X)$.
\end{proof}

Together with the lower-envelope $\Q^\star$ rule from Section~\ref{ss:dual-infconv-pricing}, these transfer rules divide the welfare gain according to different principles. The $\Q^\star$-pricing rule compensates each agent at the risk-adjusted cost of their endowment: the agent who absorbs the aggregate loss bears more risk but is also the one willing to accept it at the lowest cost, so $\Q^\star$ assigns this agent a small share of the surplus. The Shapley value averages marginal contributions across all coalition orderings, yielding a more balanced split. The nucleolus pushes welfare toward the agent whose absence would cost the grand coalition the most, because equalizing coalitional excesses concentrates surplus on agents that no subgroup can easily replace. The examples of Appendix~\ref{sec:examples} confirm this ordering across several dependence structures.

\subsection{Canonical transfer rules are not PMAS}\label{ss:canonical-not-PMAS}

The proportional-cost rule (for any fixed positive weights), the actuarially fair rule, the coalition-wise lower-envelope pricing rule, the coalition-wise Shapley rule, and the coalition-wise nucleolus are not entry-monotone in general. Applying a rule separately to every restricted game may destroy consistency across coalitions. None of the canonical rules considered here is, in general, a PMAS. The separation examples below exhibit instances where proportional-cost sharing satisfies PMAS while lower-envelope pricing fails, and instances where the reverse holds; neither rule dominates the other.

\begin{theorem}\label{thm:canonical-not-PMAS}
The proportional-cost rule, for fixed positive weights, the actuarially fair rule, the coalition-wise lower-envelope pricing rule, the coalition-wise Shapley rule, and the coalition-wise nucleolus are not PMAS in general.
\end{theorem}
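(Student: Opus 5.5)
Since the theorem is a negative ``not in general'' statement, the plan is to exhibit, for each of the five rules, one concrete instance satisfying Assumption~\ref{ass:standing} (and Assumption~\ref{ass:dual-pricing} where pricing is involved) in which the rule either leaves the restricted core or violates entry monotonicity. By Proposition~\ref{prop:entry-not-core}, failure of the core for an efficient rule already forces failure of entry monotonicity, so for every rule it suffices to find $\varnothing\neq A\subseteq B$ and $i\in A$ with $C_i^R(B)>C_i^R(A)$, or a coalition $B$ with $\bm w^R(B)\notin\Core(\nu^B)$. I will work with $n=3$ (or $n=2$ where possible), Expected Shortfall or dual-power distortions, and Bernoulli/comonotonic endowments, so that every $C(A)$ is computable via the lower envelope $h_A=\min_{i\in A}h_i$ (Appendix~\ref{app:distortion-background}).

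\emph{Proportional cost and actuarial fairness.} By Theorem~\ref{thm:PC-PMAS}, the proportional-cost rule fails PMAS exactly when \eqref{eq:PC-PMAS-condition} fails. Take a singleton $A=\{i\}$ with small $\rho_i(\zeta_i)/\xi_i$ and add an agent $j$ with a large stand-alone cost per unit of weight and little diversification (for instance, comonotonic endowments and identical coherent $\rho$). Then $C(B)=\rho(X_B)=\rho(\zeta_i)+\rho(\zeta_j)$ and $C(B)/\xi_B>C(A)/\xi_A$. For actuarial fairness, Example~\ref{ex:AF-diversification} already gives $w_1^{\AF}=-1/4<0=w_1(\{1\})$, so IR, hence the core and the PMAS property, fail.

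\emph{Pricing, Shapley and nucleolus.} For lower-envelope pricing with ES agents (Example~\ref{ex:ES-lower-envelope}), I will let the entrant lower $\beta_A$ to $\beta_B$. Then $\Q_B^\star$ changes from the tail measure of $X_A$ to the tail measure of $X_B$. I will choose the entrant's loss so that it shifts the tail event of $X_B$ onto states where $\zeta_i$ is large; for example, $\zeta_j$ is positively dependent on $\zeta_i$ only in its upper range. Then $\E^{\Q_B^\star}[\zeta_i]>\E^{\Q_A^\star}[\zeta_i]$. A discrete four-state $\Omega$-partition with explicit probabilities should make this a finite computation.

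For the Shapley value and the nucleolus, it suffices to find a totally balanced three-player $\nu$ arising from some risk-sharing game that is non-convex and in which the coalition-wise rule drops for some incumbent. The plan is to read off $\nu(\{1,2\}),\nu(\{1,3\}),\nu(\{2,3\}),\nu([3])$ from explicit ES costs, with one pair creating large gains and the third agent adding little. Then the two-player split $\nu(\{1,2\})/2$ exceeds the three-player Shapley or nucleolus share of agent~1. I will verify this with formula~\eqref{eq:shapley} and with the standard three-player nucleolus computation.

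\emph{Main obstacle.} The hardest step is the pricing rule, since it is genuinely cross-coalition and requires choosing the dual optimizer explicitly. The other examples require only an arithmetic check. I would first try a three-state space in which $\beta$ drops and the ES tail shifts, computing both densities exactly.
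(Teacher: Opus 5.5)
Your overall strategy of one explicit counterexample per rule is the paper's, and three of your five pieces are sound.

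\begin{itemize}
\item \emph{Actuarial fairness.} This is exactly the paper's argument: the comonotonic case of Example~\ref{ex:AF-diversification}, giving $w_1^{\AF}(\{1,2\})=-1/4<0$.
\item \emph{Proportional cost.} Your construction is a valid two-agent variant of the paper's. The paper uses a hedged pair $\zeta_1=Y$, $\zeta_2=1-Y$ plus an entrant $\zeta_3=MY$ with $M>\xi_3/(\xi_1+\xi_2)$. Yours works for any fixed weights, provided you take a comonotonic-additive $\rho$ such as ES, or simply $\zeta_j=M\zeta_i$ with positive homogeneity. A general coherent $\rho$ is only subadditive, so the equality $\rho(\zeta_i+\zeta_j)=\rho(\zeta_i)+\rho(\zeta_j)$ needs one of these.
\item \emph{Lower-envelope pricing.} Your tail-shift idea is workable but is not yet an example. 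Lowering $\beta$ is unnecessary, though it can do the job on its own: with $\zeta_1=Y$, $\zeta_2=10(1-Y)$ under $\ES_{1/2}$ and a riskless entrant using $\ES_0=\E$, the price of $\zeta_1$ rises from $0$ to $1/2$. The paper instead keeps every agent at $h(u)=\min\{2u,1\}$ with independent $\zeta_1=Z_1$, $\zeta_2=2Z_2$, $\zeta_3=2Z_3$. The worst half of $X_{\{1,2\}}$ is $\{Z_2=1\}$, where the price of $\zeta_1$ is $1/2$. The worst half of $X_{[3]}$ is $\{X_{[3]}\ge 3\}$, where $Z_1=1$ in three of four states, so the price is $3/4$.
\end{itemize}

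The genuine gap is the Shapley/nucleolus step. ``One pair creating large gains and the third agent adding little,'' even with non-convexity, does not push an incumbent below $\nu(\{1,2\})/2$. In your labeling, let $g:=\nu(\{1,2\})$.
\begin{itemize}
\item By \eqref{eq:shapley}, $\Sh_1(\nu)<g/2$ if and only if $2\nu(\{2,3\})>\nu(\{1,3\})+2\bigl(\nu([3])-g\bigr)$.
\item When $\nu([3])=g$, the nucleolus is $\bigl((g+\nu(\{1,3\})-\nu(\{2,3\}))/2,\ (g-\nu(\{1,3\})+\nu(\{2,3\}))/2,\ 0\bigr)$.
\end{itemize}
So take an entrant with $\nu(\{1,3\})=\nu(\{2,3\})=\delta>0$ and $\nu([3])=g$. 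The game is non-convex and the entrant adds nothing, yet the nucleolus stays at $(g/2,g/2,0)$. For Shapley, ``adding little'' helps only when it is small relative to the entrant's pair values.

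The missing idea is substitution: the entrant must compete with an incumbent for that incumbent's partner, so that $\nu(\{2,3\})$ is large relative to both $\nu(\{1,3\})$ and $\nu([3])-g$. You would then still need to realize such a game from actual losses and risk measures, which your plan leaves open.

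The paper does both at once with a clone; in its labeling the incumbents are $\{1,3\}$ and the entrant is $2$. Take $U,V$ independent Bernoulli$(1/2)$, $\zeta_1=\zeta_2=\id_{\{U=V=1\}}$, $\zeta_3=\id_{\{U\neq V\}}$, and all agents with $h(u)=\min\{2u,1\}$. This gives $\nu(\{1,2\})=0$ and $\nu(\{1,3\})=\nu(\{2,3\})=\nu([3])=1/2$.
\begin{itemize}
\item When the clone joins $\{1,3\}$, $\Sh_1$ falls from $1/4$ to $1/12$.
\item The constraints $w_1+w_3\ge 1/2$ and $w_2+w_3\ge 1/2$ collapse the core to $\{(0,0,1/2)\}$, so agent~1's nucleolus payoff falls from $1/4$ to $0$.
\end{itemize}
Finally, non-convexity is the right filter only for the Shapley rule (Proposition~\ref{prop:convex-PMAS}). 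For the nucleolus it is not even necessary \citep{sonmez1993population}.
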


\begin{proof}
For the proportional-cost rule, fix positive weights $\xi_1,\xi_2,\xi_3$ and choose $M>\xi_3/(\xi_1+\xi_2)$. Let $Y\sim\mathrm{Bernoulli}(1/2)$, set $\zeta_1=Y$, $\zeta_2=1-Y$, and $\zeta_3=MY$, and let all agents use $h(u)=\min\{2u,1\}$, so that $C(\{1,2\})=\rho_h(1)=1$ and $C(\{1,2,3\})=\rho_h(1+MY)=1+M$. Hence incumbent~1's proportional costs are $C_1^\xi(\{1,2\})=\xi_1/(\xi_1+\xi_2)$ and $C_1^\xi(\{1,2,3\})=\xi_1(1+M)/(\xi_1+\xi_2+\xi_3)$. By the choice of $M$, the latter is larger, so $w_1^\xi(\{1,2,3\})<w_1^\xi(\{1,2\})$ and entry monotonicity fails.

For the actuarially fair rule, take $n=2$, let $Y\sim\mathrm{Bernoulli}(1/2)$, set $\zeta_1=\zeta_2=Y$, and use dual-power distortions $h_1(u)=1-(1-u)^2$ and $h_2(u)=1-(1-u)^3$.
Since $h_1\le h_2$, the efficient allocation assigns the aggregate to agent~1: $X_1=2Y$ and $X_2=0$. With $\rho_{h_1}(Y)=h_1(1/2)=3/4$, positive homogeneity gives $\Pi_1(\zeta_1)=\rho_{h_1}(Y)-\E[Y]=1/4$ and $\Pi_1(X_1)=2\Pi_1(Y)=1/2$.
Thus $w_1^{\AF}(\{1,2\})=-1/4<0=w_1^{\AF}(\{1\})$, so entry monotonicity fails.

For coalition-wise lower-envelope pricing, it suffices to violate entry monotonicity, since the rule is core-selecting at every coalition. Let $Z_1,Z_2,Z_3$ be independent $\mathrm{Bernoulli}(1/2)$ variables, set $\zeta_1=Z_1$, $\zeta_2=2Z_2$, and $\zeta_3=2Z_3$, and let all agents use $h(u)=\min\{2u,1\}$. For $A=\{1,2\}$, the aggregate $X_A=Z_1+2Z_2$ takes the four values $0,1,2,3$ with equal probability, and the lower-envelope pricing measure prices the worst half of the states. Hence $\E^{\Q_A^\star}[\zeta_1]=\E[Z_1\mid Z_2=1]=1/2$. For $B=\{1,2,3\}$, the worst half of $X_B=Z_1+2Z_2+2Z_3$ is $\{X_B\ge3\}$; among those four states, three have $Z_1=1$. Hence $\E^{\Q_B^\star}[\zeta_1]=3/4$.
Since $\rho_h(\zeta_1)=1$ is coalition-independent, incumbent~1's welfare falls from $1/2$ to $1/4$ when player~3 enters coalition $\{1,2\}$.

For the coalition-wise Shapley rule, we use a different setup. Let $U,V$ be independent $\mathrm{Bernoulli}(1/2)$ variables, set $\zeta_1=\zeta_2=\id_{\{U=V=1\}}$ and $\zeta_3=\id_{\{U\ne V\}}$, and again let all agents use $h(u)=\min\{2u,1\}$. The induced gain game satisfies $\nu(\{i\})=\nu(\{1,2\})=0$, $\nu(\{1,3\})=\nu(\{2,3\})=1/2$, and $\nu([3])=1/2$.
For the subgame on $A=\{1,3\}$, the Shapley value splits $\nu(A)=1/2$ equally, so $\Sh_1(\nu^A)=1/4$. In the grand coalition,
\(\Sh_1(\nu)=\frac13\nu(\{1\})+\frac16(\nu(\{1,2\})-\nu(\{2\}))+\frac16(\nu(\{1,3\})-\nu(\{3\}))+\frac13(\nu([3])-\nu(\{2,3\}))=1/12\).
Thus entry of player~2 harms incumbent~1.

For the coalition-wise nucleolus, the same gain game gives a counterexample. The two-player subgame on $A=\{1,3\}$ has nucleolus $(1/4,1/4)$, so $\Nuc_1(\nu^A)=1/4$. In the grand coalition, the pair constraints $w_1+w_3\ge1/2$ and $w_2+w_3\ge1/2$, together with efficiency and nonnegativity, force $\Core(\nu)=\{(0,0,1/2)\}$.
Hence $\Nuc(\nu)=(0,0,1/2)$ and incumbent~1's nucleolus payoff falls from $1/4$ to $0$ after player~2 enters. This failure is not only a nonconvex-game pathology; \citet{sonmez1993population} gives analogous failures for convex games.
\end{proof}

The PMAS criteria for the mean-proportional rule and for lower-envelope pricing are logically independent. Return to the lower-envelope-pricing counterexample in the proof above: let $Z_1,Z_2,Z_3$ be independent $\mathrm{Bernoulli}(1/2)$ variables, set $\zeta_1=Z_1$, $\zeta_2=2Z_2$, and $\zeta_3=2Z_3$, and let all agents use $h(u)=\min\{2u,1\}$. With mean weights $\mu=(\tfrac12,1,1)$, the normalized costs $\bar C(A):=C(A)/\mu_A$ are antitone under inclusion on $[3]$, so Theorem~\ref{thm:PC-PMAS} makes the mean-proportional rule a PMAS; lower-envelope pricing, by contrast, charges incumbent~1 $C_1^\star(\{1,2\})=\tfrac12$ and $C_1^\star([3])=\tfrac34$, so its welfare falls when agent~3 enters $\{1,2\}$.

Conversely, the fully comonotonic Bernoulli scenario in Table~\ref{tab:bernoulli-dependence-rules}, with $\zeta_i=\id_{\{U\le p_i\}}$ for $i\in[3]$, $(p_1,p_2,p_3)=(0.3,0.4,0.5)$, and $h_1\le h_2\le h_3$, gives the reverse separation: comonotonicity makes lower-envelope pricing a PMAS by Corollary~\ref{cor:comonotonic-PMAS}, whereas the mean-proportional rule with $\mu_i=p_i$ for every $i\in[3]$ violates the normalized-cost antitonicity of Theorem~\ref{thm:PC-PMAS} already along $\{1,3\}\subseteq[3]$. Neither proportional-cost sharing nor lower-envelope Arrow--Debreu pricing dominates the other as a route to PMAS.

\section{Additional PMAS examples}\label{app:additional-pmas-examples}

This appendix records additional applications of the PMAS criteria from Section~\ref{sec:stable-expansion-dual}. 

\begin{example}[Elliptical dual-pricing formula]\label{ex:elliptical-dual-pricing}
Suppose $\bm\zeta=(\zeta_1,\dots,\zeta_n)$ is elliptically distributed with finite second moments, mean vector $\bm\mu$, and positive-definite covariance matrix $\Sigma$. For every nonempty $A\subseteq[n]$, assume $\Var(X_A)>0$ and write $\beta_{i,A}:=\Cov(\zeta_i,X_A)/\Var(X_A)$ for $i\in A$. Elliptical regression from \cite{cambanis1981theory} gives $\E[\zeta_i\mid X_A]=\mu_i+\beta_{i,A}(X_A-\mu_A)$. If a dual optimizer $\pi_A^\star$ is $\sigma(X_A)$-measurable, then the convex-penalty dual-pricing share is
$$ C_i^\star(A) = \mu_i + \beta_{i,A}\left(\E[\pi_A^\star X_A]-\mu_A\right) - \alpha_i(\pi_A^\star), \qquad i\in A.$$
For coherent risk measures, this reduces to $C_i^\star(A)=\mu_i+\beta_{i,A}(C(A)-\mu_A)$ for $i \in A$. For distortion risk measures, $C_i^\star(A)=\mu_i+\sigma_i r_{i,A}\rho_{h_A}(Z)$, where $\sigma_i:=\sqrt{\Var(\zeta_i)}$ and $r_{i,A}:=\Cor(\zeta_i,X_A)$.
\end{example}

The Gaussian case is the specialization of Corollary~\ref{cor:elliptical-PMAS} in which the elliptical regression coefficients are correlations. The dual condition is a comparison between the standardized aggregate deviations and the individual penalties.

\begin{corollary}\label{cor:gaussian-PMAS}
Suppose that Assumption~\ref{ass:dual-pricing} holds, and that $(\zeta_1,\dots,\zeta_n)$ is jointly Gaussian with mean vector $\bm{\mu}$ and positive-definite covariance matrix $\Sigma$. For every nonempty $A\subseteq[n]$, define $\mu_A:=\E[X_A]$, $\sigma_A:=\sqrt{\Var(X_A)}$, and $d_A^\star:=(\E[\pi_A^\star X_A]-\mu_A)/\sigma_A$; for every $i\in[n]$, define $\sigma_i:=\sqrt{\Var(\zeta_i)}$, and for every $i\in A$, define $r_{i,A}:=\Cor(\zeta_i,X_A)$. If, for every $\varnothing\ne A\subseteq B\subseteq[n]$ and every $i\in A$,
$$ \sigma_i r_{i,B}d_B^\star-\alpha_i(\pi_B^\star) \le \sigma_i r_{i,A}d_A^\star-\alpha_i(\pi_A^\star),$$
then the dual-pricing allocation is a PMAS.
\end{corollary}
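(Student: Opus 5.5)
The plan is to reduce this statement to Corollary~\ref{cor:elliptical-PMAS}, since a jointly Gaussian vector with positive-definite covariance is elliptical with finite second moments. That corollary already gives, under Assumption~\ref{ass:dual-pricing},
$$ C_i^\star(A)=\mu_i+\beta_{i,A}\left(\E[\pi_A^\star X_A]-\mu_A\right)-\alpha_i(\pi_A^\star),\qquad i\in A, $$
and shows that the cross-coalition inequality between these expressions implies the PMAS property. So it suffices to show that the displayed Gaussian condition is exactly that inequality, written in standardized coordinates.

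First, positive definiteness of $\Sigma$ gives $\sigma_A^2=\bm{1}_A^\top\Sigma\bm{1}_A>0$ for every nonempty $A$. Hence $d_A^\star$ and $r_{i,A}$ are well defined (also $\sigma_i>0$).

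Next, rewrite the regression coefficient:
$$ \beta_{i,A}=\frac{\Cov(\zeta_i,X_A)}{\sigma_A^2}=\frac{\sigma_i\, r_{i,A}\,\sigma_A}{\sigma_A^2}=\frac{\sigma_i r_{i,A}}{\sigma_A}. $$
Substituting this gives
$$ \beta_{i,A}\left(\E[\pi_A^\star X_A]-\mu_A\right)=\sigma_i r_{i,A} d_A^\star, $$
so that
$$ C_i^\star(A)=\mu_i+\sigma_i r_{i,A}d_A^\star-\alpha_i(\pi_A^\star). $$
The term $\mu_i$ does not depend on the coalition, so it cancels when we compare $C_i^\star(B)$ with $C_i^\star(A)$. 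The hypothesis therefore reads $C_i^\star(B)\le C_i^\star(A)$ for every $\varnothing\ne A\subseteq B$ and every $i\in A$. This is precisely the condition of Corollary~\ref{cor:elliptical-PMAS}, or equivalently of Proposition~\ref{prop:dual-pricing-PMAS}, so the PMAS conclusion follows.

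There is no real obstacle. The only points to check are that Gaussian vectors fall within the elliptical family, so the linear regression $\E[\zeta_i\mid X_A]=\mu_i+\beta_{i,A}(X_A-\mu_A)$ holds, and that the variances are nondegenerate so the normalization by $\sigma_A$ is legitimate.
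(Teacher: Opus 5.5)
Your proposal is correct and follows the paper's own proof: specialize the elliptical cost-share formula of Corollary~\ref{cor:elliptical-PMAS} using $\beta_{i,A}=\sigma_i r_{i,A}/\sigma_A$, observe that the hypothesis is exactly $C_i^\star(B)\le C_i^\star(A)$ once $\mu_i$ is added to both sides, and conclude with Proposition~\ref{prop:dual-pricing-PMAS}. Your explicit check that positive definiteness gives $\sigma_A>0$ and $\sigma_i>0$, so that $d_A^\star$ and $r_{i,A}$ are well defined, is a small addition the paper leaves implicit.
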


\begin{proof}
For Gaussian endowments, the elliptical formula in Corollary~\ref{cor:elliptical-PMAS} gives
$$ C_i^\star(A) = \mu_i+\sigma_i r_{i,A}d_A^\star-\alpha_i(\pi_A^\star), \qquad i\in A.$$
The displayed condition is exactly the cross-coalition antitonicity $C_i^\star(B)\le C_i^\star(A)$ for $\varnothing\ne A\subseteq B\subseteq[n]$ and $i\in A$. Proposition~\ref{prop:dual-pricing-PMAS} gives the PMAS conclusion.
\end{proof}

Mean--variance preferences provide a useful setup because the coalition inf-convolution is explicit, and the PMAS condition can be read directly from the normalized variance. In this case, the risk measure is not monotone, implying that the dual price need not be a probability measure, unless the supporting kernel is non-negative. 

\begin{example}[Mean--variance PMAS conditions]\label{ex:mean-variance-PMAS}
For each $i\in[n]$ and $Y\in L^2$, let $\rho_i(Y)=\E[Y]+\gamma_i\Var(Y)$ with $\gamma_i>0$, and write $\vartheta_i:=1/\gamma_i$. For every nonempty coalition $A\subseteq[n]$, set $\vartheta_A:=\sum_{i\in A}\vartheta_i$, $\mu_A:=\E[X_A]$, and $\bar X_A:=X_A/\mu_A$ when $\mu_A>0$. Then $\dsquare_{i\in A}\rho_i (Y)=\E[Y]+\vartheta_A^{-1}\Var(Y)$ and $C(A)=\mu_A+\vartheta_A^{-1}\Var(X_A)$.
Assume $\mu_i>0$ for every $i\in[n]$. The mean-proportional allocation \eqref{eq:proportional-selector-main} defines a PMAS if and only if
\begin{equation}\label{eq:mean-variance-mean-pmas}
    \frac{\Var(X_B)}{\vartheta_B\mu_B}
    \le
    \frac{\Var(X_A)}{\vartheta_A\mu_A},
    \qquad \varnothing\ne A\subseteq B\subseteq[n].
\end{equation}
Equivalently, $(\mu_B/\vartheta_B)\Var(\bar X_B)\le(\mu_A/\vartheta_A)\Var(\bar X_A)$ for every $\varnothing\ne A\subseteq B\subseteq[n]$. A primitive sufficient condition for \eqref{eq:mean-variance-mean-pmas} is $\bar X_B\cx \bar X_A$ and $\vartheta_B/\mu_B\ge \vartheta_A/\mu_A$ for every $\varnothing\ne A\subseteq B\subseteq[n]$, since the first comparison gives $\Var(\bar X_B)\le\Var(\bar X_A)$ and the second comparison gives $\mu_B/\vartheta_B\le\mu_A/\vartheta_A$.


For the dual-pricing allocation of Example~\ref{ex:mean-variance-dual}, the exact PMAS condition is the antitonicity of the displayed cost shares: for every $\varnothing\ne A\subseteq B\subseteq[n]$ and every $i\in A$,
$$ 2\gamma_B\Cov(\zeta_i,X_B) -\frac{\gamma_B^2}{\gamma_i}\Var(X_B) \le 2\gamma_A\Cov(\zeta_i,X_A)    -\frac{\gamma_A^2}{\gamma_i}\Var(X_A).$$
As noted in Example~\ref{ex:mean-variance-dual}, the optimizer $\pi_A^\star=1+2\gamma_A(X_A-\E[X_A])$ is a probability density only under the additional nonnegativity condition $\pi_A^\star\ge0$ almost surely; otherwise it should be interpreted as a signed pricing kernel.
\end{example}

We now turn to entropic risk measures. This class is useful here for two reasons: the inf-convolution adds risk tolerances exactly, and the optimal sharing rule is linear in aggregate losses, which makes the connection with \citet{feng2024expansion} transparent. The actuarially fair entropic selector is most naturally written by separating the physical mean from the centred risk premium. This is the exponential analogue of the mean-proportional logic above, but the scalar that must be antitone is now a normalized cumulant-generating function rather than a normalized variance.

\begin{corollary}\label{cor:entropic-centered-PMAS}
Assume risk measures are entropic as in Example~\ref{ex:entropic-dual}. For every nonempty coalition $A\subseteq[n]$, define $\mu_A:=\E[X_A]$ and $\psi_A:=\log\E[\e^{(X_A-\mu_A)/\iota_A}]$. For $i\in A$, define $C_i^{\AF}(A):=\mu_i+\iota_i\psi_A$. If $\psi_B\le\psi_A$ whenever $\varnothing\ne A\subseteq B\subseteq[n]$, then the welfare gains $w_i^{\AF}(A)=\rho_i(\zeta_i)-C_i^{\AF}(A)$ define a PMAS. In particular, the condition holds if
\begin{equation}\label{eq:entropic-centered-cx}
    \frac{X_B-\mu_B}{\iota_B} \cx \frac{X_A-\mu_A}{\iota_A},\qquad \varnothing\ne A\subseteq B\subseteq[n].
\end{equation}
\end{corollary}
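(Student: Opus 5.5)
The plan is to show that the selector $C_i^{\AF}(A)=\mu_i+\iota_i\psi_A$ is an efficient cost-core selector for every coalition, and then invoke Proposition~\ref{prop:dual-premium-pmas}.

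\emph{Efficiency.} From Example~\ref{ex:entropic-dual}, $C(A)=\iota_A\log\E[\e^{X_A/\iota_A}]$. Cash additivity lets us pull out the mean:
$$C(A)=\mu_A+\iota_A\log\E[\e^{(X_A-\mu_A)/\iota_A}]=\mu_A+\iota_A\psi_A.$$
Summing $C_i^{\AF}(A)$ over $i\in A$ gives $\mu_A+\iota_A\psi_A=C(A)$, so the selector is efficient on every coalition.

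\emph{Antitonicity.} Fix $\varnothing\ne A\subseteq B$ and $i\in A$. Then
$$C_i^{\AF}(B)-C_i^{\AF}(A)=\iota_i(\psi_B-\psi_A)\le 0,$$
since $\iota_i>0$ and $\psi_B\le\psi_A$ by hypothesis. So the rule is entry-monotone.

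\emph{Core property.} Proposition~\ref{prop:entry-not-core} shows that entry monotonicity plus efficiency on every coalition gives $\bm w^{\AF}(B)\in\Core(\nu^B)$. As a direct check, for $A\subseteq B$ we have
$$\sum_{i\in A}C_i^{\AF}(B)=\mu_A+\iota_A\psi_B\le\mu_A+\iota_A\psi_A=C(A).$$
The PMAS conclusion then follows from Proposition~\ref{prop:dual-premium-pmas}.

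\emph{Sufficient condition.} It remains to show that \eqref{eq:entropic-centered-cx} implies $\psi_B\le\psi_A$. Write $\psi_A=\log\E[\phi(Y_A)]$ with $Y_A:=(X_A-\mu_A)/\iota_A$ and $\phi(y)=\e^{y}$. Since $\phi$ is convex, $Y_B\cx Y_A$ gives $\E[\e^{Y_B}]\le\E[\e^{Y_A}]$. Because $\log$ is increasing, this yields $\psi_B\le\psi_A$.

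The only delicate step is integrability. The convex-order comparison is applied to an unbounded convex function, so we need $\E[\e^{Y_A}]<\infty$. This is guaranteed by the standing entropic domain assumption, which makes $C(A)$ finite by Proposition~\ref{prop:infconv-attainment}. If $\E[\e^{Y_A}]$ were infinite, the inequality $\E[\e^{Y_B}]\le\E[\e^{Y_A}]$ would hold trivially anyway.
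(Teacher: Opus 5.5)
Your proof is correct and follows the paper's argument essentially step for step: efficiency from $C(A)=\mu_A+\iota_A\psi_A$, antitonicity of incumbent costs from $\iota_i>0$ and $\psi_B\le\psi_A$, core membership via Proposition~\ref{prop:entry-not-core}, and the sufficient condition from convexity of $x\mapsto\e^x$. Your direct cost-core check $\sum_{i\in A}C_i^{\AF}(B)=\mu_A+\iota_A\psi_B\le C(A)$ and your integrability remark are small additions that the paper leaves implicit.
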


\begin{proof}
Efficiency follows from $\sum_{i\in A}C_i^{\AF}(A)=\mu_A+\iota_A\psi_A=\iota_A\log\E[\e^{X_A/\iota_A}]=C(A)$. If $\psi_B\le\psi_A$ and $i\in A\subseteq B$, then $C_i^{\AF}(B)\le C_i^{\AF}(A)$, so the rule is entry-monotone. Proposition~\ref{prop:entry-not-core} establishes core membership for every restricted coalition; hence, the rule is a PMAS. Finally, \eqref{eq:entropic-centered-cx} implies $\psi_B\le\psi_A$ because $x\mapsto\e^x$ is convex.
\end{proof}

The preceding condition has the expected homogeneous-pool specialization. Exchangeability yields the convex-order contraction, while common risk tolerance ensures compatibility of the normalization across coalitions.

\begin{corollary}\label{cor:entropic-exchangeable-PMAS}
Assume the entropic setup of Example~\ref{ex:entropic-dual}. Suppose $(\zeta_1,\dots,\zeta_n)$ is exchangeable, $\mu_i=\mu$ for every $i\in[n]$, and $\iota_i=\iota$ for every $i\in[n]$, where $\mu\in\R$ and $\iota>0$. Then the actuarially fair allocation in Corollary~\ref{cor:entropic-centered-PMAS} defines a PMAS. Moreover, the entropic dual-pricing allocation coincides with this selector.
\end{corollary}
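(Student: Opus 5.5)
We work directly with the entropic formulas of Example~\ref{ex:entropic-dual}. Under the hypotheses $\iota_i=\iota$ and $\mu_i=\mu$, every nonempty coalition $A$ with $k:=|A|$ members has $\iota_A=k\iota$ and $\mu_A=k\mu$. The argument then has two steps: a convex-order comparison, which gives the PMAS property through Corollary~\ref{cor:entropic-centered-PMAS}, and a direct computation showing that dual pricing coincides with the actuarially fair selector.

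\emph{PMAS property.} By Corollary~\ref{cor:entropic-centered-PMAS} it suffices to verify \eqref{eq:entropic-centered-cx}. The centred normalized aggregate is
$$\frac{X_A-\mu_A}{\iota_A}=\frac{1}{\iota}\Bigl(\frac{X_A}{|A|}-\mu\Bigr),$$
so the condition reduces to
$$\frac{X_B}{|B|}\cx\frac{X_A}{|A|},\qquad \varnothing\ne A\subseteq B\subseteq[n].$$
This is the exchangeable convex-order contraction already established in the proof of Corollary~\ref{cor:exchangeable-PMAS-common}. Let $J$ be a uniformly random $|A|$-subset of $B$, independent of the risks. By exchangeability, $|A|^{-1}\sum_{j\in J}\zeta_j$ has the law of $X_A/|A|$, and its conditional expectation given $(\zeta_j)_{j\in B}$ equals $X_B/|B|$. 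Jensen's inequality then gives the comparison. Affine maps with positive slope preserve $\cx$, so \eqref{eq:entropic-centered-cx} holds and the actuarially fair selector is a PMAS.

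\emph{Coincidence with dual pricing.} Start from the formula in Example~\ref{ex:entropic-dual}:
$$C_i^\star(A)=\E^{\Q_A^\star}[\zeta_i]-\frac{\iota_i}{\iota_A}\E^{\Q_A^\star}[X_A]+\iota_i\log\E[\e^{X_A/\iota_A}].$$
The density $\pi_A^\star$ is a function of $X_A$. Exchangeability gives $\E[\zeta_i\mid X_A]=X_A/|A|$, so by the tower property $\E^{\Q_A^\star}[\zeta_i]=\E^{\Q_A^\star}[X_A]/|A|$. Since $\iota_i/\iota_A=1/|A|$, the first two terms cancel, leaving
$$C_i^\star(A)=\iota\log\E[\e^{X_A/\iota_A}].$$
Pulling out the mean gives
$$\iota\log\E[\e^{X_A/\iota_A}]=\iota\Bigl(\frac{\mu_A}{\iota_A}+\psi_A\Bigr)=\mu+\iota\psi_A=C_i^{\AF}(A).$$

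The main obstacle is the conditional-mean identity $\E[\zeta_i\mid X_A]=X_A/|A|$. It follows because $(\zeta_i,X_A)$ has the same law as $(\zeta_j,X_A)$ for all $i,j\in A$, so the conditional means $\E[\zeta_j\mid X_A]$ are all equal, and they must sum to $X_A$. A secondary point is that $\E[\e^{X_A/\iota_A}]<\infty$ is needed for the formulas to make sense; this is implicit in the entropic setup and holds under the standing assumption $\zeta_i\in\dom\rho_i$.
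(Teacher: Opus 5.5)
Your proposal is correct and follows essentially the same route as the paper's proof. Both arguments use the exchangeable sample-average contraction $X_B/|B|\cx X_A/|A|$, which feeds Corollary~\ref{cor:entropic-centered-PMAS}. Both then apply the tower property with $\E[\zeta_i\mid X_A]=X_A/|A|$ to collapse the entropic dual-pricing share to $\iota\log\E[\e^{X_A/\iota_A}]=\mu+\iota\psi_A$. You also spell out the symmetry argument for the conditional-mean identity and the integrability of $\e^{X_A/\iota_A}$, both of which the paper leaves implicit.
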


\begin{proof}
For $\varnothing\ne A\subseteq B\subseteq[n]$, exchangeability gives the sample-average convex-order contraction $X_B/|B|\cx X_A/|A|$, hence $(X_B-|B|\mu)/(|B|\iota)\cx (X_A-|A|\mu)/(|A|\iota)$. Corollary~\ref{cor:entropic-centered-PMAS} gives the PMAS property. For the pricing identity, $\pi_A^\star$ is $\sigma(X_A)$-measurable and exchangeability gives $\E[\zeta_i\mid X_A]=X_A/|A|$ for every nonempty coalition $A\subseteq[n]$ and every $i\in A$. Therefore $C_i^\star(A)=|A|^{-1}\E^{\Q_A^\star}[X_A]-\iota\E^{\Q_A^\star}[\log\pi_A^\star]=\iota\log\E[\e^{X_A/(|A|\iota)}]=\mu+\iota\psi_A=C_i^{\AF}(A)$.
\end{proof}

Comonotonicity has a different interpretation in the entropic model. It does not by itself produce entry monotonicity; one also needs the risk-tolerance scale to grow proportionally with mean exposure.

\begin{corollary}\label{cor:entropic-comonotonic-scale-PMAS}
Assume entropic risk measures as in Example~\ref{ex:entropic-dual}. Suppose $\zeta_i=\mu_iZ$ for every $i\in[n]$, where $\mu_i>0$, $\E[Z]=1$, and $\E[\e^{Z/\kappa}]<\infty$ for some $\kappa>0$. If $\iota_i=\kappa\mu_i$ for every $i\in[n]$, then the actuarially fair transfer in Corollary~\ref{cor:entropic-centered-PMAS} and the entropic dual-pricing allocation coincide and define a PMAS.
\end{corollary}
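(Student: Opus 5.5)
Under the hypotheses $\zeta_i=\mu_iZ$ and $\iota_i=\kappa\mu_i$, I will reduce everything to the scalar criterion of Corollary~\ref{cor:entropic-centered-PMAS}.

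\textbf{Step 1 (coalition quantities).} For a nonempty coalition $A\subseteq[n]$ we have $X_A=\mu_AZ$, $\iota_A=\kappa\mu_A$, and $\E[X_A]=\mu_A$. Hence
$$\frac{X_A-\mu_A}{\iota_A}=\frac{Z-1}{\kappa},$$
which does not depend on $A$. Therefore $\psi_A=\log\E[\e^{(Z-1)/\kappa}]=:\psi$ is constant over coalitions, and it is finite by the integrability assumption on $Z$. The condition $\psi_B\le\psi_A$ holds with equality, so Corollary~\ref{cor:entropic-centered-PMAS} immediately gives that $C_i^{\AF}(A)=\mu_i+\kappa\mu_i\psi$ defines a PMAS. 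This cost share is even coalition-independent.

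\textbf{Step 2 (coincidence with dual pricing).} By Example~\ref{ex:entropic-dual}, the dual optimizer is $\pi_A^\star=\e^{X_A/\iota_A}/\E[\e^{X_A/\iota_A}]=\e^{Z/\kappa}/\E[\e^{Z/\kappa}]$. The corresponding cost share is
$$C_i^\star(A)=\E^{\Q_A^\star}[\zeta_i]-\frac{\iota_i}{\iota_A}\E^{\Q_A^\star}[X_A]+\iota_i\log\E[\e^{X_A/\iota_A}].$$
Substituting $\zeta_i=\mu_iZ$ and $\iota_i/\iota_A=\mu_i/\mu_A$, the first two terms cancel:
$$\mu_i\E^{\Q^\star}[Z]-\frac{\mu_i}{\mu_A}\,\mu_A\E^{\Q^\star}[Z]=0.$$
This leaves $C_i^\star(A)=\kappa\mu_i\log\E[\e^{Z/\kappa}]=\kappa\mu_i\bigl(1/\kappa+\psi\bigr)=\mu_i+\kappa\mu_i\psi=C_i^{\AF}(A)$. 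So the two rules coincide.

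\textbf{Main obstacle.} The only delicate point is to verify the setting's hypotheses. Assumption~\ref{ass:dual-pricing} must hold, and $\pi_A^\star$ is clearly $\sigma(X_A)$-measurable. The moment conditions $\E[\e^{\zeta_i/\iota_i}]=\E[\e^{Z/\kappa}]<\infty$ must also hold, so that each $\rho_i(\zeta_i)$ and $C(A)$ is finite. Both follow directly from the stated integrability of $Z$.
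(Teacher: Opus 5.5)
Your argument is correct and follows the paper's proof essentially step for step. The identity $(X_A-\mu_A)/\iota_A=(Z-1)/\kappa$ makes $\psi_A$ constant, so Corollary~\ref{cor:entropic-centered-PMAS} gives the PMAS, and substituting $\zeta_i=\mu_iZ$ and $\iota_i=\kappa\mu_i$ into the entropic dual-pricing formula gives $C_i^\star(A)=\mu_i+\iota_i\psi_A=C_i^{\AF}(A)$.

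One correction to your closing remark: $\E[\e^{Z/\kappa}]<\infty$ alone does not yield a finite-penalty dual optimizer. That property, and the cancellation of your two $\E^{\Q^\star}[Z]$ terms, both require $\E[Z\e^{Z/\kappa}]<\infty$. So Assumption~\ref{ass:dual-pricing} is inherited from Example~\ref{ex:entropic-dual} rather than derived, which is exactly how the paper treats it.
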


\begin{proof}
For every nonempty coalition $A\subseteq[n]$, one has $X_A=\mu_AZ$ and $\iota_A=\kappa\mu_A$, so $(X_A-\mu_A)/\iota_A=(Z-1)/\kappa$ is independent of $A$. Hence $\psi_A$ is constant over all nonempty coalitions, and Corollary~\ref{cor:entropic-centered-PMAS} gives the PMAS property. The entropic dual optimizer satisfies $\pi_A^\star=\e^{Z/\kappa}/\E[\e^{Z/\kappa}]$, which is also independent of $A$. Substituting $\zeta_i=\mu_iZ$ and $\iota_i=\kappa\mu_i$ in the dual-pricing formula in Example~\ref{ex:entropic-dual} gives $C_i^\star(A)=\mu_i+\iota_i\psi_A=C_i^{\AF}(A)$.
\end{proof}

The jointly normal specialization recovers a condition identified in \citet{feng2024expansion}. It also shows how their equilibrium-pricing rule fits the dual-pricing allocation in Section~\ref{ss:dual-pricing-selector}.

\begin{example}[Normal entropic expansion]\label{ex:normal-entropic-expansion}
Assume the entropic setup of Example~\ref{ex:entropic-dual}, and suppose the coalition aggregate risks are normally distributed. If $X_A$ has standard deviation $\sigma_A$, then $\psi_A=\frac12(\sigma_A/\iota_A)^2$. If each $\zeta_i$, $i\in[n]$, is normal with standard deviation $\sigma_i$, then the actuarially fair welfare gain in coalition $A\subseteq[n]$ is $w_i^{\AF}(A)=(\iota_i/2)\{(\sigma_i/\iota_i)^2-(\sigma_A/\iota_A)^2\}$ for $i\in A$. Thus, individual rationality under actuarial fairness is governed by the ratio $\sigma_A/\iota_A$. For a one-step expansion $B=A\cup\{j\}$, the actuarially fair incumbent costs weakly decrease when $\sigma_B/\iota_B\le \sigma_A/\iota_A$, and the entrant is willing to join when $\sigma_B/\iota_B\le\sigma_j/\iota_j$; these are exactly the strong-consensus inequalities in the terminology of \citet{feng2024expansion}, while their weak-consensus condition requires the new aggregate ratio to be no larger than the incumbent individual ratios.

For jointly normal endowments, exponential tilting gives $\E^{\Q_A^\star}[\zeta_i] = \mu_i + \Cov(\zeta_i,X_A)/\iota_A$ and $\E^{\Q_A^\star}[\log\pi_A^\star]=\Var(X_A)/(2\iota_A^2)$. Consequently, $C_i^\star(A)=\mu_i+\Cov(\zeta_i,X_A)/\iota_A-\iota_i\Var(X_A)/(2\iota_A^2)$ for $i\in A$.
The dual-pricing allocation is a PMAS if and only if, for every $\varnothing\ne A\subseteq B\subseteq[n]$ and every $i\in A$,
$$\frac{\Cov(\zeta_i,X_B)}{\iota_B} - \frac{\iota_i}{2\iota_B^2}\Var(X_B) \le \frac{\Cov(\zeta_i,X_A)}{\iota_A} - \frac{\iota_i}{2\iota_A^2}\Var(X_A).$$
Equivalently, since $w_i^\star(A)=(2\iota_i)^{-1}\Var(\zeta_i-(\iota_i/\iota_A)X_A)$, entry monotonicity under dual pricing is equivalent to
$$ \Var\left(\zeta_i-\frac{\iota_i}{\iota_B}X_B\right) \ge \Var\left(\zeta_i-\frac{\iota_i}{\iota_A}X_A\right), \qquad \varnothing\ne A\subseteq B\subseteq[n],\ i\in A.$$
This identity explains why equilibrium pricing automatically gives fixed-coalition stability in the normal entropic model, and why strong consensus under entry is a sharper requirement than weak consensus.
\end{example}

The remaining applications are proportional conditional-mean risk-sharing models. In each case, the conditional expectation of an individual endowment, given the coalition aggregate, equals the deterministic mean share of that aggregate. Combined with the proportional penalty split $\alpha_i(\pi_A^\star)=(\mu_i/\mu_A)\alpha_A(\pi_A^\star)$, which holds automatically for coherent risk measures, Theorem~\ref{thm:proportional-cost-dual-pricing-PMAS} identifies the dual-pricing allocation with the mean-proportional allocation and yields a PMAS.

The first such model is the convolution-semigroup case, in which each agent's endowment has the law of a common nonnegative subordinator at a time parameter equal to the agent's mean. This family covers several classical actuarial aggregate-loss models, including compound Poisson claim totals, gamma aggregate losses, and inverse-Gaussian aggregate-loss models, with the mean parameter $\mu_i$ for $i\in[n]$ scaling each agent's exposure. If $(Z_t)_{t\ge0}$ is an integrable subordinator with Laplace exponent $\Psi$, then $\E[e^{-uZ_t}]=\exp\{-t\Psi(u)\}$ for $u\ge0$, the laws of $Z_t$ form a convolution semigroup, and each marginal law is infinitely divisible. The normalization $\partial_+\Psi(0)=1$, where $\partial_+\Psi(0):=\lim_{u\downarrow0}\{\Psi(u)-\Psi(0)\}/u$ is the finite right derivative at the origin, scales the semigroup parameter to equal the mean; see \citet{sato1999levy}.

\begin{corollary}\label{cor:convolution-semigroup-PMAS}
Under Assumption~\ref{ass:dual-pricing}, suppose $\zeta_1,\dots,\zeta_n$ are independent and $\E[e^{-t\zeta_i}]=e^{-\mu_i\Psi(t)}$ for $t\ge0$, where $\mu_i>0$ for $i\in[n]$ and $\Psi$ is the Laplace exponent of an integrable nonnegative convolution semigroup with $\partial_+\Psi(0)=1$, and suppose $\alpha_i(\pi_A^\star)=(\mu_i/\mu_A)\alpha_A(\pi_A^\star)$ for every nonempty $A\subseteq[n]$ and every $i\in A$. The dual-pricing allocation coincides with the mean-proportional allocation, $C_i^\star(A)=\mu_iC(A)/\mu_A$ for every nonempty $A\subseteq[n]$ and every $i\in A$, and is a PMAS.
\end{corollary}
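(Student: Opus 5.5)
The plan is to reduce everything to Theorem~\ref{thm:proportional-cost-dual-pricing-PMAS}. That theorem needs two inputs: the proportional conditional-mean identity \eqref{eq:proportional-cmrs-main} for every nonempty coalition $A$ and every $i\in A$, and the proportional penalty split. The penalty split is assumed outright, and Assumption~\ref{ass:dual-pricing} is also assumed, so the only real work is verifying
\[
\E[\zeta_i\mid X_A]=\frac{\mu_i}{\mu_A}X_A .
\]
Once this identity holds, Theorem~\ref{thm:proportional-cost-dual-pricing-PMAS} delivers both conclusions at once: the dual-pricing shares equal $C_i^\star(A)=\mu_iC(A)/\mu_A$, and this common allocation is a PMAS.

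I will verify the identity with the Laplace criterion of Proposition~\ref{prop:laplace-proportionality}. Fix $A$ and $i\in A$, and write $X_A=\zeta_i+X_{A\setminus\{i\}}$. Independence and the semigroup property give
\[
\mathcal L_A(t)=\e^{-\mu_A\Psi(t)},
\]
since the exponents $\mu_j\Psi(t)$ add over $j\in A$. Independence also lets the transform of $\zeta_i$ factor out:
\[
\mathcal L^{\ast}_{i,A}(t)=\E[\zeta_i\e^{-t\zeta_i}]\,\E[\e^{-tX_{A\setminus\{i\}}}]=\mu_i\Psi'(t)\e^{-\mu_i\Psi(t)}\e^{-(\mu_A-\mu_i)\Psi(t)}=\mu_i\Psi'(t)\mathcal L_A(t).
\]
Here $\E[\zeta_i\e^{-t\zeta_i}]=-\frac{\d}{\d t}\e^{-\mu_i\Psi(t)}$. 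Likewise $-\mathcal L_A'(t)=\mu_A\Psi'(t)\mathcal L_A(t)$. Comparing the two expressions gives $\mathcal L^\ast_{i,A}=(\mu_i/\mu_A)(-\mathcal L_A')$ on $(0,\infty)$, so the hypothesis of Proposition~\ref{prop:laplace-proportionality} holds on any interval $(0,\varepsilon)$.

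The main obstacle is checking the technical hypotheses of Proposition~\ref{prop:laplace-proportionality}. There are three, and each follows from nonnegativity and integrability.
\begin{enumerate}[label=\textup{(\roman*)}]
\item Differentiation under the expectation is justified because $x\e^{-tx}\le x$ for $x\ge0$ and $\zeta_i\in L^1$; dominated convergence then applies.
\item The transforms are finite for every $t>0$, because all variables involved are nonnegative.
\item The measures $\eta$ and $\beta$ are finite positive measures on $[0,\infty)$, since $\zeta_i\ge0$, so their Laplace transforms on $(0,\varepsilon)$ determine them uniquely by analyticity.
\end{enumerate}
Differentiability of $\Psi$ on $(0,\infty)$ comes from the Lévy–Khintchine form of the Laplace exponent. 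At the origin, the normalization $\partial_+\Psi(0)=1$ gives $\E[\zeta_i]=\mu_i$, so the weights $\mu_i$ really are the means and $\mu_A=\E[X_A]$, as Theorem~\ref{thm:proportional-cost-dual-pricing-PMAS} requires.

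With \eqref{eq:proportional-cmrs-main} established for every coalition and every member, applying Theorem~\ref{thm:proportional-cost-dual-pricing-PMAS} completes the argument.
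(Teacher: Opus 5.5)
Your proposal is correct and follows the paper's proof. You compute $\mathcal L_A$ and $\mathcal L^\ast_{i,A}$ from independence and the additive exponents $\mu_j\Psi(t)$, invoke Proposition~\ref{prop:laplace-proportionality} to obtain the proportional conditional means, and then apply Theorem~\ref{thm:proportional-cost-dual-pricing-PMAS} with the assumed penalty split. You also check details the paper leaves implicit: differentiation under the expectation, uniqueness of the Laplace transforms, and $\E[\zeta_i]=\mu_i$ from $\partial_+\Psi(0)=1$.
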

\begin{proof}

For every nonempty $A\subseteq[n]$ and every $i\in A$, independence gives, for $t\ge0$, $\mathcal{L}_A(t):=\E[e^{-tX_A}]=e^{-\mu_A\Psi(t)}$
and
$$ \E[\zeta_i e^{-tX_A}] = \mu_i\Psi'(t)e^{-\mu_A\Psi(t)} = \frac{\mu_i}{\mu_A}(-\mathcal{L}_A'(t)).$$
Proposition~\ref{prop:laplace-proportionality} yields the proportional conditional means \eqref{eq:proportional-cmrs-main}. With the proportional penalty split, Theorem~\ref{thm:proportional-cost-dual-pricing-PMAS} gives the coincidence with the mean-proportional allocation and the PMAS property.
\end{proof}

Dirichlet--Liouville models reallocate a random aggregate loss across participants via Dirichlet-distributed shares that are independent of the loss. This fits settings such as catastrophe pools, in which aggregate event severity is random, and the participants' losses fluctuate around a fixed exposure profile.

\begin{corollary}\label{cor:dirichlet-liouville-PMAS}
Under Assumption~\ref{ass:dual-pricing}, fix $\tau>0$ and $\mu_i>0$ for $i\in[n]$, and write $\mu_A:=\sum_{i\in A}\mu_i$ and $\mu:=\sum_{i=1}^n\mu_i$. Suppose $\bm{\zeta}=(\zeta_1,\dots,\zeta_n)$ has a Dirichlet--Liouville distribution with concentration $\tau$ and mean vector $\bm{\mu}=(\mu_1,\dots,\mu_n)$: $\bm{\zeta}=R\bm{P}$ almost surely, where $R\ge0$ is integrable with $\E[R]=\mu$ and $\bm{P}=(P_1,\dots,P_n)\sim\mathrm{Dirichlet}(\tau\mu_1,\dots,\tau\mu_n)$ is independent of $R$. Suppose also that $\alpha_i(\pi_A^\star)=(\mu_i/\mu_A)\alpha_A(\pi_A^\star)$ for every nonempty $A\subseteq[n]$ and every $i\in A$. The dual-pricing allocation coincides with the mean-proportional allocation and is a PMAS.
\end{corollary}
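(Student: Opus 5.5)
The plan is to follow the template of Corollary~\ref{cor:convolution-semigroup-PMAS}. I would first verify the proportional conditional-mean identity \eqref{eq:proportional-cmrs-main} for every nonempty coalition $A\subseteq[n]$ and every $i\in A$. Theorem~\ref{thm:proportional-cost-dual-pricing-PMAS}, combined with the assumed proportional penalty split $\alpha_i(\pi_A^\star)=(\mu_i/\mu_A)\alpha_A(\pi_A^\star)$, then immediately shows that the dual-pricing allocation coincides with the mean-proportional allocation and is a PMAS. So the whole task reduces to showing
\[
\E[\zeta_i\mid X_A]=\frac{\mu_i}{\mu_A}X_A .
\]

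To get this identity, I would use the aggregation property of the Dirichlet distribution. Write $X_A=R\,P_A$ with $P_A:=\sum_{j\in A}P_j$. For $i\in A$ and $P_A>0$, let $Q_i:=P_i/P_A$, so that $\zeta_i=R\,P_A\,Q_i=X_A\,Q_i$. The standard neutrality and aggregation properties of the Dirichlet law give three facts:
\begin{itemize}
\item the subvector $(Q_j)_{j\in A}$ follows $\mathrm{Dirichlet}\big((\tau\mu_j)_{j\in A}\big)$;
\item this subvector is independent of $P_A$ (and of the complementary block);
\item $\E[Q_i]=\tau\mu_i/(\tau\mu_A)=\mu_i/\mu_A$.
\end{itemize}
Since $R$ is independent of $\bm P$, it follows that $Q_i$ is independent of the pair $(R,P_A)$, and hence of $X_A=RP_A$. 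Therefore
\[
\E[\zeta_i\mid X_A]=X_A\,\E[Q_i\mid X_A]=X_A\,\E[Q_i]=\frac{\mu_i}{\mu_A}X_A .
\]
Integrability of $\zeta_i$ follows from $0\le\zeta_i\le R\in L^1$. The event $\{P_A=0\}$ has probability zero, because all Dirichlet parameters are positive (for $A=[n]$ we have $P_A=1$ and the claim is trivial).

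The main obstacle is rigorously justifying the independence of $Q_i$ from $(R,P_A)$. To do this, I would represent the Dirichlet vector through independent gamma variables: $P_j=G_j/\sum_k G_k$ with $G_j\sim\mathrm{Gamma}(\tau\mu_j,1)$ independent. Then $Q_i=G_i/\sum_{j\in A}G_j$, and Lukacs' theorem says this ratio is independent of $\sum_{j\in A}G_j$ and of $(G_k)_{k\notin A}$, hence of $P_A$. If the conditioning is delicate, an alternative is Proposition~\ref{prop:laplace-proportionality}: compute $\E[\zeta_i e^{-tX_A}]=\E[Q_i]\,\E[X_Ae^{-tX_A}]$ by the same factorization. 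The conditional-expectation route is more direct, so I would use it.

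Once \eqref{eq:proportional-cmrs-main} holds for all $A$ and $i\in A$, Assumption~\ref{ass:dual-pricing} supplies the $\sigma(X_A)$-measurable dual optimizers, and Theorem~\ref{thm:proportional-cost-dual-pricing-PMAS} concludes.
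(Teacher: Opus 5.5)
Your proposal is correct and follows essentially the same route as the paper. The paper uses Dirichlet aggregation and neutrality, in the form $\E[P_i\mid P_A]=(\mu_i/\mu_A)P_A$, together with the independence of $R$ to obtain \eqref{eq:proportional-cmrs-main}, and then applies Theorem~\ref{thm:proportional-cost-dual-pricing-PMAS} with the proportional penalty split. Your factorization $\zeta_i=X_AQ_i$, with $Q_i$ shown independent of $(R,P_A)$ through the gamma representation, makes the paper's one-line conditioning step explicit.
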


\begin{proof}
By the representation, $\zeta_i=RP_i$ for every $i\in[n]$. Dirichlet aggregation and neutrality give $\E[P_i\mid P_A]=(\mu_i/\mu_A)P_A$, where $P_A:=\sum_{i\in A}P_i$, for every nonempty $A\subseteq[n]$ and every $i\in A$.
Since $R$ is independent of the Dirichlet vector, $\E[\zeta_i\mid X_A]=(\mu_i/\mu_A)X_A$, which is \eqref{eq:proportional-cmrs-main}. With the proportional penalty split, Theorem~\ref{thm:proportional-cost-dual-pricing-PMAS} gives the coincidence with the mean-proportional allocation and the PMAS property.
\end{proof}

Conditional-convolution models with a shared latent factor cover shared-frailty and mixed-Poisson constructions common in actuarial portfolios: the factor $\Theta$ encodes exposure to a common source of risk, and the endowments are conditionally independent given $\Theta = \theta$. They recover the Archimedean copula family in the multiplicative-frailty case $\Psi_\theta(t)=\theta\,\Psi(t)$, for $\theta > 0$, in which the Laplace transform of $\Theta$ generates the copula.

\begin{corollary}\label{cor:conditional-convolution-PMAS}
Under Assumption~\ref{ass:dual-pricing}, suppose $\zeta_1,\dots,\zeta_n$ are conditionally independent given a real-valued latent factor $\Theta$, with $\E[e^{-t\zeta_i}\mid\Theta]=e^{-\mu_i\Psi_\Theta(t)}$ for $t\ge0$ and $\mu_i>0$ for $i\in[n]$, and $\E[\partial_+\Psi_\Theta(0)]=1$. Suppose also that $\alpha_i(\pi_A^\star)=(\mu_i/\mu_A)\alpha_A(\pi_A^\star)$ for every nonempty $A\subseteq[n]$ and every $i\in A$. The dual-pricing allocation coincides with the mean-proportional allocation and is a PMAS.
\end{corollary}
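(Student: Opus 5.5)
The plan follows the template of Corollaries~\ref{cor:convolution-semigroup-PMAS} and~\ref{cor:dirichlet-liouville-PMAS}. We first establish the proportional conditional-mean identity \eqref{eq:proportional-cmrs-main} for every nonempty $A\subseteq[n]$ and $i\in A$. Together with the assumed penalty split $\alpha_i(\pi_A^\star)=(\mu_i/\mu_A)\alpha_A(\pi_A^\star)$, Theorem~\ref{thm:proportional-cost-dual-pricing-PMAS} then gives both the coincidence with the mean-proportional allocation and the PMAS property. So the only real work is the conditional-mean identity.

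To obtain it, we condition on $\Theta$ and apply the Laplace argument fibrewise. Fix $A$ and $i\in A$. By conditional independence,
$$\E[e^{-tX_A}\mid\Theta]=e^{-\mu_A\Psi_\Theta(t)}.$$
Differentiating the conditional Laplace transform of $\zeta_i$ under the conditional expectation gives
$$\E[\zeta_i e^{-tX_A}\mid\Theta]=\mu_i\Psi_\Theta'(t)\,e^{-\mu_A\Psi_\Theta(t)},$$
using conditional independence of $\zeta_i$ and $X_{A\setminus\{i\}}$. In the same way,
$$\E[X_A e^{-tX_A}\mid\Theta]=\mu_A\Psi_\Theta'(t)\,e^{-\mu_A\Psi_\Theta(t)}.$$
Hence, for $t>0$,
$$\E[\zeta_i e^{-tX_A}\mid\Theta]=\frac{\mu_i}{\mu_A}\,\E[X_A e^{-tX_A}\mid\Theta].$$
Taking expectations yields $\mathcal L_{i,A}^{\ast}(t)=(\mu_i/\mu_A)(-\mathcal L_A'(t))$ on an interval $(0,\varepsilon)$. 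Proposition~\ref{prop:laplace-proportionality} then gives $\E[\zeta_i\mid X_A]=(\mu_i/\mu_A)X_A$.

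A more direct route is also available. Conditionally on $\Theta=\theta$, the vector has the convolution-semigroup structure of Corollary~\ref{cor:convolution-semigroup-PMAS}, so that argument gives $\E[\zeta_i\mid X_A,\Theta]=(\mu_i/\mu_A)X_A$. The tower property then gives the claim, because the right-hand side is $\sigma(X_A)$-measurable. I would present this version, since it avoids needing uniqueness of the mixed transforms.

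The normalization $\E[\partial_+\Psi_\Theta(0)]=1$ serves only to make $\E[\zeta_i]=\mu_i\E[\partial_+\Psi_\Theta(0)]=\mu_i$. This ensures that the $\mu_i$ appearing in the penalty split and in Theorem~\ref{thm:proportional-cost-dual-pricing-PMAS} are the actual means.

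The main obstacle is technical. We need to justify differentiation under the conditional expectation and integrability, namely $\zeta_i\in L^1$, and that for each $\theta$ the conditional laws form a nonnegative integrable convolution semigroup so the fibrewise result applies. I would handle this by noting that $\zeta_i\ge0$ for each fibre, as it is a subordinator law, and using monotone convergence for $t\downarrow0$. Integrability of $\zeta_i$ follows from the finite mean normalization.
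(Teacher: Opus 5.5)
Your proposal is correct and is essentially the paper's proof: it derives the conditional Laplace identities given $\Theta$, applies the Laplace-proportionality argument conditionally to get $\E[\zeta_i\mid X_A,\Theta]=(\mu_i/\mu_A)X_A$, uses the tower property, and concludes via Theorem~\ref{thm:proportional-cost-dual-pricing-PMAS} together with the assumed penalty split. Your averaged (unconditional) variant is also valid. Note, however, that the fibrewise route does not actually avoid Laplace-transform uniqueness; it simply invokes it for each conditional law instead of for the mixed measures.
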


\begin{proof}
For every nonempty $A\subseteq[n]$ and every $i\in A$, conditional independence gives, for $t\ge0$, $\E[e^{-tX_A}\mid\Theta]=e^{-\mu_A\Psi_\Theta(t)}$ and $\E[\zeta_i e^{-tX_A}\mid\Theta]=(\mu_i/\mu_A)(-\partial_t\E[e^{-tX_A}\mid\Theta])$. Proposition~\ref{prop:laplace-proportionality} applied conditionally on $\Theta$ yields $\E[\zeta_i\mid X_A,\Theta]=(\mu_i/\mu_A)X_A$ almost surely; the tower property and the deterministic coefficient give $\E[\zeta_i\mid X_A]=(\mu_i/\mu_A)X_A$, which is \eqref{eq:proportional-cmrs-main}. With the proportional penalty split, Theorem~\ref{thm:proportional-cost-dual-pricing-PMAS} gives the coincidence with the mean-proportional allocation and the PMAS property.
\end{proof}

The preceding corollaries establish PMAS by showing that dual pricing coincides with the mean-proportional allocation. However, PMAS transfer rules need not coincide with the dual price. The mean-proportional allocation \eqref{eq:proportional-selector-main} charges each agent the same cost per unit of mean exposure, and the following example shows that it can support a PMAS even when the conditional-mean shares are not proportional, so this PMAS is not a dual price. 

\begin{corollary}\label{cor:gamma-pmas-not-pricing}
Fix $\alpha>0$ and $0<a<b$, let $G_1,\dots,G_n$ be independent and identically distributed $\mathrm{Gamma}(\alpha,1)$ random variables, and set $\zeta_1=aG_1/\alpha$ and $\zeta_i=bG_i/\alpha$ for $i\in[n]\setminus\{1\}$, so that $\mu_1=a$ and $\mu_i=b$ for $i\in[n]\setminus\{1\}$. Suppose Assumption~\ref{ass:dual-pricing} holds. Suppose also that, for every $\varnothing\ne A\subseteq B\subseteq[n]$, $R_B^{\bm \mu}(\bar X_A)\le R_A^{\bm \mu}(\bar X_A)$. Then the mean-proportional allocation $C_i^{\bm \mu}(A)=\mu_iC(A)/\mu_A$, $i\in A$, is a PMAS. If, for some $j\in[n]\setminus\{1\}$ and $A=\{1,j\}$, the selected coherent dual optimizer $\pi_A^\star$ is a nonconstant nondecreasing function of $X_A$, then this PMAS does not coincide with the dual-pricing allocation.
\end{corollary}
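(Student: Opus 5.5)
The proof has two parts: a PMAS claim for the mean-proportional rule, and a non-coincidence claim against dual pricing.

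\textbf{PMAS part.} I would apply Theorem~\ref{thm:mean-proportional-PMAS}. The evaluator-order hypothesis $R_B^{\bm\mu}(\bar X_A)\le R_A^{\bm\mu}(\bar X_A)$ is assumed, so it remains to verify the normalized convex-order condition \eqref{eq:normalized-cx-main}: $\bar X_B\cx\bar X_A$ for every $\varnothing\ne A\subseteq B$. By transitivity of the convex order it suffices to treat one-step enlargements $B=A\cup\{j\}$. Note that $\zeta_i=\mu_iG_i/\alpha$, so $\bar X_A=\sum_{i\in A}\mu_iG_i/(\alpha\mu_A)$ is a weighted average of i.i.d.\ mean-one variables $G_i/\alpha$, with weights $\mu_i/\mu_A$ summing to one.

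The plan is to realize $\bar X_A$ and $\bar X_B$ on a common space so that $\bar X_B=\E[\tilde X_A\mid\mathcal G]$ for some copy $\tilde X_A\stackrel{d}{=}\bar X_A$; Jensen then gives the convex order. The first attempt exploits the Gamma structure. In $X_B$, agent $j$'s term is a multiple of $G_j$, and Gamma additivity and scaling say that $\mu_jG_j/\alpha$ is itself a Gamma variable with shape $\alpha$ and scale $\mu_j/\alpha$. If all the scales in a coalition were equal (for instance, when $1\notin A$, all scales equal $b/\alpha$), then $X_A$ would be Gamma with shape $\alpha|A|$. The exchangeability/sampling argument of Corollary~\ref{cor:exchangeable-PMAS-common} then applies verbatim. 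In that argument, $|A|^{-1}\sum_{k\in J}\zeta_k$ over a uniformly chosen $|A|$-subset $J$ of $B$ has conditional mean $X_B/|B|$. This handles all coalitions not containing agent~1.

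The case involving agent~1 is the main obstacle, because the weights are unequal. I would use the general majorization criterion for weighted sums of i.i.d.\ variables: if the weight vector $w^B$ (padded with zeros) is majorized by $w^A$, then $\sum w^B_iY_i\cx\sum w^A_iY_i$. The argument expresses $w^B$ as an average of permutations of $w^A$ (Birkhoff) and conditions on $(Y_i)$ under a random permutation. So I would check that $(\mu_i/\mu_B)_{i\in B}$ is majorized by $(\mu_i/\mu_A,0)$. Here the sorted partial sums of the larger coalition's weights must lie below those of the smaller coalition's weights. When $1\in A$ and $j\ne1$ enters, the $b$-weights shrink from $b/\mu_A$ to $b/\mu_B$ and agent~1's weight shrinks too, while one new weight $b/\mu_B$ is added. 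I expect this to verify the majorization by comparing cumulative sums directly. When $A$ omits agent~1 and $j=1$ enters with the small weight, the new vector is again a spread-reducing mixture. If a case fails majorization, I would fall back on the Gamma-specific fact that only two scale values occur and handle it by explicit conditioning on $G_1$.

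\textbf{Non-coincidence part.} Take $A=\{1,j\}$ and compute $\E[\zeta_1\mid X_A]$. Write $X_A=aG_1/\alpha+bG_j/\alpha$. Since $a<b$, this conditional mean is not proportional to $X_A$. For instance, its small-$X_A$ behaviour differs from $(a/(a+b))X_A$: near $X_A=0$, agent~1's share of the total is larger, and $\E[\zeta_1\mid X_A]/X_A$ is strictly decreasing. I would establish this strict monotonicity of the ratio via Proposition~\ref{prop:cmrs-density-identity} or a Laplace computation. In the coherent case, the dual-pricing cost is $C_1^\star(A)=\E[\pi_A^\star\,\E[\zeta_1\mid X_A]]$. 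Because $\pi_A^\star$ is nonconstant and nondecreasing in $X_A$ with mean one, it tilts mass toward large $X_A$, where agent~1's share $m_{1,A}/X_A$ is below its average. A covariance (Chebyshev-type) inequality then gives
\[
C_1^\star(A)<\frac{a}{a+b}\,\E[\pi_A^\star X_A]=\frac{a}{a+b}\,C(A)=C_1^{\bm\mu}(A),
\]
so the two allocations differ.
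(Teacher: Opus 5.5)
Your proposal is correct. The PMAS half is the paper's argument: majorization of $(\mu_i/\mu_B)_{i\in B}$ by the zero-padded $(\mu_i/\mu_A)_{i\in A}$, then Birkhoff, exchangeability of the $G_i/\alpha$ and Jensen, then Theorem~\ref{thm:mean-proportional-PMAS}.

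The partial-sum check you leave as an expectation does go through, for every $A\subseteq B$ at once. So neither the one-step reduction nor the fallback via conditioning on $G_1$ is needed. For $\ell<|B|$, the sum of the $\ell$ largest $B$-weights is $\ell b/\mu_B$. If $1\in A$, this is at most $\min\{\ell b/\mu_A,1\}$. If $1\notin A$, it is at most $\min\{\ell/|A|,1\}$, using $\mu_B\ge\mu_A=|A|b$. In each case the bound is the corresponding sum for the padded $A$-weights.

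For non-coincidence you use the same mechanism as the paper: the low-exposure agent's conditional share falls as $X_A$ grows, and a nondecreasing price tilts toward large $X_A$. You close it differently. The paper uses a single sign change of the mean-zero residual $r(s)=\E[\zeta_j\mid X_A=s]-bs/(a+b)$, via $\E[\pi_A^\star r(X_A)]=\E[(g(X_A)-g(s_0))r(X_A)]>0$. You use a Chebyshev covariance inequality. Your agent-1 inequality is equivalent to the paper's agent-$j$ one, since both rules split $C(A)$.

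The covariance route needs one point made precise: which average of the share equals $a/(a+b)$. Write $\psi(s):=\E[\zeta_1\mid X_A=s]/s$. The average that equals $a/(a+b)$ is the $X_A$-weighted one, $\E[X_A\psi(X_A)]/\mu_A=\E[\zeta_1]/\mu_A$. The plain average $\E[\zeta_1/X_A]$ is strictly larger, because $\Cov(X_A,\psi(X_A))<0$; for instance, for $\alpha=1$, $a=1$, $b=2$ it equals $2\ln 2-1>1/3$. So tilting $\P$ by the mean-one density $\pi_A^\star$, as your wording suggests, does not give your display.

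Run the inequality under $\d\tilde{\P}/\d\P=X_A/\mu_A$ instead. There $\tilde{\E}[g(X_A)\psi(X_A)]<\tilde{\E}[g(X_A)]\,\tilde{\E}[\psi(X_A)]=\frac{a}{a+b}\tilde{\E}[g(X_A)]$. This is strict because $g$ is nonconstant, $\psi$ is strictly decreasing, and $\tilde{\P}\sim\P$. Multiplying by $\mu_A$ gives $C_1^\star(A)<\frac{a}{a+b}\E[\pi_A^\star X_A]=C_1^{\bm\mu}(A)$.

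For the strict decrease of $\psi$, the direct route is the paper's. The conditional density of $\zeta_1/X_A$ given $X_A=s$ is proportional to $t^{\alpha-1}(1-t)^{\alpha-1}\exp\{-\alpha st(b-a)/(ab)\}$ on $(0,1)$. This is an exponentially tilted Beta whose mean strictly decreases in $s$. It is exactly what Proposition~\ref{prop:cmrs-density-identity} yields after substituting $u=st$. A Laplace-transform computation is not a natural way to get monotonicity.
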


\begin{proof}
Set $Z_i:=G_i/\alpha$, so the $Z_i$ are independent and identically distributed with mean one and $X_A/\mu_A=\sum_{i\in A}(\mu_i/\mu_A)Z_i$. Fix $\emptyset\ne A\subseteq B\subseteq[n]$. Because the exposures $(\mu_i)_{i\in[n]}$ take only the two values $a$ and $b$ with $a<b$, comparing decreasing-order partial sums shows that $(\mu_i/\mu_B)_{i\in B}\in\R^B$ is a convex combination of permutations of the vector $p\in\R^B$ with $p_i=\mu_i/\mu_A$ for $i\in A$ and $p_i=0$ for $i\in B\setminus A$. Exchangeability of the $Z_i$ then gives, via Jensen's inequality, $X_B/\mu_B\cx X_A/\mu_A$, which is the normalized convex-order condition \eqref{eq:normalized-cx-main}. Theorem~\ref{thm:mean-proportional-PMAS} delivers the PMAS conclusion.

For the second claim, fix $j\in[n]\setminus\{1\}$ and let $A=\{1,j\}$. The pair $(aZ_1,bZ_j)$ has independent $\mathrm{Gamma}$ marginals with common shape $\alpha$ but distinct rates $\alpha/a$ and $\alpha/b$, so the conditional density of $aZ_1/X_A$ given $X_A=s$ is proportional to $t^{\alpha-1}(1-t)^{\alpha-1}\exp(-\alpha s t(b-a)/(ab))$ on $(0,1)$ and depends on $s$ when $a\ne b$. Consequently $s\mapsto\E[\zeta_j\mid X_A=s]$ is strictly increasing in $s$ and not proportional to $s$, so the proportional conditional-mean identity \eqref{eq:proportional-cmrs-main} fails on $A$. For coherent risk measures, the dual-pricing cost-share difference is $C_j^\star(A)-C_j^{\bm \mu}(A)=\E[\pi_A^\star r(X_A)]$, where $r(s):=\E[\zeta_j\mid X_A=s]-b s/(a+b)$. The residual has mean zero but is not monotone; instead $r(s)/s=\E[1-aZ_1/X_A\mid X_A=s]-b/(a+b)$ is nondecreasing in $s$, because the conditional law of $aZ_1/X_A$ given $X_A=s$ shifts toward smaller values as $s$ increases. Thus $r$ has a single sign change, say at $s_0$, and for $\pi_A^\star=g(X_A)$ nondecreasing, $\E[\pi_A^\star r(X_A)]=\E[(g(X_A)-g(s_0))r(X_A)]>0$ under the stated nonconstant selection. Hence dual pricing charges the high-exposure agent strictly more than the proportional-cost rule, and the two rules differ on $A$.
\end{proof}

\section{Transfer rules and core membership examples}\label{sec:examples}

We now illustrate the theory developed in Sections~\ref{sec:main-results}--\ref{sec:stable-expansion-dual} and Appendices~\ref{app:distortion-background}--\ref{sec:transfer-rules} with two examples. We suppose Assumption~\ref{ass:distortion} holds for the next examples. The first is a Bernoulli model in which all agents share ordered distortions, so the efficient allocation assigns the entire aggregate to the least conservative agent. We will illustrate how different transfer rules allocate the welfare gain from risk sharing. By fixing marginal laws and varying only the dependence structure, we isolate the effect of diversification on the cooperative game, the core geometry, and the four canonical transfer rules. The second is a five-agent comonotonic example designed to illustrate Corollary~\ref{cor:comonotonic-PMAS}: different agents attain the lower envelope on different layers of the aggregate loss, so the efficient allocation is multi-tranche, and the coalition-wise lower-envelope rule yields an explicit PMAS.

\subsection{A Bernoulli illustration}\label{ss:bernoulli-illustration}

Let $\zeta_i:=B_i$ for $i\in\{1,2,3\}$, where each $B_i$ is Bernoulli with $\P(B_1=1)=0.3$, $\P(B_2=1)=0.4$, and $\P(B_3=1)=0.5$. We fix the marginal laws throughout and vary only the dependence structure of $(B_1,B_2,B_3)$; doing so isolates the diversification channel in the gain game $\nu$ while keeping stand-alone evaluations unchanged. For $i\in\{1,2,3\}$, agent $i$ evaluates losses through the dual-power distortion $h_i(u):=1-(1-u)^{\theta_i}$ with parameters $\theta_1=1.4$, $\theta_2=2.5$, and $\theta_3=3$. Since $\theta_1<\theta_2<\theta_3$, one has $h_1\le h_2\le h_3$ pointwise, so agent~1 is the least conservative and the lower envelope is $h_\wedge=h_1$.

For integer-valued losses, the distortion risk measure reduces to $\rho_{h_i}(Y)=\sum_{\ell=1}^{K} h_i\bigl(\P(Y\ge \ell)\bigr)$ for $i\in\{1,2,3\}$, so coalition costs and welfare gains may be computed directly from tail probabilities. Since $h_1\le h_2\le h_3$, every layer of the aggregate loss $X:=B_1+B_2+B_3$ is priced most favourably by agent~1; the inf-convolution is therefore $(\rho_{h_1}\square \rho_{h_2}\square \rho_{h_3})(X)=\rho_{h_1}(X)$, with efficient allocation $X_1=X$, $X_2=X_3=0$. The economic content of the transfer problem is not how to split the random loss, but how to split the welfare gain created by letting the least conservative agent absorb it. The autarky evaluations are $\rho_1(B_1)=0.393072$, $\rho_2(B_2)=0.721145$, and $\rho_3(B_3)=0.875$, and remain the same across all four dependence structures described below.

We consider four dependence structures for $(B_1,B_2,B_3)$ that preserve the same Bernoulli margins as above. Let $U\sim\mathrm{Unif}(0,1)$. The first case is independence, which serves as the baseline. The second is full comonotonicity, in which $B_i=\id_{\{U\le p_i\}}$ for $i\in\{1,2,3\}$. The third and fourth cases are mixed structures that combine a comonotonic pair with one agent moving in the opposite direction: in the third, agents~$1$ and~$2$ are comonotonic while agent~$3$ is counter-monotone, so that $B_1=\id_{\{U\le 0.3\}}$, $B_2=\id_{\{U\le 0.4\}}$, and $B_3=\id_{\{U>0.5\}}$; in the fourth, agents~$2$ and~$3$ are comonotonic while agent~$1$ is counter-monotone, so that $B_1=\id_{\{U>0.7\}}$, $B_2=\id_{\{U\le 0.4\}}$, and $B_3=\id_{\{U\le 0.5\}}$. The remaining mixed configuration, namely $(1,3)$ comonotonic with agent~$2$ counter-monotone, is qualitatively similar and is omitted. Since only the law of $X$ changes, the welfare simplex, cooperative game, core, and all transfer rules are modified solely through the dependence channel.

Table~\ref{tab:bernoulli-dependence-game} records the induced welfare game: the grand-coalition surplus $W$, the three pair-coalition values, and whether the game is convex. Table~\ref{tab:bernoulli-dependence-rules} collects the welfare-gain vectors, surplus shares, and IR/core diagnostics for actuarial fairness, $\Q^\star$-pricing, the Shapley value, the nucleolus, and the mean-weighted proportional-cost rule (PC) from \eqref{eq:proportional-selector-main} (i.e.\ $\xi=\mu_i=p_i$ for every $i\in[3]$) under each dependence structure. Figure~\ref{fig:bernoulli-dependence-comparison} displays the resulting welfare geometry; in each panel, the coordinates are $(w_1,w_2)$ with $w_3=W-w_1-w_2$, the grey triangle is the IR simplex, the blue polygon is the core, and the markers indicate the five rules.

\begin{table}[ht]
\centering
\begin{tabular}{lccccc}
\toprule
Scenario & $W=\nu(\{1,2,3\})$ & $\nu(\{1,2\})$ & $\nu(\{1,3\})$ & $\nu(\{2,3\})$ & Convex \\
\midrule
Independent & 0.5658 & 0.2472 & 0.2946 & 0.2179 & Yes \\
All comonotonic & 0.4642 & 0.2103 & 0.2539 & 0.0518 & Yes \\
$(1,2)$ com., $3$ counter & 0.6360 & 0.2103 & 0.3731 & 0.5994 & No \\
$(2,3)$ com., $1$ counter & 0.5834 & 0.2996 & 0.3731 & 0.0518 & No \\
\bottomrule
\end{tabular}
\caption{The welfare game induced by each dependence structure; singleton coalition values are zero in all four cases.}
\label{tab:bernoulli-dependence-game}
\end{table}

\begin{figure}[!htb]
\centering
\begin{tikzpicture}
\begin{groupplot}[
  group style={group size=2 by 2, horizontal sep=1.3cm, vertical sep=2.1cm},
  width=0.34\textwidth,
  height=0.34\textwidth,
  axis lines=left,
  tick label style={font=\scriptsize},
  label style={font=\small},
  title style={font=\small},
  clip=false,
  scale only axis
]

\nextgroupplot[
  title={I. Independent},
  ylabel={$w_2$},
  xmin=-0.18, xmax=0.60,
  ymin=0.00, ymax=0.60
]
\addplot[draw=gray!70, dashed, fill=gray!20, line width=0.8pt]
  coordinates {(0,0) (0.5658,0) (0,0.5658)}
\closedcycle;
\addplot[draw=RoyalBlue!80!black, fill=Cyan!15, line width=0.9pt]
  coordinates {(0.3480,0.0000) (0.2472,0.0000) (0.0000,0.2472) (0.0000,0.2713) (0.2946,0.2713) (0.3480,0.2179)}
\closedcycle;
\addplot[draw=RoyalBlue!80!black, fill=Cyan!30, line width=0.9pt]
  coordinates {(0.3480,0) (0.2472,0) (0,0.2472) (0,0.2713) (0.2946,0.2713) (0.3480,0.2179) (0.3480,0)};
\addplot[only marks, mark=triangle*, mark options={draw=BrickRed, fill=BrickRed, fill opacity=1}, color=BrickRed, line width=0.8pt, mark size=2.8pt]
  coordinates {(-0.1303,0.3211)};
\addplot[only marks, mark=*, mark options={draw=ForestGreen!70!black, fill=ForestGreen!70!black}, color=ForestGreen!70!black, mark size=2.5pt]
  coordinates {(0.0299,0.2448)};
\addplot[only marks, mark=square*, mark options={draw=Magenta!80!black, fill=Magenta!80!black}, color=Magenta!80!black, mark size=2pt]
  coordinates {(0.2063,0.1679)};
\addplot[only marks, mark=diamond*, mark options={draw=Orange!80!black, fill=Orange!80!black}, color=Orange!80!black, mark size=2.4pt]
  coordinates {(0.2240,0.1473)};
\addplot[only marks, mark=pentagon*, mark options={draw=Brown!80!black, fill=Brown!80!black}, color=Brown!80!black, mark size=2.5pt]
  coordinates {(0.0372,0.2467)};

\nextgroupplot[
  title={II. All comonotonic},
  xmin=-0.30, xmax=0.52,
  ymin=0.00, ymax=0.50
]
\addplot[draw=gray!70, dashed, fill=gray!20, line width=0.8pt]
  coordinates {(0,0) (0.4642,0) (0,0.4642)}
\closedcycle;
\addplot[draw=RoyalBlue!80!black, fill=Cyan!15, line width=0.9pt]
  coordinates {(0.4124,0.0518) (0.4124,0.0000) (0.2103,0.0000) (0.0000,0.2103) (0.2539,0.2103)}
\closedcycle;
\addplot[draw=RoyalBlue!80!black, fill=Cyan!30, line width=0.9pt]
  coordinates {(0.4124,0.0518) (0.4124,0) (0.2103,0) (0,0.2103) (0.2539,0.2103) (0.4124,0.0518)};
\addplot[only marks, mark=triangle*, mark options={draw=BrickRed, fill=BrickRed, fill opacity=1}, color=BrickRed, line width=0.8pt, mark size=2.8pt]
  coordinates {(-0.2320,0.3211)};
\addplot[only marks, mark=*, mark options={draw=ForestGreen!70!black, fill=ForestGreen!70!black}, color=ForestGreen!70!black, mark size=2.5pt]
  coordinates {(0.0000,0.2103)};
\addplot[only marks, mark=square*, mark options={draw=Magenta!80!black, fill=Magenta!80!black}, color=Magenta!80!black, mark size=2pt]
  coordinates {(0.2148,0.1138)};
\addplot[only marks, mark=diamond*, mark options={draw=Orange!80!black, fill=Orange!80!black}, color=Orange!80!black, mark size=2.4pt]
  coordinates {(0.2749,0.0728)};
\addplot[only marks, mark=pentagon*, mark options={draw=Brown!80!black, fill=Brown!80!black}, color=Brown!80!black, mark size=2.5pt]
  coordinates {(0.0118,0.2128)};

\nextgroupplot[
  title={III. (1,2) com., 3 counter.},
  xlabel={$w_1$},
  ylabel={$w_2$},
  xmin=-0.10, xmax=0.68,
  ymin=0.00, ymax=0.68
]
\addplot[draw=gray!70, dashed, fill=gray!20, line width=0.8pt]
  coordinates {(0,0) (0.6360,0) (0,0.6360)}
\closedcycle;
\addplot[draw=RoyalBlue!80!black, fill=Cyan!15, line width=0.9pt]
  coordinates {(0.2103,0.0000) (0.0366,0.0000) (0.0000,0.2103) (0.0000,0.2628) (0.0366,0.5993) (0.3731,0.2628)}
\closedcycle;
\addplot[draw=RoyalBlue!80!black, fill=Cyan!30, line width=0.9pt]
  coordinates {(0.0366,0.1736) (0.0000,0.2103) (0.0000,0.2628) (0.0366,0.2628) (0.0366,0.1736)};
\addplot[only marks, mark=triangle*, mark options={draw=BrickRed, fill=BrickRed, fill opacity=1}, color=BrickRed, line width=0.8pt, mark size=2.8pt]
  coordinates {(-0.0602,0.3211)};
\addplot[only marks, mark=*, mark options={draw=ForestGreen!70!black, fill=ForestGreen!70!black}, color=ForestGreen!70!black, mark size=2.5pt]
  coordinates {(0.0000,0.2336)};
\addplot[only marks, mark=square*, mark options={draw=Magenta!80!black, fill=Magenta!80!black}, color=Magenta!80!black, mark size=2pt]
  coordinates {(0.1094,0.2225)};
\addplot[only marks, mark=diamond*, mark options={draw=Orange!80!black, fill=Orange!80!black}, color=Orange!80!black, mark size=2.4pt]
  coordinates {(0.0069,0.2332)};
\addplot[only marks, mark=pentagon*, mark options={draw=Brown!80!black, fill=Brown!80!black}, color=Brown!80!black, mark size=2.5pt]
  coordinates {(0.0548,0.2701)};

\nextgroupplot[
  title={IV. (2,3) com., 1 counter.},
  xlabel={$w_1$},
  xmin=-0.15, xmax=0.60,
  ymin=0.00, ymax=0.60
]
\addplot[draw=gray!70, dashed, fill=gray!20, line width=0.8pt]
  coordinates {(0,0) (0.5834,0) (0,0.5834)}
\closedcycle;
\addplot[draw=RoyalBlue!80!black, fill=Cyan!15, line width=0.9pt]
  coordinates {(0.5316,0.0518) (0.5316,0.0000) (0.2996,0.0000) (0.0000,0.2103) (0.0000,0.2996) (0.3731,0.2103)}
\closedcycle;
\addplot[draw=RoyalBlue!80!black, fill=Cyan!30, line width=0.9pt]
  coordinates {(0.5316,0.0518) (0.5316,0.0000) (0.2996,0.0000) (0.0893,0.2103) (0.3731,0.2103) (0.5316,0.0518)};
\addplot[only marks, mark=triangle*, mark options={draw=BrickRed, fill=BrickRed, fill opacity=1}, color=BrickRed, line width=0.8pt, mark size=2.8pt]
  coordinates {(-0.1128,0.3211)};
\addplot[only marks, mark=*, mark options={draw=ForestGreen!70!black, fill=ForestGreen!70!black}, color=ForestGreen!70!black, mark size=2.5pt]
  coordinates {(0.1050,0.2103)};
\addplot[only marks, mark=square*, mark options={draw=Magenta!80!black, fill=Magenta!80!black}, color=Magenta!80!black, mark size=2pt]
  coordinates {(0.2893,0.1286)};
\addplot[only marks, mark=diamond*, mark options={draw=Orange!80!black, fill=Orange!80!black}, color=Orange!80!black, mark size=2.4pt]
  coordinates {(0.3842,0.0629)};
\addplot[only marks, mark=pentagon*, mark options={draw=Brown!80!black, fill=Brown!80!black}, color=Brown!80!black, mark size=2.5pt]
  coordinates {(0.0416,0.2525)};

\end{groupplot}
\end{tikzpicture}
\vspace{0.4em}

\begin{tabular}{llllllll}
\tikz[baseline=-0.6ex]\draw[gray!70, dashed, fill=gray!20] (0,0) rectangle (0.28,0.12);~IR simplex &
\tikz[baseline=-0.6ex]\draw[RoyalBlue!80!black, fill=Cyan!15] (0,0) rectangle (0.28,0.12);~Weber set &
\tikz[baseline=-0.6ex]\draw[RoyalBlue!80!black, fill=Cyan!30] (0,0) rectangle (0.28,0.12);~Core &
\tikz[baseline=-0.6ex]\draw[BrickRed, fill=BrickRed, line width=0.8pt] (0,0.12) -- (0.10,-0.06) -- (-0.10,-0.06) -- cycle;~AF &
\tikz[baseline=-0.6ex]\filldraw[draw=ForestGreen!70!black, fill=ForestGreen!70!black] (0,0) circle (1.6pt);~$\Q^\star$ &
\tikz[baseline=-0.6ex]\filldraw[draw=Magenta!80!black, fill=Magenta!80!black] (-0.08,-0.08) rectangle (0.08,0.08);~Shapley &
\tikz[baseline=-0.6ex]\filldraw[draw=Orange!80!black, fill=Orange!80!black, rotate=45] (-0.09,-0.09) rectangle (0.09,0.09);~Nucleolus &
\tikz[baseline=-0.6ex]\filldraw[draw=Brown!80!black, fill=Brown!80!black] (0,0.10) -- (0.095,0.031) -- (0.059,-0.081) -- (-0.059,-0.081) -- (-0.095,0.031) -- cycle;~PC
\end{tabular}
\caption{IR simplex, Weber set, core, and five transfer rules in the Bernoulli illustration; the plots are drawn directly from the welfare coordinates of the four dependence structures described above. In panels~I and~II the game is convex, so the Weber set coincides with the core. The nucleolus lies in the core in all four cases and consistently tilts toward the risk-absorbing agent relative to the Shapley value. The mean-weighted proportional-cost rule (PC) lies in the core only in panel~I.}
\label{fig:bernoulli-dependence-comparison}
\end{figure}

\begin{table}[ht]
\centering
\begin{tabular}{llccccccccc}
\toprule
Scenario & Rule & $W$ & $w_1$ & $w_2$ & $w_3$ & IR & Weber & Core & EM \\
\midrule
\multicolumn{2}{l}{I. Independent} & 0.5658 &  &  &  &  &  &  &  \\
 & AF &  & -0.1303 & 0.3211 & 0.3750 & N & N & N & N \\
 & $\Q^\star$ &  & 0.0299 & 0.2448 & 0.2912 & Y & Y & Y & Y \\
 & Shapley &  & 0.2063 & 0.1679 & 0.1916 & Y & Y & Y & Y \\
 & Nucleolus &  & 0.2240 & 0.1473 & 0.1946 & Y & Y & Y & Y \\
 & PC &  & 0.0372 & 0.2467 & 0.2819 & Y & Y & Y & Y \\
\addlinespace
\multicolumn{2}{l}{II. All comonotonic} & 0.4642 &  &  &  &  &  &  &  \\
 & AF &  & -0.2320 & 0.3211 & 0.3750 & N & N & N & N \\
 & $\Q^\star$ &  & 0.0000 & 0.2103 & 0.2539 & Y & Y & Y & Y \\
 & Shapley &  & 0.2148 & 0.1138 & 0.1356 & Y & Y & Y & Y \\
 & Nucleolus &  & 0.2749 & 0.0728 & 0.1164 & Y & Y & Y & N \\
 & PC &  & 0.0118 & 0.2128 & 0.2396 & Y & N & N & N \\
\addlinespace
\multicolumn{2}{l}{III. $(1,2)$ com., $3$ counter} & 0.6360 &  &  &  &  &  &  &  \\
 & AF &  & -0.0602 & 0.3211 & 0.3750 & N & N & N & N \\
 & $\Q^\star$ &  & 0.0000 & 0.2336 & 0.4024 & Y & Y & Y & N \\
 & Shapley &  & 0.1094 & 0.2225 & 0.3040 & Y & Y & N & N \\
 & Nucleolus &  & 0.0069 & 0.2332 & 0.3960 & Y & Y & Y & N \\
 & PC &  & 0.0548 & 0.2701 & 0.3111 & Y & Y & N & N \\
\addlinespace
\multicolumn{2}{l}{IV. $(2,3)$ com., $1$ counter} & 0.5834 &  &  &  &  &  &  &  \\
 & AF &  & -0.1128 & 0.3211 & 0.3750 & N & N & N & N \\
 & $\Q^\star$ &  & 0.1050 & 0.2103 & 0.2681 & Y & Y & Y & N \\
 & Shapley &  & 0.2893 & 0.1286 & 0.1654 & Y & Y & Y & N \\
 & Nucleolus &  & 0.3842 & 0.0629 & 0.1364 & Y & Y & Y & N \\
 & PC &  & 0.0416 & 0.2525 & 0.2892 & Y & Y & N & N \\
\bottomrule
\end{tabular}
\caption{Welfare gains and stability diagnostics for five transfer rules under four dependence structures. The proportional-cost row (PC) uses mean weights $\xi_i=\mu_i=p_i$ for every $i\in[3]$. EM marks entry monotonicity, i.e.\ whether incumbents' welfare gains weakly increase along every maximal chain.}
\label{tab:bernoulli-dependence-rules}
\end{table}

The four scenarios highlight different mechanisms. In the fully comonotonic case, diversification is weakest, so the grand-coalition surplus is smallest; no pair coalition becomes disproportionately attractive, the game remains convex, and both $\Q^\star$ and the Shapley value stay in the core. The mixed cases are more instructive. When agents~$1$ and~$2$ are comonotonic with agent~$3$ counter-monotone, agents~$2$ and~$3$ nearly hedge each other: one has $B_2+B_3=1$ with probability $0.9$, so coalition $\{2,3\}$ is already very efficient on its own. The core inequality $w_2+w_3\ge \nu(\{2,3\})$ becomes very restrictive, sharply capping the amount assignable to agent~$1$. This explains both the loss of convexity and the failure of the Shapley value to remain in the core: agent~$1$ contributes substantially when joining weaker coalitions but adds comparatively little to the already strong $\{2,3\}$, so the Shapley average overestimates agent~$1$'s fair share from coalition $\{2,3\}$'s perspective. In the case $(2,3)$ comonotonic with $1$ counter, the game is again non-convex, but now no single pair dominates the core geometry, and the Shapley value remains coalitionally stable.

The $\Q^\star$-pricing rule exhibits a complementary pattern. In the fully comonotonic case, every endowment is an increasing function of the aggregate loss: $B_3=\id_{\{X\ge 1\}}$, $B_2=\id_{\{X\ge 2\}}$, and $B_1=\id_{\{X\ge 3\}}$. Since agent~1 carries the lower envelope ($h_\wedge=h_1$) and $\zeta_1$ is comonotonic with~$X$, the lower-envelope price of $\zeta_1$ equals its stand-alone distortion evaluation, giving $w_1^{\Q^\star}=0$ as confirmed by Table~\ref{tab:bernoulli-dependence-rules}; the welfare-gain vector sits at the vertex $(w_1,w_2)=(0,\,0.2103)$ in panel~2 of Figure~\ref{fig:bernoulli-dependence-comparison}. In case~3, $(1,2)$ comonotonic with $3$ counter, agent~1's endowment remains comonotonic with $X$ because $B_1=\id_{\{X\ge 2\}}$, so $w_1^{\Q^\star}=0$ persists by the same mechanism; however, agent~3's endowment is not an increasing function of $X$, so the welfare-gain vector moves off the vertex onto the boundary edge $w_1=0$ in the relative interior of the core, visible in panel~3. Finally, in case~4, $(2,3)$ comonotonic with $1$ counter, agent~1's endowment $B_1=\id_{\{U>0.7\}}$ is no longer comonotonic with $X$: the aggregate loss takes the value $X=1$ both when $B_1=1$ (for $U>0.7$) and when $B_1=0$ (for $0.4<U\le 0.5$), so this equality no longer holds and $w_1^{\Q^\star}=0.1050>0$.

The EM column of Table~\ref{tab:bernoulli-dependence-rules} tracks entry monotonicity along every maximal chain. AF fails in every scenario, a concrete instance of Theorem~\ref{thm:canonical-not-PMAS}: agent~$1$ absorbs all layers in the efficient allocation, so expectation-preserving transfers strip welfare from agent~$1$ as the coalition grows. The $\Q^\star$-pricing rule is EM in scenarios~I and~II, matching the independent log-concave result of \citet{chenhuwang2017stable} and Corollary~\ref{cor:comonotonic-PMAS}, and fails in the mixed scenarios where lower-envelope premia are no longer antitone under inclusion. The Shapley value is EM precisely in the two convex-game scenarios, in line with Proposition~\ref{prop:convex-PMAS}. The nucleolus is more fragile: it fails EM already in scenario~II despite the game being convex, a three-agent analogue of \citet[Proposition~1]{sonmez1993population}. The mean-weighted proportional-cost rule is EM only in scenario~I, where the normalized aggregate cost $\bar C_\mu(A):=C(A)/\mu_A$ is antitone under inclusion on $[3]$; in scenarios~II--IV the chain $\{1,3\}\subseteq[3]$ already violates antitonicity, since adding agent~$2$ raises the normalized cost, so by Theorem~\ref{thm:PC-PMAS} the rule fails to be a PMAS. In particular, scenario~II realizes the reverse separation: lower-envelope pricing is a PMAS by comonotonicity, while the mean-weighted proportional-cost rule is not.

\subsection{A comonotonic PMAS illustration}\label{ss:pmas-gamma-illustration}

We next illustrate Corollary~\ref{cor:comonotonic-PMAS} in a five-agent example with comonotonic endowments. Let $U\sim\mathrm{Unif}(0,1)$ and define $\zeta_i:=G_i^{-1}(U)$ for $i\in[5]$, where $G_i$, $i\in[5]$, is agent $i$'s marginal distribution function, so $(\zeta_1,\dots,\zeta_5)$ is comonotonic. Table~\ref{tab:pmas-gamma-setup} reports the marginal specifications, distortions, and autarky evaluations for this example.

\begin{table}[ht]
\centering
\renewcommand{\arraystretch}{1.15}
\begin{tabular}{clcr}
\toprule
Agent & Marginal law & Distortion $h_i(u)$ & $\rho_{h_i}(\zeta_i)$ \\
\midrule
1 & $\mathrm{Gamma}(2,1)$ & $u^{0.4}$ & $3.9253$ \\
2 & $\mathrm{Gamma}(4,1)$ & $1-(1-u)^{1.8}$ & $4.9274$ \\
3 & $\mathrm{Gamma}(8,1)$ & $\Phi(\Phi^{-1}(u)+0.4)$ & $9.1692$ \\
4 & $\mathrm{Gamma}(14,1)$ & $u^{0.6}$ & $16.0803$ \\
5 & $0$ & $1-(1-u)^{1.7}$ & $0$ \\
\midrule
\multicolumn{3}{r}{Total autarky cost} & $34.1022$ \\
\bottomrule
\end{tabular}
\caption{Marginal information and autarky evaluations in the five-agent comonotonic PMAS illustration.}
\label{tab:pmas-gamma-setup}
\end{table}

\begin{figure}[t]
\centering
\pgfplotstableread[col sep=comma]{
    x,      h3
    0.00,   1.000000
    0.01,   0.996798
    0.02,   0.992931
    0.04,   0.984250
    0.06,   0.974695
    0.08,   0.964468
    0.10,   0.953672
    0.12,   0.942370
    0.14,   0.930606
    0.16,   0.918410
    0.18,   0.905806
    0.20,   0.892812
    0.22,   0.879440
    0.24,   0.865702
    0.26,   0.851606
    0.28,   0.837157
    0.30,   0.822361
    0.32,   0.807220
    0.34,   0.791737
    0.36,   0.775912
    0.38,   0.759744
    0.40,   0.743234
    0.42,   0.726377
    0.44,   0.709173
    0.45,   0.700438
    0.46,   0.691615
    0.4657, 0.686546
    0.47,   0.682703
    0.48,   0.673700
    0.50,   0.655422
    0.52,   0.636773
    0.54,   0.617746
    0.56,   0.598332
    0.58,   0.578519
    0.60,   0.558297
    0.62,   0.537652
    0.64,   0.516568
    0.66,   0.495028
    0.68,   0.473013
    0.70,   0.450499
    0.71,   0.439047
    0.72,   0.427461
    0.7224, 0.424660
    0.73,   0.415736
    0.74,   0.403869
    0.76,   0.379687
    0.78,   0.354874
    0.80,   0.329382
    0.82,   0.303149
    0.84,   0.276103
    0.86,   0.248151
    0.88,   0.219174
    0.90,   0.189010
    0.92,   0.157431
    0.94,   0.124092
    0.96,   0.088398
    0.98,   0.049089
    0.99,   0.027030
    1.00,   0.000000
}\dataHthree
\begin{tikzpicture}
\begin{axis}[
  width=0.52\textwidth,
  height=0.30\textwidth,
  scale only axis,
  xmin=0, xmax=1,
  ymin=0, ymax=1,
  xlabel={$u$},
  ylabel={$h_i(1-u)$},
  axis lines=left,
  clip=false,
  samples=200,
  domain=0:1,
  legend style={
    at={(1.04,0.5)},
    anchor=west,
    draw=none,
    fill=white,
  },
  legend cell align={left}
]
\addplot[name path=axis, draw=none, forget plot] coordinates {(0,0) (1,0)};

\addplot[name path=env_a4, draw=none, forget plot] {(1-x)^0.6};
\addplot[name path=env_a3, draw=none, forget plot] table[x=x, y=h3] {\dataHthree};
\addplot[name path=env_a5, draw=none, forget plot] {1 - x^1.7};

\addplot[draw=none, fill=BurntOrange!35, forget plot] fill between[
  of=env_a4 and axis,
  soft clip={domain=0:0.465681158177}
];
\addplot[draw=none, fill=ForestGreen!30, forget plot] fill between[
  of=env_a3 and axis,
  soft clip={domain=0.465681158177:0.7224171274}
];
\addplot[draw=none, fill=Plum!35, forget plot] fill between[
  of=env_a5 and axis,
  soft clip={domain=0.7224171274:1}
];

\addplot[RoyalBlue, line width=1.1pt] {(1-x)^0.4};
\addplot[BrickRed, line width=1.1pt] {1 - x^1.8};
\addplot[ForestGreen!70!black, dashed, line width=1.1pt]
  table[x=x, y=h3] {\dataHthree};
\addplot[BurntOrange, dotted, line width=1.1pt] {(1-x)^0.6};
\addplot[Plum!80!black, dashdotted, line width=1.1pt] {1 - x^1.7};

\addplot[BurntOrange, line width=1.9pt, solid, forget plot,
  domain=0:0.465681158177] {(1-x)^0.6};
\addplot[ForestGreen!70!black, line width=1.9pt, solid, forget plot,
  restrict x to domain=0.465681158177:0.7224171274]
  table[x=x, y=h3] {\dataHthree};
\addplot[Plum!80!black, line width=1.9pt, solid, forget plot,
  domain=0.7224171274:1] {1 - x^1.7};

\addlegendentry{$h_1(u)=u^{0.4}$}
\addlegendentry{$h_2(u)=1-(1-u)^{1.8}$}
\addlegendentry{$h_3(u)=\Phi(\Phi^{-1}(u)+0.4)$}
\addlegendentry{$h_4(u)=u^{0.6}$}
\addlegendentry{$h_5(u)=1-(1-u)^{1.7}$}
\addlegendimage{gray!60, solid, line width=1.9pt}
\addlegendentry{$h_\wedge(u)$}
\end{axis}
\end{tikzpicture}
\caption{Distortions in the five-agent PMAS illustration, plotted as functions of $u\mapsto h_i(1-u)$, $i\in[5]$. The shaded lower envelope identifies the grand-coalition risk bearer at each quantile level: agent~4 on low layers, agent~3 on middle layers, and agent~5 on the tail layers.}
\label{fig:pmas-gamma-distortions}
\end{figure}

Figure~\ref{fig:pmas-gamma-distortions} shows how the lower-envelope allocation operates in the comonotonic setting. In the grand coalition, the lower envelope is attained on three nondegenerate intervals: agent~4 is cheapest on the body of the loss distribution, agent~3 on the intermediate layers, and the zero-endowment agent~5 on the tail layers. The efficient allocation, therefore, splits the aggregate loss across multiple layers even though the endowments are comonotonic. By Corollary~\ref{cor:comonotonic-PMAS}, the coalition-wise rule $w_i(A)=\rho_{h_i}(\zeta_i)-\rho_{h_\wedge^A}(\zeta_i)$ for $i\in A$ defines a PMAS on the entire coalition lattice. The remainder of the example studies the induced chains and compares the coalition-wise $\Q^\star$ and Shapley welfare splits, with particular attention to the role of agent~5 as a riskless entrant.

\begin{table}[ht]
\centering
\setlength{\tabcolsep}{3.2pt}
\renewcommand{\arraystretch}{1.08}
\resizebox{\textwidth}{!}{%
\begin{tabular}{lccccccccccc}
\toprule
 &  & \multicolumn{5}{c}{$\Q^\star$ welfare split} & \multicolumn{5}{c}{Shapley welfare split} \\
\cmidrule(lr){1-1}\cmidrule(lr){2-2}\cmidrule(lr){3-7}\cmidrule(lr){8-12}
Coalition & Gain & 1 & 2 & 3 & 4 & 5 & 1 & 2 & 3 & 4 & 5 \\
\cmidrule(lr){1-1}\cmidrule(lr){2-2}\cmidrule(lr){3-7}\cmidrule(lr){8-12}
$\{1\}$ & 0.0000 & 0.0000 & -- & -- & -- & -- & 0.0000 & -- & -- & -- & -- \\
$\{1,4\}$ & 1.0354 & 1.0354 & -- & -- & 0.0000 & -- & 0.5177 & -- & -- & 0.5177 & -- \\
$\{1,3,4\}$ & 2.0282 & 1.3544 & -- & 0.0539 & 0.6199 & -- & 0.8471 & -- & 0.6663 & 0.5147 & -- \\
$\{1,2,3,4\}$ & 2.5213 & 1.4192 & 0.2074 & 0.1533 & 0.7415 & -- & 1.0125 & 0.4980 & 0.3841 & 0.6268 & -- \\
$[5]$ & 2.6722 & 1.4442 & 0.2394 & 0.1949 & 0.7937 & 0.0000 & 1.1114 & 0.3207 & 0.2948 & 0.6887 & 0.2567 \\
\bottomrule
\end{tabular}
}
\caption{Coalitionwise lower-envelope pricing and coalition-wise Shapley welfare gains along the chain $1\to4\to3\to2\to5$. Each row is the restricted game induced by the corresponding prefix coalition.}
\label{tab:pmas-gamma-random-chain}
\end{table}

Table~\ref{tab:pmas-gamma-random-chain} reports, on the chain $1\to4\to3\to2\to5$, the coalition-wise lower-envelope pricing rule from Corollary~\ref{cor:comonotonic-PMAS} alongside the coalition-wise Shapley value of the restricted game. This ordering places agent~1 early and isolates the effect of later entrants on the allocation of risk capacity. Under coalition-wise $\Q^\star$ pricing, agent~1 starts alone with zero welfare gain; when agent~4 enters, it creates the entire gain $\nu(\{1,4\})=1.0354$, but that gain is assigned entirely to agent~1 because agent~4 absorbs the cheapest body layer and lowers agent~1's risk measure. As agents~3 and~2 enter, agent~1's $\Q^\star$ payoff still rises, from $1.0354$ to $1.3544$ and then to $1.4192$; thus, the welfare split need not track the share of risk each agent absorbs in the efficient allocation. When the riskless entrant~5 enters last, there is no diversification effect because $\zeta_5=0$, but total welfare still rises from $2.5213$ to $2.6722$: agent~5 adds risk capacity, receives zero welfare gain under $\Q^\star$ pricing, and all incumbent agents benefit. Under the coalition-wise Shapley rule, the pattern is different. Agent~1's Shapley payoff is also increasing along the chain, but the other payoffs are not: agent~4 falls from $0.5177$ to $0.5147$ when agent~3 enters, agent~3 falls from $0.6663$ to $0.3841$ and then to $0.2948$ as agents~2 and~5 enter, and agent~2 falls from $0.4980$ to $0.3207$ when agent~5 enters. Moreover, whereas Corollary~\ref{cor:comonotonic-PMAS} guarantees that the $\Q^\star$ row is in the core at every stage, the coalition-wise Shapley value is in the core only for the first three restricted games $A=\{1\},\{1,4\},\{1,3,4\}$; it leaves the core at $A=\{1,2,3,4\}$, where coalitions $\{1,2,4\}$ and $\{1,3,4\}$ can block, and also at $A=[5]$, where the blocking coalitions are $\{1,5\}$, $\{1,2,4\}$, $\{1,4,5\}$, $\{1,2,3,4\}$, $\{1,2,4,5\}$, and $\{1,3,4,5\}$.

\begin{table}[ht]
\centering
\setlength{\tabcolsep}{3.2pt}
\renewcommand{\arraystretch}{1.08}
\resizebox{\textwidth}{!}{%
\begin{tabular}{lccccccccccc}
\toprule
 &  & \multicolumn{5}{c}{$\Q^\star$ welfare split} & \multicolumn{5}{c}{Shapley welfare split} \\
\cmidrule(lr){1-1}\cmidrule(lr){2-2}\cmidrule(lr){3-7}\cmidrule(lr){8-12}
Coalition & Gain & 1 & 2 & 3 & 4 & 5 & 1 & 2 & 3 & 4 & 5 \\
\cmidrule(lr){1-1}\cmidrule(lr){2-2}\cmidrule(lr){3-7}\cmidrule(lr){8-12}
$\{5\}$ & 0.0000 & -- & -- & -- & -- & 0.0000 & -- & -- & -- & -- & 0.0000 \\
$\{1,5\}$ & 1.3724 & 1.3724 & -- & -- & -- & 0.0000 & 0.6862 & -- & -- & -- & 0.6862 \\
$\{1,3,5\}$ & 1.5695 & 1.4272 & -- & 0.1423 & -- & 0.0000 & 0.9280 & -- & 0.3123 & -- & 0.3292 \\
$\{1,2,3,5\}$ & 1.7772 & 1.4272 & 0.2078 & 0.1423 & -- & 0.0000 & 1.0497 & 0.2515 & 0.2813 & -- & 0.1947 \\
$[5]$ & 2.6722 & 1.4442 & 0.2394 & 0.1949 & 0.7937 & 0.0000 & 1.1114 & 0.3207 & 0.2948 & 0.6887 & 0.2567 \\
\bottomrule
\end{tabular}
}
\caption{Coalitionwise lower-envelope pricing and coalition-wise Shapley welfare gains along the chain $5\to1\to3\to2\to4$. Each row is the restricted game induced by the corresponding prefix coalition.}
\label{tab:pmas-gamma-capacity-first-chain}
\end{table}

Table~\ref{tab:pmas-gamma-capacity-first-chain} reports the chain $5\to1\to3\to2\to4$, in which the riskless entrant enters first. Under coalition-wise $\Q^\star$ pricing, agent~5 always receives zero welfare gain: since $\zeta_5=0$, there is no diversification effect and no gain from having its own endowment re-evaluated, so $w_5(A)=\rho_5(0)-\rho_{h_\wedge^A}(0)=0$ for every coalition $A\subseteq[5]$ containing agent~5. When agent~2 enters the coalition $\{1,3,5\}$, the entire welfare increment, from $1.5695$ to $1.7772$, is assigned to agent~2; the payoffs of agents~1 and~3 do not move. This occurs because coalition-wise $\Q^\star$ pricing allocates welfare through the change in the lower-envelope distortion: a new entrant gains welfare when its distortion becomes the cheapest on layers of its own endowment but leaves the incumbents' valuations essentially unchanged, whereas incumbents gain as well when the entrant lowers the envelope on layers that are active in their own endowments. The final step, when agent~4 enters, has the latter form: the lower envelope shifts on layers that matter for several players, so agents~1,~2, and~3 all gain in addition to agent~4. Under the coalition-wise Shapley rule, the pattern is different. Agent~1's Shapley payoff rises monotonically once pooling starts, from $0.6862$ to $1.1114$, but agent~5 receives positive welfare in every non-singleton coalition and its payoff moves non-monotonically, from $0.6862$ to $0.3292$ to $0.1947$ and then back up to $0.2567$. Hence, the Shapley value is not entry-monotone in this example and therefore is not a PMAS. As in the first chain, the $\Q^\star$ row is in the core at every stage by Corollary~\ref{cor:comonotonic-PMAS}; by contrast, the coalition-wise Shapley value is in the core only for the first two restricted games $A=\{5\}$ and $A=\{1,5\}$. It already leaves the core at $A=\{1,3,5\}$, where $\{1,5\}$ and $\{1,3\}$ can block, and it remains outside the core at $A=\{1,2,3,5\}$ and at $A=[5]$.

These two tables display only two of the $5!=120$ possible maximal chains. PMAS is stronger than monotonicity along maximal chains, since it must also handle inclusions where several agents enter at once; maximal chains alone cover only single-entrant steps. Exhaustively checking the coalition-wise Shapley rule across all $120$ permutations shows that only four chains are entry-monotone for incumbents: $2\to5\to3\to1\to4$, $2\to5\to3\to4\to1$, $5\to2\to3\to1\to4$, and $5\to2\to3\to4\to1$. These four chains have the same ordering structure: among the agents with nonzero endowments, the monotone Shapley orders are $2\to3\to(1,4)$, while the riskless entrant~5 may be inserted only in one of the two early positions, namely before $2$ or between $2$ and $3$. In particular, agent~5 cannot enter last in any entry-monotone Shapley chain: for instance, along $2\to3\to1\to4\to5$, the final entry of agent~5 lowers the Shapley payoffs of agents~2 and~3 from $0.4980$ and $0.3841$ to $0.3207$ and $0.2948$. Nor can agent~5 enter after agent~3, since already along $2\to3\to5\to1\to4$ the entry of~5 lowers agent~2's payoff from $0.1372$ to $0.1299$. This ordering is not dictated by a simple theorem that ranks distortions alone, nor by marginal risks alone; rather, it is a consequence of the full cooperative game generated by the chosen marginals and distortions through the Shapley average-marginal-contribution formula. In this example, entry monotonicity is preserved only when agents~2 and~3 enter before agents~1 and~4, with agent~5 inserted before~2 or between~2 and~3; other orderings reassign average marginal contributions away from incumbents and violate monotonicity. Once the coalition $\{2,3,5\}$ has formed, either agent~1 or agent~4 may enter next without reducing an incumbent's payoff. Table~\ref{tab:pmas-gamma-shapley-monotone-chain} reports the welfare splits along one such chain, $5\to2\to3\to1\to4$, on which every incumbent's Shapley payoff is weakly increasing at each entry. None of these four chains stays in the core at every prefix coalition: in the chain of Table~\ref{tab:pmas-gamma-shapley-monotone-chain}, the Shapley row is outside the core at $\{1,2,3,5\}$ and at $[5]$. Hence, coalition-wise Shapley entry monotonicity holds only for these four permutations, and compatibility with the core along an entire chain never holds; the total number of permutations for which the coalition-wise Shapley allocations form a sequential PMAS is zero.

\begin{table}[ht]
\centering
\setlength{\tabcolsep}{3.2pt}
\renewcommand{\arraystretch}{1.08}
\resizebox{\textwidth}{!}{%
\begin{tabular}{lccccccccccc}
\toprule
 &  & \multicolumn{5}{c}{$\Q^\star$ welfare split} & \multicolumn{5}{c}{Shapley welfare split} \\
\cmidrule(lr){1-1}\cmidrule(lr){2-2}\cmidrule(lr){3-7}\cmidrule(lr){8-12}
Coalition & Gain & 1 & 2 & 3 & 4 & 5 & 1 & 2 & 3 & 4 & 5 \\
\cmidrule(lr){1-1}\cmidrule(lr){2-2}\cmidrule(lr){3-7}\cmidrule(lr){8-12}
$\{5\}$        & 0.0000 & --     & --     & --     & --     & 0.0000 & --     & --     & --     & --     & 0.0000 \\
$\{2,5\}$      & 0.0903 & --     & 0.0903 & --     & --     & 0.0000 & --     & 0.0452 & --     & --     & 0.0452 \\
$\{2,3,5\}$    & 0.3482 & --     & 0.2072 & 0.1410 & --     & 0.0000 & --     & 0.1299 & 0.1552 & --     & 0.0631 \\
$\{1,2,3,5\}$  & 1.7772 & 1.4272 & 0.2078 & 0.1423 & --     & 0.0000 & 1.0497 & 0.2515 & 0.2813 & --     & 0.1947 \\
$[5]$          & 2.6722 & 1.4442 & 0.2394 & 0.1949 & 0.7937 & 0.0000 & 1.1114 & 0.3207 & 0.2948 & 0.6887 & 0.2567 \\
\bottomrule
\end{tabular}
}
\caption{Coalitionwise lower-envelope pricing and coalition-wise Shapley welfare gains along the chain $5\to2\to3\to1\to4$, one of the four maximal chains on which the coalition-wise Shapley rule is entry-monotone. Each row is the restricted game induced by the corresponding prefix coalition.}
\label{tab:pmas-gamma-shapley-monotone-chain}
\end{table}

\end{document}